\newcommand{\rank}[1]{\mathrm{rank}(#1)}
\theoremstyle{plain}
\newtheorem{theorem}{Theorem}
\newtheorem{corollary}[theorem]{Corollary}
\newtheorem{lemma}[theorem]{Lemma}
\newtheorem*{rep@theorem}{\rep@title}
\newcommand{\newreptheorem}[2]{%
\newenvironment{rep#1}[1]{%
 \def\rep@title{#2 \ref{##1}}%
 \begin{rep@theorem}}%
 {\end{rep@theorem}}}
\theoremstyle{definition}
\newtheorem{definition}[theorem]{Definition}
\newtheorem{remark}[theorem]{Remark}
\newcommand{\Gate}[1]{\mathsf{#1}}
\newcommand{\cnotgate}{\Gate{CNOT}}
\newcommand{\lx}{\mathcal{L}^{X}}
\newcommand{\lz}{\mathcal{L}^{Z}}
\newcommand{\vx}{\mathcal{V}^{X}}
\newcommand{\vz}{\mathcal{V}^{Z}}
\newcommand{\qx}{\mathcal{Q}^{X}}
\newcommand{\qz}{\mathcal{Q}^{Z}}
\newcommand{\cx}{\mathcal{C}^{X}}
\newcommand{\cz}{\mathcal{C}^{Z}}
\newcommand{\qcal}{\mathcal{Q}}
\newcommand{\ccal}{\mathcal{C}}
\DeclareMathOperator{\supp}{supp}
\newcommand{\inlinemod}[1]{(\mathrm{mod}\text{\space}#1)}
\begin{document}
\title{Universal adapters between quantum LDPC codes
}

\author{Esha Swaroop}
\email{eswaroop@uwaterloo.ca}
\affiliation{\small Institute for Quantum Computing and Department of Physics and Astronomy, University of Waterloo, Waterloo, ON N2L 3G1, Canada}
\affiliation{\small Perimeter Institute for Theoretical Physics, Waterloo, ON N2L 2Y5, Canada}
\author{Tomas Jochym-O'Connor}
\email{tjoc@ibm.com}
\affiliation{\small IBM T. J. Watson Research Center, Yorktown Heights, NY, 10598, United States}
\affiliation{\small IBM Quantum, Almaden Research Center, San Jose, CA, United States}
\author{Theodore J Yoder}
\email{ted.yoder@ibm.com}
\affiliation{\small IBM T. J. Watson Research Center, Yorktown Heights, NY, 10598, United States}

\begin{abstract}
We propose the repetition code adapter as a way to perform joint logical Pauli measurements within a quantum low-density parity check (LDPC) codeblock or between separate such codeblocks, thus providing a flexible tool for fault-tolerant computation with quantum LDPC codes. This adapter is universal in the sense that it works regardless of the LDPC codes involved and the logical Paulis being measured. The construction achieves joint logical Pauli measurement of $t$ weight $O(d)$ operators using $O(d)$ time and $\tilde O(td)$ additional qubits and checks, up to a factor polylogarithmic in $d$.
As a special case, for some geometrically-local codes in fixed $D\ge2$ dimensions, only $O(td)$ additional qubits and checks are required instead. By extending the adapter in the case $t=2$, we also construct a toric code adapter that uses $O(d^2)$ additional qubits and checks to perform addressable logical CNOT gates on arbitrary LDPC codes via Dehn twists. To obtain these results, we develop a novel weaker form of graph edge expansion and the $\mathsf{SkipTree}$ algorithm, which ensures a sparse transformation between different weight-2 check bases for the classical repetition code.
\end{abstract}

\maketitle

\section{Introduction} \label{sec:intro}

The long-term promise of quantum computing and quantum algorithms will rely on the backbone of quantum error correction~(QEC) and fault tolerance. While early experiments focused on demonstrating the building blocks of QEC~\cite{andersen2020repeated,google2021exponential,ryan2021realization,marques2022logical,sundaresan2023demonstrating}, recently there has been increased focus on scaling up the \textit{distance}, the ability to protect against errors, and \textit{encoding rate}, the relative number of encoded logical qubits, of QEC codes~\cite{acharya2024quantum,bluvstein2024logical}. This is coupled with a line of theoretical research aimed towards increasing QEC~code parameters~\cite{tillich2013quantum,kovalev2013,leverrier2015quantum}, culminating with the recent discovery of \emph{good} quantum low-density parity check~(LDPC) codes~\cite{breuckmann2021balanced,panteleev2022asymptotically,leverrier2022quantum}. Quantum LDPC codes are also of experimental interest, in the hope of potentially simplifying requirements of physical systems to realize fault-tolerant quantum computation, since LDPC codes by design require low-weight measurements and limited qubit connectivity. 

Although much of the progress in quantum LDPC codes has centered on improving code parameters, it is equally important to establish a model for doing logical computation in these codes. One of the leading approaches for logical gates is that of lattice surgery in the surface code~\cite{horsman2012surface,lingling2018surfacecode}, where an additional surface code patch can be prepared and fused to the original code, allowing for the joint-measurement of their respective logical operators. Given the success of this approach~\cite{litinski2019game} with the surface code which itself is an LDPC code, it was natural to ask if a similar type of approach can be developed for more general LDPC codes. Inspired from ideas in weight reduction of quantum codes \cite{hastings2016weight,hastings2021quantum}, this line of research led to recent schemes for quantum LDPC surgery~\cite{cohen2022,cowtan2024css, cross2024linear,cowtan2024ssipautomatedsurgery, zhang2024time,williamson2024gauging,ide2024faulttolerant} to implement logical Pauli measurements. Logical Pauli measurements are sufficient to implement the Clifford group~\cite{smithsmolin2015pbc} and offer a route to universal quantum computation provided magic states are available.

The key idea of quantum code surgery schemes is to deform the original code into a larger intermediate (`deformed') code by introducing additional qubits and checks, such that the deformed code contains all the logical qubits of the code except the logical qubit corresponding to the operator being measured, as well as newly introduced gauge degrees of freedom if the number of additional qubits exceeds the number of independent new checks. The stabilizer group of the deformed code (or gauge group in case of a deformed subsystem code) contains the high-weight logical operator to be measured. By measuring low-weight stabilizers (or gauge operators) of the deformed code, the measurement outcome of a high-weight logical operator of the original code can be inferred. 
In Ref.~\cite{williamson2024gauging}, this general scheme is referred to as the \textit{gauging logical measurement} framework, within which the authors introduced an optimized logical measurement scheme through use of expander graph-based ancillary systems, which we briefly summarize now. The goal is to measure an arbitrary logical Pauli operator on an arbitrary LDPC code. The main idea is to design an appropriate auxiliary graph~$\mathcal{G}$, where ancilla qubits reside on edges and new stabilizer checks are associated to both vertices and cycles in this graph. This auxiliary graph is merged with the original code by making a subset of the vertex stabilizers interact with the qubits in the support of the logical operator being measured, see Fig~\ref{fig:summary}{\small a}. As a result, the stabilizers of the original code are deformed, and one needs to make careful choices to avoid deforming into a non-LDPC code or suffering a deformed code with reduced code distance. To preserve the LDPC nature of stabilizers as well as code distance will necessitate several nontrivial properties of the auxiliary graph $\mathcal{G}$,
including the existence of certain short perfect matchings, the existence of a suitably sparse cycle basis, and edge expansion, which we will review afresh in detail. Due to the overloading of the word ``gauge", we will henceforth refer to gauging logical measurement as \textit{auxiliary graph surgery}.

\begin{figure}[t]
\centering
\subfloat[Gauging logical measurement via an auxiliary graph]{
\hspace{0.5in}
\includegraphics[width=0.37\linewidth]{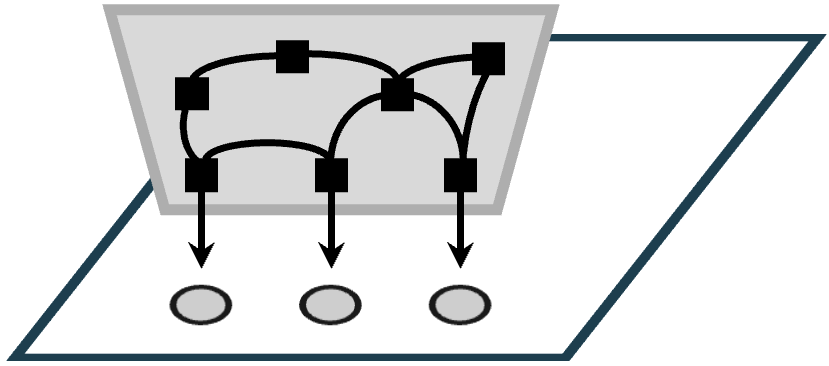}
}
\newline
\subfloat[Repetition code adapters for joint logical measurements]{
    \centering
\includegraphics[width=\linewidth]{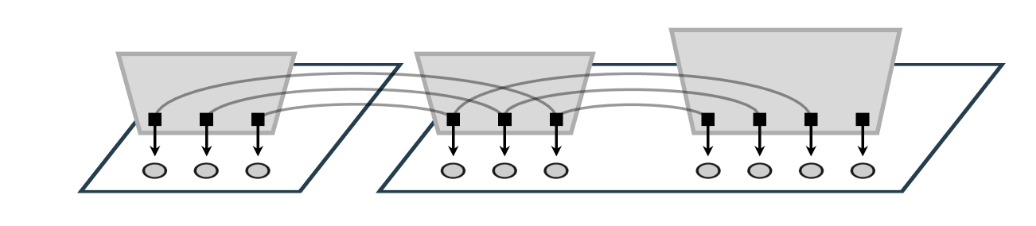}
         }
\newline
\subfloat[Toric code adapter for logical gates]{
    \includegraphics[width=0.8\linewidth]{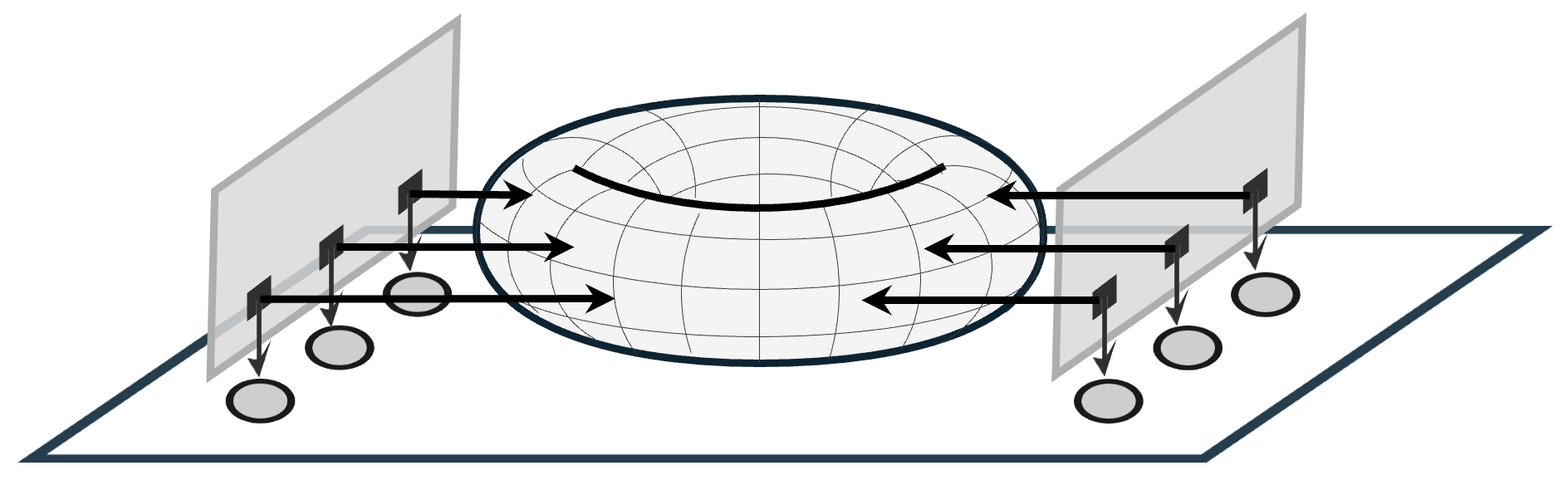} 
    }
     \caption{(a) For an arbitrary LDPC code (drawn as a rectangular patch), gauging logical measurement \cite{williamson2024gauging} of a logical Pauli operator works by attaching a stabilizer state defined on an appropriate auxiliary graph (gray area) to the qubit support of the logical operator (gray circles) creating a deformed code in which the logical operator becomes a stabilizer. Only some graph vertices, which represent checks, connect 1-to-1 to the logical support. This set of vertices is referred to as a \emph{port}. In this work, we construct additional tools: (b) The ports of several auxiliary graphs can be connected together with carefully chosen \emph{adapters} (curved edges) while keeping the deformed code LDPC. This measures the product of logical Pauli operators without measuring any individually. Adapters can connect operators in the same codeblock or separate blocks and regardless of their structure. Mathematically, our adapter construction relies on a sparse basis transformation for the classical repetition code, which we call the $\mathsf{SkipTree}$ algorithm. (c) Similarly, an arbitrary LDPC code can be merged with a toric code (or other codes) to perform logical gates.}
     \label{fig:summary}
\end{figure}


Our main result addresses a practical problem in auxiliary graph surgery to perform Pauli product measurements. Because computing with quantum LDPC surgery naturally lends itself to Pauli-based computation \cite{Bravyi_2016} (analogously to surface code lattice surgery \cite{litinski2019game}), one naively expects the need to measure potentially any logical Pauli operator and thus that exponentially many auxiliary graphs with suitable properties will need to be constructed. This task can be greatly simplified if suitably constructed auxiliary graphs to measure individual logical operators, say those on single logical qubits, could simply be connected in some way to measure the product of those operators instead while maintaining the necessary graph properties. For logical operators with isomorphic auxiliary subgraphs, this connection can be done directly \cite{cowtan2024ssipautomatedsurgery}, but it is a priori unclear how to do these joint measurements more generally.

Here, we solve this joint measurement problem in very general setting. Our solution is universal in the sense that it works regardless of the structure of the codes or the logical operators involved, provided only that a constant number of these logical operators act on any one code qubit and they do so with the same type of Pauli operator. We are able to adapt the structure of any auxiliary graph to any other, connecting them via a set of \emph{adapter} edges into one large graph, see Fig.~\ref{fig:summary}b. This builds on the idea of a bridge system from Ref.~\cite{cross2024linear}, but with new tools to guarantee the resulting deformed code for the product measurement is LDPC and suffers no loss of code distance. We refer to the construction in this work as a \textit{universal adapter}.

Auxiliary graph surgery and adapters necessitate additional qubits and thus increase the space overhead for fault-tolerant computation. Gauging logical measurement (henceforth also referred to as \textit{auxiliary graph surgery}) of a single operator of weight $d$ uses $O(d\log^3d)$ additional qubits in general, and connecting these systems with our adapters to measure the product of $t$ such operators takes $O(td\log^3d)$ additional qubits. Being quasilinear in $d$ like Ref.~\cite{williamson2024gauging}, this method offers a quadratic improvement in space overhead over the previous method for logical measurements~\cite{cohen2022} on arbitrary quantum LDPC codes with overhead $O(d^2)$. Since an overhead of $O(d^2)$ is comparable to fault-tolerant quantum computing schemes that rely on the surface code, a method for logic with quasilinear qubit overhead brings to fruition the advantages of higher-rate quantum codes. This all comes with the same time-overhead as surface code computation lattice surgery: auxiliary graph surgery and our adapter variants all use $O(d)$ time to sufficiently deal with measurement errors.

We remark that the $\log^3d$ factor in qubit overhead arises due to an extra \textit{decongestion} step in the construction to ensure that the auxiliary graph has a suitably sparse cycle basis. It turns out that for the special case of geometrically-local LDPC codes,  one can construct an auxiliary graph that does not require decongestion. This auxiliary graph can be realized by choosing a suitable triangulation of the set of points representing a logical operator, namely the Delaunay triangulation~\cite{delaunay}, thereby improving the overhead of auxiliary graph surgery to $O(d)$ in this special case. 

Our adapters can be further modified to couple to other ancillary systems with other desirable properties. To illustrate, we present a method to merge an arbitrary quantum LDPC code with the toric code along the supports of two disjoint logical operators, Fig.~\ref{fig:summary}c. From there, a unitary circuit suffices to implement a logical $\cnotgate$ using a method inspired by the Dehn twist $\cnotgate$ from Refs.~\cite{koenig2010turaev,breuckmann2017hyperbolic,lavasani2019dehn}. Although the toric code adapter is asymptotically inefficient in that it uses $O(d^2)$ additional qubits, it has potential advantages at finite size: (1) these interface systems are alone smaller than existing interfaces between LDPC and surface codes \cite{xu2023constantreconfig,viszlai2024matchingneutral}, $O(d\log^3d)$ qubits compared with $O(d^2)$, and (2) it directly performs a logical $\cnotgate$ gate rather than building the gate through logical measurements. This is an explicit example of how adapters can be used as a tool for mapping between codes with different properties, such as those with high rate for space-efficient memory and those with additional symmetries enabling fault-tolerant logical computation. Although prior work has explored multi-code architectures in topological codes \cite{poulsen2017topologicalinterface}, the universal adapter facilitates multi-code architectures for generic LDPC codes, which are a wider class of codes and require non-trivial deformation strategies.

In Section~\ref{sec:prelim} we review the basics of stabilizer codes and their representation as Tanner graphs. We also review auxiliary graph surgery \cite{williamson2024gauging} and LDPC surgery concepts from Ref.~\cite{cross2024linear}, but through the lens of a novel definition of \emph{relative expansion} which unifies and simplifies some ideas from these prior works. In Section~\ref{sec:skiptree}, we present an efficient classical algorithm for transforming between different check bases for the classical repetition code, which will be of use in creating our adapters. In Section~\ref{sec:rep_adapter}, we present the repetition code adapter as a \textit{universal adapter} for measuring joint logical operators in general quantum LDPC codes that works by connecting individual graphs with relative expansion into one large graph guaranteed to have relative expansion. Section~\ref{sec:leveraging_code_properties} re-examines these general results for special scenarios where one can leverage code properties to simplify the scheme -- in Section~\ref{subsec:replace_desideratum_4}, we exhibit cases where the expansion requirement in the auxiliary graph is unnecessary and, in Section~\ref{subsec:geo_local}, we show the decongestion step is unnecessary for geometrically-local codes by providing a method for constructing auxiliary graphs that are naturally sparse. Next, Section~\ref{sec:multicode} broadens the scope of the universal adapter as a novel primitive to connect different quantum LDPC codes in multi-code architectures. Finally, in Section~\ref{sec:examples} we provide some concrete examples of the universal adapter construction applied to two specific quantum LDPC codes, a 200-qubit lifted-product code and a 98-qubit bivariate bicycle code.

\subsection*{Related Work}
This work builds on both Ref.~\cite{cross2024linear} and Ref.~\cite{williamson2024gauging}. In Ref.~\cite{cross2024linear}, the authors proposed \textit{gauge-fixing} the merged subsystem code introduced in Ref.~\cite{cohen2022}, to implement logical measurements on quantum LDPC codes with sufficient expansion. Ref.~\cite{williamson2024gauging} proposed moving this expansion requirement to auxiliary graphs that can be connected to any quantum LDPC code, thereby broadening the scope to perform logical Pauli measurements on arbitrary quantum LDPC codes. Ref.~\cite{cross2024linear} in particular introduced a `bridge' as a way to connect different ancilla systems to measure joint logical operators, which inspired our work. Here, we
guarantee that constructing this adapter in a particular way will not ruin the LDPC property of the
code, thereby assuring practicality of the scheme. To do so, the $\mathsf{SkipTree}$ algorithm relies on the perspective shared by Refs.~\cite{williamson2024gauging} that the ancilla qubits
and checks reside on a graph instead of a more general hypergraph. Finally, Ref.~\cite{cross2024linear,williamson2024gauging,ide2024faulttolerant}
all remark that graph or hypergraph edge expansion is sufficient for their surgery techniques to preserve the
code and fault distances. Here, we generalize that to a weaker requirement by introducing an appropriate notion of relative expansion
instead or, in certain cases, eschew this expansion requirement entirely.

\tableofcontents

\section{Preliminaries} \label{sec:prelim}

\subsection{Stabilizer codes and Tanner graphs}\label{subsec:notation_tanner_graph} 

An $n$-qubit hermitian Pauli operator can be written $i^{u\cdot v}X(u)Z(v)$, where $u,v\in\mathbb{F}_2^n$, or simply as a symplectic vector $[u|v]$. Vectors in this paper are always by default row vectors and must be transposed, e.g.~$v^\top$, to obtain a column vector. If $v=0$ (resp.~$u=0$), we say the Pauli is $X$-type (resp.~$Z$-type). Several Pauli operators written in symplectic notation can be gathered together as the rows of a symplectic check matrix
\begin{equation}\label{eq:stabilizer_checks}
H=[H_X|H_Z]\in\mathbb{F}_2^{\;r\times2n},\;\text{such that}\;H\left(\begin{array}{cc}0&I\\I&0\end{array}\right)H^\top=0,
\end{equation}
where we use $I$ to denote the identity matrix of context-dependent size (here $n$), and the condition guarantees that all Pauli operators in the set commute with one another. Thus, $H$ describes the checks of a stabilizer code \cite{gottesman1997Stabilizer}. 

Logical operators of a stabilizer code are Pauli operators that commute with all the checks. All the checks are logical operators also. Logical operators that are not a product of checks are called nontrivial. The code distance is the minimum Pauli weight of any nontrivial logical operator.

Pauli Logical operators are denoted as $\bar{X}$ or $\bar{Z}$. A superscript, such as $\bar{Z}^{(i)}$ indicates the logical operator corresponding to the logical qubit indexed $i$. A subscript, such as $\bar{Z}_{l}$ indicates which code (block) the operator belongs to. For instance, $l$ or $r$ could be used to indicate left or right code blocks.

We can illustrate a stabilizer code by drawing a Tanner graph. This is a bipartite graph containing a vertex for each qubit and each check (a row of $H$). A qubit is connected to each check in which it participates with an edge labeled $[1|0]$, $[0|1]$, or $[1|1]$ depending on whether the check acts on the qubit as $X$, $Z$, or $Y$.

However, in cases with where multiple qubits and checks share identical check matrices, drawing a Tanner graph with a single node for every individual qubit and check could be a needlessly verbose approach. We abstract away the detail by drawing Tanner graphs with qubits gathered into named sets, say $\mathcal{Q}_0$, $\mathcal{Q}_1$, $\mathcal{Q}_2$,\dots, and checks gathered into named sets, say $\mathcal{C}_0$, $\mathcal{C}_1$, $\mathcal{C}_2$, \dots. An edge is drawn between $\mathcal{C}_i$ and $\mathcal{Q}_j$ if any check from $\mathcal{C}_i$ acts on any qubit in $\mathcal{Q}_j$. Label the edge with a symplectic matrix $[C_X|C_Z]\in\mathbb{F}_2^{|\mathcal{Q}_j|\times2|\mathcal{C}_i|}$ indicating which checks act on which qubits and with what type of Pauli operator. For instance, the entire stabilizer code from Eq.~\eqref{eq:stabilizer_checks} is drawn as in Fig.~\ref{fig:Tanner_graph_examples}a.

\begin{figure*}[t]
        \begin{tabular}{lll}
(a) &  & (b) \\
 \begin{tikzpicture}[]
\pgfmathsetmacro{\ewid}{0.7pt}
\pgfmathsetmacro{\drwid}{0.7pt}
\pgfmathsetmacro{\sqrsz}{0.65cm}
\pgfmathsetmacro{\sqrdf}{0.12}
\pgfmathsetmacro{\crcsz}{0.36}
\pgfmathsetmacro{\crcdf}{0.08}

        \filldraw[color=white] (-0.6, -0.4) rectangle (3.6, 0.8);
        \draw[draw=gray,line width=0.6pt] (0,0) to (3,0);
        
        \node[draw, line width=\drwid, fill=white, minimum size=\sqrsz] at (0-\sqrdf,0+\sqrdf) {};
            \node[draw, line width=\drwid, fill=white,minimum size=\sqrsz] at (0, 0) {};
       \draw[fill=white,line width=\drwid] (3-\crcdf,0+\crcdf) circle (\crcsz);
            \draw[fill=white,line width=\drwid] (3,0) circle (\crcsz) node {};

        \node[] at (1.5, 0.3) {$[H_X|H_Z]$};

\end{tikzpicture}
&
\hspace{1in}
&
\begin{tikzpicture}[]
\pgfmathsetmacro{\ewid}{0.7pt}
\pgfmathsetmacro{\drwid}{0.7pt}
\pgfmathsetmacro{\sqrsz}{0.65cm}
\pgfmathsetmacro{\sqrdf}{0.12}
\pgfmathsetmacro{\crcsz}{0.36}
\pgfmathsetmacro{\crcdf}{0.08}

        \filldraw[color=white] (-0.6, -0.4) rectangle (5, 0.8);
        \draw[draw=gray,line width=0.6pt] (0,0) to (4,0);
        
        \node[draw, line width=\drwid, fill=white, minimum size=\sqrsz] at (0-\sqrdf,0+\sqrdf) {};
            \node[draw, line width=\drwid, fill=white,minimum size=\sqrsz] at (0, 0) {$X$};
        \draw[fill=white,line width=\drwid] (2.2-\crcdf,0+\crcdf) circle (\crcsz);
            \draw[fill=white,line width=\drwid] (2.2,0) circle (\crcsz) node {};
        \node[draw, line width=\drwid, fill=white, minimum size=\sqrsz] at (4.4-\sqrdf,0+\sqrdf) {};
            \node[draw, line width=\drwid, fill=white,minimum size=\sqrsz] at (4.4, 0) {$Z$};

        \node[] at (1.1, 0.3) {$H_X$};
        \node[] at (3.3, 0.3) {$H_Z$};

\end{tikzpicture}
\end{tabular}

\begin{tikzpicture}[]
\pgfmathsetmacro{\ewid}{0.7pt}
\pgfmathsetmacro{\drwid}{0.7pt}
\pgfmathsetmacro{\sqrsz}{0.75cm}
\pgfmathsetmacro{\sqrdf}{0.12}
\pgfmathsetmacro{\crcsz}{0.4}
\pgfmathsetmacro{\crcdf}{0.1}
\pgfmathsetmacro{\diff}{2.2cm}
\pgfmathsetmacro{\hgt}{2.1}
\pgfmathsetmacro{\hdf}{0.8}
\pgfmathsetmacro{\hdn}{0.7}
\pgfmathsetmacro{\brnd}{4}
\definecolor{bl}{rgb}{0.63, 0.79, 0.95}

\filldraw[color=white] (-1, 0) rectangle (10, 2.8);

        \draw (0, \hgt+0.5) arc[start angle=110, end angle=68, x radius=4*\brnd, y radius=4.15*\brnd];
        \draw (0.1, -0.41) arc[start angle=-110, end angle=-68, x radius=4*\brnd, y radius=4.1*\brnd];
        \node[] at (1, 0.3) {$I$};
        \node[] at (0.8, \hgt+1.1) {$I$};
        \node[] at (10.5, -0.9) {$I$};
        
        \draw[draw=gray,line width=\ewid] (0,0) to (7.4,0);
        \draw[draw=gray,line width=\ewid] (0,\hgt) to (7.4,\hgt);
        \draw[draw=gray,line width=\ewid] (8.3,0) to (11,0);
        \draw[draw=gray,line width=\ewid] (8.3,\hgt) to (11,\hgt);

        \begin{scope}
        
            \draw[draw=gray,line width=\ewid] (0,0) to (0,2); 
            \node[draw, line width=\drwid, fill=white, minimum size=\sqrsz] at (0-\sqrdf,\hgt+\sqrdf) {};
            \node[draw, line width=\drwid, fill=white,minimum size=\sqrsz] at (0, \hgt) {$X$};
            \draw[fill=white,line width=\drwid] (0-\crcdf,0+\crcdf) circle (\crcsz);
            \draw[fill=white,line width=\drwid] (0,0) circle (\crcsz) node {};
            \node[] at (-0.3, 1) {$H_C$};
            \node[] at (0, -\hdn) {$\mathcal{Q}^Z_0$};
            \node[] at (0,\hgt+\hdf) { $\mathcal{C}^X_0$};
            \node[] at (1, 0.3+\hgt) {$I$};
            \node[] at (1, 0.3) {$I$};
        \end{scope}
       
        \begin{scope}[xshift=\diff]
            \draw[draw=gray,line width=\ewid] (0,0) to (0,2);
            \draw[fill=white,line width=\drwid] (0-\crcdf,\hgt+\crcdf) circle (\crcsz);
            \draw[fill=white,line width=\drwid] (0,\hgt) circle (\crcsz) node {};
            \node[draw, line width=\drwid, fill=white, minimum size=\sqrsz] at (0-\sqrdf, 0+\sqrdf) {};
            \node[draw, line width=\drwid, fill=white,minimum size=\sqrsz] at (0, 0) {$Z$};
            \node[] at (-0.3, 1) {$H_C^{\top}$};
            \node[] at (0, -\hdn) {$\mathcal{C}^Z_0$};
            \node[] at (0,\hgt+\hdf) {$\mathcal{Q}^X_0$}; 
            \node[] at (1, 0.3+\hgt) {$I$};
            \node[] at (1, 0.3) {$I$};
        \end{scope}

        \begin{scope}[xshift=2*\diff]
            \draw[draw=gray,line width=\ewid] (0,0) to (0,2); 
            \node[draw, line width=\drwid, fill=white, minimum size=\sqrsz] at (0-\sqrdf,\hgt+\sqrdf) {};
            \node[draw, line width=\drwid, fill=white,minimum size=\sqrsz] at (0, \hgt) {$X$};
            \draw[fill=white,line width=\drwid] (0-\crcdf,0+\crcdf) circle (\crcsz);
            \draw[fill=white,line width=\drwid] (0,0) circle (\crcsz) node {};
            \node[] at (-0.3, 1) {$H_C$};
            \node[] at (0, -\hdn) {$\mathcal{Q}^Z_1$};
            \node[] at (0,\hgt+\hdf) {$\mathcal{C}^X_1$}; 
            \node[] at (1, 0.3+\hgt) {$I$};
            \node[] at (1, 0.3) {$I$};
        \end{scope}

        \begin{scope}[xshift=3*\diff]
            \draw[draw=gray,line width=\ewid] (0,0) to (0,2);
            \draw[fill=white,line width=\drwid] (0-\crcdf,\hgt+\crcdf) circle (\crcsz);
            \draw[fill=white,line width=\drwid] (0,\hgt) circle (\crcsz) node {};
            \node[draw, line width=\drwid, fill=white, minimum size=\sqrsz] at (0-\sqrdf, 0+\sqrdf) {};
            \node[draw, line width=\drwid, fill=white,minimum size=\sqrsz] at (0, 0) {$Z$};
            \node[] at (-0.3, 1) {$H_C^{\top}$};
            \node[] at (0, -\hdn) {$\mathcal{C}^Z_1$};
            \node[] at (0,\hgt+\hdf) {$\mathcal{Q}^X_1$};  
        \end{scope}

        \begin{scope}[xshift=3.5*\diff+3]
        \node[] at (0, 0) {{ $...$}};
        \node[] at (0, 2) {{ $...$}};
        \end{scope}

        \begin{scope}[xshift=4*\diff+10]
            \draw[draw=gray,line width=\ewid] (0,0) to (0,2); 
            \node[draw, line width=\drwid, fill=white, minimum size=\sqrsz] at (0-\sqrdf,\hgt+\sqrdf) {};
            \node[draw, line width=\drwid, fill=white,minimum size=\sqrsz] at (0, \hgt) {$X$};
            \draw[fill=white,line width=\drwid] (0-\crcdf,0+\crcdf) circle (\crcsz);
            \draw[fill=white,line width=\drwid] (0,0) circle (\crcsz) node {};
            \node[] at (-0.3, 1) {$H_C$};
            \node[] at (0, -\hdn) {$\mathcal{Q}^Z_{d-1}$};
            \node[] at (0,\hgt+\hdf) {$\mathcal{C}^X_{d-1}$}; 
            \node[] at (1, 0.3+\hgt) {$I$};
            \node[] at (1, 0.3) {$I$};
        \end{scope}

        \begin{scope}[xshift=5*\diff+10]
            \draw[draw=gray,line width=\ewid] (0,0) to (0,2);
            \draw[fill=white,line width=\drwid] (0-\crcdf,\hgt+\crcdf) circle (\crcsz);
            \draw[fill=white,line width=\drwid] (0,\hgt) circle (\crcsz) node {};
            \node[draw, line width=\drwid, fill=white, minimum size=\sqrsz] at (0-\sqrdf, 0+\sqrdf) {};
            \node[draw, line width=\drwid, fill=white,minimum size=\sqrsz] at (0, 0) {$Z$};
            \node[] at (-0.3, 1) {$H_C^{\top}$};
            \node[] at (0, -\hdn) {$\mathcal{C}^Z_{d-1}$};
            \node[] at (0,\hgt+\hdf) {$\mathcal{Q}^X_{d-1}$};  
        \end{scope}

\end{tikzpicture}

\caption{Example Tanner graphs with qubit sets drawn as circles and check sets drawn as squares. (a) A generic stabilizer code, (b) a CSS code, and (c) the distance-$d$ toric code. In the toric code, each set of qubits and checks contains $d$ objects.}
     \label{fig:Tanner_graph_examples}
\end{figure*}

For CSS codes \cite{calderbank1996good,steane1996error}, there is a basis of checks in which each check is either $X$-type or $Z$-type. In the Tanner graph of such a code, if a set of checks $\mathcal{C}_i$ is entirely $X$-type, instead of labeling each edge with a symplectic matrix like $[C_X|0]$, we simply label the check set with $X$ and each edge with $C_X$ only. We handle sets of $Z$-type checks analogously. The Tanner graph of a CSS code is shown in Fig.~\ref{fig:Tanner_graph_examples}b.

We can alternatively view sets $\mathcal{C}_i$ and $\mathcal{Q}_j$ as vector spaces instead and use the notation to denote both the sets as well as associated vector spaces with these symbols. For example, a vector $u\in\mathbb{F}_2^{|\mathcal{C}_i|}$ indicates a subset of checks from $\mathcal{C}_i$ by its nonzero elements. We write $\mathcal{H}(u\in\mathcal{C}_i)$ to denote the product of this subset of checks, i.e.~it is an $n$-qubit Pauli operator. If all checks in the set are $X$-type or $Z$-type, we write $\mathcal{H}_X(u\in\mathcal{C}_i)$ or $\mathcal{H}_Z(u\in\mathcal{C}_i)$ instead. Likewise, a Pauli operator on qubit set $\mathcal{Q}_j$ is denoted by $X(v_x\in\mathcal{Q}_j)Z(v_z\in\mathcal{Q}_j)$ for appropriate vectors $v_x,v_z\in\mathbb{F}_2^{|\mathcal{Q}_j|}$. 

We can perform calculations in matrix-vector multiplication over $\mathbb{F}_2$ with this notation. For instance, if the check set $\mathcal{C}_i$ is connected to only a single qubit set $\mathcal{Q}_j$ with edge labeled $[C_X|C_Z]$, then
\begin{equation}
\mathcal{H}(u\in\mathcal{C}_i)=X(u C_X\in\mathcal{Q}_j)Z(u C_Z\in\mathcal{Q}_j).
\end{equation}

If a family of stabilizer codes with growing code size $n$ is $(\alpha,\beta)$ low-density parity-check (LDPC), it has a basis of checks in which each check acts on at most $\alpha=O(1)$ qubits and each qubit is acted upon by at most $\beta=O(1)$ checks. This translates to the sparsity of the code's parity check matrix $H=[H_X|H_Z]$. We say a matrix over $\mathbb{F}_2$ is $(r,c)$-sparse if the maximum row weight is at most $r$ and the maximum column weight is at most $c$. If both $H_X$ and $H_Z$ are $(r,c)$-sparse then the code is $(2r,2c)$ LDPC.

One outstanding family of quantum LDPC codes is the toric code family \cite{kitaev2003fault}, depicted as a Tanner graph in Fig.~\ref{fig:Tanner_graph_examples}c. There, we use $H_C$ to denote the canonical cyclic parity check matrix of the classical repetition code, i.e.~
\begin{equation}
H_C=\left(\begin{array}{ccccc}
1&1&&\dots&\\
&1&1&&\\
&\vdots&&\ddots&\\
1&&&&1
\end{array}\right).
\end{equation}
Denote by $e_i$ the length-$n$ vector with a 1 in only the $i\;(\text{mod}\;n)$ position. We can define a cyclic shift matrix $C\in\mathbb{F}_2^{\;n\times n}$ that acts as $e_iC=e_{i+1}$ for all $i$. Then, $H_C=I+C$. The only vector in the nullspace of $H_C$ is the vector of all 1s, denoted $\vec1$, so $\vec1H_C=H_C\vec1^{\;\top}=0$. In general, the sizes of $e_i$, $\vec1$, $I$, $C$, and $H_C$ are context dependent.

\subsection{Graphs and expansion}
Let $\mathcal{G}=(\mathcal{V},\mathcal{E})$ signify a graph with vertex set $\mathcal{V}$ of size $n$ and an edge set $\mathcal{E}$ of size $m$. An alternative description is provided by the incidence matrix $G=\mathbb{F}_2^{\;m\times n}$, a matrix in which each row represents an edge, each column represents a vertex, and $G_{ij}=1$ if and only if edge $i$ contains vertex $j$. If the maximum vertex degree of $\mathcal{G}$ is $w$, then $G$ is $(2,w)$-sparse.

If $\mathcal{G}$ has $p$ connected components, a complete cycle basis can be specified by a matrix $N$ over $\mathbb{F}_2$ satisfying $NG=0$ and $\rank{N}=m-n+p$, known as the cyclomatic number of the graph \cite{berge2001theory,graphs05tb}. We say the cycle basis is $(r,c)$-sparse if $N$ is an $(r,c)$-sparse matrix. In an $(r,c)$-sparse cycle basis, each basis cycle is no longer than $r$ edges and each edge is in no more than $c$ basis cycles.

The edge expansion of a graph is characterized via its Cheeger constant. Intuitively, this is a measure of how bottlenecked a graph is -- if a graph contains a large set of vertices with very few outgoing edges, the Cheeger constant of that graph is small. 
\begin{definition}[Cheeger constant]\label{def:expansion}
Let $\mathcal{G}=(\mathcal{V},\mathcal{E})$ be a graph on $n=|\mathcal{V}|$ vertices and $m=|\mathcal{E}|$ edges. Let $G\in\mathbb{F}_2^{\;m\times n}$ be the incidence matrix of this graph. The \textit{expansion} (also known as the Cheeger constant or isoperimetric number \cite{mohar1989isoperimetric}) $\beta(\mathcal{G})$ of this graph is the largest real number such that, for all $v\in\mathbb{F}_2^n$ (i.e.~all subsets of vertices),
\begin{equation}
|vG^\top|\ge\beta(\mathcal{G})\min(|v|,n-|v|).
\end{equation}
\end{definition}
More generally, we also propose a novel \textit{relative} version of edge expansion. 

\begin{definition}[Relative Expansion]\label{def:rel_expansion} The expansion \textit{relative} to a vertex subset $\mathcal{U}\subseteq\mathcal{V}$ and parameterized by integer $t>0$, denoted $\beta_t(\mathcal{G},\mathcal{U})$, is the largest real number such that, for all $v\in\mathbb{F}_2^n$,
\begin{equation}
|vG^\top|\ge\beta_t(\mathcal{G},\mathcal{U})\min(t,|u|,|\mathcal{U}|-|u|),
\end{equation}
where we use $u$ to denote the restriction of $v$ to vertices $\mathcal{U}$. Note that $\beta(\mathcal{G})=\beta_{|\mathcal{V}|}(\mathcal{G},\mathcal{V})$. We sometimes refer to $\beta(\mathcal{G})$ as the global expansion to distinguish it from the relative expansion. 
\end{definition}

Any graph always has relative expansion at least as large as its global expansion, i.e.~$\beta_t(\mathcal{G},\mathcal{U})\ge\beta(\mathcal{G})$ for all $\mathcal{U}\subseteq\mathcal{V}$ and $t>0$. This is important for our purposes because we will develop some techniques to guarantee a graph has sufficient relative expansion (to ensure a good quantum code distance) but not necessarily large global expansion. The relation between expansions is a corollary of a simple lemma.
\begin{lemma}\label{lem:relative_expansion_relation}
Suppose $\mathcal{G}=(\mathcal{V},\mathcal{E})$ is a graph and $\mathcal{U},\mathcal{U}'\subseteq\mathcal{V}$ are subsets of vertices. If $\mathcal{U}'\subseteq\mathcal{U}$ and $0< t'\le t$, then $\beta_{t'}(\mathcal{G},\mathcal{U}')\ge\beta_t(\mathcal{G},\mathcal{U})$.
\end{lemma}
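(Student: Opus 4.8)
The plan is to argue directly from the ``largest real number'' characterization in Definition~\ref{def:rel_expansion}: to show $\beta_{t'}(\mathcal{G},\mathcal{U}')\ge\beta_t(\mathcal{G},\mathcal{U})$, it suffices to verify that the constant $\beta_t(\mathcal{G},\mathcal{U})$ itself satisfies the inequality defining $\beta_{t'}(\mathcal{G},\mathcal{U}')$, i.e.~that for every $v\in\mathbb{F}_2^n$ we have $|vG^\top|\ge\beta_t(\mathcal{G},\mathcal{U})\min(t',|u'|,|\mathcal{U}'|-|u'|)$, where $u'$ is the restriction of $v$ to $\mathcal{U}'$. Then, since $\beta_{t'}(\mathcal{G},\mathcal{U}')$ is by definition the largest such constant, the claimed inequality follows.

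First I would fix an arbitrary $v\in\mathbb{F}_2^n$ and write $u=v|_{\mathcal{U}}$ and $u'=v|_{\mathcal{U}'}$. The key step is to establish the three termwise comparisons $t'\le t$, $|u'|\le|u|$, and $|\mathcal{U}'|-|u'|\le|\mathcal{U}|-|u|$. The first is the hypothesis. The second holds because $\mathcal{U}'\subseteq\mathcal{U}$ forces the support of $v$ inside $\mathcal{U}'$ to be a subset of its support inside $\mathcal{U}$. The third is the same statement applied to the \emph{zero} coordinates of $v$: the coordinates of $v$ that vanish inside $\mathcal{U}'$ form a subset of those vanishing inside $\mathcal{U}$, and $|\mathcal{U}'|-|u'|$, $|\mathcal{U}|-|u|$ count exactly these. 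Taking the minimum of each triple and using that $\min$ is monotone in each argument gives $\min(t',|u'|,|\mathcal{U}'|-|u'|)\le\min(t,|u|,|\mathcal{U}|-|u|)$.

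Finally I would chain the inequalities: by Definition~\ref{def:rel_expansion} applied to $\beta_t(\mathcal{G},\mathcal{U})$,
\begin{equation}
|vG^\top|\ge\beta_t(\mathcal{G},\mathcal{U})\min(t,|u|,|\mathcal{U}|-|u|)\ge\beta_t(\mathcal{G},\mathcal{U})\min(t',|u'|,|\mathcal{U}'|-|u'|),
\end{equation}
where the last step uses the displayed comparison of minima together with $\beta_t(\mathcal{G},\mathcal{U})\ge0$. Since $v$ was arbitrary, $\beta_t(\mathcal{G},\mathcal{U})$ is a valid constant in the definition of $\beta_{t'}(\mathcal{G},\mathcal{U}')$, hence $\beta_{t'}(\mathcal{G},\mathcal{U}')\ge\beta_t(\mathcal{G},\mathcal{U})$.

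\textbf{Main obstacle.} There is no real obstacle here; the proof is a short monotonicity argument. The only point requiring a moment's care is the third comparison $|\mathcal{U}'|-|u'|\le|\mathcal{U}|-|u|$, which should be phrased in terms of counting the zero coordinates of $v$ rather than naively subtracting, and the observation that $\beta_t(\mathcal{G},\mathcal{U})$ being nonnegative is what lets us multiply through the inequality between the two minima. The stated relation $\beta_t(\mathcal{G},\mathcal{U})\ge\beta(\mathcal{G})$ then follows as the special case $\mathcal{U}'=\mathcal{U}=\mathcal{V}$ is replaced by taking $\mathcal{U}'=\mathcal{U}$ arbitrary against $\mathcal{U}=\mathcal{V}$, $t'=t$ arbitrary against $t=|\mathcal{V}|$.
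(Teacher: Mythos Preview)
Your proposal is correct and follows essentially the same approach as the paper's own proof: establish the three termwise inequalities $t'\le t$, $|u'|\le|u|$, $|\mathcal{U}'|-|u'|\le|\mathcal{U}|-|u|$, deduce the comparison of minima, and then invoke Definition~\ref{def:rel_expansion}. Your write-up is slightly more explicit than the paper's (you spell out the ``largest constant'' logic and note that $\beta_t(\mathcal{G},\mathcal{U})\ge0$ is needed to multiply through), but the argument is the same.
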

\begin{proof}
Consider any $u\in\mathbb{F}_2^{|\mathcal{U}|}$, and let $u'$ be the restriction of $u$ to $\mathcal{U}'\subseteq\mathcal{U}$. It follows $|u'|\le|u|$, $|\mathcal{U}'|-|u'|\le|\mathcal{U}|-|u|$, and $\min(t,|u'|,|\mathcal{U}'|-|u'|)\le\min(t,|u|,|\mathcal{U}|-|u|)$. Also, because $t'\le t$, 
\begin{align}
\min(t',|u'|,|\mathcal{U}'|-|u'|) &\le \min(t,|u'|,|\mathcal{U}'|-|u'|) \nonumber \\
&\le \min(t,|u|,|\mathcal{U}|-|u|).
\end{align}
The result now follows by applying definition \ref{def:rel_expansion}.
\end{proof}

The relative expansion can diverge significantly from the global expansion. For instance, we can show (Appendix~\ref{app:relative_exp_lemma}, Lemma~\ref{lem:expansion_lemma}) that the relative expansion of a graph can be increased to at least $1$ by taking a Cartesian graph product with a path graph, while the global expansion cannot be increased this way. 

This Cartesian product, or \emph{thickening} will be needed later, so we formally define it here.
\begin{definition}[Thickening]\label{def:thickening}
Suppose $\mathcal{G}_0=(\mathcal{V}_0,\mathcal{E}_0)$ and $\mathcal{G}_1=(\mathcal{V}_1,\mathcal{E}_1)$ are two graphs. The Cartesian product \cite{sabidussi1959graph,vizing1963cartesian} is another graph $\mathcal{G}=\mathcal{G}_0\square\mathcal{G}_1=(\mathcal{V}_0\times\mathcal{V}_1,\mathcal{E})$ where $((u_0,u_1),(v_0,v_1))\in\mathcal{E}$ if and only if either (1) $u_0=v_0$ and $(u_1,v_1)\in\mathcal{E}_1$, or (2) $u_1=v_1$ and $(u_0,v_0)\in\mathcal{E}_0$. 

Let $\mathcal{P}_L$ be the path graph with $L$ vertices. We say the graph $\mathcal{G}_0^{(L)}=\mathcal{G}_0\square\mathcal{P}_L$ is $\mathcal{G}_0$ \textit{thickened} $L$ times.
\end{definition}
\noindent Intuitively, the thickened graph $\mathcal{G}_0^{(L)}$ is $L$ copies of $\mathcal{G}_0$ stacked on top of one another and connected ``transversally", i.e.~each vertex is connected to its copy above and below in the stack (or connected to just one copy of itself if it is at the bottom or top of the stack). One can skip ahead to Fig.~\ref{fig:graph_thickening}b to see an example.

\subsection{Auxiliary graph LDPC surgery}
\label{sec-gauging-meas}

Auxiliary graph surgery \cite{williamson2024gauging} is a flexible recipe for measuring a logical operator of a quantum code. Because several of our results are best viewed in the context of auxiliary graph surgery, we give a review of it here. A novelty of our presentation is to weaken the expansion requirement to a subset of vertices in the auxiliary graph used to construct the deformed code, which was only implicit in previous works \cite{williamson2024gauging,cross2024linear}.

Suppose $\mathcal{L}$ is the set of qubits supporting the logical operator $\overline{Z}$ to be measured. Here we assume $\overline{Z}$ is a $Z$-type Pauli operator, which is without loss of generality if we choose the appropriate local basis for each qubit of $\mathcal{L}$ and allow the code to be non-CSS. 

Introduce an ``auxiliary" graph $\mathcal{G}=(\mathcal{V},\mathcal{E})$ and an injective map $f:\mathcal{L}\rightarrow\mathcal{V}$. The function indicates a subset of vertices $\mathrm{im}f=f(\mathcal{L})$, the ``port", at which we attach the original code and the graph. We use $\mathcal{G}$ and $f$ to define a deformed code. 

\begin{itemize}
    \item Each edge $e \in \mathcal{E}$ of this graph is associated to a single additional qubit. $X(e)$ and $Z(e)$ denote physical Pauli operators on this ``edge qubit".

    \item Each vertex $v$ is associated to a $Z$ check, $A_v$. Each ``vertex check" is supported on edge qubits incident to vertex $v$, and in cases also a single qubit $q$ of $\mathcal{L}$ as
\begin{equation}\label{eq:vertex_checks}
A_v=\bigg\{
\begin{array}{ll}
Z(q)\prod_{e\ni v}Z(e),& \exists q\in\mathcal{L}, f(q)=v\\
\prod_{e\ni v}Z(e),& \mathrm{otherwise}
\end{array}
\end{equation}

\item and each cycle $c\subseteq\mathcal{E}$ (in the cycle basis of $\mathcal{G}$) is associated to an $X$ check, $B_c$. Each ``cycle check" is supported on qubits on edges in the cycle basis in the graph as
\begin{equation}
B_c=\prod_{e\in c}X(e).
\end{equation}

\end{itemize}

If we initialize all edge qubits in $\ket{+}$ and measure all checks $A_v$, $B_c$, certain checks of the original code must pick up $X$-type support on the edge qubits to commute with all $A_v$. These deformed checks are exactly the checks of the original code with $X$-type support (either acting as Pauli $X$ or $Y$) on qubits in $\mathcal{L}$. Suppose one such check $s$ from the original code has $X$-type support on qubits $\mathcal{L}_{s}\subseteq\mathcal{L}$. Notice this $X$-type overlap $|\mathcal{L}_{s}|$ is even because check $s$ commutes with logical operator $\overline{Z}$. Since $f$ is injective, $|f(\mathcal{L}_{s})| =|\mathcal{L}_{s}|$ is also even. So far, the $X$-type support of original check $s$ and new vertex check $A_v$ overlap on exactly one qubit $q \in \mathcal{L}$ (where $f(q)=v$), and hence anti-commute. This means the original code check $s$ would anti-commute with an \textit{even} number of new vertex checks, exactly $|f(\mathcal{L}_{s})|$. We can pair up all vertices $f(\mathcal{L}_{s})$ and add additional qubit support to each original check $s$ exactly specified by a path of edges in  $\mathcal{G}$ whose endpoints are a pair of vertices from $f(\mathcal{L}_{s})$. Such a path always exists because $\mathcal{G}$ is connected. This set of paths between paired vertices is also known as a \textit{perfect matching} $\mu(\mathcal{L}_{s})\subseteq\mathcal{E}$ in $\mathcal{G}$ of vertices $f(\mathcal{L}_{s})$. Since each path touches its endpoint vertices once, it adds exactly one edge qubit for each $s$ to the vertex checks and fixes the anti-commutation between $A_v$ and $s$. Therefore after measurement of $A_v$, the original code check $s$ deforms to gain additional support on edge qubits as
\begin{equation}\label{eq:deformed_checks}
s\rightarrow s\prod_{e\in\mu(\mathcal{L}_{s})}X(e),
\end{equation}
where $\mu(\mathcal{L}_{s})\subseteq\mathcal{E}$ is a perfect matching in $\mathcal{G}$ of vertices $f(\mathcal{L}_{s})$. All these components of auxiliary graph surgery are depicted as a Tanner graph in Fig.~\ref{fig:gauging_measurement}.

\begin{figure}[h]
    \centering
    \begin{tikzpicture}[]
\pgfmathsetmacro{\ewid}{1pt}
\pgfmathsetmacro{\drwid}{0.8pt}
\pgfmathsetmacro{\sqrsz}{0.7cm}
\pgfmathsetmacro{\sqrdf}{0.12}
\pgfmathsetmacro{\crcsz}{0.4}
\pgfmathsetmacro{\crcdf}{0.1}
\pgfmathsetmacro{\basecodewid}{2.3}
\pgfmathsetmacro{\brnd}{0.028cm}
\definecolor{edgcol}{rgb}{0.5, 0.5, 0.5}
\definecolor{bl}{rgb}{0.63, 0.79, 0.95}

 \fill[bl, opacity=0.3]
        (-3.1, -0.1) arc[start angle=180, end angle=270, x radius=\brnd, y radius=\brnd] -- ++(\basecodewid, 0) arc[start angle=270, end angle=360, x radius=\brnd, y radius=\brnd] -- ++(0, 2.6) arc[start angle=0, end angle=90, x radius=\brnd, y radius=\brnd] -- ++(-\basecodewid, 0) arc[start angle=90, end angle=180, x radius=\brnd, y radius=\brnd]-- cycle;

        \draw[draw=edgcol,line width=\ewid] (-2.5,0) to (2.7,0);
        \draw[draw=edgcol,line width=\ewid] (-2.5,2) to (2.7,2);
        \draw[draw=edgcol,line width=\ewid] (0,0) to (0,2);
        
        \begin{scope}[shift={(-2.2,0)}]
            \draw[draw=edgcol,line width=\ewid] (-0,0) to (0,2);
           \draw[fill=white,line width=\drwid] (0-\crcdf,2+\crcdf) circle (\crcsz);
            \draw[fill=white,line width=\drwid] (0,2) circle (\crcsz) node {};
            \node[draw, line width=\drwid, fill=white, minimum size=\sqrsz] at (0-\sqrdf, 0+\sqrdf) {};
            \node[draw, line width=\drwid, fill=white,minimum size=\sqrsz] at (0, 0) {}; 
            \node[text=black!90] at (1, 2.9) {{\scriptsize original code}};
        \end{scope}
        
        \draw[fill=white,line width=\drwid] (0-\crcdf,0+\crcdf) circle (\crcsz);
        \draw[fill=white,line width=\drwid] (0,0) circle (\crcsz) node {$\bar{Z}$};
        \node[draw, line width=\drwid, fill=white, minimum size=\sqrsz] at (0-\sqrdf, 2+\sqrdf) {};
        \node[draw, line width=\drwid, fill=white,minimum size=\sqrsz] at (0, 2) {};

        \node[] at (-0.65, 1) {{\footnotesize $[S_X | S_Z]$}};
        
        \node[] at (1.2, 0.2) {{\footnotesize $F$}};
        \node[] at (1.35, 2.2) {{\footnotesize $[M|0]$}};

        \node[] at (0.35, -0.4) {{ $\mathcal{L}$}};
        \node[] at (0.6, 1.65) {{$\mathcal{S}$}};

        \begin{scope}[shift={(2.4,0)}]
        \draw[draw=edgcol,line width=\ewid] (0,0) to (0,4);
        \draw[fill=white,line width=\drwid] (0-\crcdf, 2+\crcdf) circle (\crcsz);
        \draw[fill=white,line width=\drwid] (0,2) circle (\crcsz) node {};
         \node[draw, line width=\drwid,fill=white, minimum size=\sqrsz] at (0-\sqrdf,0+\sqrdf) {};
        \node[draw, line width=\drwid,fill=white,minimum size=\sqrsz] at (0, 0) {$Z$};
        \node[draw, line width=\drwid,fill=white, minimum size=\sqrsz] at (0-\sqrdf,4+\sqrdf) {};
        \node[draw, line width=\drwid,fill=white,minimum size=\sqrsz] at (0, 4) {$X$};
        \node[] at (-0.25, 1) {{\footnotesize $G^{\top}$}};
        \node[] at (-0.25, 3) {{\footnotesize $N$}};
        \node[] at (0.5, 1.7) {{ $\mathcal{E}$}};
        \node[] at (0.55, -0.4) {{ $\mathcal{V}$}};
        \node[] at (0.55, 3.6) {{ $\mathcal{U}$}};
        \end{scope}

\end{tikzpicture}
    \caption{The deformed code created during auxiliary graph surgery of logical operator $\overline{Z}$ supported on qubits $\mathcal{L}$. Edge qubits $\mathcal{E}$ are introduced and vertex checks $\mathcal{V}=\{A_v\}_v$ and cycle checks $\mathcal{U}=\{B_c\}_c$ are measured. Here $G$ is the incidence matrix of the graph $\mathcal{G}$ defining these checks, $N$ is a cycle basis satisfying $NG=0$, and $M$ encodes the support gained by some of the original stabilizers $\mathcal{S}$, specifically, those that have $X$-type support on $\mathcal{L}$, see Eq.~\eqref{eq:deformed_checks}. The matrix $F$ has elements $F_{qv}$ for all $q\in\mathcal{L}$ and $v\in\mathcal{V}$ and $F_{qv}=1$ if and only if $f(q)=v$. The rest of the qubits and checks of the original code are drawn on the left but their Tanner graph connectivity does not change during the code deformation.}
    \label{fig:gauging_measurement}
\end{figure}

There are additional properties of the auxiliary graph $\mathcal{G}$ and the port function $f:\mathcal{L}\rightarrow\mathcal{V}$ that ensure that auxiliary graph surgery results in a suitable deformed code, as formalized in the next theorem. 
\begin{theorem}[Graph Desiderata]\cite{williamson2024gauging}\label{thm:graph_desiderata}

\noindent To ensure the deformed code has exactly one less logical qubit than the original code and measures the target logical operator $\overline{Z}=Z(\mathcal{L})$, it is sufficient that
\begin{enumerate}
\setcounter{enumi}{-1}
\item $\mathcal{G}$ is connected.
\end{enumerate}

\noindent To ensure the deformed code is LDPC, it is necessary and sufficient that
\begin{enumerate}
\setcounter{enumi}{0}
\item $\mathcal{G}$ has $O(1)$ vertex degree.
\item For all stabilizers $s$ of the original code, (a) $|\mu(\mathcal{L}_{s})|=O(1)$, i.e.~each stabilizer has a short perfect matching in $\mathcal{G}$, and (b) each edge is in $O(1)$ matchings $\mu(\mathcal{L}_{s})$.
\item There is a cycle basis of $\mathcal{G}$ in which (a) each cycle is length $O(1)$ and (b) each edge is in $O(1)$ cycles.
\end{enumerate}
To ensure the deformed code has code distance at least the distance $d$ of the original code, it is sufficient that
\begin{enumerate}
\setcounter{enumi}{3}
\item $\mathcal{G}$ has sufficient expansion relative to the image of port function $f$, namely, $\beta_d(\mathcal{G},\mathrm{im}f)\ge1$.
\end{enumerate}
\end{theorem}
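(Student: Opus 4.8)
The plan is to establish the four groups of conclusions in turn; the first three are bookkeeping and the fourth carries the real content.

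\smallskip\noindent\emph{Item 0.} I would argue by a dimension count. The deformed code has $n+m$ qubits (the $n$ original qubits plus the $m=|\mathcal E|$ edge qubits) and stabilizer generators consisting of the $n-k$ independent deformed original checks, the $n$ vertex checks $A_v$, and — since $\mathcal G$ is connected, so $p=1$ — the $m-n+1$ cycle checks $B_c$. The identity $\prod_{v\in\mathcal V}A_v=Z(\mathcal L)=\overline Z$ (which holds because each edge lies on exactly two vertices) does two things simultaneously: it exhibits $\overline Z$ as a product of stabilizers, so it is recovered from the $A_v$ outcomes, and it shows the $n$ vertex checks are $\mathbb F_2$-independent (the only relation among their incidence parts is the all-ones vector, which would force $\overline Z=I$). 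One then verifies the three families are jointly independent — vertex checks are $Z$-type, cycle checks $X$-type, deformed originals agree off the edge qubits with independent original checks, and connectedness rules out a stray relation — so the stabilizer rank is $(n-k)+n+(m-n+1)=n+m-k+1$, giving exactly $k-1$ logical qubits.

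\smallskip\noindent\emph{Items 1--3.} These are exact weight counts on the Tanner graph of Fig.~\ref{fig:gauging_measurement}. For sufficiency: $A_v$ has weight $\deg(v)$ or $\deg(v)+1$ (bounded by (1)); $B_c$ has weight $|c|$ (bounded by (3a)); a deformed original has weight (original weight) $+\,|\mu(\mathcal L_s)|$ (bounded by (2a)); each original qubit keeps its constant check-degree; and each edge qubit lies in exactly two $A_v$'s, in $O(1)$ cycle checks (3b), and in $O(1)$ deformed originals (2b). Since no cancellations occur (edge support is disjoint from original support, and matchings and cycles use distinct edges), each of these bounds is also forced by the deformed code being LDPC, which gives necessity.

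\smallskip\noindent\emph{Item 4.} It suffices to show every nontrivial deformed logical $\overline L$ has weight at least $d$. I treat first the case that $\overline L$ is $Z$-type, writing $\eta_Z$ for its support on edge qubits and $\ell$ for its support on original qubits, so $|\overline L|=|\ell|+|\eta_Z|$. Commutation with the $X$-type cycle checks $B_c$, which span the entire cycle space, forces $\eta_Z$ to be a cut: $\eta_Z=vG^\top$ for some $v\in\mathbb F_2^{|\mathcal V|}$, with $u:=v|_{\mathrm{im}f}$ its restriction to the port, identified via $f^{-1}$ with a subset of $\mathcal L$. The decisive step is to multiply $\overline L$ by the product of $A_w$ over the vertices $w$ in the support of $v$, and, separately, by the product of $A_w$ over the vertices outside the support of $v$; because each edge meets exactly two vertices, the edge part of either product is again $vG^\top=\eta_Z$, so each cancels $\eta_Z$ and leaves an operator supported purely on the original qubits, namely $\overline L'=Z(\ell+f^{-1}(u))$ and $\overline L''=Z(\ell+(\mathcal L\setminus f^{-1}(u)))$, both equivalent to $\overline L$ modulo deformed stabilizers. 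A $Z$-type operator on the original qubits that commutes with all deformed original checks commutes with all original checks (the edge support is what couples them), so $\overline L'$ and $\overline L''$ are each an original logical or stabilizer; neither can be a stabilizer, since a $Z$-type original stabilizer is undeformed, hence a deformed stabilizer, which together with the $A_w$'s would make $\overline L$ trivial. So both are nontrivial original logicals, and with $|\mathcal L|=|\mathrm{im}f|$ (injectivity of $f$),
\begin{equation*}
d\le|\overline L'|\le|\ell|+|u|,\qquad d\le|\overline L''|\le|\ell|+\big(|\mathrm{im}f|-|u|\big).
\end{equation*}
Applying Definition~\ref{def:rel_expansion} with $\mathcal U=\mathrm{im}f$ and $t=d$, then the hypothesis $\beta_d(\mathcal G,\mathrm{im}f)\ge1$,
\begin{equation*}
|\overline L|=|\ell|+|vG^\top|\ \ge\ |\ell|+\beta_d(\mathcal G,\mathrm{im}f)\,\min\!\big(d,\,|u|,\,|\mathrm{im}f|-|u|\big)\ \ge\ |\ell|+\min\!\big(d,\,|u|,\,|\mathrm{im}f|-|u|\big)\ \ge\ d,
\end{equation*}
where the final inequality holds because whichever of $d$, $|u|$, $|\mathrm{im}f|-|u|$ realizes the minimum, the matching one of $|\ell|+d\ge d$, $|\overline L'|\ge d$, $|\overline L''|\ge d$ applies. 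The $X$-type case is the analogous but easier one — strip the edge support with cycle checks rather than vertex checks, observe that the resulting original-qubit operator $X(\ell)$ commutes with all original checks and cannot be a stabilizer (else $\overline L$ would be trivial), hence is a nontrivial original logical of weight $\ge d$; no expansion is needed, as there is no $X$-analog of ``$\overline Z$ becomes a stabilizer''. The general Pauli case reduces to these two by first removing the edge $Z$-support with the $A_w$'s and then separating the $X$- and $Z$-sectors. I expect Item 4 to be the main obstacle: the content is recognizing that the two complementary families of vertex checks produce precisely the pair $\overline L',\overline L''$ whose weights, alongside the trivial bound, are exactly the three arguments of the minimum in Definition~\ref{def:rel_expansion}, so that \emph{relative} expansion — rather than the far stronger global expansion — is what the argument actually requires.
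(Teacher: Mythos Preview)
Your argument for Item~4 in the pure $Z$-type case is clean and correct, and its logic---the two operators $\overline L',\overline L''$ obtained by multiplying $\overline L$ by the vertex checks inside (resp.\ outside) $\mathrm{supp}\,v$ furnish exactly the bounds $d\le|\ell|+|u|$ and $d\le|\ell|+|\mathrm{im}f|-|u|$ matching the three branches of the relative-expansion minimum---is the same mechanism the paper uses in Lemma~\ref{lem:deformed_code_distance}.

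The gap is in the general-Pauli case. Your one-line reduction ``remove the edge $Z$-support with the $A_w$'s and then separate the $X$- and $Z$-sectors'' does not go through as stated when the original code is non-CSS. Concretely: after multiplying by $\prod_{w\in\mathrm{supp}\,v}A_w$, the original-qubit part $P':=P\cdot Z(f^{-1}(u))$ inherits whatever $X$-component $\overline L$ had on the original qubits, so your nontriviality argument ``a $Z$-type original stabilizer is undeformed, hence a deformed stabilizer'' no longer applies. If $P'$ were an original stabilizer with nonempty $X$-support on $\mathcal L$, its deformed counterpart would be $P'\cdot X(\mu)$ for some nonempty matching $\mu$, and you need the extra step: $\overline L'\cdot P'X(\mu)=X(\eta_X+\mu)$ is a pure edge-$X$ operator which, commuting with every vertex check, must satisfy $G(\eta_X+\mu)^\top=0$ and hence be a product of cycle checks---only then reaching the contradiction. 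This is fixable and keeps your direct route intact, but it is not the ``separate the sectors'' reduction you described. (Your $X$-type paragraph has a related imprecision: when $\ell\cap\mathcal L\ne\emptyset$ the edge support $\eta_X$ is not a cycle, so it cannot be stripped by cycle checks; what actually carries that case is simply that $X(\ell)$ already commutes with every original check.)

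For comparison, the paper does not split into $Z$- and $X$-type cases. It cleans once to a representative $\overline\Lambda$ on original qubits, then bounds $|\overline\Lambda\,\mathcal H_Z(v)|$ for \emph{all} $v$ via a Venn-diagram argument that restricts to $\mathcal L^*=\mathcal L\setminus\mathrm{supp}\,\Lambda_X$ (since vertex checks cannot reduce weight on qubits where $\Lambda_X$ already acts) and invokes relative expansion on the smaller port $f(\mathcal L^*)$ via Lemma~\ref{lem:relative_expansion_relation}. Your route---bounding $|\overline L|\ge|\ell|+|vG^\top|$ directly for the given $\overline L$---is arguably more transparent once the nontriviality of $P'$ and $P''$ is established in full generality as above.

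Minor: in Item~0 you use $n$ simultaneously for the number of original-code qubits and for $|\mathcal V|$; these differ in general (only $|\mathcal L|\le|\mathcal V|$ is forced by injectivity of $f$), so the displayed count $(n-k)+n+(m-n+1)$ is wrong as written, though the intended dimension count is fine once the symbols are disambiguated.
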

\begin{proof}

That desiderata $1,2,3$ are necessary and sufficient for guaranteeing the deformed code is LDPC is evident from the construction, since these specify that row and column weights for check matrices $G$, $M$, $N$, respectively, which label the Tanner graph edges shown in Fig.~\ref{fig:gauging_measurement}, are constants. 

For completeness, we prove the sufficiency of desiderata 0 and 4 for their respective purposes in Appendix~\ref{app:desiderata_proofs}. It is worth noting that a stronger form of desideratum 4 is assumed to prove the deformed code distance in prior work \cite{williamson2024gauging}, namely that $\beta(\mathcal{G})\ge1$. Lemma~\ref{lem:relative_expansion_relation} implies this is sufficient for a graph to satisfy our desideratum 4 but is not necessary. We note the port function only needs to be injective for the proof of 4 and not the others.

Appendix~\ref{app:set-valued_port_function} generalizes this theorem slightly to include the notion of a set-valued port function, so that qubits can be connected to multiple vertex checks. This idea is useful for the most general version of our joint measurement scheme in Section~\ref{sec:rep_adapter}, but is not essential to understand the main ideas of this paper.
\end{proof}


We remark that it is easy to create a graph satisfying desiderata 0, 1 and 2 provided the original code is LDPC. This can be done by pairing up the vertices in $f(\mathcal{L}_{s})$ for all stabilizers $s$ and drawing edges between paired vertices. Desideratum 2 is satisfied by construction. The resulting graph must be constant degree because only $O(1)$ sets $\mathcal{L}_{s}$ can contain any given qubit for an LDPC code, and so we have desideratum 1. If the graph is not connected, add additional edges to connect it while preserving constant vertex degree, and so satisfy desideratum 0. So far, this recipe leaves only desiderata 3 and 4 unsatisfied.

We can satisfy desideratum 4 by adding more edges to increase the relative expansion. One approach is to construct a constant degree graph with constant global expansion (using e.g.~\cite{reingold2000entropy,alon2008elementary}) and add its edges to the existing graph. This constant expansion, if initially insufficient, can be boosted to relative expansion at least $1$ using thickening, as we explain in Appendix~\ref{app:relative_exp_lemma}. Alternatively, in a perhaps more practical approach, one can iteratively add edges as in Ref.~\cite{ide2024faulttolerant} until sufficient global or relative expansion is achieved. Either method does not increase the number of edges or degree by more than constant factors.

Thus, we now have a graph $\mathcal{G}_0=(\mathcal{V}_0,\mathcal{E}_0)$ satisfying desiderata 0,1,2, and 4. The next steps will create from $\mathcal{G}_0$ a new graph that satisfies all the desiderata while having only a factor of $O(\log^3|\mathcal{V}_0|)$ more vertices and edges than $\mathcal{G}_0$. This process, illustrated in Fig.~\ref{fig:graph_thickening}, is made possible by the decongestion lemma from Freedman and Hastings. Here we quote the relevant parts of that lemma.
\begin{lemma}[Decongestion Lemma]\cite{freedman2021buildingmanifoldsquantumcodes}\label{lem:decongestion_lemma}
If $\mathcal{G}=(\mathcal{V},\mathcal{E})$ is a graph with vertex degree $O(1)$, then there exists a cycle basis in which each edge appears in at most $O(\log^2|\mathcal{V}|)$ cycles of the basis. Moreover, the basis cycles can be ordered so that each cycle intersects at most polylogarithmically many cycles later in the basis. This basis can be constructed by an efficient randomized algorithm.
\end{lemma}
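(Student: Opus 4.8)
The plan is to construct the basis by divide-and-conquer on $|\mathcal{V}|$, with the two factors of $\log|\mathcal{V}|$ arising from two nested logarithmic-depth processes: an \emph{outer} recursion that cuts the graph roughly in half, and, inside each recursion step, a \emph{balanced} scheme for closing up the resulting cut edges.

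\textbf{Reduction and split.} Bridges lie in no cycle, so we may assume $\mathcal{G}$ is connected and treat components independently; bounded degree forces $|\mathcal{E}|=O(|\mathcal{V}|)$, hence cyclomatic number $O(|\mathcal{V}|)$. Take a spanning tree $T$ of $\mathcal{G}$ (it inherits bounded degree) and a centroid vertex $v$ of $T$: deleting $v$ breaks the rest of $T$ into $k=O(1)$ subtrees spanning vertex sets $\mathcal{V}_1,\dots,\mathcal{V}_k$, each of size $\le|\mathcal{V}|/2$, and each $\mathcal{G}[\mathcal{V}_i]$ is connected and bounded-degree. Recurse on each $\mathcal{G}[\mathcal{V}_i]$ to obtain a cycle basis together with the laminar vertex decomposition generated by that sub-recursion. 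The returned cycles use only edges interior to a single $\mathcal{V}_i$.

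\textbf{Interface cycles.} Let $F$ be the edges of $\mathcal{G}$ not interior to any $\mathcal{V}_i$ (the $O(1)$ edges at $v$ plus all edges between distinct parts). Contracting each $\mathcal{G}[\mathcal{V}_i]$ to a point produces a multigraph $\widehat{\mathcal{G}}$ on $k+1=O(1)$ vertices with edge set $F$; a cyclomatic count shows that lifts of a cycle basis of $\widehat{\mathcal{G}}$ are exactly what is needed to complete the cycle basis of $\mathcal{G}$. A multigraph on $O(1)$ vertices trivially admits a cycle basis of congestion $O(\log|F|)=O(\log|\mathcal{V}|)$: pair up the parallel edges of each class along balanced binary trees of depth $O(\log|F|)$, plus $O(1)$ short ``cross'' cycles linking different classes. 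Each such cycle of $\widehat{\mathcal{G}}$ is lifted to $\mathcal{G}$ by replacing each contracted vertex it visits by a path inside the corresponding $\mathcal{G}[\mathcal{V}_i]$ connecting the relevant endpoints of its $F$-edges, where these paths are routed through the hierarchical decomposition returned by the recursive call on $\mathcal{G}[\mathcal{V}_i]$ so that each cluster boundary in that decomposition is crossed by only $O(\log|\mathcal{V}|)$ of the routed paths. The usual ``each interface cycle is the unique one using its designated new $F$-edge, among it and its binary-tree ancestors'' argument gives independence, so everything together is a cycle basis.

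\textbf{Congestion, ordering, algorithm, and the obstacle.} A fixed edge $g$ is a cut edge (in $F$) at at most one recursion node, so it lies in $\le c(|\mathcal{V}|/2)$ recursive cycles (where $c(N)$ is the congestion bound for $N$-vertex graphs), in $O(\log|\mathcal{V}|)$ interface cycles at that one node, and in the routed paths of interface cycles at its $O(\log|\mathcal{V}|)$ ancestor nodes, $O(\log|\mathcal{V}|)$ per ancestor by the routing guarantee; thus $c(|\mathcal{V}|)\le c(|\mathcal{V}|/2)+O(\log|\mathcal{V}|)=O(\log^2|\mathcal{V}|)$. For the ordering, emit all cycles of the subproblems first (recursively, same rule) and then the interface cycles in bottom-up binary-tree order; a cycle whose home is recursion node $x$ can share an edge only with a cycle higher in $x$'s own interface tree or with one living at one of the $O(\log|\mathcal{V}|)$ ancestor nodes, and at each ancestor only $O(\log|\mathcal{V}|)$ of its interface cycles are routed through the cluster containing $x$, giving $O(\log^2|\mathcal{V}|)$ forward intersections. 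The whole construction is algorithmic; randomness enters only in the low-congestion routing subroutine. The main obstacle is exactly this routing step: closing the cut edges naively (e.g.\ by fundamental cycles through one fixed edge) would dump $\Omega(|\mathcal{V}|)$ congestion onto that edge, so the improvement to $O(\log^2|\mathcal{V}|)$ rests entirely on (i) spreading the cut-closing cycles over balanced binary trees and (ii) routing their path segments through a recursively maintained hierarchical decomposition so that no interior edge is overloaded — and establishing that these two balancing mechanisms compose to only two logarithmic factors, for both the per-edge congestion and the forward-intersection ordering, is where the real work lies.
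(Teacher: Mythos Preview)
The paper does not prove this lemma: it is quoted verbatim as a result of Freedman and Hastings \cite{freedman2021buildingmanifoldsquantumcodes}, with no proof given or sketched in the present paper. There is therefore nothing in the paper's text to compare your proposal against; the authors simply invoke the lemma and then discuss its consequences for thickening.

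That said, a brief comment on your sketch itself. Your outline is a plausible-looking divide-and-conquer, but as you yourself concede in the final sentence, the substance is entirely in the routing step and in showing the two balancing mechanisms compose to only two log factors --- and you have not actually carried this out. In particular, the claim that paths for interface cycles can be ``routed through the hierarchical decomposition \dots\ so that each cluster boundary \dots\ is crossed by only $O(\log|\mathcal{V}|)$ of the routed paths'' is asserted, not proved, and it is the crux of the whole argument. The actual Freedman--Hastings proof takes a rather different route, based on random spanning trees and effective-resistance bounds (this is also why the algorithm is randomized), not on centroid decomposition; if you want to pursue your deterministic-flavored approach you would need to supply the missing low-congestion routing lemma from scratch.
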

\noindent It was pointed out to us \cite{sunny} after the first version of this paper appeared that 
by inspecting of the proof of Freedman and Hastings, one can deduce that each cycle actually overlaps $O(\log^3|\mathcal{V}|)$ cycles later in the basis. As a corollary \cite{sunny}, for any graph $\mathcal{G}_0=(\mathcal{V}_0,\mathcal{E}_0)$, the $L=O(\log^3|\mathcal{V}_0|)$ times thickened graph $\mathcal{G}_0^{(L)}$ (see Definition~\ref{def:thickening}) has a cycle basis in which each edge appears in at most $O(1)$ cycles, satisfying desideratum 3b. This corrects an erroneous claim in the previous version that a $O(\log^2|\mathcal{V}|)$ times thickened graph was sufficient. The construction of this thickened graph and its sparse cycle basis is depicted in Fig.~\ref{fig:graph_thickening}c. 

Because $\mathcal{G}_0$ satisfies desideratum 4 with respect to some port set of vertices $f(\mathcal{L})\subseteq\mathcal{V}_0$, also $\mathcal{G}_0^{(L)}$ satisfies it with respect to any copy of that port in the thickened graph, i.e.~$f(\mathcal{L})\times\{l\}$ for any $l=0,1,\dots,L-1$. See Appendix~\ref{app:relative_exp_lemma} for the proof. 

\begin{figure*}
\includegraphics[width=\textwidth]{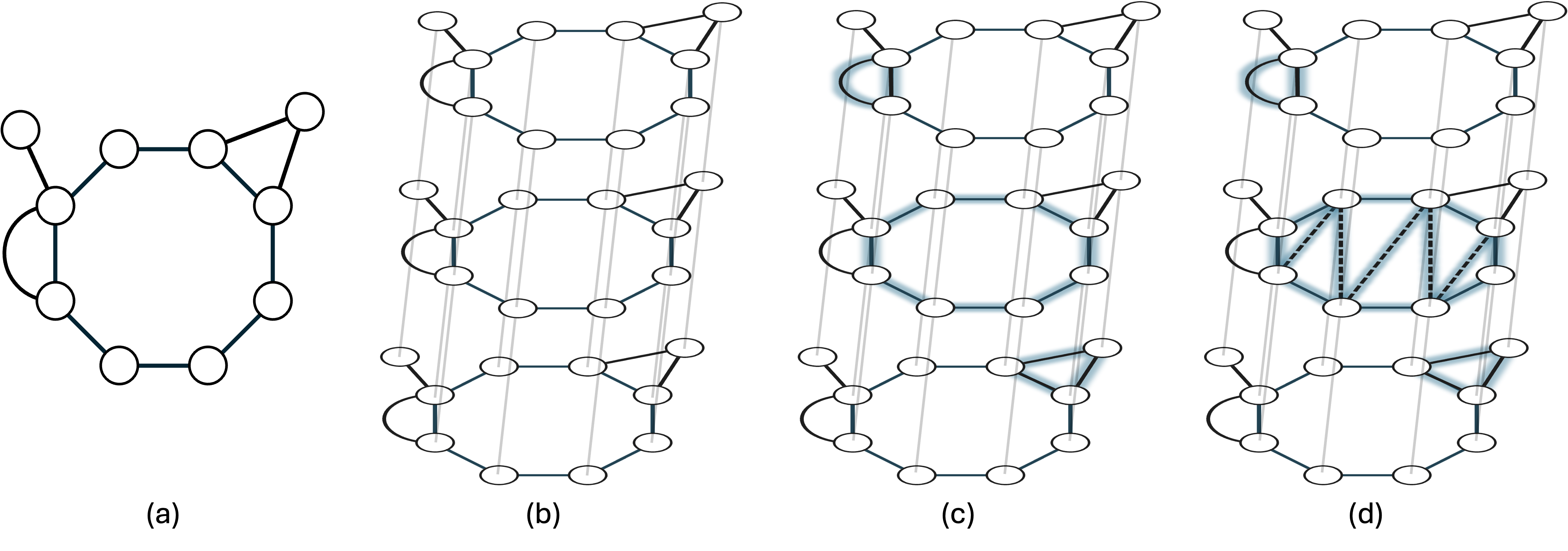}
\caption{An example of the steps going into constructing a graph $\mathcal{G}$ satisfying all the desiderata of Theorem~\ref{thm:graph_desiderata} starting from a graph $\mathcal{G}_0$ satisfying just desiderata 0, 1, and 2. (a) The graph $\mathcal{G}_0$. (b) The thickened graph $\mathcal{G}^{(L)}_0$, here with $L=3$. Note the edges (drawn lighter gray) connecting the corresponding vertices in adjacent layers. (c) A cycle basis for $\mathcal{G}^{(L)}_0$ includes every length four cycle constructed from edge $e=(i,j)$ in layer $l$, the copy of edge $e$ in adjacent layer $l+1$, and the lighter gray edges connecting the copies of $i$ and $j$. The cycle basis of $\mathcal{G}^{(L)}_0$ also includes (highlighted) one cycle for each cycle in a basis of $\mathcal{G}_0$, and each such cycle can be put into any one of the layers independently. A highlighted cycle is equivalent to its copies in other layers by adding to it the length four cycles between layers. (d) Long cycles in the basis can now be cellulated by adding edges (dashed) and including the resulting triangles in the cycle basis instead.}
\label{fig:graph_thickening}
\end{figure*}

The final step is to satisfy desiderata 3a. This necessitates that $\mathcal{G}^{(L)}_0$ not have any long cycles in its cycle basis. However, this can be done by adding enough edges to cellulate each long cycle in the basis into constant length cycles, as depicted in Fig.~\ref{fig:graph_thickening}d. The constant length cycles can be made very short, even length three if desired. Because the graph $\mathcal{G}^{(L)}_0$ already has a basis satisfying 3b, this cellulation, which only modifies cycles in the basis, cannot increase the degree of any vertex by more than $O(1)$. Moreover, cellulation can only increase relative expansion. Thus, the cellulated version of $\mathcal{G}^{(L)}_0$ is our final graph $\mathcal{G}$ that, together with the port set $f(\mathcal{L})\times\{l\}$ described above, satisfies all desiderata in Theorem~\ref{thm:graph_desiderata}.


Finally, we briefly summarize the entire auxiliary graph surgery protocol in spacetime from Ref.~\cite{williamson2024gauging}. There are four steps (1) initialize all edge qubits $\mathcal{E}$ in the $\ket{+}$ state, (2) repeat the measurement of the deformed code's checks at least $d$ times, (3) measure out all the edge qubits in the $X$ basis, and (4) apply a Pauli correction to the original code qubits to return to the original codespace. To describe the correction in the last step, fix an arbitrary vertex $v_0 \in \mathcal{V}$. For any other vertex $v$, let $\mu_v$ be an arbitrary path from $v_0$ to $v$ in graph $\mathcal{G}$, and let $m_e$ be the $\pm1$ measurement result from measuring the edge qubit $e$. If $\prod_{e \in \mu_v} m_e=-1$ apply a correction of $Z_q$ for the qubit $q=f^{-1}(v)$ connected to the check on $v$. 

We note that auxiliary graph surgery implemented this way has phenomenological fault distance \cite{beverland2024fault} equal to the code distance $d$ of the original code. The phenomenological fault distance is the minimum number of qubit or measurement errors that can cause an undetected logical error (note, other noise in the circuits for measuring checks is not included). We refer to \cite{williamson2024gauging, cross2024linear}. for the proof.

\section{The $\mathsf{SkipTree}$ basis transformation} \label{sec:skiptree}

Consider a \emph{classical} code defined on a connected graph $\mathcal{G}=(\mathcal{V},\mathcal{E})$ with vertices representing $n=|\mathcal{V}|$ bits and edges representing $m=|\mathcal{E}|$ parity checks. It is clear the code is equivalent to a repetition code of length $n$ with an unconventional basis of parity checks. Indeed, the incidence matrix $G\in\mathbb{F}_2^{m\times n}$ of the graph is the parity check matrix of the code.

In this section, we ask if we can instead always use a canonical basis of repetition code checks $H_C$ such that each of these canonical checks is a product of a constant number (independent of $n$) of the old checks of $G$. We allow bits in the canonical basis to have different indices than they had in the old basis. Thus, our question is equivalent to asking whether there is always a sparse transformation matrix $T\in\mathbb{F}_2^{n\times m}$ and permutation $P$ (representing the aforementioned bit relabeling) such that $TGP=H_C$.

Our main result is the following.
\begin{theorem}\label{thm:skiptree}
For any connected graph with $n$ vertices, $m$ edges, and $m\times n$ incidence matrix $G$, there exists an $n\times m$ $(3,2)$-sparse matrix $T$ and $n\times n$ permutation matrix $P$ such that $TGP=H_C$. There is also an algorithm to find $T$ and $P$ that takes $O(n+m)$ time (returning $T$ and $P$ as sparse matrices).
\end{theorem}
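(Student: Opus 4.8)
The plan is to build $T$ and $P$ by working with a spanning tree of $\mathcal{G}$ and then ``peeling off'' leaves in a way that produces the cyclic structure of $H_C$. First I would pick a spanning tree $\mathcal{T}$ of the connected graph $\mathcal{G}$; this uses $n-1$ of the $m$ edges, and the incidence rows of the tree edges already span the full rowspace of $H_C$ (since $H_C$ has rank $n-1$ and any spanning tree's incidence matrix has rank $n-1$). So the non-tree edges are irrelevant: $T$ will have zero columns on them, and I only need to express the $n-1$ canonical checks as sparse sums of the $n-1$ tree-edge checks. Equivalently, letting $G_{\mathcal{T}}$ be the $(n-1)\times n$ incidence matrix of the tree, I need a $(3,2)$-sparse $(n-1)\times(n-1)$... wait, actually $T$ is $n\times m$ and $H_C$ is $n\times n$, so the last row of $H_C$ is the (redundant) sum of the others; I would produce the first $n-1$ rows of $H_C$ as sparse combinations of tree edges and take the last row to be, say, a single tree edge or the appropriate sum — this needs care to keep row weight $\le 3$.

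The core combinatorial step is an ordering of the vertices. I would root $\mathcal{T}$ and repeatedly remove a leaf; this gives an elimination order $v_1, v_2, \dots, v_n$ where each $v_i$ (for $i<n$) is a leaf of the tree induced on $\{v_i,\dots,v_n\}$, attached by a unique edge $e_i$ to its parent. Relabel bits so that the canonical check $H_C$ cyclically pairs consecutive bits $(1,2),(2,3),\dots,(n-1,n),(n,1)$ — here $P$ is exactly this relabeling. The edge $e_i$ is incident to $v_i$ and its parent, say $v_{p(i)}$ with $p(i)>i$; as a row of $G$, $e_i = \mathsf{row}$ with support $\{i, p(i)\}$ (after relabeling). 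I want to produce the target row $h_i$ with support $\{i,i+1\}$. The telescoping identity is $h_i = e_i + e_{i+1} + \dots$? No — that's not constant weight. The $\mathsf{SkipTree}$ name suggests the trick: one should choose the leaf order and the labeling together so that when you remove leaf $v_i$, its parent becomes ``merged'' with it, i.e. the remaining tree is as if $v_i$ were contracted, and the new bit $i+1$ is chosen to be a current leaf of the contracted tree. Then $h_i = e_i$ plus at most one ``correction'' edge that was used to relabel, keeping weight $\le 3$; and each original tree edge gets used in at most $2$ of the $h_i$. The precise bookkeeping — showing a leaf-removal order exists under which each canonical check is a sum of $\le 3$ tree edges and each tree edge appears in $\le 2$ canonical checks — is what the algorithm does in $O(n+m)$ time (linear-time spanning tree via BFS/DFS, linear-time leaf peeling with a degree-counter queue, and sparse-matrix output).

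The main obstacle I expect is exactly the simultaneous control of \emph{both} sparsity bounds: naively telescoping a tree path to reach a canonical weight-2 check blows up row weight, so the cleverness is in interleaving the bit-relabeling $P$ with the leaf-peeling so that consecutive canonical bits are always ``tree-adjacent after contraction.'' I would prove this by induction on $n$: strip a leaf $v$ of $\mathcal{T}$ with pendant edge $e=(v,w)$, apply the theorem to the smaller tree $\mathcal{T}/e$ on $n-1$ vertices to get $T', P'$ with $T'G_{\mathcal{T}/e}P' = H_C^{(n-1)}$, then ``re-insert'' $v$: relabel so $v$ is the new bit adjacent to $w$'s image, set the new canonical check across $(v,w)$ equal to the single edge $e$, and argue that patching $H_C^{(n-1)}$ up to $H_C^{(n)}$ costs at most one extra edge in at most one existing row (the row for the canonical check that previously ``ended'' at $w$'s position) — maintaining row weight $\le 3$ and column weight $\le 2$. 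Verifying that the contraction $\mathcal{T}/e$ stays a tree is immediate, and tracking which row absorbs the extra edge is the delicate part of the induction; once that invariant is nailed down, $TGP=H_C$ and the complexity bound both follow routinely.
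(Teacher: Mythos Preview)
Your reduction to a spanning tree is correct and matches the paper: only tree edges matter, and the problem becomes producing a Hamiltonian cycle in which consecutive vertices are at tree-distance at most $3$ (equivalently, a Hamiltonian cycle in the cube $\mathcal{T}^3$). The column bound of $2$ also falls out exactly as you say.

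The gap is in the inductive step. You strip a leaf $v$ with parent $w$, label $\mathcal{T}\setminus v$ by induction, and reinsert $v$ cyclically next to $w$; one new row is the single edge $(v,w)$, and one old row adjacent to $w$ gets that edge added. But the inductive hypothesis only gives that the old row had weight $\le 3$, so after patching it may have weight $4$. You have a choice of which side of $w$ to insert on, so you really need that at least one of $w$'s two adjacent rows has weight $\le 2$ --- and this is \emph{not} implied by the hypothesis ``all rows have weight $\le 3$.'' The natural strengthened invariant ``every vertex has an adjacent row of weight $\le 2$'' is not preserved either: when you insert on the weight-$2$ side of $w$, the vertex $u$ on the far end of that row sees its weight-$2$ neighbor row become weight $3$, and its \emph{other} row may already have been weight $3$, so $u$ now has two weight-$3$ neighbors and a later insertion at $u$ would fail. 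You flag exactly this bookkeeping as ``the delicate part,'' but you do not supply an invariant that survives the step, and it is not clear one exists for arbitrary peeling orders.

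The paper sidesteps the induction entirely. It labels the tree in one pass by a two-colored DFS: a vertex is either ``first-type'' (labeled before its subtree) or ``last-type'' (labeled after its subtree), alternating with depth. The permutation $P$ is this labeling, and row $i$ of $T$ is the unique tree path from label $i$ to label $i{+}1\pmod n$. A direct case analysis (nine cases, depending on the type of node $i$ and the local structure of children/siblings) shows every such path has length at most $3$, and a separate two-line argument shows each tree edge lies on exactly two of these paths. No inductive invariant about ``which row can still absorb an edge'' is needed; the alternating first/last discipline is precisely what guarantees you never travel more than three edges to reach the next label.
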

\begin{proof}
We analyze Algorithm~\ref{alg:skiptree} that returns $T,P$ given $G$. We assume $n>1$, $m>0$ in the following since otherwise the theorem is trivial. Notably, the number of edges need not be bounded and our algorithm works for high-degree graphs.

The algorithm works by first finding a spanning tree $S$ of the graph $G$. 
We may choose an arbitrary node $r$ to be the root of $S$, which also uniquely defines parent/child relationships for the whole tree. Finding a spanning tree can be done in $O(n+m)$ time \cite{cormen2022introduction} and can be stored in a data structure in which it is constant time to find the children or parent of a given node, e.g.~by having each node store the indices of its neighbors in the tree.

Now we proceed recursively with two functions $\mathsf{LabelFirst}$ and $\mathsf{LabelLast}$, both of which take a node in the tree as an argument. In words, $\mathsf{LabelFirst}(i)$ will first label node $i$ with the next unused integer from $\mathbb{Z}_n=\{0,1,\dots,n-1\}$ (the next unused integer is tracked globally), and then call $\mathsf{LabelLast}(j)$ on each child $j$ of $i$. Similarly, $\mathsf{LabelLast}(i)$ will call $\mathsf{LabelFirst}(j)$ on each child $j$ of $i$ and, only after all those function calls have finished, will label node $i$ with the next unused integer. Each node is either the argument of a $\mathsf{LabelFirst}$ call or a $\mathsf{LabelLast}$ call (not both), and so is labeled exactly once. Moreover, the ``first" and ``last" type nodes make a two-coloring of the vertices of the spanning tree (though not necessarily a two-coloring of graph $G$) in the sense that no two first-type nodes are adjacent and no two last-type nodes are adjacent in the spanning tree. First-type nodes are labeled before every other node in their sub-tree, while last-type nodes are labeled after every other node in their sub-tree.

This recursion structure means that very often (though not always), nodes labeled $i$ and $i+1\inlinemod{n}$ are not adjacent in the spanning tree but instead have one or two nodes separating them, thus explaining $\mathsf{SkipTree}$ as the algorithm's name. See Fig.~\ref{fig:skiptree_example} for an example. This is the desired behavior, because (as we show later) we want to ensure that for all nodes $i$, the shortest path from $i$ to $i+1\inlinemod{n}$ in the tree is constant length. Intuitively, this involves traveling up and down the tree, labeling nodes as we go, but when traveling away from the root we should leave some nodes unlabeled so they can be labeled later on the way back to the root.

\begin{figure*}[t]
    \centering
    \includegraphics[width=0.75\linewidth]{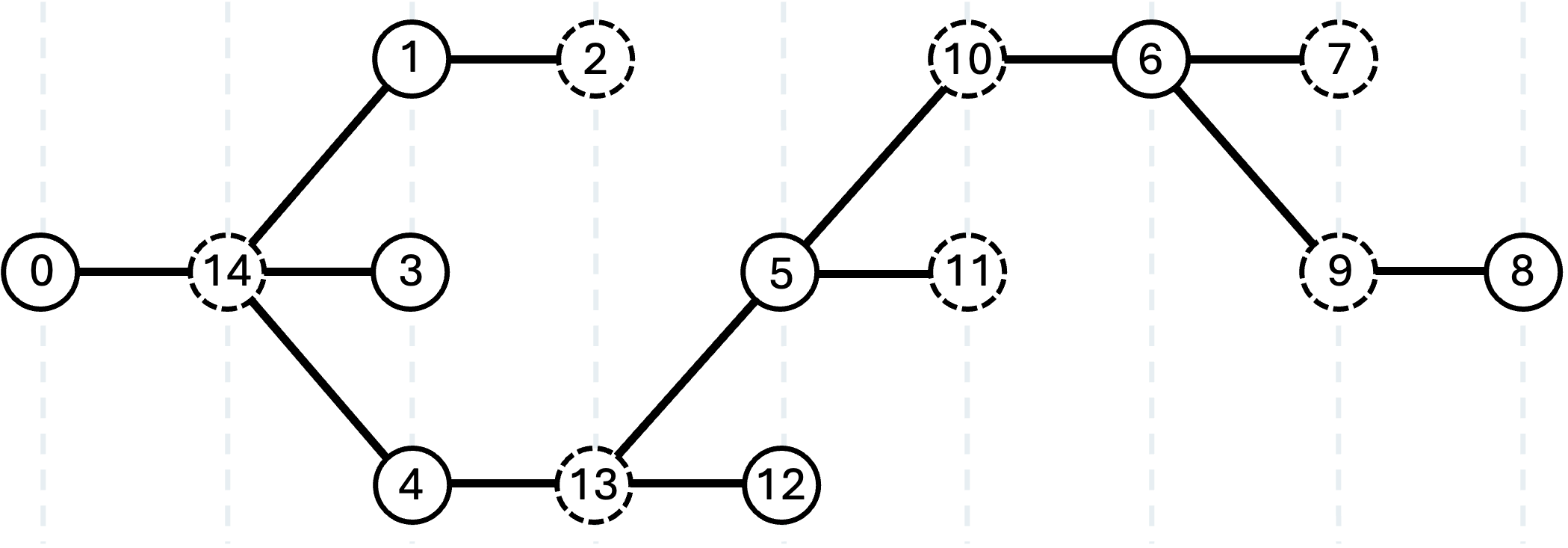}
    \caption{An example spanning tree with nodes labeled according to the $\mathsf{SkipTree}$ algorithm, Algorithm~\ref{alg:skiptree}. The root node is labeled $0$ on the far left, and first-type nodes are represented with solid circles, while last-type nodes are dashed. Path $i$ is the unique shortest path in the spanning tree from node labeled $i$ to the node labeled $i+1\inlinemod{15}$. The proof of Theorem~\ref{thm:skiptree} involves arguing that all such paths are length three or less. Examples of all cases encountered in that proof are included in this figure -- path 12 for case (1), path 3 for case (2), path 0 for case (3), path 6 for case (4), path 14 for case (a), path 13 for case (b), path 11 for case (c), path 10 for case (d), and path 7 for case (e).}
    \label{fig:skiptree_example}
\end{figure*}

Although it can be arbitrary, there is an order in which functions $\mathsf{LabelFirst}$ and $\mathsf{LabelLast}$ are called on the children of a node $i$. If $j$ is the first child on which the function is called, we call $j$ the oldest child of $i$. If $j$ is the last on which the function is called, we call $j$ the youngest child of $i$. We introduce the ``oldest" and ``youngest" terminology to avoid using the words ``first" and ``last" again. We can also use terms like ``next oldest" and ``next youngest" to refer to this order of children.

We construct matrix $T$ so that its row $i$ indicates the edges of $G$ that are in the unique shortest path between the node labeled $i$ and the node labeled $i+1\inlinemod{n}$ in the spanning tree. Note that only edges in the spanning tree are used in these paths. This is perhaps less than optimal for some graphs but is sufficient to achieve a $T$ of constant sparsity without worrying about the graph structure beyond the spanning tree. We refer to the shortest path between $i$ and $i+1\inlinemod{n}$ as path $i$.

Each edge $e=(f,l)$ of the spanning tree is adjacent to a first-type node $f$ and a last-type node $l$, and the label of the last-type node must be larger than the label of the first-type node. If $f$ is the parent of $l$, then path $f$ and path $l$ include the edge $e$. If instead $l$ is the parent of $f$, then path $f-1\inlinemod{n}$ and path $l-1\inlinemod{n}$ include the edge $e$. In either case, once the edge is used the second time, the entire sub-tree below and including the edge has been labeled, so the edge is not included in any other paths. This shows the column weight of $T$ is exactly two (for all columns corresponding to edges in the spanning tree).

The weight of the $i^\text{th}$ row of $T$ is the number of edges in path $i$. We claim the longest a path can be is length three. We show this to complete the proof. An example including all cases encountered in this part of the proof is provided in Fig.~\ref{fig:skiptree_example}.

Consider path $i$ where $i$ is a first-type node. Note that the node labeled $n-1$ is always last-type and so $i\neq n-1$. We first assume node $i$ has no children, which also implies $i$ is not the root and it has a parent. There are two sub-cases -- (1) if node $i$ is the youngest child of its parent, then its parent will be labeled $i+1$, in which case the path length is one, (2) if node $i$ is not the youngest child of its parent, then its next youngest sibling is labeled $i+1$, in which case the path length is two. Next, assume node $i$ has at least one child. There are again two sub-cases -- (3) if the oldest child of $i$ has a child, then node $i+1$ will be a grandchild of $i$, in which case the path length is two, (4) if the oldest child of $i$ does not have a child, then node $i+1$ will be the oldest child of $i$, in which case the path length is one.

Now consider path $i$ where $i$ is a last-type node. Because node $i$ has been labeled, its entire sub-tree has already been labeled with integers $<i$. Also, node $i$ always has a parent because the only parentless node, the root, is first-type.
We first assume node $i$ is the youngest child of its parent $p$. There are three sub-cases -- (a) Node $p$ has no parent and so is the root, which implies $i=n-1$, $i+1\inlinemod{n}=0$, and the path length is one, (b) Node $p$ is the youngest child of its parent, which implies $i+1$ is the grandparent of $i$ and the path length is two, (c) Node $p$ is not the youngest child of its parent, which implies the next youngest child of the parent of $p$ is $i+1$ (i.e.~an ``uncle" of node $i$) and the path length is three. Now, assume node $i$ is not the youngest child of its parent $p$. There is a next youngest child $y$ and it is a last-type node. There are two sub-cases -- (d) $y$ has no children, which implies $y$ is labeled $i+1$ and a path length of two, (e) $y$ has a child, which implies its oldest child is labeled $i+1$ (i.e.~a ``nephew" of $i$) and the path length is three.
\end{proof}

\begin{algorithm}[H]
\caption{Given a connected graph $G\in\mathbb{F}_2^{\;m\times n}$, find $T\in\mathbb{F}_2^{\;n\times m}$ such that $TGP=H_C$. Both $T$ and $P$ have $O(n)$ nonzero entries and can be constructed and returned as sparse matrices.}
\label{alg:skiptree}
\begin{algorithmic}[1]
    \State\protect\Comment{Note we say ``vertex $v$" if it corresponds to the $v^\text{th}$ column of $G$. One goal of this algorithm is to label vertices uniquely with integers from $\mathbb{Z}_n$ in such a way to guarantee $T$ is sparse. We say ``vertex $v$ is labeled $l$" if it acquires label $l\in\mathbb{Z}_n$. Of course, $v$ need not equal $l$.}
    \Procedure{$\mathsf{SkipTree}$}{$G$}
        \State $S\leftarrow$ a spanning tree of $G$ \protect\Comment{has incidence matrix $S_I\in\mathbb{F}_2^{\;n-1\times n}$ that we do not need to store}
        \State Index $\leftarrow0$ 
        \State $\mathrm{Label}\leftarrow$ empty list of length $n$
        \Procedure{$\mathsf{LabelFirst}$}{$v$}
            \State $\mathrm{Label}[\mathrm{Index}]\leftarrow v$
            \State Index $\leftarrow$ Index + 1
            \For{each child of vertex $v$ in $S$} 
                \State $\mathsf{LabelLast}(\mathrm{child})$
            \EndFor
        \EndProcedure
        
        \Procedure{$\mathsf{LabelLast}$}{$v$}
            \For{each child of vertex $v$ in $S$} 
                \State $\mathsf{LabelFirst}(\mathrm{child})$
            \EndFor
            \State $\mathrm{Label}[\mathrm{Index}]\leftarrow v$
            \State Index $\leftarrow$ Index + 1
        \EndProcedure

        \State $\mathsf{LabelFirst}(0)$ \protect\Comment{we choose the root to be 0. After this line, $\mathrm{Label}[l]=v$ means vertex $v$ is labeled $l$.}
        \State $P \leftarrow n\times n$ matrix with $P_{vl}=1$ iff $\mathrm{Label}[l]=v$.
        \State $\tilde T\leftarrow$ matrix with $n$ rows and $n-1$ columns.
        \State $\tilde T_{le}=1$ iff edge $e$ is part of the shortest path in $S$ from $\mathrm{Label}[l]$ to $\mathrm{Label}[l+1]$. \protect\Comment{now $\tilde TS_I=H_CP^\top$}
        \State Add zero columns to $\tilde T$, obtaining $T$ so that $TG=\tilde TS_I$.
        \State Return $T$, $P$.

    \EndProcedure
\end{algorithmic}
\end{algorithm}

We remark on modifications to the $\mathsf{SkipTree}$ algorithm and special cases. First, we observe that the $\mathsf{SkipTree}$ algorithm can also be used to find $T',P'$ such that $T'GP'=H_R$ where $H_R\in\mathbb{F}_2^{(n-1)\times n}$ is the canonical \emph{full-rank} parity check matrix of the repetition code. That is, $H_R$ is $H_C$ with the last row removed. Thus, it is clear the $\mathsf{SkipTree}$ algorithm also solves this problem by just removing the last row of $T$ to get $T'$ and setting $P'=P$. However, we also present a slightly modified version of the $\mathsf{SkipTree}$ algorithm in Appendix~\ref{app:skip_fullrank} to improve the sparsity of $T'$ for some graphs.

If there exists a Hamiltonian cycle in $G$, i.e.~a cycle that visits every vertex and uses every edge at most once, then the $\mathsf{SkipTree}$ algorithm is not necessary to solve $TGP=H_C$. Instead, $T$ can be chosen to be the $(1,1)$-sparse matrix that selects just those edges from $G$ that are in the Hamiltonian cycle. Likewise, $T'GP'=H_R$ can be solved with a  $(1,1)$-sparse matrix $T'$ if there exists a Hamiltonian path in $G$. Unfortunately, these observations are not generally useful as it is well-known that finding a Hamiltonian cycle or path (or even determining the existence of one) are NP-complete problems.

For graph $\mathcal{G}$, the power graph $\mathcal{G}^p$ is defined to have the same vertex set as $\mathcal{G}$ and edges connecting any two vertices that are at most distance $p$ apart in $\mathcal{G}$. The $\mathsf{SkipTree}$ algorithm provides an alternative proof of the fact that, for any connected graph $\mathcal{G}$, the graph $\mathcal{G}^3$ has a Hamiltonian cycle \cite{karaganis1968cube,chartrand1969cube}. Similarly, some properties of $\mathcal{G}$ are known to be sufficient so that $\mathcal{G}^2$ has a Hamiltonian cycle \cite{fleischner1974square,chartrand1974square,radoszewski2011hamiltonian}. This body of literature in graph theory could likely be used to furnish more variants of the $\mathsf{SkipTree}$ basis transformation with, for instance, reduced sparsity of matrix $T$ in certain cases, but we leave rigorous exploration of this direction to future work.

To conclude this section, we provide some intuition for why Theorem~\ref{thm:skiptree} is useful in quantum LDPC code deformations. Consider some CSS (for simplicity, not necessity) quantum LDPC code with check matrices $H_X,H_Z$ and a logical $Z$-type operator $L$ of that code. Let $S_X$ be the sub-matrix of $H_X$ that is supported on $\overline{Z}$. It is known that when $L$ is irreducible (i.e.~there is no other $Z$-type logical or stabilizer supported entirely within the support of $\overline{Z}$), then $S_X$ is a parity check matrix of a repetition code (see Lemma~8 of \cite{cross2024linear} or Appendix~\ref{app:supportlemma}). While $S_X$ is perhaps not the incidence matrix of a graph (because its rows may have weight larger than two), auxiliary graph surgery \cite{williamson2024gauging} decomposes each such higher weight row into a sum of weight two rows, and thus we obtain a graph incidence matrix $G$ that is also a parity check matrix for the repetition code. It is this parity check matrix that can undergo a basis change into canonical form $H_C$. 

For instance, see the Tanner graph in Fig.~\ref{fig:ldpc_deformation_example}, in which we claim (1) all the checks commute, (2) the code is LDPC, and (3) $Z(\mathcal{L})$ and $Z(\mathcal{Q})$ are equivalent logical operators. The idea encapsulated by this example is the same used to connect codes with repetition code and toric code adapters in Sections~\ref{sec:rep_adapter} and \ref{sec:toric_adapter}. For the adapter constructions, we delve deeper into the details of the deformed codes.

\begin{figure}[t]
\centering
\begin{tikzpicture}[]
\pgfmathsetmacro{\ewid}{1pt}
\pgfmathsetmacro{\drwid}{0.8pt}
\pgfmathsetmacro{\sqrsz}{0.7cm}
\pgfmathsetmacro{\sqrdf}{0.12}
\pgfmathsetmacro{\crcsz}{0.4}
\pgfmathsetmacro{\crcdf}{0.1}
\pgfmathsetmacro{\basecodewid}{2.3}
\pgfmathsetmacro{\brnd}{0.028cm}
\definecolor{edgcol}{rgb}{0.5, 0.5, 0.5}
\definecolor{bl}{rgb}{0.63, 0.79, 0.95}

 \fill[bl, opacity=0.3]
        (-3.1, -0.1) arc[start angle=180, end angle=270, x radius=\brnd, y radius=\brnd] -- ++(\basecodewid, 0) arc[start angle=270, end angle=360, x radius=\brnd, y radius=\brnd] -- ++(0, 2.6) arc[start angle=0, end angle=90, x radius=\brnd, y radius=\brnd] -- ++(-\basecodewid, 0) arc[start angle=90, end angle=180, x radius=\brnd, y radius=\brnd]-- cycle;

        \draw[draw=edgcol,line width=\ewid] (-2.5,0) to (4.8,0);
        \draw[draw=edgcol,line width=\ewid] (-2.5,2) to (4.8,2);
        \draw[draw=edgcol,line width=\ewid] (0,0) to (0,2);
        
        \begin{scope}[shift={(-2.2,0)}]
            \draw[draw=edgcol,line width=\ewid] (-0,0) to (0,2);
           \draw[fill=white,line width=\drwid] (0-\crcdf,2+\crcdf) circle (\crcsz);
            \draw[fill=white,line width=\drwid] (0,2) circle (\crcsz) node {};
            \node[draw, line width=\drwid, fill=white, minimum size=\sqrsz] at (0-\sqrdf, 0+\sqrdf) {};
            \node[draw, line width=\drwid, fill=white,minimum size=\sqrsz] at (0, 0) {}; 
            \node[text=black!90] at (1, 2.9) {{\scriptsize original code}};
        \end{scope}
        
        \draw[fill=white,line width=\drwid] (0-\crcdf,0+\crcdf) circle (\crcsz);
        \draw[fill=white,line width=\drwid] (0,0) circle (\crcsz) node {$\bar{Z}$};
        \node[draw, line width=\drwid, fill=white, minimum size=\sqrsz] at (0-\sqrdf, 2+\sqrdf) {};
        \node[draw, line width=\drwid, fill=white,minimum size=\sqrsz] at (0, 2) {};

        \node[] at (-0.65, 1) {{\footnotesize $[S_X | S_Z]$}};
        
        \node[] at (1.2, 0.2) {{\footnotesize $F$}};
        \node[] at (1.35, 2.2) {{\footnotesize $[M|0]$}};

        \node[] at (0.4, -0.4) {{ $\mathcal{L}$}};
        \node[] at (0.6, 1.65) {{$\mathcal{S}$}};

        \begin{scope}[shift={(2.4,0)}]
        \draw[draw=edgcol,line width=\ewid] (0,0) to (0,4);

        \draw[draw=edgcol,dashed,line width=\ewid-0.1] (1,-0.7) to (1,3.4);
        \draw[fill=white,line width=\drwid] (0-\crcdf, 2+\crcdf) circle (\crcsz);
        \draw[fill=white,line width=\drwid] (0,2) circle (\crcsz) node {};
         \node[draw, line width=\drwid,fill=white, minimum size=\sqrsz] at (0-\sqrdf,0+\sqrdf) {};
        \node[draw, line width=\drwid,fill=white,minimum size=\sqrsz] at (0, 0) {$Z$};
        \node[draw, line width=\drwid,fill=white, minimum size=\sqrsz] at (0-\sqrdf,4+\sqrdf) {};
        \node[draw, line width=\drwid,fill=white,minimum size=\sqrsz] at (0, 4) {$X$};
        \node[] at (-0.25, 1) {{\footnotesize $G^{\top}$}};
        \node[] at (1.4, 2.2) {{\footnotesize $T$}};
        \node[] at (1.4, 0.2) {{\footnotesize $P$}};
        \node[] at (-0.25, 3) {{\footnotesize $N$}};
        \node[] at (0.5, 1.7) {{ $\mathcal{E}$}};
        \node[] at (0.55, -0.4) {{ $\mathcal{V}$}};
        \node[] at (0.55, 3.6) {{ $\mathcal{U}$}};
        \end{scope}

        \begin{scope}[shift={(4.8,0)}]
        \draw[draw=edgcol,line width=\ewid] (0,0) to (0,2);
        \node[draw, line width=\drwid,fill=white, minimum size=\sqrsz] at (0-\sqrdf, 2+\sqrdf) {};
        \node[draw, line width=\drwid,fill=white,minimum size=\sqrsz] at (0, 2) {$X$};
        \draw[fill=white,line width=\drwid] (0-\crcdf,0+\crcdf) circle (\crcsz);
        \draw[fill=white,line width=\drwid] (0,0) circle (\crcsz) node {};
        \node[] at (-0.25, 1) {{\footnotesize $H_C$}};
        \node[] at (0.5, 1.7) {{ $\mathcal{C}$}};
        \node[] at (0.45, -0.4) {{ $\mathcal{Q}$}};
        \end{scope}

\end{tikzpicture}
\caption{An LDPC code deformation which makes use of Theorem~\ref{thm:skiptree} to ensure the code pictured is LDPC and its checks commute. Building on top of the construction in Fig.~\ref{fig:gauging_measurement} (left of the dashed line), edges $T$ and $P$ are introduced, which are outputs of the $\mathsf{SkipTree}$ algorithm, that connect to new qubits $\qcal$ and checks $\ccal$ having repetition code structure. }\label{fig:ldpc_deformation_example}
\end{figure}

\section{The repetition code adapter for joint logical measurements} \label{sec:rep_adapter}

Of practical concern in fault-tolerant quantum computing is the connection of bespoke systems designed to measure two operators $\overline{Z}_l$ and $\overline{Z}_r$ into one system to measure the product $\overline{Z}_l\overline{Z}_r$ without measuring either operator individually. The operators $\overline{Z}_l$ and $\overline{Z}_r$ may be in different codeblocks or the same codeblock. While these logicals may overlap, we assume that they do not anti-commute locally on any single qubit so that they can both be made $Z$-type simultaneously via single-qubit Cliffords. If there are multiple codeblocks, we still refer to them together as the original code, which just happens to be separable.

\begin{figure*}[t]
    \centering
    \begin{tikzpicture}[]
\pgfmathsetmacro{\ewid}{1pt}
\pgfmathsetmacro{\drwid}{0.8pt}
\pgfmathsetmacro{\sqrsz}{0.7cm}
\pgfmathsetmacro{\sqrdf}{0.12}
\pgfmathsetmacro{\crcsz}{0.4}
\pgfmathsetmacro{\crcdf}{0.1}
\pgfmathsetmacro{\basecodewid}{1.2}
\pgfmathsetmacro{\brnd}{0.016cm}
\definecolor{edgcol}{rgb}{0.5, 0.5, 0.5}
\definecolor{bl}{rgb}{0.63, 0.79, 0.95}

 \fill[bl, opacity=0.3]
        (-1.3, -0.3) arc[start angle=180, end angle=270, x radius=\brnd, y radius=\brnd] -- ++(\basecodewid, 0) arc[start angle=270, end angle=360, x radius=\brnd, y radius=\brnd] -- ++(0, 2.8) arc[start angle=0, end angle=90, x radius=\brnd, y radius=\brnd] -- ++(-\basecodewid, 0) arc[start angle=90, end angle=180, x radius=\brnd, y radius=\brnd]-- cycle;

        \draw[draw=edgcol,line width=\ewid] (-1,0) to (10.4,0);
        \draw[draw=edgcol,line width=\ewid] (-1,2) to (10.4,2);
        
       \draw[draw=edgcol,line width=\ewid] (0,0) to (0,2);
        \draw[fill=white,line width=\drwid] (0-\crcdf,0+\crcdf) circle (\crcsz);
        \draw[fill=white,line width=\drwid] (0,0) circle (\crcsz) node {$\bar{Z}_l$};
        \node[draw, line width=\drwid, fill=white, minimum size=\sqrsz] at (0-\sqrdf, 2+\sqrdf) {};
        \node[draw, line width=\drwid, fill=white,minimum size=\sqrsz] at (0, 2) {};

        \node[] at (-0.7, 1) {{\footnotesize $[S^X_l | S^Z_l]$}};
        
        \node[] at (1.2, 0.2) {{\footnotesize $F_l$}};
        \node[] at (1.3, 2.2) {{\footnotesize $[M_l|0]$}};

        \node[] at (0.5, -0.4) {{ $\mathcal{L}_l$}};
        \node[] at (0.65, 1.65) {{$\mathcal{S}_l$}};

        \begin{scope}[shift={(2.4,0)}]
        \draw[draw=edgcol,line width=\ewid] (0,0) to (0,4);
        \draw[fill=white,line width=\drwid] (0-\crcdf, 2+\crcdf) circle (\crcsz);
        \draw[fill=white,line width=\drwid] (0,2) circle (\crcsz) node {};
         \node[draw, line width=\drwid,fill=white, minimum size=\sqrsz] at (0-\sqrdf,0+\sqrdf) {};
        \node[draw, line width=\drwid,fill=white,minimum size=\sqrsz] at (0, 0) {$Z$};
        \node[draw, line width=\drwid,fill=white, minimum size=\sqrsz] at (0-\sqrdf,4+\sqrdf) {};
        \node[draw, line width=\drwid,fill=white,minimum size=\sqrsz] at (0, 4) {$X$};
        \node[] at (-0.3, 1) {{\footnotesize $G^{\top}_l$}};
        \node[] at (1.2, 2.2) {{\footnotesize $T_l$}};
        \node[] at (1.2, 0.2) {{\footnotesize $P_l$}};
        \node[] at (-0.25, 3) {{\footnotesize $N_l$}};
        \node[] at (0.5, 1.7) {{ $\mathcal{E}_l$}};
        \node[] at (0.55, -0.4) {{ $\mathcal{V}_l$}};
        \node[] at (0.6, 3.6) {{ $\mathcal{U}_l$}};
        \end{scope}

        \begin{scope}[shift={(4.8,0)}]
        \draw[draw=edgcol,line width=\ewid] (0,0) to (0,2);
        \node[draw, line width=\drwid,fill=white, minimum size=\sqrsz] at (0-\sqrdf, 2+\sqrdf) {};
        \node[draw, line width=\drwid,fill=white,minimum size=\sqrsz] at (0, 2) {$X$};
        \draw[fill=white,line width=\drwid] (0-\crcdf,0+\crcdf) circle (\crcsz);
        \draw[fill=white,line width=\drwid] (0,0) circle (\crcsz) node {};
        \node[] at (-0.3, 1) {{\footnotesize $H_C$}};
        \node[] at (0.55, 1.7) {{ $\mathcal{C}$}};
        \node[] at (0.45, -0.4) {{ $\mathcal{A}$}};
        \end{scope}

         \begin{scope}[shift={(7.2,0)}]
         \draw[draw=edgcol,line width=\ewid] (0,0) to (0,4);
        \draw[fill=white,line width=\drwid] (0-\crcdf, 2+\crcdf) circle (\crcsz);
        \draw[fill=white,line width=\drwid] (0,2) circle (\crcsz) node {};
         \node[draw, line width=\drwid,fill=white, minimum size=\sqrsz] at (0-\sqrdf,0+\sqrdf) {};
        \node[draw, line width=\drwid,fill=white,minimum size=\sqrsz] at (0, 0) {$Z$};
        \node[draw, line width=\drwid,fill=white, minimum size=\sqrsz] at (0-\sqrdf,4+\sqrdf) {};
        \node[draw, line width=\drwid,fill=white,minimum size=\sqrsz] at (0, 4) {$X$};
        \node[] at (-0.3, 1) {{\footnotesize $G^{\top}_r$}};
        \node[] at (-1.2, 2.2) {{\footnotesize $T_r$}};
        \node[] at (-1.2, 0.2) {{\footnotesize $P_r$}};
        \node[] at (-0.25, 3) {{\footnotesize $N_r$}};
        \node[] at (1.2, 2.2) {{\footnotesize $[M_r|0]$}};
        \node[] at (1.2, 0.2) {{\footnotesize $F_r$}};
        \node[] at (0.5, 1.7) {{ $\mathcal{E}_r$}};
        \node[] at (0.55, -0.4) {{ $\mathcal{V}_r$}};
        \node[] at (0.6, 3.6) {{ $\mathcal{U}_r$}};
         \end{scope}

         \begin{scope}[shift={(9.6,0)}]
          \fill[bl, opacity=0.3]
        (-0.7, -0.3) arc[start angle=180, end angle=270, x radius=\brnd, y radius=\brnd] -- ++(\basecodewid, 0) arc[start angle=270, end angle=360, x radius=\brnd, y radius=\brnd] -- ++(0, 2.8) arc[start angle=0, end angle=90, x radius=\brnd, y radius=\brnd] -- ++(-\basecodewid, 0) arc[start angle=90, end angle=180, x radius=\brnd, y radius=\brnd]-- cycle;
           \draw[draw=edgcol,line width=\ewid] (0,0) to (0,2);
         
            \draw[fill=white,line width=\drwid] (0-\crcdf,0+\crcdf) circle (\crcsz);
        \draw[fill=white,line width=\drwid] (0,0) circle (\crcsz) node {$\bar{Z}_r$};
        \node[draw, line width=\drwid, fill=white, minimum size=\sqrsz] at (0-\sqrdf, 2+\sqrdf) {};
        \node[draw, line width=\drwid, fill=white,minimum size=\sqrsz] at (0, 2) {};

        \node[] at (0.7, 1) {{\footnotesize $[S^X_r | S^Z_r]$}};

        \node[] at (0.5, -0.4) {{ $\mathcal{L}_r$}};
        \node[] at (0.65, 1.65) {{$\mathcal{S}_r$}};
        \end{scope}
         
\end{tikzpicture}
    \caption{Measuring $\overline{Z}_l\overline{Z}_r$ without measuring either logical operator individually. The adapter edges joining auxiliary graphs $G_l$ and $G_r$ into one larger auxiliary graph are those hosting the qubits $\mathcal{A}$. The matrices $T_l,T_r,P_l,P_r$ are determined using the $\mathsf{SkipTree}$ algorithm, Theorem~\ref{thm:skiptree}. Differing from Figs.~\ref{fig:gauging_measurement} and \ref{fig:ldpc_deformation_example}, we do not include in this diagram qubits or checks outside the supports of the logicals $\overline{Z}_l$ and $\overline{Z}_r$, and instead just suggest their existence through dangling edges. This is left ambiguous because we allow $\overline{Z}_l$ and $\overline{Z}_r$ to be contained in different codeblocks or the same codeblock. In the latter case, it also may happen that the check sets $\mathcal{S}_l$ and $\mathcal{S}_r$ share checks, which then gain support on both auxiliary graphs during deformation.}
    \label{fig:joint_measurement}
\end{figure*}

In this section, we solve this joint measurement problem while guaranteeing that the resulting connected graph satisfies all the desiderata of Theorem~\ref{thm:graph_desiderata}, provided the individual graphs did. A Tanner graph illustration of our construction is shown in Fig.~\ref{fig:joint_measurement}. The connection between the individual graphs is done via a bridge of edges, as originally proposed in Ref.~\cite{cross2024linear}. Here, we instead call that set of edges an adapter, since we will specially choose them to join different graphs while preserving a sparse cycle basis. 

\begin{definition}
Provided two graphs $\mathcal{G}_l=(\mathcal{V}_l,\mathcal{E}_l)$ and $\mathcal{G}_r=(\mathcal{V}_r,\mathcal{E}_r)$ and vertex subsets $\mathcal{V}_l^*\subseteq\mathcal{V}_l$ and $\mathcal{V}_r^*\subseteq\mathcal{V}_r$ of equal size, an adapter is a set of edges $\mathcal{A}\subseteq\mathcal{V}_l^*\times\mathcal{V}_r^*$ defined by a bijective function $a:\mathcal{V}_l^*\rightarrow\mathcal{V}_r^*$ so that $(v_l,v_r)\in\mathcal{A}$ if and only if $a(v_l)=v_r$. We call the resulting graph 
$\mathcal{G}=\mathcal{G}_l\sim_{\mathcal{A}}\mathcal{G}_r=(\mathcal{V}_l\cup\mathcal{V}_r,\mathcal{E}_l\cup\mathcal{E}_r\cup\mathcal{A})$ the adapted graph.
\end{definition}

Given two graphs with sufficient relative expansion on ports $\mathcal{P}_l$ and $\mathcal{P}_r$, it is relatively straightforward to join them with an adapter between subsets of the ports. 

\begin{lemma}\label{lem:adapted_expansion}
If $\mathcal{G}_l=(\mathcal{V}_l,\mathcal{E}_l)$ has relative expansion $\beta_{t_l}(\mathcal{G}_l,\mathcal{P}_l)\ge1$ and $\mathcal{G}_r=(\mathcal{V}_r,\mathcal{E}_r)$ has relative expansion $\beta_{t_r}(\mathcal{G}_r,\mathcal{P}_r)\ge1$, then connecting them with an adapter $\mathcal{A}$ on any subsets $\mathcal{P}_l^*\subseteq\mathcal{P}_l$ and $\mathcal{P}_r^*\subseteq\mathcal{P}_r$ results in an adapted graph $\mathcal{G}=\mathcal{G}_l\sim_{\mathcal{A}}\mathcal{G}_r$ with relative expansion $\beta_t(\mathcal{G},\mathcal{P}_l\cup\mathcal{P}_r)\ge1$ for $t=\min(t_l,t_r,|\mathcal{A}|)$.
\end{lemma}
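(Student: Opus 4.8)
The goal is to lower-bound $|vG^\top|$ for an arbitrary $v \in \mathbb{F}_2^{\mathcal{V}_l \cup \mathcal{V}_r}$, where $G$ is the incidence matrix of the adapted graph $\mathcal{G} = \mathcal{G}_l \sim_{\mathcal{A}} \mathcal{G}_r$. Write $v = (v_l, v_r)$ according to the partition of vertices, and let $p = \min(t, |w|, |\mathcal{P}_l \cup \mathcal{P}_r| - |w|)$, where $w$ is the restriction of $v$ to $\mathcal{P}_l \cup \mathcal{P}_r$ (so $w = (w_l, w_r)$ with $w_l = v|_{\mathcal{P}_l}$, $w_r = v|_{\mathcal{P}_r}$). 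I need $|vG^\top| \ge p$. The boundary $vG^\top$ decomposes into three disjoint groups of edges: edges internal to $\mathcal{G}_l$, edges internal to $\mathcal{G}_r$, and the adapter edges $\mathcal{A}$. The internal-$\mathcal{G}_l$ part is exactly $v_l G_l^\top$ and similarly for $r$, so
\begin{equation}
|vG^\top| = |v_l G_l^\top| + |v_r G_r^\top| + |\partial_{\mathcal{A}}(v)|,
\end{equation}
where $\partial_{\mathcal{A}}(v)$ counts adapter edges $(x, a(x))$ with $x \in \mathcal{P}_l^*$ whose two endpoints get different values under $v$.

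The first move is to apply the hypotheses: $\beta_{t_l}(\mathcal{G}_l, \mathcal{P}_l) \ge 1$ gives $|v_l G_l^\top| \ge \min(t_l, |w_l|, |\mathcal{P}_l| - |w_l|)$, and symmetrically $|v_r G_r^\top| \ge \min(t_r, |w_r|, |\mathcal{P}_r| - |w_r|)$. I would split into cases according to which term achieves each minimum. If $|v_l G_l^\top| \ge t_l \ge t$ in the first graph, or similarly in the second, we are immediately done since $t \ge p$; so assume $|v_l G_l^\top| \ge \min(|w_l|, |\mathcal{P}_l| - |w_l|)$ and likewise on the right. Now the real content: I want to show $|v_l G_l^\top| + |v_r G_r^\top| + |\partial_{\mathcal{A}}(v)| \ge \min(t, |w|, |\mathcal{P}_l \cup \mathcal{P}_r| - |w|)$. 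The key observation is that the adapter term controls the discrepancy between $w_l$ and $w_r$ on the matched subsets: if many port vertices are "on" overall but the weights $|w_l|$ and $|w_l^*|$ (restriction to $\mathcal{P}_l^*$) are small, then either the boundary inside each graph is large, or the adapter contributes. More precisely, on the matched pairs, $|\partial_{\mathcal{A}}(v)|$ equals the Hamming distance between $w_l|_{\mathcal{P}_l^*}$ and $a^{-1}$-transported $w_r|_{\mathcal{P}_r^*}$, so it is at least $\bigl| |w_l \cap \mathcal{P}_l^*| - |w_r \cap \mathcal{P}_r^*| \bigr|$ and also at least $|w_l \cap \mathcal{P}_l^*| + |w_r \cap \mathcal{P}_r^*| - 2|\{x : x \in w_l, a(x) \in w_r\}|$, etc.

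The cleanest route I expect is an additive/"pigeonhole over the budget" argument: set $B = |v_l G_l^\top| + |v_r G_r^\top| + |\partial_{\mathcal{A}}(v)|$ and suppose for contradiction $B < p \le t$, $B < |w|$, $B < |\mathcal{P}_l \cup \mathcal{P}_r| - |w|$. From $B < |w|$ we get $|w_l| + |w_r| > B$, and combined with the per-graph bounds $|v_l G_l^\top| \ge \min(|w_l|, |\mathcal{P}_l| - |w_l|)$ this forces, for at least one side, say $l$, that $|\mathcal{P}_l| - |w_l|$ is the active minimum and is small — i.e. $w_l$ is nearly all of $\mathcal{P}_l$. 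Symmetrically, from $B < |\mathcal{P}_l \cup \mathcal{P}_r| - |w|$ one side has $w$ nearly empty on that side. If it is the same side, contradiction is immediate ($\mathcal{P}_l$ cannot be both nearly full and nearly empty while $B < t \le |\mathcal{P}_l|$ — though one must be slightly careful when $t > |\mathcal{P}_l|$, in which case the $\min$ with $t_l \le$ something handles it, or $\min(t,|w|,\dots)$ is itself bounded by $|\mathcal{P}_l|+|\mathcal{P}_r|$ and we recurse). If it is opposite sides — $w_l$ nearly full, $w_r$ nearly empty — then on the adapter, the matched subsets $\mathcal{P}_l^* \subseteq \mathcal{P}_l$ (almost entirely "on") and $\mathcal{P}_r^* \subseteq \mathcal{P}_r$ (almost entirely "off") disagree on nearly all $|\mathcal{A}|$ pairs, forcing $|\partial_{\mathcal{A}}(v)|$ to be large, of order $|\mathcal{A}| \ge$ the relevant quantity, contradicting $B < t \le |\mathcal{A}|$. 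Making these "nearly full / nearly empty" bookkeeping inequalities tight enough to close the contradiction cleanly — tracking the slack terms and the $\min$ with $t_l, t_r$ so that the constant stays exactly $1$ — is the step I expect to be the main obstacle; it is a finite but fiddly case analysis, and the natural way to organize it is exactly the structure suggested by the definition of $t = \min(t_l, t_r, |\mathcal{A}|)$: whichever of the three quantities is binding tells you which mechanism (left expansion, right expansion, or adapter cut) supplies the bound.
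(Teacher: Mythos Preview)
Your proposal is correct and takes essentially the same approach as the paper: decompose $|vG^\top|$ into the two internal boundaries plus the adapter cut, apply the relative-expansion hypotheses to each side, and then do a case analysis on which term realizes each $\min$, with the ``mixed'' case (one side nearly full, the other nearly empty on its port) handled by the adapter contribution giving $\ge |\mathcal{A}|$. The paper organizes this as a direct five-case check rather than a contradiction argument, and in the mixed case uses the clean chain $|u_l|+(|\mathcal{P}_r|-|u_r|)+\bigl||u_l^*|-|u_r^*|\bigr| \ge |\mathcal{P}_r| + (|u_l|-|u_l^*|) - (|u_r|-|u_r^*|) \ge |\mathcal{P}_r^*| = |\mathcal{A}|$, which is exactly the ``fiddly bookkeeping'' you anticipated.
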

\begin{proof}

Let $G_l$ and $G_r$ be the incidence matrices of the graphs $\mathcal{G}_l$ and $\mathcal{G}_r$, respectively. We write the incidence matrix of $\mathcal{G}$ as
\begin{align}
    G \; &= \; \bordermatrix{ & \mathcal{V}_l & \mathcal{V}_r \cr
       \mathcal{E}_l & G_l & 0 \cr 
       \mathcal{A} & P_l^\top & P_r^\top \cr
       \mathcal{E}_r & 0 & G_r \cr
       } \qquad
\end{align}
where we have labeled rows and columns by the sets of edges and vertices they represent. The matrix $P_l^\top$ has exactly one 1 per row and one 1 in each column corresponding to $\mathcal{P}_l^*\subseteq\mathcal{V}_l$. Restricted to only those columns, $P_l^\top$ is a permutation matrix $\pi_l$. The remaining columns of $P_l^\top$ are all 0. The same structure holds for matrix $P_r^\top$ and $\mathcal{P}_r^*\subseteq\mathcal{V}_r$ with a permutation matrix $\pi_r$. 

We let $v=(v_l\;v_r)\in\mathbb{F}_2^{|\mathcal{V}_l|+|\mathcal{V}_r|}$ be a vector indicating an arbitrary subset of vertices with $v_l$ and $v_r$ its restriction to the left and right vertices, respectively. Likewise, $u_l$ and $u_r$ represent $v$ restricted to $\mathcal{P}_l$ and $\mathcal{P}_r$ and $u^*_l$ and $u^*_r$ its restriction to $\mathcal{P}_l^*$ and $\mathcal{P}_r^*$.

Making use of expansion,
\begin{align}\nonumber
|vG^\top|&=|v_lG_l^\top|+|v_rG_r^\top|+|P_l^\top v_l^\top + P_r^\top v_r^\top|\\\label{eq:adapter_first_bound}
&\ge\min(t_l,|u_l|,|\mathcal{P}_l|-|u_l|)+\min(t_r,|u_r|,|\mathcal{P}_r|-|u_r|) \nonumber \\
& \qquad + |P_l^\top v_l^\top + P_r^\top v_r^\top|.
\end{align}
Next, we notice that 
\begin{align}
|P_l^\top v_l^\top + P_r^\top v_r^\top| &= |\pi_lu_l^{*\top} +\pi_ru_r^{*\top}|\\
&\ge\max(|u^*_l|-|u^*_r|,|u^*_r|-|u^*_l|)
\end{align}
using the triangle inequality.

We now consider the different cases that result from evaluating the $\mathrm{min}$ functions in Eq.~\eqref{eq:adapter_first_bound}.
\begin{enumerate}
\item For vectors $v$ in which the first $\mathrm{min}$ function evaluates $t_l$ or the second evaluates to $t_r$, we have $|vG^\top|\ge\min(t_l,t_r)$.
\item If $|u_l|\le|\mathcal{P}_l|/2$ and $|u_r|\le|\mathcal{P}_r|/2$, we have $|vG^\top|\ge|u_l|+|u_r|\ge|u_{lr}|$ where by $u_{lr}$ we mean the restriction of $v$ to $\mathcal{P}_l\cup\mathcal{P}_r$.
\item If $|u_l|\ge|\mathcal{P}_l|/2$ and $|u_r|\ge|\mathcal{P}_r|/2$, we have $|vG^\top|\ge|\mathcal{P}_l|-|u_l|+|\mathcal{P}_r|-|u_r|\ge|\mathcal{P}_l\cup\mathcal{P}_r|-|u_{lr}|$.
\item If $|u_l|\le|\mathcal{P}_l|/2$ and $|u_r|\ge|\mathcal{P}_r|/2$, then 
\begin{align}
|vG^\top|&\ge|u_l|+|\mathcal{P}_r|-|u_r|+\max(|u^*_l|-|u^*_r|,|u^*_r|-|u^*_l|) \nonumber \\
&\ge|\mathcal{P}_r|+(|u_l|-|u_l^*|)-(|u_r|-|u_r^*|) \nonumber \\
&\ge|\mathcal{P}_r|-(|u_r|-|u_r^*|) \nonumber \\
&\ge|\mathcal{P}_r^*|=|\mathcal{A}|.
\end{align}
\item If $|u_l|\ge|\mathcal{P}_l|/2$ and $|u_r|\le|\mathcal{P}_r|/2$, then a similar argument yields $|vG^\top|\ge|\mathcal{A}|$.
\end{enumerate}
Combining these cases shows
\begin{align}
|vG^\top|\ge\min(t_l,t_r,|\mathcal{A}|,|u_{lr}|,|\mathcal{P}_l\cup\mathcal{P}_r|-|u_{lr}|)
\end{align}
which proves the relative expansion of the adapted code is as claimed.
\end{proof}

Note that Ref.~\cite{cross2024linear} contains a similar proof arguing for the code distance of their bridged systems. Here, we have abstracted out the relative expansion idea.

If we start with sufficient relative expansion on both initial graphs, i.e.~$\beta_d(\mathcal{G}_l,\mathcal{P}_l)\ge1$ and $\beta_d(\mathcal{G}_r,\mathcal{P}_r)\ge1$ so that the initial graphs satisfy desideratum 4 of Theorem~\ref{thm:graph_desiderata}, then Lemma~\ref{lem:adapted_expansion} says we must only choose an adapter of size $|\mathcal{A}|\ge d$ to ensure the adapted graph $\mathcal{G}$ is sufficiently expanding relative to $\mathcal{P}_l\cup\mathcal{P}_r$, i.e.~$\beta_d(\mathcal{G},\mathcal{P}_l\cup\mathcal{P}_r)\ge1$. This is a relatively mild constraint. The ports $\mathcal{P}_l$ and $\mathcal{P}_r$ are already of size at least $d$, because they were built to connect to logical operators $\overline{Z}_l$ and $\overline{Z}_r$, so it is certainly possible to create an adapter between subsets of the ports of sufficient size.

It is also clear that if the original graphs satisfy desiderata 0, 1, and 2, then the adapted graph will as well. The addition of adapter edges can also create new cycles in the adapted graph. It remains to ensure the resulting adapted graph satisfies desideratum 3, which demands it has a sparse cycle basis. Initially, this may seem complicated to guarantee. However, this is solved neatly by applying the $\mathsf{SkipTree}$ algorithm.

\begin{lemma}\label{lem:skip_adapter}
Consider two graphs $\mathcal{G}_l=(\mathcal{V}_l,\mathcal{E}_l)$ and $\mathcal{G}_r=(\mathcal{V}_r,\mathcal{E}_r)$ along with equal-sized vertex subsets $\mathcal{P}_l^*\subseteq\mathcal{V}_l$ and $\mathcal{P}_r^*\subseteq\mathcal{V}_r$ that induce connected subgraphs of their respective graphs. If the graphs $\mathcal{G}_l$ and $\mathcal{G}_r$ have $(\gamma,\delta)$-sparse cycle bases, there exists an adapter $\mathcal{A}$ between $\mathcal{P}_l^*$ and $\mathcal{P}_r^*$ such that the adapted graph $\mathcal{G}_l\sim_{\mathcal{A}}\mathcal{G}_r$ has a $(\gamma',\delta')$-sparse cycle basis with $\gamma'\le\max(\gamma,8)$ and $\delta'\le \delta+2$.
\end{lemma}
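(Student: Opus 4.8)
The plan is to build the adapter $\mathcal{A}$ in such a way that the new cycles created by adding the adapter edges are all short and that no edge of $\mathcal{G}_l$ or $\mathcal{G}_r$ gets overloaded with new cycles. The key tool is Theorem~\ref{thm:skiptree}. First I would restrict attention to the subgraphs induced by $\mathcal{P}_l^*$ and $\mathcal{P}_r^*$, which are connected by hypothesis, and pick spanning trees of each. Actually, the cleaner approach: apply $\mathsf{SkipTree}$ to each induced connected subgraph (or more precisely, to a spanning tree thereof, whose incidence matrix is that of a length-$|\mathcal{P}_l^*|$ repetition code with a funny check basis). This yields orderings (labelings by $\mathbb{Z}_k$, where $k=|\mathcal{P}_l^*|=|\mathcal{P}_r^*|$) of the vertices of $\mathcal{P}_l^*$ and of $\mathcal{P}_r^*$, together with $(3,2)$-sparse transformation matrices $T_l, T_r$ expressing, for each $i$, the length-$\le 3$ tree path from vertex-labeled-$i$ to vertex-labeled-$i+1$ (mod $k$). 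The adapter $\mathcal{A}$ is then the matching that connects vertex-labeled-$i$ in $\mathcal{P}_l^*$ to vertex-labeled-$i$ in $\mathcal{P}_r^*$, for all $i\in\mathbb{Z}_k$; so $|\mathcal{A}|=k$.

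The main point is to exhibit an explicit cycle basis of $\mathcal{G}=\mathcal{G}_l\sim_{\mathcal{A}}\mathcal{G}_r$ and bound its sparsity. I would take it to consist of three families: (i) the given $(\gamma,\delta)$-sparse cycle basis of $\mathcal{G}_l$; (ii) the given $(\gamma,\delta)$-sparse cycle basis of $\mathcal{G}_r$; and (iii) for each $i\in\mathbb{Z}_k$, a cycle $C_i$ formed by the adapter edge $a_i$ (from $i$ on the left to $i$ on the right), the adapter edge $a_{i+1}$, the left tree-path from left-$i$ to left-$(i+1)$, and the right tree-path from right-$i$ to right-$(i+1)$. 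Each such $C_i$ has length at most $2 + 3 + 3 = 8$, giving $\gamma' \le \max(\gamma,8)$. To see this is a basis: the cyclomatic number increases by exactly $|\mathcal{A}| = k$ when we add the $k$ adapter edges (the two graphs were connected components before, so after adding $k \ge 1$ edges between them we have one component, and $m - n + p$ goes up by $k - 1 + 1 = k$ if $k\geq 1$... wait, need $p$ decreasing from $2$ to $1$, so the cyclomatic number increases by $k-1$; so I should include only $k-1$ of the $C_i$, say $C_0,\dots,C_{k-2}$, dropping $C_{k-1}$). I would then check linear independence and spanning: the $C_i$ are independent of the old bases and of each other because, e.g., $C_i$ is the unique basis cycle containing adapter edge $a_{i+1}$ among $a_1,\dots,a_{k-1}$ (reading indices so that each $C_i$ "uses up" a distinct adapter edge), and a standard rank count $(\text{number of new basis elements}) = k-1 = \Delta(\text{cyclomatic number})$ closes the argument.

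For the column-weight (edge congestion) bound $\delta' \le \delta+2$: an edge of $\mathcal{E}_l$ not in the chosen spanning tree of the induced subgraph on $\mathcal{P}_l^*$ lies in no $C_i$, so its congestion stays $\le\delta$. An edge in that spanning tree lies in $C_i$ only when the tree-path from left-$i$ to left-$(i+1)$ uses it; by the column-weight-$2$ property of $T_l$ from Theorem~\ref{thm:skiptree}, each tree edge is used by at most two such paths, so its congestion is at most $\delta + 2$. Each adapter edge $a_i$ lies in at most two cycles $C_{i-1}$ and $C_i$, so congestion $\le 2 \le \delta+2$. Edges of $\mathcal{E}_r$ are symmetric. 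Hence $\delta'\le\delta+2$.

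The step I expect to be the main obstacle is the bookkeeping to confirm that families (i)–(iii), with exactly the right count ($|\mathcal{A}|-1$ new cycles), form a genuine cycle basis — i.e., simultaneously spanning and independent with rank equal to the new cyclomatic number $m' - n' + 1$. The cleanest way around this is the rank-counting shortcut: verify $NG=0$ for the candidate matrix $N$ (easy, since each $C_i$ is a closed walk and the old cycles satisfy it), verify the three families are independent (using the "each $C_i$ owns a private adapter edge" observation plus independence of the two old bases on disjoint edge sets), and then observe the total count is $\operatorname{rank}(N_l) + \operatorname{rank}(N_r) + (|\mathcal{A}|-1) = (m_l - n_l + 1) + (m_r - n_r + 1) + (|\mathcal{A}|-1) = m' - n' + 1$, the cyclomatic number of $\mathcal{G}$; by Lemma-level facts about cycle bases this forces the family to be a complete cycle basis. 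A minor subtlety to handle carefully is that $\mathsf{SkipTree}$ is applied to a spanning tree of the \emph{induced} subgraph (so the tree-paths genuinely live inside $\mathcal{G}_l$ and $\mathcal{G}_r$), and that the case $|\mathcal{P}_l^*| = 1$ is degenerate and should be dispatched separately.
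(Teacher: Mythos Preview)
Your proposal is correct and follows essentially the same approach as the paper: apply $\mathsf{SkipTree}$ to each induced subgraph to obtain labelings and $(3,2)$-sparse matrices $T_l,T_r$, define the adapter by matching equal labels, and take as new basis cycles the length-$\le8$ loops built from two consecutive adapter edges plus the two tree paths of length $\le3$. The paper packages this in matrix form as the block row $(T_l\ H_C\ T_r)$ of the cycle matrix $N$ and notes (as you do) that one of the $|\mathcal{A}|$ new cycles is redundant, giving exactly $|\mathcal{A}|-1$ additional independent cycles; your rank-count and ``each $C_i$ owns a private adapter edge'' independence check are a slightly more explicit version of the same verification.
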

\begin{proof}
Let $A=|\mathcal{P}_l^*|=|\mathcal{P}_r^*|$ and $G_l$ and $G_r$ be the incidence matrices of the graphs $\mathcal{G}_l$ and $\mathcal{G}_r$. Also, denote by $G_l^*$ and $G_r^*$ the incidence matrices of the subgraphs induced by $\mathcal{P}_l^*$ and $\mathcal{P}_r^*$. We apply the $\mathsf{SkipTree}$ algorithm, Theorem~\ref{thm:skiptree}, to both of these induced subgraphs, obtaining $T_l^*,T_r^*,P_l^*,P_r^*$ such that
\begin{equation}
T_l^*G_l^*P_l^*=T_r^*G_r^*P_r^*=H_C\in\mathbb{F}_2^{A\times A}.
\end{equation}
By inserting 0 columns in the $T$ matrices and 0 rows in the $P$ matrices, we can also make matrices $T_l,T_r,P_l,P_r$ such that
\begin{equation}\label{eq:skip_tree_adapter}
T_lG_lP_l=T_rG_rP_r=H_C\in\mathbb{F}_2^{A\times A}.
\end{equation}

Adapter edges $\mathcal{A}$ are added as follows. A vertex in $\mathcal{P}_l^*$ that is labeled $i\in\{0,1,\dots,A-1\}$ by the $\mathsf{SkipTree}$ algorithm applied to $G_l^*$ is connected to the vertex in $\mathcal{P}_r^*$ that is also labeled $i$ by the $\mathsf{SkipTree}$ algorithm applied to $G_r^*$. We can express these connections in the incidence matrix $G$ of the adapted graph $\mathcal{G}=\mathcal{G}_l\sim_\mathcal{A}\mathcal{G}_r$ by writing
\begin{align}
    G \; &= \; \bordermatrix{ & \mathcal{V}_l & \mathcal{V}_r \cr
       \mathcal{E}_l & G_l & 0 \cr 
       \mathcal{A} & P^\top_l & P^\top_r \cr
       \mathcal{E}_r & 0 & G_r \cr
       },
\end{align}
with rows and columns labeled by edge and vertex sets.

We can also explicitly write out a cycle basis $N$ for the adapted graph, in terms of the cycle bases $N_l$ and $N_r$ of the original graphs.
\begin{align}
    N = \left(\begin{array}{ccc}N_l&0&0\\T_l&H_C&T_r\\0&0&N_r\end{array}\right).
\end{align}
One can check using Eq.~\eqref{eq:skip_tree_adapter} that $NG=0$. Moreover, we have added $|\mathcal{A}|=A$ edges to the graphs, and $A-1$ new independent cycles, i.e.~the $A$ cycles from the second block of rows of $N$ minus one since the sum of all those cycles is trivial. This is the expected cycle rank of the adapted graph, so $N$ is complete basis of cycles. We use the fact that $T_l$ and $T_r$ are $(3,2)$-sparse matrices from Theorem~\ref{thm:skiptree} to conclude that this cycle basis is $(\gamma',\delta')$-sparse with $\gamma'\le\max(\gamma,8)$ and $\delta'\le \delta+2$.
\end{proof}

Combining Lemmas~\ref{lem:adapted_expansion} and \ref{lem:skip_adapter} and the above discussion, we obtain the main theorem of this section. To state it in greater generality, we introduce the idea of a \textit{$c$-disjoint} set of logical operators $\{\overline{\Lambda}_i\}_i$ of a code \cite{jochym2018disjointness}, which means that for any qubit of the code, there are no more than $c$ logicals $\overline{\Lambda}_i$ with support on that qubit. If $c$ is constant (e.g.~independent of the code size and the size of the set of logicals), then we say the set of logicals is \textit{sparsely overlapping}.

\begin{theorem}\label{thm:joint_logical_measurement}
Consider a set of nontrivial, sparsely overlapping logical operators $\overline{Z}_0,\overline{Z}_1,\dots,\overline{Z}_{t-1}$ that can all be made $Z$-type simultaneously by applying single-qubit Cliffords. Provided $t$ auxiliary graphs that satisfy the graph desiderata of Theorem~\ref{thm:graph_desiderata} to measure these $t$ logical operators, there exists an auxiliary graph to measure the product $\overline{Z}_0\overline{Z}_1\dots\overline{Z}_{t-1}$ satisfying the desiderata of Theorem~\ref{thm:graph_desiderata}. In particular, if the individual deformed codes are LDPC with distance $d$, the deformed code for the joint measurement is LDPC with check weights and qubit degrees independent of $t$ and has distance $d$.
\end{theorem}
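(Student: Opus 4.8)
The plan is to reduce the $t$-operator joint measurement to an iterated application of the two-graph case, which is handled by Lemmas~\ref{lem:adapted_expansion} and \ref{lem:skip_adapter}. First I would set up the ``separable original code'' picture: since the $\overline{Z}_i$ are sparsely overlapping and can be simultaneously made $Z$-type, after applying single-qubit Cliffords we may treat them as $Z$-type operators $Z(\mathcal{L}_i)$ with the property that each code qubit lies in at most $c=O(1)$ of the supports $\mathcal{L}_i$. For each $i$ we are handed an auxiliary graph $\mathcal{G}_i$ with port $\mathcal{P}_i=f_i(\mathcal{L}_i)$ satisfying all of Theorem~\ref{thm:graph_desiderata}; in particular $\mathcal{G}_i$ has $O(1)$ degree, a $(\gamma,\delta)$-sparse cycle basis for constants $\gamma,\delta$, short matchings, and $\beta_d(\mathcal{G}_i,\mathcal{P}_i)\ge 1$. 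The first real step is to observe we may arrange that each $\mathcal{P}_i$ (or a size-$d$ connected subset of it) induces a \emph{connected} subgraph of $\mathcal{G}_i$: if not, add $O(1)$-degree edges within the port to connect it, which by the remark after Theorem~\ref{thm:graph_desiderata} and by Lemma~\ref{lem:relative_expansion_relation} only helps the other desiderata. This gives the hypotheses needed to invoke $\mathsf{SkipTree}$.

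Next I would build the adapted graph by chaining: fix a size-$d$ connected port subset $\mathcal{P}_i^*\subseteq\mathcal{P}_i$ for each $i$, and for $i=1,\dots,t-1$ attach $\mathcal{G}_i$ to the running union via an adapter $\mathcal{A}_i$ of size $|\mathcal{A}_i|=d$ between (a fresh copy of) $\mathcal{P}_{i-1}^*$-side vertices and $\mathcal{P}_i^*$, exactly as in Lemma~\ref{lem:skip_adapter}. (To make the chain well-defined when a single port must host two adapters — one to the previous graph and one to the next — I would either reserve two disjoint size-$d$ connected sub-ports inside each $\mathcal{P}_i$, which is possible since the ports have size $\ge d$ and we are free to enlarge them by $O(1)$-degree padding, or invoke the set-valued port generalization mentioned after Theorem~\ref{thm:graph_desiderata}.) Each attachment is one application of Lemma~\ref{lem:skip_adapter}: starting from cycle-basis sparsity $(\gamma,\delta)$, after the first adapter we have $(\max(\gamma,8),\delta+2)$, and — this is the key point for $t$-independence — every subsequent adapter is attached along a freshly reserved sub-port whose edges were not touched by earlier adapters, so the ``$+2$'' to the column weight $\delta$ applies only to those new edges and does \emph{not} accumulate across the chain. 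Hence the final cycle basis is $(\max(\gamma,8),\delta+2)$-sparse with constants independent of $t$. Similarly, Lemma~\ref{lem:adapted_expansion} applied inductively gives $\beta_d(\mathcal{G},\bigcup_i\mathcal{P}_i)\ge1$: each step preserves relative expansion $\ge1$ with parameter $\min(\text{previous }t\text{-parameter},\,d,\,|\mathcal{A}_i|)=d$, so the induction closes with parameter $d$ throughout.

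Then I would check the remaining desiderata for the final graph $\mathcal{G}$ and verify the logical-measurement claim. Desideratum~1 ($O(1)$ degree): each vertex gains at most a constant number of adapter edges because we reserved disjoint sub-ports; desideratum~2 (short matchings, each edge in $O(1)$ matchings): a stabilizer $s$ of the separable original code has $X$-support $\mathcal{L}_{s,i}$ in at most $c$ of the blocks, and within each it already has a short matching $\mu(\mathcal{L}_{s,i})$ in $\mathcal{G}_i$; concatenating these $\le c$ matchings gives a matching of size $O(1)$ for $s$ in $\mathcal{G}$ (using connectivity of $\mathcal{G}$ to pair up the at most $c$ leftover unmatched endpoints across blocks along a bounded-length path through the adapters), and the edge-multiplicity bound is preserved up to the constant $c$; desideratum~0 (connectivity) holds since the chain links all $\mathcal{G}_i$ together; and desideratum~3a (bounded cycle length) follows by applying the cellulation step from Section~\ref{sec-gauging-meas} to the $O(1)$-length cycles already guaranteed — the $\mathsf{SkipTree}$ cycles from $H_C$ have length $\le 8$ and the $N_i$ cycles have length $\le\gamma$. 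Finally, since $\mathcal{G}$ satisfies all desiderata relative to $\bigcup_i\mathcal{P}_i$, Theorem~\ref{thm:graph_desiderata} guarantees the deformed code is LDPC, has one fewer logical qubit, and that the stabilizer group now contains the product $Z(\mathcal{L}_0)\cdots Z(\mathcal{L}_{t-1})=\overline{Z}_0\cdots\overline{Z}_{t-1}$ (the individual $\overline{Z}_i$ are \emph{not} stabilizers because only their product is a sum of vertex checks $A_v$ over the connected adapted graph — here I would spell out that each connected component of a graph contributes exactly one relation $\sum_v A_v = Z(\text{port qubits})$, so connectivity of the chain forces exactly the product, not the individual factors). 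The code distance is $\ge d$ by desideratum~4, completing the proof.

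The main obstacle I anticipate is the bookkeeping that makes all the constants genuinely independent of $t$: specifically, ensuring that attaching $t-1$ adapters does not blow up the cycle-basis column weight or the vertex degrees. The clean fix is to reserve disjoint sub-ports so that distinct adapters act on disjoint edge sets and disjoint (or $O(1)$-overlapping) vertex neighborhoods, so that Lemma~\ref{lem:skip_adapter}'s ``$\delta\mapsto\delta+2$'' is a one-time cost rather than a recursive one; making this disjointness precise, and confirming it is compatible with keeping each $\mathcal{P}_i^*$ connected and of size $\ge d$, is where the argument needs the most care.
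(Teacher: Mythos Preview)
Your overall strategy---chain the $t$ auxiliary graphs via iterated applications of Lemmas~\ref{lem:adapted_expansion} and~\ref{lem:skip_adapter} and then verify the desiderata for the result---is exactly the paper's approach. Your concern about $t$-independence of the cycle-basis column weight is more delicate than necessary: the chain structure already guarantees each $\mathcal{G}_i$ is touched by at most two adapters, so even without reserving disjoint sub-ports the column weight is at most $\delta+4$, which is $t$-independent. Your disjoint-sub-port trick is not wrong, just not the crux.

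The genuine gap is elsewhere. You assert that you may ``add $O(1)$-degree edges within the port to connect it'' and that this ``only helps the other desiderata.'' That is false for desideratum~3. Each added edge $(x,y)$ creates a new basis cycle consisting of that edge plus some path from $x$ to $y$ in the old graph; nothing bounds the length of that path (violating~3a), and many such new cycles may all run through the same old edge (threatening~3b). Your later cellulation remark only treats the length-$\le 8$ $\mathsf{SkipTree}$ cycles and the original $N_i$ cycles, not these. The paper handles this by first passing to an \emph{irreducible} sub-logical $Z(\mathcal{L}'_i)\subseteq\overline{Z}_i$ (still of weight $\ge d$): the support lemma (Lemma~\ref{lem:supportlemma}) then forces the $X$-checks restricted to $\mathcal{L}'_i$ to form a connected repetition-code graph, and desideratum~2 already supplies a short matching in $\mathcal{G}_i$ for each such check. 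Adding an edge only between the endpoints of each such short matching makes $\mathcal{P}'_i=f_i(\mathcal{L}'_i)$ induce a connected subgraph, and the new cycle (added edge plus matching path) has length $O(1)$ by~2a while each old edge lies in $O(1)$ of them by~2b---so desideratum~3 survives. This irreducibility-plus-matching step is the ingredient you are missing; with it in place, the rest of your plan goes through essentially as written, using the set-valued port function of Appendix~\ref{app:set-valued_port_function} (rather than disjoint sub-ports) to handle the general sparsely-overlapping case.
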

\begin{proof}
To exhibit a simpler proof here that still contains most of the main ideas, we assume the logical operators are all pairwise disjoint (equivalently, that they form a 1-disjoint set). To prove the general theorem for sparsely overlapping logicals requires the notion of a set-valued port function so that qubits can be connected to multiple vertex checks. In Appendix~\ref{app:set-valued_port_function}, we introduce that concept and complete the general proof of this theorem.

Denote the qubit supports of the individual logical operators by $\mathcal{L}_i$ for $i=0,1,\dots,t-1$. Denote the auxiliary graphs used to measure the individual logical operators by $\mathcal{G}_i=(\mathcal{V}_i,\mathcal{E}_i)$, the port functions by $f_i:\mathcal{L}_i\rightarrow\mathcal{V}_i$, and the ports by $\mathcal{P}_i=f_i(\mathcal{L}_i)$. 

Recall a $Z$-type logical operator $\overline{Z}$ is said to be irreducible if there is no other $Z$-type logical operator or $Z$-type stabilizer supported entirely on its support. For all $i$, there is some set of qubits $\mathcal{L}'_i\subseteq\mathcal{L}_i$ that supports a nontrivial, irreducible logical operator. Note that the logical operator supported on $\mathcal{L}'_i$ need not be logically equivalent to that on $\mathcal{L}_i$. We let $\mathcal{P}'_i=f_i(\mathcal{L}'_i)\subseteq\mathcal{P}_i$.

Because $Z(\mathcal{L}'_i)$ is irreducible, the $X$-type part of the check matrix of the original code restricted to $\mathcal{L}'_i$ is equivalent to the check matrix of the repetition code \cite{cross2024linear} (see also Appendix~\ref{app:supportlemma}). Because of desideratum 2 of Theorem~\ref{thm:graph_desiderata}, the graph $\mathcal{G}_i$ must have short perfect matchings for each of these checks. If two vertices $x,y\in\mathcal{V}_i$ are matched this way, we modify graph $\mathcal{G}_i$ by adding the edge $(x,y)$ if it does not already exist. This ensures the subgraph induced by $\mathcal{P}'_i=f_i(\mathcal{L}'_i)$ is connected, enabling the use of Lemma~\ref{lem:skip_adapter}. Moreover, the new graph necessarily satisfies all the desiderata because the original graph did as well. In particular, note that we can add a cycle to the cycle basis for each added edge $(x,y)$ which consists of the edge itself and the short path between $(x,y)$, and that this must preserve desideratum 3b (cycle sparsity) because the short paths originally satisfied desideratum 2b (matching sparsity). From now on, we just assume the graphs $\mathcal{G}_i$ have already been modified in this way so that the subgraphs induced by $\mathcal{P}'_i$ are connected.

Now we create adapters to join all the graphs together. In particular, for each $i=0,1,\dots,t-2$, we create an adapter $\mathcal{A}_i$ between $\mathcal{P}_i^*\subseteq\mathcal{P}'_i$ and $\mathcal{P}_{i+1}^*\subseteq\mathcal{P}'_{i+1}$. These subsets can always be chosen so that $|\mathcal{P}_i^*|=|\mathcal{P}_{i+1}^*|\ge d$ and both $\mathcal{P}_i^*$ and $\mathcal{P}_{i+1}^*$ induce connected subgraphs in their respective auxiliary graphs $\mathcal{G}_i$ and $\mathcal{G}_{i+1}$. Iterative application of Lemma~\ref{lem:skip_adapter} creates the adapted graph
\begin{equation}
\mathcal{G}=(\mathcal{V},\mathcal{E})=\mathcal{G}_0\sim_{\mathcal{A}_0}\mathcal{G}_1\sim_{\mathcal{A}_1}\mathcal{G}_2\sim_{\mathcal{A}_2}\dots\sim_{\mathcal{A}_{t-2}}\mathcal{G}_{t-1}.
\end{equation}

The port function for this adapted graph is defined as follows. Note, $\overline{Z}=\overline{Z}_0\overline{Z}_1\dots\overline{Z}_{t-1}$ is supported on exactly the qubits $\mathcal{L}=\mathcal{L}_0\cup\mathcal{L}_1\cup\dots\cup\mathcal{L}_{t-1}$. Let $f:\mathcal{L}\rightarrow\mathcal{V}$ be defined as $f(q)=f_i(q)$ when $q\in\mathcal{L}_i$. We let $\mathcal{P}=f(\mathcal{L})=\mathcal{P}_0\cup\mathcal{P}_1\cup\dots\cup\mathcal{P}_{t-1}$ be the port vertices in the graph $\mathcal{G}$.

The adapted graph $\mathcal{G}$ and port function $f$ satisfy the graph desiderata of Theorem~\ref{thm:graph_desiderata}. Desiderata 0, 1, 2 are inherited from the original graphs. We used Lemma~\ref{lem:skip_adapter} to construct adapters that guarantee desideratum 3 is satisfied. Lemma~\ref{lem:adapted_expansion} implies $\beta_d(\mathcal{G},f(\mathcal{L}))=\beta_d(\mathcal{G},\mathcal{P})\ge1$ and thus we have desideratum 4.
\end{proof}

One potential inconvenience of the adapter construction in this section is that the adapting edges are connected directly to the port that is also connected directly to the original code. In Ref.~\cite{cross2024linear} this is what is done for the construction used to measure logical $\overline{Y}=i\overline{X}\overline{Z}$. However, also in Ref.~\cite{cross2024linear} they connect these adapter edges (there called a bridge system) instead to higher levels of the thickened auxiliary graph when they measure a product of same-type logical operators like $\overline{Z}_l\overline{Z}_r$. It would be good to understand when this can be done in more generality. Perhaps this would require a notion of expansion relative to several ports simultaneously.

We also remark that to perform specific join logical measurements in practice, we believe it is likely that the construction in the proof of Theorem~\ref{thm:joint_logical_measurement} will typically be used without the explicit guarantees that the original auxiliary graphs have sufficient relative expansion. One should then simply verify directly that the deformed code defining the joint logical measurement has code distance $d$. We provide concrete examples in Section~\ref{sec:examples} that demonstrate this idea in the setting of finite-sized codes.

\section{Leveraging code properties to simplify auxiliary graph LDPC surgery} \label{sec:leveraging_code_properties}

The graph desiderata theorem \ref{thm:graph_desiderata} provided a checklist of desirable properties for an auxiliary graph to satisfy in order to measure a chosen logical operator while preserving code distance and sparsity of checks. 
Exploring the extent to which we can avoid some of these requirements will be the focus on this section.

\subsection{Expansion is unnecessary for measuring some joint logical operators} \label{subsec:replace_desideratum_4}

We noted previously that desideratum 4, that $\mathcal{G}$ has sufficient expansion relative to the image of port function $f$, is only sufficient for the deformed code to have the same code distance as the original code and not necessary. However, if we remove desideratum 4, we generally need to impose some other conditions on the codes and logical operators being measured to ensure we can build an auxiliary graph for which we can prove the deformed code preserves the code distance. We present one set of alternative conditions here for the case of measuring a joint logical operator $\overline{Z}_l\overline{Z}_r$.

\begin{enumerate}[label=(\alph*)]
    \item The operators $\overline{Z}_l$ and $\overline{Z}_r$ come from separate codeblocks (before deformation).
    \item The operator $\overline{Z}_l$ or $\overline{Z}_r$ with the larger weight is also a minimum weight logical operator of its code. If $|\overline{Z}_l|=|\overline{Z}_r|$, then only one of the two operators needs to be minimum weight in its code.
\end{enumerate}

To clarify condition (b), it may for instance happen that $|\overline{Z}_l|>|\overline{Z}_r|$. In this case, we demand that $|\overline{Z}_l|=d_l$, where $d_l$ is the distance of the left code.

\begin{figure}[t]
    \centering
\begin{tikzpicture}[]
\pgfmathsetmacro{\ewid}{0.8pt}
\pgfmathsetmacro{\drwid}{0.86pt}
\pgfmathsetmacro{\sqrsz}{0.8cm}
\pgfmathsetmacro{\sqrdf}{0.16}
\pgfmathsetmacro{\crcsz}{0.45}
\pgfmathsetmacro{\crcdf}{0.12}
\pgfmathsetmacro{\basecodewid}{1.2}
\pgfmathsetmacro{\brnd}{0.012cm}
\definecolor{edgcol}{rgb}{0, 0, 0}
\definecolor{bl}{rgb}{0.63, 0.79, 0.95}

 \fill[bl, opacity=0.3]
        (-1, -0.32) arc[start angle=180, end angle=270, x radius=\brnd, y radius=\brnd] -- ++(\basecodewid, 0) arc[start angle=270, end angle=360, x radius=\brnd, y radius=\brnd] -- ++(0, 3) arc[start angle=0, end angle=90, x radius=\brnd, y radius=\brnd] -- ++(-\basecodewid, 0) arc[start angle=90, end angle=180, x radius=\brnd, y radius=\brnd]-- cycle;

        \draw[draw=edgcol,line width=\ewid] (-0.9,0) to (5.5,0);
        \draw[draw=edgcol,line width=\ewid] (-0.9,2.2) to (5.5,2.2);
        \draw[draw=edgcol,line width=\ewid] (0,0) to (0,2.2);
        
        
        \draw[fill=white,line width=\drwid] (0-\crcdf,0+\crcdf) circle (\crcsz);
        \draw[fill=white,line width=\drwid] (0,0) circle (\crcsz) node {$\bar{Z}_l$};
        \node[draw, line width=\drwid, fill=white, minimum size=\sqrsz] at (0-\sqrdf, 2.2+\sqrdf) {};
        \node[draw, line width=\drwid, fill=white,minimum size=\sqrsz] at (0, 2.2) {$X$};

        \node[] at (-0.28, 1.2) {{\small $S_l$}};
        
        \node[] at (1.2, 0.2) {{\small $F_l$}};
        \node[] at (1.3, 2.5) {{\small $M_l$}};

        \node[] at (0.5, -0.4) {{ $\mathcal{L}_l$}};
        \node[] at (0.6, 1.65) {{$\mathcal{S}_l$}};

        \begin{scope}[shift={(2.4,0)}]
        \draw[draw=edgcol,line width=\ewid] (0,0) to (0,4.5);
        \draw[fill=white,line width=\drwid] (0-\crcdf, 2.2+\crcdf) circle (\crcsz);
        \draw[fill=white,line width=\drwid] (0,2.2) circle (\crcsz) node {};
         \node[draw, line width=\drwid,fill=white, minimum size=\sqrsz] at (0-\sqrdf,0+\sqrdf) {};
        \node[draw, line width=\drwid,fill=white,minimum size=\sqrsz] at (0, 0) {$Z$};
        \node[draw, line width=\drwid,fill=white, minimum size=\sqrsz] at (0-\sqrdf,4.4+\sqrdf) {};
        \node[draw, line width=\drwid,fill=white,minimum size=\sqrsz] at (0, 4.4) {$X$};
        \node[] at (-0.3, 1.2) {{\small $G^{\top}$}};
        \node[] at (1.2, 2.5) {{\small $M_r$}};
        \node[] at (1.2, 0.2) {{\small $F_r$}};
        \node[] at (-0.25, 3.4) {{\footnotesize $N$}};
        \node[] at (0.5, 1.8) {{ $\mathcal{E}$}};
        \node[] at (0.55, -0.4) {{ $\mathcal{V}$}};
        \node[] at (0.55, 3.8) {{ $\mathcal{U}$}};
        \end{scope}

        \begin{scope}[shift={(4.8,0)}]
         \fill[bl, opacity=0.3]
        (-0.8, -0.32) arc[start angle=180, end angle=270, x radius=\brnd, y radius=\brnd] -- ++(\basecodewid, 0) arc[start angle=270, end angle=360, x radius=\brnd, y radius=\brnd] -- ++(0, 3) arc[start angle=0, end angle=90, x radius=\brnd, y radius=\brnd] -- ++(-\basecodewid, 0) arc[start angle=90, end angle=180, x radius=\brnd, y radius=\brnd]-- cycle;

        \draw[draw=edgcol,line width=\ewid] (0,0) to (0,2);
        \node[draw, line width=\drwid,fill=white, minimum size=\sqrsz] at (0-\sqrdf, 2.2+\sqrdf) {};
        \node[draw, line width=\drwid,fill=white,minimum size=\sqrsz] at (0, 2.2) {$X$};
        \draw[fill=white,line width=\drwid] (0-\crcdf,0+\crcdf) circle (\crcsz);
        \draw[fill=white,line width=\drwid] (0,0) circle (\crcsz) node {{ $\bar{Z}_r$}};
        \node[] at (-0.25, 1.2) {{\small $S_r$}};
        \node[] at (0.53, 1.7) {{ $\mathcal{S}_r$}};
        \node[] at (0.51, -0.4) {{ $\mathcal{L}_r$}};
        \end{scope}

\end{tikzpicture}

\caption{Joint measurement of $\overline{Z}_l\overline{Z}_r$ using a bespoke graph $G$ instead of an adapter. We show that $G$ does not need to be expanding under certain conditions.}
\label{fig:expansionless_joint}
\end{figure}


\begin{theorem}
\label{thm:expansionless_joint}
We can construct an auxiliary graph such that the joint measurement between quantum LDPC codes depicted in Fig.~\ref{fig:expansionless_joint} measures only $\overline{Z}_l\overline{Z}_r$ and is LDPC. The size of the auxiliary graph is $O(D\log^3D)$ where $D=\max(|\overline{Z}_l|,|\overline{Z}_r|)$. If additionally condition (a) is satisfied, the deformed code has distance at least $d_l+d_r-D$, where we denote the code distances of the left and right codes as $d_l$ and $d_r$, respectively. If additionally condition (b) is satisfied, the deformed code has distance at least $d=\min(d_l,d_r)$.
\end{theorem}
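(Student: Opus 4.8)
The plan is to establish the three assertions of Theorem~\ref{thm:expansionless_joint} in turn: (1) the auxiliary graph of size $O(D\log^3D)$ exists and the deformed code is LDPC and measures exactly $\overline Z_l\overline Z_r$; (2) under condition~(a) the deformed distance is at least $d_l+d_r-D$; (3) adding condition~(b) upgrades this to $\min(d_l,d_r)$.

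For (1) I would build the auxiliary graph essentially by the recipe following Theorem~\ref{thm:graph_desiderata}, but with port $\mathcal P=f(\mathcal L)$, $\mathcal L=\mathcal L_l\cup\mathcal L_r$. Concretely: pass to irreducible sub-supports $\mathcal L'_l\subseteq\mathcal L_l$, $\mathcal L'_r\subseteq\mathcal L_r$ so that (Appendix~\ref{app:supportlemma}) $S_l,S_r$ restricted there are repetition-code check matrices, realize each as a path/cycle graph on the corresponding port vertices, extend over all of $\mathcal L_l,\mathcal L_r$, add the short edges needed to match up every original stabilizer (desideratum~2), and add further edges to make the whole graph connected and to robustly join the two ports — but \emph{without} imposing global expansion. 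This graph has $O(D)$ edges and $O(1)$ vertex degree, so it satisfies desiderata 0,1,2; thickening it $L=O(\log^3D)$ times and cellulating (Lemma~\ref{lem:decongestion_lemma} and Fig.~\ref{fig:graph_thickening}) gives desideratum~3 at multiplicative cost $O(\log^3D)$, for $O(D\log^3D)$ qubits and checks total. Theorem~\ref{thm:graph_desiderata}, invoked with desiderata 0--3 only, then yields that the deformed code is LDPC, has one fewer logical qubit, and measures $\overline Z_l\overline Z_r$; since we deliberately skip desideratum~4, the distance must be argued by hand.

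For (2), take a minimum-weight nontrivial logical $\bar L$ of the deformed code and, treating the $X$- and $Z$-type sectors separately (the general non-CSS case follows after local basis changes), write $\bar L$ with edge part $a$ on $\mathcal E$, left part $b_l$ on the left block, and right part $b_r$ on the right block. In the $Z$-type sector, commutation with the cycle checks $B_c$ forces $a$ to be a cut $a=\sigma G^\top$, so multiplying $\bar L$ by the deformed-code stabilizer $\prod_{v\in\sigma}A_v$ yields an equivalent operator with no edge support; the Tanner-graph consistency relations depicted in Fig.~\ref{fig:expansionless_joint} then show that the resulting left part $b_l+\tau_l$ and right part $b_r+\tau_r$, where $\tau_l,\tau_r$ are the $f_l,f_r$-preimages of $\sigma$, each commute with all checks of their own block, so each is a (possibly trivial) logical there. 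Using the identity $\prod_v A_v=\overline Z_l\overline Z_r$ one may replace $\sigma$ by its complement if necessary to normalize $|\tau_l|+|\tau_r|\le\tfrac12(|\mathcal L_l|+|\mathcal L_r|)\le D$ without changing $\bar L$. The key structural fact is that $\bar L$ being \emph{nontrivial} in the deformed code forces a nontrivial logical to appear on one of the two blocks: if both block-operators were stabilizers, or both were equivalent to $\overline Z_l$ resp.\ $\overline Z_r$, then $\bar L$ would be a product of deformed-code stabilizers. Combining $|b_l|\ge|b_l+\tau_l|-|\tau_l|$, the corresponding bound on $|b_r|$, the distance inequalities $|b_l+\tau_l|\ge d_l$, $|b_r+\tau_r|\ge d_r$ that hold whenever the respective block-operator is a nontrivial logical, and $|\tau_l|+|\tau_r|\le D$, gives $|\bar L|\ge|b_l|+|b_r|\ge d_l+d_r-D$; the remaining (degenerate) cases, where only one block carries a nontrivial logical, are handled by the same normalization together with the connectivity built into the graph, and yield at least $d_l+d_r-D$ as well since $D\ge\max(|\overline Z_l|,|\overline Z_r|)\ge\max(d_l,d_r)$. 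The $X$-type sector is symmetric, with vertex-check commutation and the cycle-basis matrix $N$ playing the roles of the cut structure; here condition~(a), i.e.\ $\mathcal P_l\cap\mathcal P_r=\emptyset$, is exactly what lets one split the boundary of $a$ into a clean left part and right part. For (3): writing the heavier operator as $\overline Z_l$ without loss of generality, condition~(b) says $D=|\overline Z_l|=d_l$, so $d_l+d_r-D=d_r$, and since $d_r\le|\overline Z_r|\le|\overline Z_l|=d_l$ we get $d_r=\min(d_l,d_r)=d$, so the bound from (2) reads $d$.

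The main obstacle is step~(2): making the ``clean away the edge support, then count on the two blocks'' argument fully rigorous, in particular (i) the $X$-type sector, where the edge part is pinned down through the cycle basis rather than being a graph cut, and (ii) the degenerate cases in which one block-operator is a stabilizer, which need the exact characterization of the deformed-code stabilizer group together with the $|\tau_l|+|\tau_r|\le D$ normalization; and, underpinning both, arranging in step~(1) that even in the absence of global expansion the two ports are connected robustly enough for the counting to close. Condition~(a) is used throughout to keep the left/right bookkeeping clean.
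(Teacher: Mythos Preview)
Your construction and distance argument both miss the paper's central idea, and without it the claim is false. You build the auxiliary graph with \emph{disjoint} left and right ports (you explicitly write ``condition~(a), i.e.\ $\mathcal P_l\cap\mathcal P_r=\emptyset$'' and speak of ``joining the two ports''). The paper instead uses a \emph{non-injective} port function with $f(\mathcal L_r)\subseteq f(\mathcal L_l)$: every vertex check connected to a right-code qubit is simultaneously connected to a left-code qubit. This is what replaces expansion. With your disjoint-port construction and no expansion, the distance bound fails: take any nontrivial left-code logical $\overline\Lambda_l$ with large overlap $\gamma_l\subseteq\mathcal L_l$ on $\overline Z_l$, set $\sigma=f_l(\gamma_l)$ (all left-port vertices over $\gamma_l$, no right-port vertices), and observe that $\overline\Lambda_l\,\mathcal H_Z(\sigma\in\mathcal V)$ has weight $|\overline\Lambda_l|-|\gamma_l|$ on the original qubits, $0$ on the right code, and only $|\sigma G^\top|$ on edges, which can be $O(1)$ absent expansion. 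So the deformed-code weight can drop far below $d_l+d_r-D$.

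In the paper's shared-port construction the same vertex-check product necessarily adds weight $|vF_r|$ on the right code, and the containment $f(\mathcal L_r)\subseteq f(\mathcal L_l)$ gives the two inequalities $|vF_l|\ge|vF_r|$ and $|vF_l|-|vF_r|\le|\overline Z_l|-|\overline Z_r|$; these are exactly what drive the Venn-diagram bookkeeping (the sets $\alpha,\beta,\gamma,\delta,\Omega,\mathcal R$ on each side) that yields $d_l+d_r-D$. Your triangle-inequality bound $|b_l|+|b_r|\ge d_l+d_r-(|\tau_l|+|\tau_r|)$ together with the complement normalization $|\tau_l|+|\tau_r|\le D$ only works when \emph{both} block operators are nontrivial after cleaning, and your treatment of the degenerate case (``handled by the same normalization together with the connectivity built into the graph'') does not close: complementing $\sigma$ can make the trivial side become $\sim\overline Z_r$, but then you lose control of $|\bar\tau_l|+|\bar\tau_r|$, and ``robust joining'' is not a substitute for the weight-transfer mechanism that the nested port provides. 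Step~(3) is fine once step~(2) is fixed.
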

\begin{proof}
As we have done before, we assume without loss of generality that both $\overline{Z}_l$ and $\overline{Z}_r$ are entirely $Z$-type, because we can always perform single-qubit Cliffords to make it so. We also use $\mathcal{L}_l$ and $\mathcal{L}_r$ to denote the qubit supports of operators $\overline{Z}_l=Z(\mathcal{L}_l)$ and $\overline{Z}_r=Z(\mathcal{L}_r)$. Furthermore, we make two more assumptions.
\begin{enumerate}[label=(\roman*)]
\item $|\overline{Z}_l|\ge|\overline{Z}_r|$ without loss of generality.
\item Choose a port function $f:\mathcal{L}_l\cup\mathcal{L}_r\rightarrow\mathcal{V}$ such that the port set of vertices for the left code $f(\mathcal{L}_l)$ is a superset of the port set of vertices for the right code $f(\mathcal{L}_r)$. Provided (i), this can always be done. Moreover, for all vectors $v\in\mathcal{V}$, it implies $|vF_l|\ge|vF_r|$ and $|vF_l|-|vF_r|\le|\overline{Z}_l|-|\overline{Z}_r|$, where $F_l,F_r$ are the matrices in Fig.~\ref{fig:expansionless_joint}.
\end{enumerate}

We note that (ii) in particular means that the port function is non-injective, i.e.~there are vertices connected to two qubits, one from the left code and one from the right. However, we noted in Theorem~\ref{thm:graph_desiderata} that desiderata 0-3 apply also to a non-injective port function. Thus, we can build a graph using the techniques of Section~\ref{sec-gauging-meas} that has just $D=\max(|\overline{Z}_l|,|\overline{Z}_r|)=|\overline{Z}_l|=|\text{im}f|$ port vertices, connected one-to-one to qubits in $\mathcal{L}_l$, and a subset of them connected one-to-one to $\mathcal{L}_r$ also. To ensure desideratum 2, we can add edges to the set of $D$ port vertices using the vertex pairing strategy described just below the proof of Theorem~\ref{thm:graph_desiderata}. This necessitates adding edges for both the left code checks overlapping $\overline{Z}_l$ and the right code checks overlapping $\overline{Z}_r$. Still, each port vertex has $O(1)$ degree after this process, satisfying desideratum 1. Additional edges can be added if necessary to ensure the graph is connected and satisfies desideratum 0. Finally, once the cycles are sparsified using thickening to satisfy desideratum 3, the final auxiliary graph contains a total of $O(D\log^3D)$ vertices and edges.

Since we are lacking desideratum 4, we must rely on the conditions (a) and (b) to prove lower bounds on the deformed code distance. We make use of just (a) to begin and only add (b) at the end of the proof.

Consider a nontrivial logical Pauli $\overline{\Lambda}$ from the original codes that commutes with $\overline{Z}_l\overline{Z}_r$ but is not equivalent modulo stabilizers to it. In other words, $\overline{\Lambda}$ represents any logical operator on logical qubits that are not measured. Write $\overline{\Lambda}=\overline{\Lambda}_l\overline{\Lambda}_r$ where $\overline{\Lambda}_l=\Lambda_l^X\Lambda_l^Z$ is supported entirely on the left code, $\overline{\Lambda}_r=\Lambda_r^X\Lambda_r^Z$ is supported entirely on the right, and we have denoted their Pauli $X$ and $Z$ components.

In the deformed code, $\overline{\Lambda}$ may pick up $X$-type support on the edge qubits of the auxiliary graph. Specifically, there will be some perfect matching $e\in\mathcal{E}$ on the auxiliary graph that ensures $\overline{\Lambda}X(e\in\mathcal{E})$ commutes with all vertex $Z$ checks. We ignore this edge support, and show that, for any choice of vertex checks $v$, $\overline{\Lambda}\mathcal{H}_Z(v\in\mathcal{V})$ has weight at least $d$ on just the original code qubits alone. Thus the distance of the deformed code is at least $d$.

\begin{figure*}[t]
    \centering
    \includegraphics[width=0.8\textwidth]{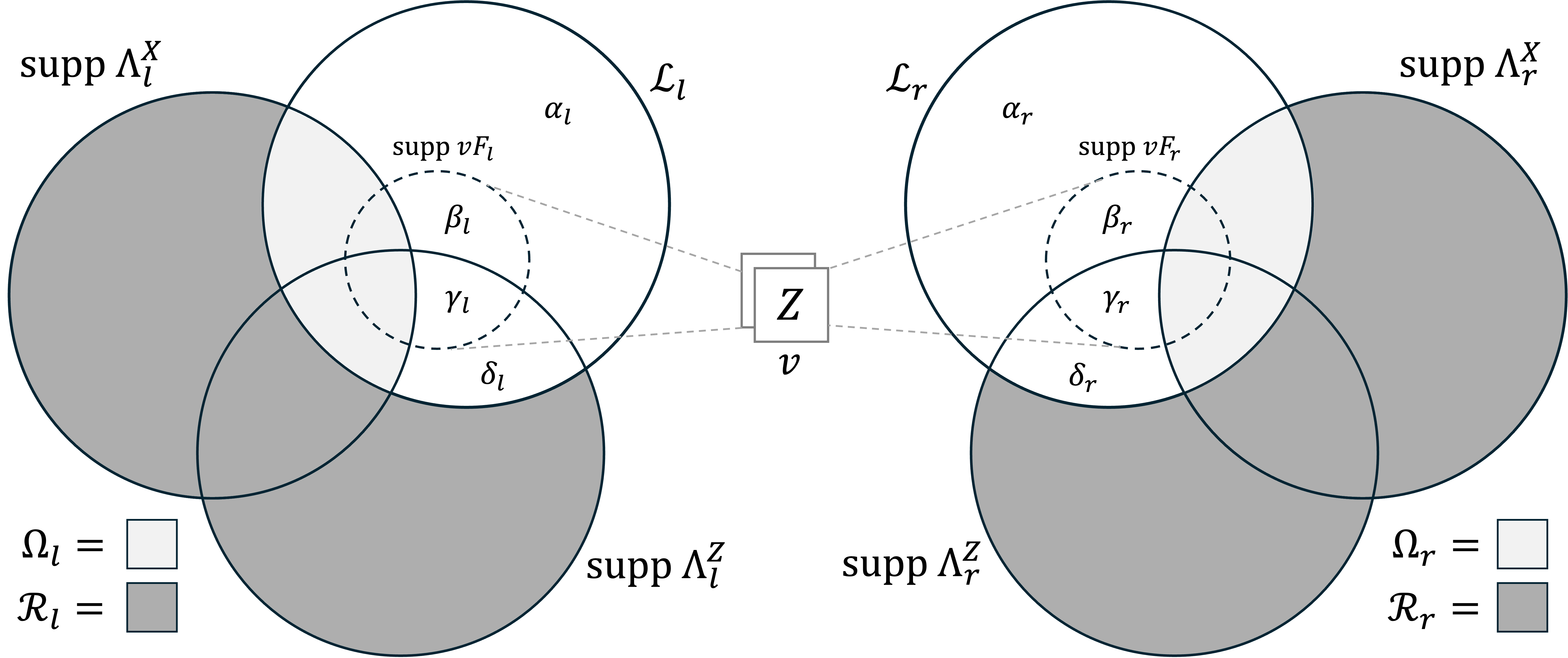}
    \caption{Notation for the proof of Theorem~\ref{thm:expansionless_joint}. On the left (right), all sets shown are subsets of $l$ ($r$) code qubits. Vertex checks $v$ are connected to sets of qubits $\supp vF_l$ and $\supp vF_r$ indicated by dashed circles.}
    \label{fig:expansionless_joint_venn}
\end{figure*}

We establish notation in Fig.~\ref{fig:expansionless_joint_venn}. In particular, we denote $\mathcal{R}_l=\supp \overline{\Lambda}_l\setminus\mathcal{L}_l$ and $\Omega_l=\supp\Lambda^X_l\cap\mathcal{L}_l$. We note the qubits connected to the checks indicated by $v$, namely $\supp vF_l$, are entirely contained within $\mathcal{L}_l$. The intersection of $\supp vF_l$ with $\Lambda_l^Z$ defines regions we denote by $\alpha_l,\beta_l,\gamma_l$, and $\delta_l$. We make the symmetric definitions for code $r$. In this notation, we wish to bound the weight of the following
\begin{equation}\label{eq:bound_this_reinterpreted}
|\overline{\Lambda}\mathcal{H}_Z(v\in\mathcal{V})|\ge|\beta_l|+|\delta_l|+|\Omega_l|+|\mathcal{R}_l|+|\beta_r|+|\delta_r|+|\Omega_r|+|\mathcal{R}_r|.
\end{equation}

Also, we can interpret condition (ii) in the same notation. Note that $|\beta_l|+|\gamma_l|\le|vF_l|\le|\beta_l|+|\gamma_l|+|\Omega_l|$ and likewise for the $r$ code. Therefore, $|vF_l|\ge|vF_r|$ and $|vF_l|-|vF_r|\le|\overline{Z}_l|-|\overline{Z}_r|$ from condition (ii) imply
\begin{align}\label{eq:e1_reinterpreted}
|\beta_l|+|\gamma_l|+|\Omega_l|&\ge|\beta_r|+|\gamma_r|,\\\label{eq:e2_reinterpreted}
|\beta_r|+|\gamma_r|+|\Omega_r|-|\overline{Z}_r|&\ge|\beta_l|+|\gamma_l|-|\overline{Z}_l|.
\end{align}

Since $\overline{\Lambda}$ is not equivalent to $\overline{Z}_l\overline{Z}_r$ modulo stabilizers, then $|\overline{\Lambda}_l\overline{Z}_l|\ge d_l$ or $|\overline{\Lambda}_r\overline{Z}_r|\ge d_r$ (possibly both). We assume that $|\overline{\Lambda}_l\overline{Z}_l|\ge d_l$. This assumption holds without loss of generality because if this is not the case, we can consider $\overline{\Lambda}\mathcal{H}_Z(\vec1\in\mathcal{V})=\overline{\Lambda}\;\overline{Z}_l\overline{Z}_r$ instead of $\overline{\Lambda}$, and the problem of bounding the weight $|\overline{\Lambda}\mathcal{H}_Z(v\in\mathcal{V})|$ for all $v$ remains unchanged. Translating to the notation of Fig.~\ref{fig:expansionless_joint_venn}, $d_l\le|\overline{\Lambda}_l\overline{Z}_l|=|\overline{Z}_l|-|\gamma_l|-|\delta_l|+|\mathcal{R}_l|$ and
\begin{equation}\label{eq:LlZ_reinterpreted}
|\mathcal{R}_l|+(|\overline{Z}_l|-d_l)\ge|\gamma_l|+|\delta_l|.
\end{equation}

Now, we proceed in two cases. Either $|\overline{\Lambda}_r|$ is a nontrivial logical operator of the $r$ code or it is trivial (i.e.~a stabilizer). In the first case, $d_r\le|\overline{\Lambda}_r|=|\gamma_r|+|\delta_r|+|\Omega_r|+|\mathcal{R}_r|$. Starting from Eq.~\eqref{eq:bound_this_reinterpreted}, we then have
\begin{align*}
|\overline{\Lambda}\mathcal{H}_Z(v\in\mathcal{V})|&\ge -|\gamma_l|+|\delta_l|+|\mathcal{R}_l|+2|\beta_r|+|\gamma_r| && \\ & \qquad +|\delta_r|  +|\Omega_r|+|\mathcal{R}_r|&&\text{[Eq.~\eqref{eq:e1_reinterpreted}]} \\
&\ge -|\gamma_l|+|\delta_l|+|\mathcal{R}_l|+2|\beta_r|+d_r&&\text{[$1^{\text{st}}$ case]}\\
&\ge d_r+2|\delta_l|+2|\beta_r|-(|\overline{Z}_l|-d_l)&&\text{[Eq.~\eqref{eq:LlZ_reinterpreted}]}\\
&\ge d_l+d_r-|\overline{Z}_l|.
\end{align*}
In the second case, $\overline{\Lambda}_r$ is trivial, but still the weight of $\overline{\Lambda}_r\overline{Z}_r$ cannot be less than $d_r$. Thus, $|\overline{\Lambda}_r\overline{Z}_r|=|\overline{Z}_r|-|\gamma_r|-|\delta_r|+|\mathcal{R}_r|\ge d_r$ and $|\mathcal{R}_r|+(|\overline{Z}_r|-d_r)\ge|\gamma_r|+|\delta_r|$. Also, if $\overline{\Lambda}_r$ is trivial then $\overline{\Lambda}_l$ must be nontrivial, so $|\overline{\Lambda}_l|=|\gamma_l|+|\delta_l|+|\Omega_l|+|\mathcal{R}_l|\ge d_l$. Therefore, we have
\begin{align*}
|\overline{\Lambda}\mathcal{H}_Z(v\in\mathcal{V})|&\ge 2|\beta_l|-|\overline{Z}_l|+|\overline{\Lambda}_l|-|\gamma_r|+|\delta_r| \\ & \qquad +|\mathcal{R}_r|+|\overline{Z}_r|&&\text{[Eq.~\eqref{eq:e2_reinterpreted}]}\\
&\ge d_r+2|\beta_l|+2|\delta_r|-(|\overline{Z}_l|-d_l)&&\text{[$2^{\text{nd}}$ case]}\\
&\ge d_l+d_r-|\overline{Z}_l|.
\end{align*}
We conclude from both cases that if condition (a) is satisfied, the deformed code distance is at least $d_l+d_r-|\overline{Z}_l|$ and this equals $d_l+d_r-\max(|\overline{Z}_l|,|\overline{Z}_r|)$ by applying (i).

Assuming condition (b), i.e.~$|\overline{Z}_l|=d_l$, this lower bound on the deformed code distance becomes $d_r$, which equals $d=\min(d_l,d_r)$ because $d_r\le|\overline{Z}_r|\le|\overline{Z}_l|=d_l$ using (i) and (b) together. 
\end{proof}

Some intuition for this result comes from considering special cases. If we have a logical operator $\overline{\Lambda}_l$ in the left code, it can lose weight in couple ways once it is multiplied by vertex checks. First, we could multiply it by vertex checks that are not even connected to the right code (because the left port vertices are merely a superset of the right port vertices). Then, without expansion in the auxiliary graph, it could simply lose that support. However, it cannot lose more than $|\overline{Z}_l|-|\overline{Z}_r|$ this way and so its weight remains at least $|\overline{Z}_r|\ge d_r$. A second way $\overline{\Lambda}_l$ can lose weight is if its weight is transferred to the right code through vertex checks that are connected to both codes, and then that weight can be reduced in weight by checks in the right code. This would, however, imply that $\overline{Z}_r$ can be reduced in weight by these checks as well, so the weight is not reduced by more than $|\overline{Z}_r|-d_r$. Therefore, the weight of $\overline{\Lambda}_l$ is still at least $d_l+d_r-|\overline{Z}_r|\ge d_l+d_r-|\overline{Z}_l|\ge d_r$. 

Notice that both these weight-reducing scenarios do not apply to a logical operator on the right code $\overline{\Lambda}_r$ instead of $\overline{\Lambda}_l$. The first does not apply because all vertex checks connected to the right code are also connected to the left. The second does not apply because there is no way to reduce the weight of $\overline{Z}_l$ by stabilizers -- it is already a minimum weight logical.

While the intuition covers only special cases, the proof of Theorem~\ref{thm:expansionless_joint} covers the most general case. Nevertheless, there is still some benefit to this intuition when assessing other scenarios. For instance, suppose the left code encodes just two qubits, 1 and 2, and we want to measure \emph{two} joint operators $\overline{Z}^{(1)}_l\overline{Z}_r$ and $\overline{X}^{(2)}_l\overline{X}_r$ simultaneously, where operators $\overline{Z}_r$ and $\overline{X}_r$ are from the separate right codeblock. We might consider introducing two separate auxiliary graphs, one for measuring $\overline{Z}^{(1)}_l\overline{Z}_r$ and one for $\overline{X}^{(2)}_l\overline{X}_r$, and proceed similarly to Theorem~\ref{thm:expansionless_joint}. Now, a logical operator $\overline{\Lambda}_r$ on the right might be multiplied by both sets of vertex checks together, and we must ensure that combined weight transferred to the left code cannot be reduced in weight by checks of the left code. This necessitates a more involved condition on the logical operators $\overline{Z}^{(1)}_l$ and $\overline{X}^{(2)}_l$, but one that it turns out is still satisfiable. For instance, the theorem below covering this scenario is applicable to the case in which the left code is the toric code.

\begin{theorem}\label{thm:expansionless_joint_toric}
Let $\overline{Z}_r$ and $\overline{X}_r$ be arbitrary non-overlapping logical operators in a distance $d_r$ quantum LDPC code, referred to as the right code. Consider another distance $d_l\ge\max(|\overline{Z}_r|,|\overline{X}_r|)$ quantum LDPC code, the left code, encoding just two logical qubits and possessing two non-overlapping, weight $d_l$ logical operators $\overline{Z}_l$ and $\overline{X}_l$. Suppose the weight of $\overline{Z}_l\overline{X}_l$ cannot be reduced to less than $2d_l$ by multiplying by stabilizers and logical operators of the left code other than $\overline{Z}_l$, $\overline{X}_l$, and $\overline{Z}_l\overline{X}_l$. The toric code is an example of such a left code. Then, we can construct two auxiliary graphs, each of size $O(d_l\log^3d_l)$ to measure $\overline{Z}_l\overline{Z}_r$ and $\overline{X}_l\overline{X}_r$, and only those logical operators, simultaneously. Moreover, the deformed code is LDPC and has distance at least $d_r$.
\end{theorem}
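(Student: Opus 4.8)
The plan is to follow the template of Theorem~\ref{thm:expansionless_joint}, but running \emph{two} auxiliary graphs — a $Z$-type one for $\overline{Z}_l\overline{Z}_r$ and its Hadamard-dual $X$-type one for $\overline{X}_l\overline{X}_r$ — in parallel, and using the hypothesis on $\overline{Z}_l\overline{X}_l$ in place of condition~(b) of that theorem. First I would apply single-qubit Cliffords so that $\overline{Z}_l,\overline{Z}_r$ are $Z$-type and $\overline{X}_l,\overline{X}_r$ are $X$-type (possible since the pairs are non-overlapping, hence act with commuting single-qubit Paulis). Then, exactly as in Theorem~\ref{thm:expansionless_joint} with the roles $|\overline{Z}_l|=|\overline{X}_l|=d_l\ge\max(|\overline{Z}_r|,|\overline{X}_r|)$ (so each left port is a superset of the corresponding right port), I build a $Z$-type auxiliary graph $\mathcal{G}^Z$ with $d_l$ port vertices to measure $\overline{Z}_l\overline{Z}_r$ and the dual auxiliary graph $\mathcal{G}^X$ with $d_l$ port vertices to measure $\overline{X}_l\overline{X}_r$; thickening to fix desideratum~3 makes each $O(d_l\log^3 d_l)$ in size. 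Because $\supp\overline{Z}_l\cap\supp\overline{X}_l=\emptyset$ and $\supp\overline{Z}_r\cap\supp\overline{X}_r=\emptyset$, the two graphs attach to disjoint original qubits and use disjoint edge qubits, so their vertex and cycle checks commute, the deformation of any original check (gaining $X$-support on $\mathcal{G}^Z$ edges and $Z$-support on $\mathcal{G}^X$ edges) is independent, and LDPC-ness (desiderata 0--3 for each graph) is inherited from Theorem~\ref{thm:expansionless_joint}; since $\overline{Z}_l\overline{Z}_r$ and $\overline{X}_l\overline{X}_r$ commute (the only possible anticommutations are killed by non-overlap), desideratum~0 applied to each connected graph gives that the deformed code has exactly two fewer logical qubits and realizes precisely those two joint measurements.

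For the distance, let $\overline{\Lambda}$ be a nontrivial logical of the deformed code; it commutes with both measured operators, and since those operators are products of vertex checks it suffices to show that for every original-code stabilizer factor and every product of vertex checks $\mathcal{H}_Z(v_Z)$ from $\mathcal{G}^Z$ and $\mathcal{H}_X(v_X)$ from $\mathcal{G}^X$, the operator $\overline{\Lambda}\,\mathcal{H}_Z(v_Z)\,\mathcal{H}_X(v_X)$ has weight $\ge d_r$ on the original qubits alone (discarding, as usual, the edge-qubit support needed to commute with deformed checks). Write $\overline{\Lambda}=\overline{\Lambda}_l\overline{\Lambda}_r$ for its restrictions to the two codeblocks, which are logicals of those codes with $[\overline{\Lambda}_l,\overline{Z}_l]=[\overline{\Lambda}_r,\overline{Z}_r]$ and $[\overline{\Lambda}_l,\overline{X}_l]=[\overline{\Lambda}_r,\overline{X}_r]$; the restriction of $\mathcal{H}_Z(v_Z)\mathcal{H}_X(v_X)$ to the left is $Z(w^Z_l)X(w^X_l)$ with $w^Z_l\subseteq\supp\overline{Z}_l$, $w^X_l\subseteq\supp\overline{X}_l$, and to the right $Z(w^Z_r)X(w^X_r)$, where the port-superset structure makes $w^Z_r$ the part of $w^Z_l$ lying in the image of $\supp\overline{Z}_r$ and gives $|w^Z_l|-|w^Z_r|\le|\overline{Z}_l|-|\overline{Z}_r|$ (similarly for $X$). \emph{Case 1: $\overline{\Lambda}_l\not\equiv I,\overline{Z}_l,\overline{X}_l,\overline{Z}_l\overline{X}_l$ modulo stabilizers.} Then $\overline{\Lambda}_l\overline{Z}_l^{\,a}\overline{X}_l^{\,b}$ is, for every $a,b\in\{0,1\}$, also inequivalent to all four, so by the hypothesis on $\overline{Z}_l\overline{X}_l$ it has weight $\ge 2d_l$; choosing $a,b$ to minimize $\overline{Z}_l^{\,a}Z(w^Z_l)$ and $\overline{X}_l^{\,b}X(w^X_l)$ (disjoint supports) gives $|\overline{Z}_l^{\,a}Z(w^Z_l)\,\overline{X}_l^{\,b}X(w^X_l)|\le|\overline{Z}_l|/2+|\overline{X}_l|/2=d_l$, so the left-block weight is $\ge 2d_l-d_l=d_l\ge d_r$ by the triangle inequality, and we are done.

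\emph{Case 2: $\overline{\Lambda}_l$ is equivalent to one of $I,\overline{Z}_l,\overline{X}_l,\overline{Z}_l\overline{X}_l$.} Multiplying $\overline{\Lambda}$ by the appropriate product of the (deformed-trivial) operators $\overline{Z}_l\overline{Z}_r$ and $\overline{X}_l\overline{X}_r$, we may assume $\overline{\Lambda}_l$ is trivial, so $\overline{\Lambda}\equiv\overline{\Lambda}_r$ is a nontrivial right-code logical commuting with both $\overline{Z}_r$ and $\overline{X}_r$ (whence $\overline{\Lambda}_r\overline{Z}_r^{\,a'}\overline{X}_r^{\,b'}$ is nontrivial and has weight $\ge d_r$ for all $a',b'$). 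I would now run a two-sided version of the Venn-diagram support count of Theorem~\ref{thm:expansionless_joint}: partition $\supp\overline{\Lambda}_r$ by its overlap with $\supp\overline{Z}_r$, $\supp\overline{X}_r$, $w^Z_r$, $w^X_r$, and likewise on the left, so that the weight $\overline{\Lambda}_r$ sheds onto the right is matched to weight forced onto $\supp\overline{Z}_l$ (via $\mathcal{G}^Z$) and $\supp\overline{X}_l$ (via $\mathcal{G}^X$); then combine the port-superset inequalities, the bound $|\overline{\Lambda}_r\overline{Z}_r^{\,a'}\overline{X}_r^{\,b'}|\ge d_r$, and the telescoping exactly as in Theorem~\ref{thm:expansionless_joint}. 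The step I expect to be the main obstacle is precisely this case: unlike in Theorem~\ref{thm:expansionless_joint}, the right-code logical can shed weight onto the left through \emph{both} auxiliary graphs at once, so the relevant irreducibility statement is not that $\overline{Z}_l$ and $\overline{X}_l$ are separately minimum weight but the \emph{joint} one, namely that $\overline{Z}_l\overline{X}_l$ cannot be reduced below $2d_l$ — that is exactly the extra hypothesis, and making the two-sided bookkeeping close (while checking that the toric code, with $\overline{Z}_l=\overline{Z}^{(1)}_l$ and $\overline{X}_l=\overline{X}^{(2)}_l$, satisfies the hypothesis because it has no cheap logical acting as $Z$ on $\supp\overline{Z}^{(1)}_l$ and $X$ on $\supp\overline{X}^{(2)}_l$ beyond the obvious three) is the delicate part of the proof.
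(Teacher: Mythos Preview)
Your setup (two disjoint auxiliary graphs, port-superset structure, LDPC via desiderata 0--3, reduction to bounding $|\overline{\Lambda}\,\mathcal{H}_Z(v_Z)\,\mathcal{H}_X(v_X)|$ on original qubits) matches the paper's proof exactly, and your Case~1 is correct --- in fact the triangle-inequality shortcut there is a pleasant simplification the paper does not use.

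Case~2, however, contains a false claim and a misaimed plan. After reducing to $\overline{\Lambda}_l$ trivial, it is \emph{not} true that $\overline{\Lambda}_r\overline{Z}_r^{\,a'}\overline{X}_r^{\,b'}$ is nontrivial for all $a',b'$: since $\overline{Z}_r$ and $\overline{X}_r$ are non-overlapping they commute, so $\overline{\Lambda}_r\equiv\overline{Z}_r$ (or $\overline{X}_r$, or $\overline{Z}_r\overline{X}_r$) is a perfectly valid nontrivial deformed logical, and then $\overline{\Lambda}_r\overline{Z}_r$ is trivial. So the four right-side bounds you intend to feed into a Theorem~\ref{thm:expansionless_joint}--style case split are not all available. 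The paper sidesteps this entirely: instead of reducing to $\overline{\Lambda}_l\in S_l$, it reduces (by multiplying with $\overline{Z}_l\overline{Z}_r$ and/or $\overline{X}_l\overline{X}_r$) only to $\overline{\Lambda}_l\in\langle\overline{X}_l^{(1)},\overline{Z}_l^{(2)},S_l\rangle$, which simultaneously forces $\overline{\Lambda}_r$ nontrivial (giving $|\overline{\Lambda}_r|\ge d_r$) and lets the hypothesis yield $|\overline{\Lambda}_l\overline{Z}_l\overline{X}_l|\ge 2d_l$. With \emph{just these two} inequalities plus the two port-superset bounds, a single Venn-diagram chain (no case split on the right side) collapses to $|\overline{\Lambda}\,\mathcal{H}_Z(v_Z)\,\mathcal{H}_X(v_X)|\ge d_r+2d_l-(|\overline{Z}_l|+|\overline{X}_l|)=d_r$. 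Your Case~2 is salvageable in the same way --- just drop the four-fold claim, keep $|\overline{\Lambda}_r|\ge d_r$ and $|\overline{\Lambda}_l\overline{Z}_l\overline{X}_l|\ge 2d_l$ (the latter holds for $\overline{\Lambda}_l\in S_l$ by the hypothesis with $M=\overline{\Lambda}_l$), and run the paper's single computation rather than the two-case telescoping of Theorem~\ref{thm:expansionless_joint}.
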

\begin{proof}
The proof is very similar to that of Theorem~\ref{thm:expansionless_joint} and we defer it to Appendix~\ref{app:proof_expansionless_joint_toric}.
That the toric code is a valid left code is left to Section~\ref{subsubsec:static-gmerge-param}, which focuses on adapting to the toric code.
\end{proof}

While we have studied just two scenarios in which desideratum 4 can be replaced in this section, there are very likely many other scenarios. As other works similarly surmise \cite{cross2024linear,williamson2024gauging,ide2024faulttolerant} about the requirement of graph expansion, we suspect desideratum 4 too is overkill in many practical settings.

\subsection{Decongestion is unnecessary for geometrically local codes} \label{subsec:geo_local} 

An inconvenience in the construction of the repetition code adapter is the necessity in general to thicken the deformed code to ensure a deformed code that is LDPC. Satisfying desideratum 3 of Theorem~\ref{thm:graph_desiderata} results in an additional $\log^3d$ factor in the space overhead. Here we note that this factor can be removed for the special case of geometrically local codes in constant dimensions $D\ge2$. In particular, for any logical operator of such a code we will show there exists a way to construct the graph $\mathcal{G}$ such that $\mathcal{G}$ has a cycle basis where each edge participates only in a constant number of cycle basis elements (here, the constant is independent of the code size, but dependent on the dimension $D$). This implies that thickening the deformed code is not necessary for our adapter constructions and for auxiliary graph surgery of a logical operator if the original code is geometrically local.

Consider a $D$-dimensional Euclidean space where qubits have a constant density, $\rho$. That is, for any $D$-dimensional hypersphere of radius $R$, the number of qubits within this ball is given by: $n_R = \rho R^D$. We define a family of \textit{geometrically-local stabilizer codes} to be one where we can define a set of stabilizer generators $\mathcal{S} = \langle S_i \rangle$, such that every generator $S_i$ is supported within a constant radius~$R_\mathcal{S}$, independent of system size.

As described in Section~\ref{sec-gauging-meas}, the task is, given a logical operator with support on a set of qubits~$\mathcal{L}$, to construct a graph $\mathcal{G} = (\mathcal{V},\mathcal{E})$ such that each vertex is associated with a qubit from~$\mathcal{L}$. Since the code is geometrically local, we can choose a physical layout of the vertices~$\mathcal{V}$ according to their position in the original geometric lattice. In addition, note that any irreducible logical operator will preserve some form of local structure in $D$~dimensions (or lower) in the following sense. If one considers a given vertex corresponding to one of the qubits within the support of~$\mathcal{L}$, then there must be a stabilizer in the original code that anti-commutes with the Pauli supported on this qubit of $\mathcal{L}$ (otherwise this single-qubit physical Pauli operator would also be a logical operator, which contradicts the irreducibility assumption). Secondly, there must be another qubit in the support of that stabilizer nearby, that is within a hypersphere of radius~$R_\mathcal{S}$ by definition of the geometrically-local code. For an irreducible logical operator~$\mathcal{L}$ we can recursively construct this mesh of qubits. By construction, any $D$-dimensional hypersphere of radius~$R_\mathcal{S}$ contains a number of qubits from~$\mathcal{L}$ that is lower bounded by~$cR_\mathcal{S}$ for some constant $c > 0$ independent of system size. In general, if the logical operator can be embedded in some lower-dimensional manifold, then the number of points from~$\mathcal{L}$ contained within a $D$-dimensional hypersphere of radius~$B$ will scale proportionately to $B^\alpha$, for $1\le \alpha \le D$, which will only strengthen the results that follow.
Note, this excludes the case of trying to measure the product of two disjoint logical operators. However, in that case the results in this section will hold for the individual graphs of each of the logical operators, which can then be connected using a repetition code adapter to avoid measuring each individually, as described in Section~\ref{sec:rep_adapter}.  

We now turn to the question of providing the edges of the graph~$\mathcal{G}$. We will choose the edges according to the Delaunay triangulation of points in $\mathbb{R}^D$~\cite{delaunay}. The Delaunay triangulation is defined as the partitioning of the system into simplices such that for any circum-hypersphere of points in a given simplex contains no other points. Such a triangulation is unique unless there exists a hypersphere containing $D+2$ or more points on its boundary without any points in the interior (referred to as \textit{special} systems~\cite{delaunay}), as then there may be multiple choices for the triangulation \footnote{For example, consider the four vertices at the corner of a square: they all reside on the same circle centered at the middle of the square with the appropriate radius. Their triangulation is not unique as one can choose either of the diagonals in the edge set.}. For simplicity, we will assume the uniqueness of the triangulation. In cases where there is a degeneracy in the choice due to the above condition, one can find a slight perturbation of the lattice to recover uniqueness and the remaining results will still hold~(proof can be found in Prop.~1 of Ref.~\cite{delaunay}). 

\begin{figure}
    \centering
    (a)\includegraphics[width=\linewidth,trim={5cm 5cm 5cm 5cm},clip]{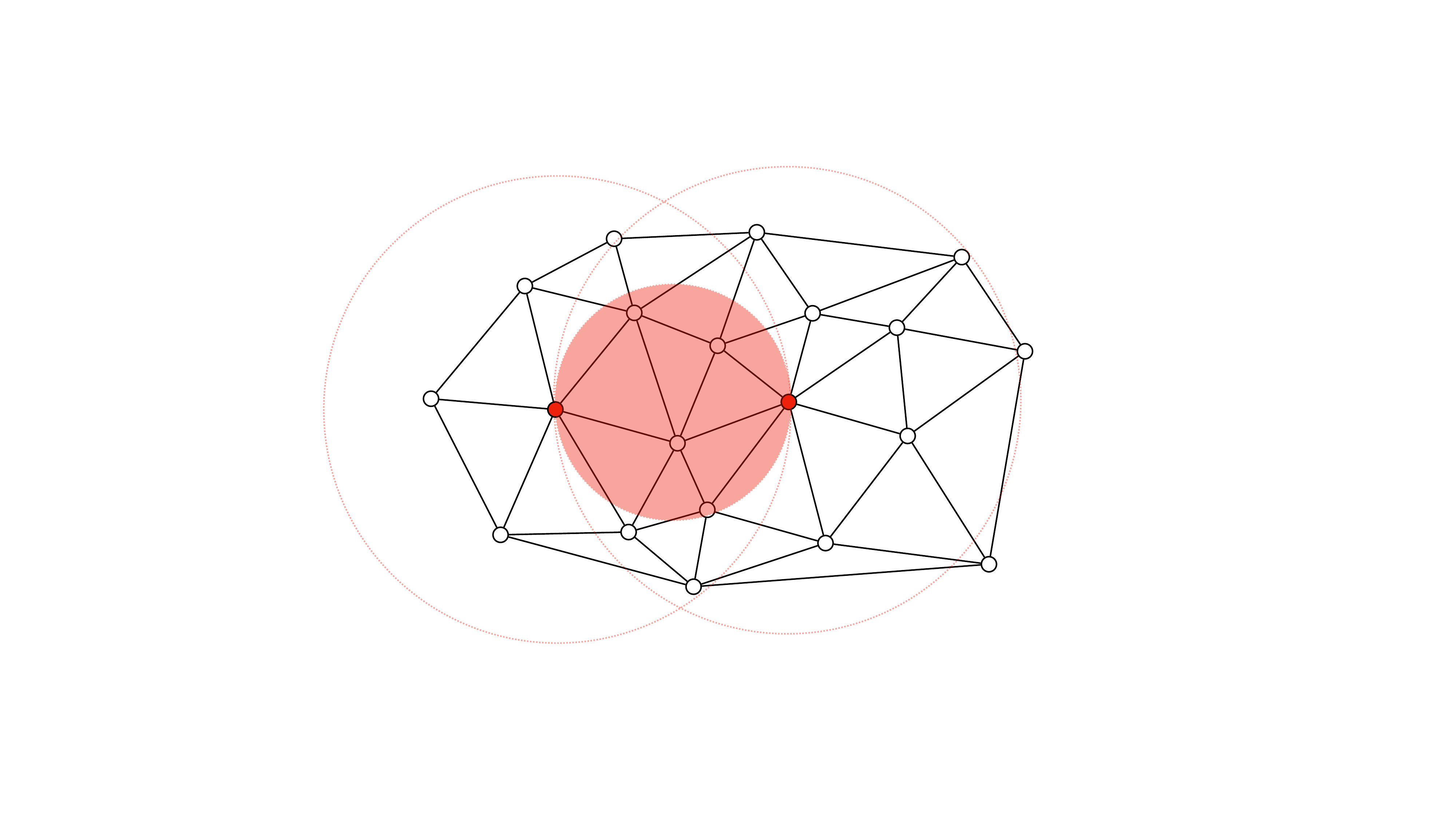}
    (b)\includegraphics[width=\linewidth,trim={5cm 5cm 5cm 5cm},clip]{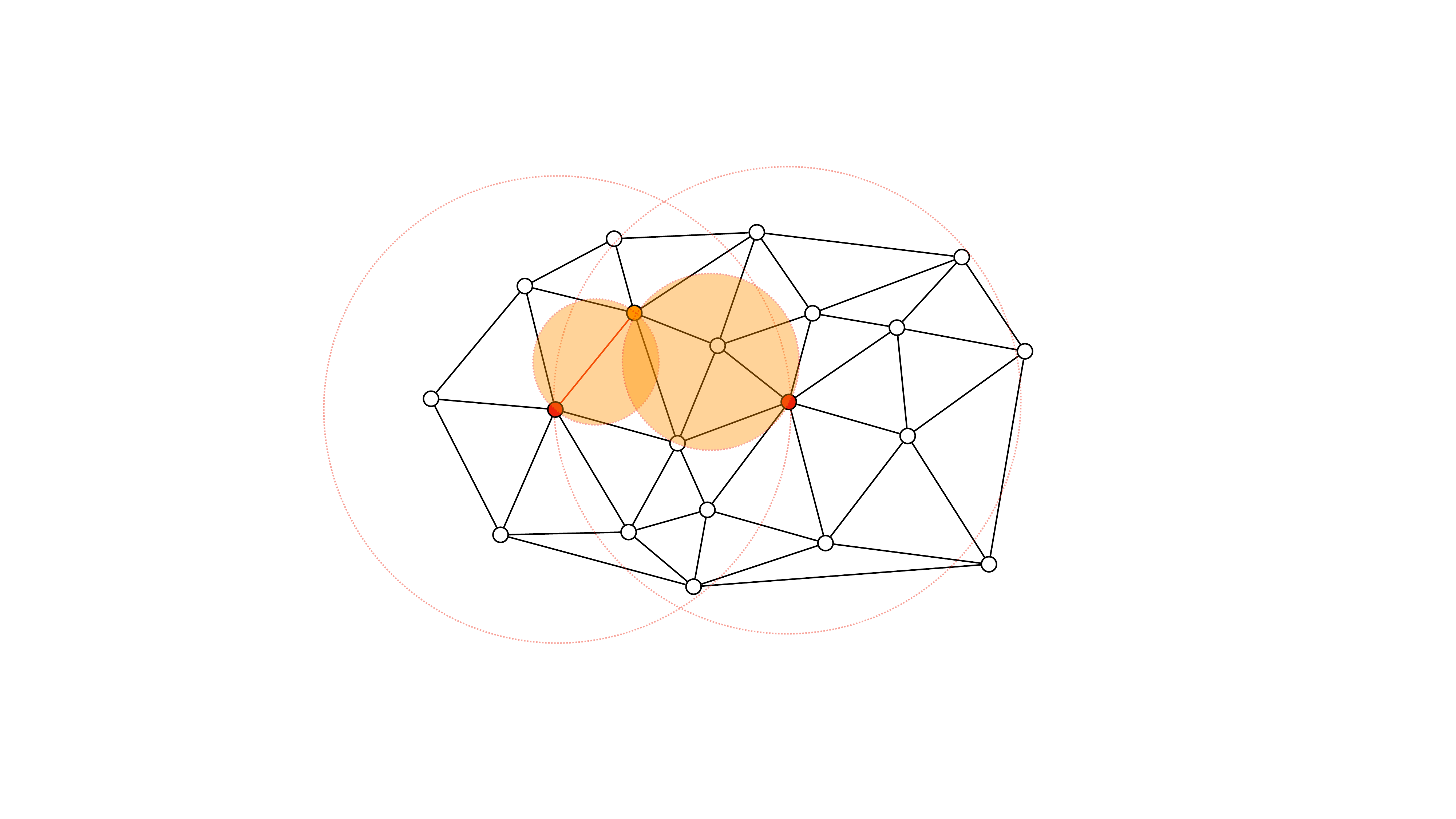}
    (c)\includegraphics[width=\linewidth,trim={5cm 5cm 5cm 5cm},clip]{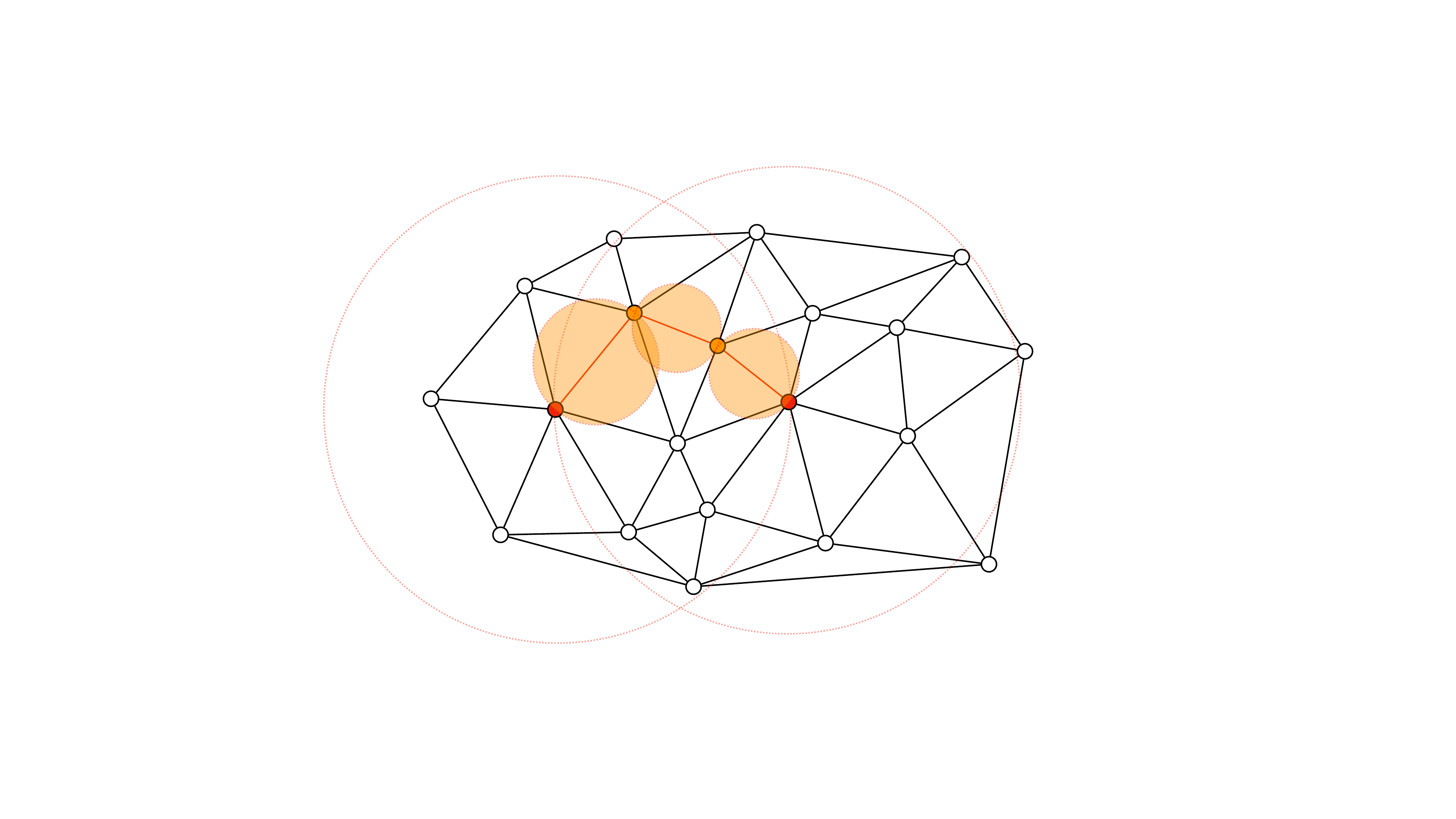}
    \caption{Finding a matching for two points given in red within a Delaunay triangulation. \newline (a)~Consider the circle centered at the midpoint between the points with diameter equal to their distance. Given the interior is non-empty, choose one of the interior points and iterate the process between the two original points and this midpoint. (b) The circle on the left contains no points, thus indicating the points must share an edge in the triangulation. The circle on the right is again non-empty, again break this problem into two and iterate. (c) Two further valid matching are found, completing the search for the overall path between the original vertices.}
    \label{fig:DelaunayMatching}
\end{figure}

Before we discuss the main result of this section, we prove a useful lemma for a non-special system of points (that is one where the Delaunay triangulation is unique).

\begin{lemma}\label{lem:delaunay}
    Given the graph $\mathcal{G} = (\mathcal{V},\mathcal{E})$ corresponding to the Delaunay triangulation of a non-special set of vertices~$\mathcal{V}$ in $D$-dimensional space. Consider two vertices $A$ and $B$ such that there exists a hypersphere whose boundary contain $A$ and $B$ and whose interior contains no other vertices. Then, there must exist an edge $e \in \mathcal{E}$ between $A$ and $B$.
\end{lemma}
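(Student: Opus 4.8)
The plan is to reprove, in the form we need, the classical ``empty ball'' characterization of Delaunay faces: a set of vertices spans a face of the Delaunay triangulation iff it lies on a common $(D{-}1)$-sphere whose open ball contains no vertex of $\mathcal{V}$. We only need the direction ``empty sphere through $A,B$ $\Rightarrow$ $\{A,B\}\in\mathcal{E}$''. The cleanest route is the lifting map $\phi:\mathbb{R}^D\to\mathbb{R}^{D+1}$, $\phi(x)=(x,|x|^2)$, under which (i) the Delaunay triangulation of $\mathcal{V}$ is the vertical projection of the lower faces of $\mathcal{P}=\mathrm{conv}\{\phi(v):v\in\mathcal{V}\}$, and (ii) a $(D{-}1)$-sphere $\Sigma$ in $\mathbb{R}^D$ corresponds to a non-vertical hyperplane $\pi_\Sigma$ in $\mathbb{R}^{D+1}$ with a point lying inside/on/outside $\Sigma$ exactly when its lift lies below/on/above $\pi_\Sigma$. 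First I would record this dictionary.

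Applying (ii) to the sphere from the hypothesis gives a non-vertical hyperplane $\pi$ with $\phi(A),\phi(B)\in\pi$ and every other lift weakly above $\pi$; that is, $\pi$ is a supporting hyperplane of the lower hull of $\mathcal{P}$ through $\phi(A)$ and $\phi(B)$. Hence $F=\mathcal{P}\cap\pi$ is a face of the lower hull containing both $\phi(A)$ and $\phi(B)$. Since all lifts lie on the strictly convex paraboloid, each is a vertex of $\mathcal{P}$, so $\phi(A)$ and $\phi(B)$ are vertices of $F$. The non-special assumption forces at most $D{+}1$ lifts on $\pi$ (otherwise their preimages would be $D{+}2$ cospherical vertices with empty interior), and under the general-position/uniqueness hypothesis $F$ is then a simplex, i.e.\ (after projecting back) a face of the unique Delaunay triangulation. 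Every pair of vertices of a simplex spans an edge of it, so in particular $\{A,B\}\in\mathcal{E}$.

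I would also sketch a lift-free alternative for self-containedness: centers of spheres through $A$ and $B$ form the perpendicular-bisector hyperplane $\Pi\cong\mathbb{R}^{D-1}$ of $AB$; for fixed $v$, the condition ``the open ball of the sphere centered at $o\in\Pi$ avoids $v$'' is a linear inequality in $o$; so the centers of empty spheres through $A,B$ form a nonempty closed convex polyhedron $\mathcal{K}\subseteq\Pi$. If $\mathcal{K}$ is unbounded, taking a recession ray yields in the limit a closed halfspace through $A$ and $B$ avoiding all other vertices, whence $AB$ is an edge of $\mathrm{conv}(\mathcal{V})$ and therefore a Delaunay edge; if $\mathcal{K}$ is bounded, a vertex of $\mathcal{K}$ is a center where at least $D{-}1$ of the inequalities are tight, producing an empty sphere through $A$, $B$ and (by non-specialness, exactly) $D{-}1$ further vertices in general position — a full Delaunay $D$-simplex with $\{A,B\}$ among its edges. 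The trivial cases ($|\mathcal{V}|=2$, or $A,B$ the only boundary vertices, where $F$ is already the $1$-simplex $[\phi(A),\phi(B)]$) show the argument never secretly needs $D{+}1$ vertices to be available.

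The step I expect to be the real work is turning the paper's informal non-special hypothesis into the precise genericity statement actually used — namely that the set of vertices on any empty sphere's boundary has size at most $D{+}1$ and is affinely independent, so that the corresponding face of the lower hull (equivalently of the Delaunay complex) is a genuine simplex rather than a degenerate cell; this is exactly where the degenerate ``square'' configurations are excluded, and it is the only place the non-special assumption is invoked. Everything else — the sphere/hyperplane dictionary, linearity of the ``avoid $v$'' constraints, vertices of $\mathcal{P}$ from strict convexity of the paraboloid, and reading off $\{A,B\}$ as an edge of a simplex — is routine once that general-position statement is pinned down.
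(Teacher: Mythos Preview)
Your proposal is correct and takes a genuinely different primary route from the paper. The paper argues purely geometrically: starting from the given empty sphere through $A,B$, it ``continuously grows'' the family of spheres through the current boundary points (their centers moving along a restricted flat) until the boundary acquires $D{+}1$ vertices, at which point the non-special hypothesis yields a Delaunay simplex containing the edge $\{A,B\}$. Your main argument instead uses the paraboloid lift $\phi(x)=(x,|x|^2)$ and reads the conclusion off as a face of the lower convex hull --- the standard computational-geometry proof of the empty-ball characterization --- which is more structural and makes the role of the genericity hypothesis explicit. Your lift-free alternative is in fact the paper's argument stated precisely: the ``continuous growth'' is exactly navigation in your polyhedron $\mathcal{K}$ of admissible centers, with the unbounded case (convex-hull edge) and the bounded case (vertex of $\mathcal{K}$ giving $D{+}1$ cospherical points) separated cleanly; the paper leaves the unbounded/degenerate cases implicit, while you spell them out. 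Either of your arguments would be an acceptable replacement, and the second is closest in spirit to what the paper intends.
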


\begin{proof}
    If the hypersphere that contains $A$ and $B$ on its boundary with empty interior additionally contains $D-1$ more points on its boundary, then these $D+1$ points form a simplex in the Delaunay triangulation and all share edges, completing the proof. If the boundary contains fewer than $D+1$ points, consider the continuous growth of all possible $D$-dimensional hyperpheres containing all points that are currently on the boundary. Such a growth will be constrained to hyperspheres centered on a restricted hyperplane and can be made continuously until a large enough hypersphere is found that has $D+1$ vertices on its boundary. Such a hypersphere is guaranteed to exist due to the non-special nature of the vertex set and thus we have found the corresponding simplex that contains both $A$ and $B$, and shown that they share an edge.
\end{proof}

Given the above Lemma, we are now ready to prove the main result of the section, which is a reformulation of the graph desiderata from Theorem~\ref{thm:graph_desiderata} in the case of geometrically-local codes.

\begin{theorem}\label{thm:delaunay}
    Given a (unique) Delaunay triangulation of an irreducible logical operator~$\mathcal{L}$ of a geometrically-local code with stabilizer radius bounded by~$R_\mathcal{S}$ the resulting graph~$\mathcal{G} = (\mathcal{V},\mathcal{E})$ will have the following properties:
    \begin{enumerate}
    \setcounter{enumi}{-1}
        \item $\mathcal{G}$ is connected.
        \item $\mathcal{G}$ has vertex degree that is bounded by $O\left(((D+1)R_\mathcal{S})^D\right)$, that is a constant independent of system-size (yet dependent on dimension~$D$).
        \item For all local stabilizer generators of the code~$S_i$ restricted to~$\mathcal{L}$, the resulting matching of associated vertices in~$\mathcal{G}$ is bounded, $|\mu(\mathcal{L}_{S_i})| = O(R_\mathcal{S}^D)$. Each edge can belong to at most~$O(R_\mathcal{S}^{2D})$ matchings.
        \item There is a cycle basis of $G$ such that (a)~each cycle is length $3$ and (b)~each edge is involved in at most $O\left(((D+1)R_\mathcal{S})^D\right)$~cycles.
    \end{enumerate}
\end{theorem}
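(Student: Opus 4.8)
The plan is to verify the four listed properties one at a time, with the organising observation that the degree bound (item~1) is the linchpin: once it is in hand, items~2 and~3 follow from the geometric‑locality (density) hypotheses together with Lemma~\ref{lem:delaunay}. Throughout, write $n_B = O(B^D)$ for the number of mesh points inside a ball of radius $B$ (the upper‑density hypothesis), recall that $R_\mathcal{S}$ is a constant, and note that the port function may be taken injective so that each mesh point is identified with its vertex in $\mathcal{G}$. Item~0 is immediate: the Delaunay triangulation of a finite point set triangulates its convex hull, so its $1$-skeleton $\mathcal{G}$ is connected.

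\emph{Item~1.} I would first show that every Delaunay neighbour of a vertex $v$ lies within distance $O((D+1)R_\mathcal{S})$ of $v$. By definition of the triangulation, an edge $(v,w)$ is witnessed by an open ball whose boundary contains $v$ and $w$ and whose interior contains no mesh point; the lower‑density hypothesis (that the recursively constructed mesh has no holes of radius larger than $O(R_\mathcal{S})$ along its support) forbids such an empty ball from being large in the relevant region, which caps $|vw|$, and hence the radius of the ball enclosing all neighbours of $v$, at $O((D+1)R_\mathcal{S})$. The upper‑density bound then gives $\deg(v)\le n_{O((D+1)R_\mathcal{S})}=O\big(((D+1)R_\mathcal{S})^D\big)$. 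I expect this confinement step — turning ``no large empty circumballs'' into a quantitative bound on Delaunay edge length and degree — to be the main obstacle, and the one place where the precise form of the geometric‑locality hypothesis on the mesh really gets used.

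\emph{Item~2.} Fix a local generator $S_i$ and let $\mathcal{L}_{S_i}$ be the (even) set of qubits of $\mathcal{L}$ on which $S_i$ acts nontrivially; these lie in a ball of radius $O(R_\mathcal{S})$ by locality, and $|\mathcal{L}_{S_i}|=O(1)$ because the code is LDPC. Pair the corresponding vertices arbitrarily and connect each pair $(A,B)$ by the recursive procedure of Fig.~\ref{fig:DelaunayMatching}: take the ball with diameter $AB$; if its interior is empty then $A,B$ share an edge by Lemma~\ref{lem:delaunay}; otherwise pick an interior mesh point $C$ and recurse on $(A,C)$ and $(C,B)$. Since $|AC|^2+|CB|^2<|AB|^2$ strictly whenever $C$ is interior, the pair‑distances strictly decrease along every branch, so the recursion terminates; pruning repeated vertices yields a simple path in $\mathcal{G}$ all of whose vertices lie in an $O(R_\mathcal{S})$-neighbourhood of $A$, hence of length at most $n_{O(R_\mathcal{S})}=O(R_\mathcal{S}^D)$. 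Summing over the $O(1)$ pairs gives $|\mu(\mathcal{L}_{S_i})|=O(R_\mathcal{S}^D)$. For the second half, every edge used in any matching path is itself $O(R_\mathcal{S})$-long, so a fixed edge $e$ can appear in $\mu(\mathcal{L}_{S_i})$ only when $S_i$ is supported within $O(R_\mathcal{S})$ of $e$; there are $O(R_\mathcal{S}^D)$ mesh points there and $O(1)$ generators per qubit, giving the (crude) bound $O(R_\mathcal{S}^{2D})$ on the number of matchings through $e$. Controlling the locality of the recursive path — ensuring it stays within an $O(1)$-radius neighbourhood (in units of $R_\mathcal{S}$) rather than drifting as the successive diameter‑balls stick out of one another — is the same geometric issue flagged in item~1.

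\emph{Item~3.} The Delaunay triangulation is a simplicial complex whose geometric realization is the convex hull of the points, which is contractible and in particular simply connected; hence $H_1(\mathcal{G};\mathbb{F}_2)=0$, so the boundaries of the triangular $2$-faces span the $\mathbb{F}_2$-cycle space $\ker\partial_1$ of $\mathcal{G}$. Any maximal $\mathbb{F}_2$-independent subset of these triangle boundaries is then a complete cycle basis, automatically of the right rank $m-n+1$, and every element of it has length $3$, which is 3(a). For 3(b), a basis cycle through an edge $(v,w)$ is the boundary of a $2$-face $\{v,w,x\}$ with $x$ a common Delaunay neighbour of $v$ and $w$, so the number of basis cycles through $(v,w)$ is at most the number of such faces, which is at most $\deg(v)=O\big(((D+1)R_\mathcal{S})^D\big)$ by item~1.
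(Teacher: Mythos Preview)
Your proposal is correct and follows essentially the same approach as the paper: the same density bound on Delaunay edge length for item~1, the identical recursive diameter-ball procedure of Fig.~\ref{fig:DelaunayMatching} for item~2, and the triangle-face cycle basis for item~3, with your simple-connectivity justification for completeness making explicit what the paper only asserts. The drift concern you flag in item~2 (whether all recursively chosen points remain within an $O(R_\mathcal{S})$ ball) is likewise glossed over in the paper's proof.
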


\begin{proof}
    By definition the Delaunay triangulation is connected, and hence desideratum~0 is satisfied. 
    
    Given we have a logical operator that has to preserve some local structure, there exists a radius~$R_\mathcal{L}$ of constant size, such that any hypersphere whose radius is at least~$R_\mathcal{L}$ must contain $D+1$ points in its interior (or boundary) due to the constant density of points in the logical operator, by definition. Any circum-hypersphere of a simplex in Delaunay triangulation must have radius upper-bounded by~$R_\mathcal{L}=(D+1)R_\mathcal{S}$. A given vertex may only share an edge with another vertex in the graph if they belong to the same circum-hypersphere, and must be within distance~$2R_\mathcal{L}$ of one another. Therefore, the total number of neighbors a given vertex~$v \in \mathcal{V}$ can have is bounded by the number of vertices within the hypersphere of radius $2R_\mathcal{L}$, centered at~$v$, which is at most~$\rho (2R_\mathcal{L})^D$ by the definition of a geometrically-local code, proving desideratum~1. 
    
    We will return to the proof of desideratum~2 below. Showing that desideratum~3 is satisfied is relatively straightforward. Given the Delaunay triangulation forms a graph composed of simplices, the faces of the graph are all triangles. Therefore, a valid choice for the cycle basis of this graph is the set of all triangular faces, which are all of length $3$. Moreover, given each edge by definition is associated with a pair of points (its endpoints), in order for such an edge to be involved in a triangular face, the pair of endpoints must share a common neighbor in the graph. Since the number of neighbors for each individual endpoint is upper-bounded by~$\rho (2R_\mathcal{L})^D$, their common neighbors must be similarly upper-bounded, thus proving desideratum~3.

    Finally, we arrive at the proof of desideratum~2. Given a stabilizer~$S_i$ that anti-commutes with the given logical operator at an even number of qubits, this will correspond to an even number of vertices in the graph~$\mathcal{G}$. We are tasked with finding a matching for these vertices that is of bounded length. Given the stabilizer is local, the resulting set of vertices on the graph~$\mathcal{G}$ are local, that is they are all within a hypersphere of constant radius~$R_\mathcal{S}$. We can choose to match any two pairs of points, as we will show that we can find a path in~$\mathcal{G}$ of bounded length (not necessarily minimum-matching). Given two points $A$ and $B$, consider the hypersphere~$H_{AB}$ centered at the midpoint between these points, whose diameter is the distance between these points (thus $A$ and $B$ reside on the boundary of this hypersphere, and the diameter must be upper bounded by~$R_\mathcal{S}$). If there are no points of~$\mathcal{G}$ in the interior of~$H_{AB}$ then $A$ and $B$ must belong to the same simplex in the Delaunay triangulation (by Lemma~\ref{lem:delaunay}) and we have found a path. If there does exist any points in the interior of $H_{AB}$, choose any such point and denote it~$C$. Now, we can iterate the search for an edge with the new pairings: $AC$ and $BC$, thus breaking the original problem into two equivalent subproblems. Given $C$ is in the interior of $H_{AB}$, the geometric distance between $A$ and $C$ as well as $B$ and $C$ must be strictly less that $A$ and $B$. Given there are only a finite number of vertices within the hypersphere of constant radius~$R_\mathcal{S}$, upper bounded by~$\rho R_\mathcal{S}^D$, the search will terminate in a finite number of steps resulting in a path with at most $O(R_\mathcal{S}^D)$~edges. Finally, we would like to bound the number of times a given edge is used in matchings of different stabilizer generators, restricted to~$\mathcal{L}$. Given all edges in the matching of two points~$A$ and $B$ must be withing the hyperspheres of radius~$R_\mathcal{S}$ of each point, there can only be a finite number of pairings of points for which a given edge can belong. The number of such pairings of points would necessarily be upper-bounded by~$O\left((R_\mathcal{S}^D)^2\right)$.
\end{proof}

In general, in addition to the desiderata 0-3 provided by the above theorem, we would also need desideratum 4, which demands the auxiliary graph has sufficient relative expansion. However, desideratum 4 is not necessary for using the repetition code adapter of Section~\ref{sec:rep_adapter} on at least some geometrically-local codes. For instance, in some $D$-dimensional toric codes \cite{dennis2002topological,bombin2007homological,vasmer2019three,jochym2021four}, the minimum-weight logical $Z$-type operators $\overline{Z}^{(0)},\overline{Z}^{(1)},\dots,\overline{Z}^{(D-1)}$ are 1-dimensional and run in orthogonal directions through the lattice. This means that auxiliary graph surgery on one of these cannot decrease the weight of another in the resulting deformed code.

\section{Adapters for multi-code logic}
\label{sec:multicode}

This section elaborates on the use of the universal adapter as a tool to perform logical measurements between different quantum LDPC codes. 

\begin{figure}[t]
    \centering
    \usetikzlibrary{decorations.pathreplacing}
\pgfmathsetmacro{\ewid}{0.7pt}

\newcommand{\measup}{
\draw[rounded corners=2pt,line width=\ewid] (0,0) rectangle (10,9) {};
\draw[] (5,6) arc (90:150:4pt);
\draw[] (5,6) arc (90:30:4pt);
\draw[-,>=stealth] (5,3) -- +(60:6pt);
}

\newcommand{\measdn}{
\draw[rounded corners=2pt,line width=\ewid] (0,0) rectangle (10,-9) {};
\draw[] (5,-6) arc (270:210:4pt);
\draw[] (5,-6) arc (270:330:4pt);
\draw[-,>=stealth] (5,-3) -- +(300:6pt);
}

\newcommand{\measright}{
\draw[fill=white,rounded corners=2pt,line width=\ewid] (0,0) rectangle (8,10) {};
\draw[] (5,5) arc (0:60:4pt);
\draw[] (5,5) arc (0:-60:4pt);
\draw[-,>=stealth] (3,6) -- +(-30:4.5pt);
}

\begin{tikzpicture}[scale=1.2,x=1pt,y=1pt]

\draw[color=black,line width=\ewid] (55.,-30.) node[left] {$(i)$};

\draw[color=black,line width=\ewid] (0.,15) -- (125,15);
\draw[color=black,line width=\ewid] (0.,15) node[left] {$|\psi\rangle$};

\draw[color=black,line width=\ewid] (0.,0) node[left] {$|+\rangle$};
\draw[color=black,line width=\ewid] (0.,0) -- (70.,0);

\begin{scope}[xshift=6,yshift=-6]
    \begin{scope}[xshift=6,yshift=30-5]
       \measup 
       \draw (12, 12) node {\scriptsize $a$};
    \end{scope}
\draw[fill=white,line width=\ewid] (2, 0) rectangle (20,28);
\draw (11, 15.5) node {$ZZ$};
\end{scope}

\begin{scope}[xshift=38]
  \draw[dashed,line width=\ewid,black!50] (0,31) -- (0,-20);  
\end{scope}
\begin{scope}[xshift=60]
  \draw[dashed,line width=\ewid,black!50] (0,31) -- (0,-20);  
\end{scope}

\begin{scope}[xshift=49]
\draw[fill=white,line width=\ewid] (-6, -6) rectangle (6, 6);
    \draw (0, 0) node {$T$};
\end{scope}

\begin{scope}[xshift=70,yshift=0]
    \begin{scope}[xshift=10,yshift=0]
        \draw[color=black] (0,0.5) -- (6,0.5); 
        \draw[color=black] (0,-0.5) -- (6,-0.5); 
         \draw[color=black] (14,0) node[left] {\scriptsize $b$};
         \begin{scope}[xshift=-6,yshift=-5]
            \measright
        \end{scope}
    \end{scope}
    \draw[fill=white,line width=\ewid] (-6, -6) rectangle (6, 6);
    \draw (0, 0) node {$X$};
\end{scope}

\begin{scope}[xshift=106]
\draw[fill=white,line width=\ewid] (-12, 10) rectangle (12,22);
\draw (0, 15) node {$Z^{\,b}S^{\,a}$};
\end{scope}

\begin{scope}[xshift=126,yshift=0]
\draw (10, 5) node {$=$};
\end{scope}

\begin{scope}[yshift=-70]

\draw[color=black,line width=\ewid] (55.,-30.) node[left] {$(ii)$};
\draw[color=black,line width=\ewid] (0.,15) -- (124,15);
\draw[color=black,line width=\ewid] (0.,15) node[left] {$|\psi\rangle$};

\draw[color=black,line width=\ewid] (0.,0) node[left] {$|T\rangle$};
\draw[color=black,line width=\ewid] (0.,0) -- (50.,0);

\begin{scope}[xshift=6,yshift=-6]
    \begin{scope}[xshift=6,yshift=30-5]
       \measup 
       \draw (12, 12) node {\scriptsize $a$};
    \end{scope}
\draw[fill=white,line width=\ewid] (2, 0) rectangle (20,28);
\draw (11, 15.5) node {$ZZ$};
\end{scope}

\begin{scope}[xshift=38]
  \draw[dashed,line width=\ewid,black!50] (0,31) -- (0,-20);  
\end{scope}

\begin{scope}[xshift=50,yshift=0]
    \begin{scope}[xshift=10,yshift=0]
        \draw[color=black] (0,0.5) -- (6,0.5); 
        \draw[color=black] (0,-0.5) -- (6,-0.5); 
         \draw[color=black] (14,0) node[left] {\scriptsize $b$};
         \begin{scope}[xshift=-6,yshift=-5]
            \measright
        \end{scope}
    \end{scope}
    \draw[fill=white,line width=\ewid] (-6, -6) rectangle (6, 6);
    \draw (0, 0) node {$X$};
\end{scope}

\begin{scope}[xshift=86]
\draw[fill=white,line width=\ewid] (-12, 10) rectangle (12,22);
\draw (0, 15) node {$Z^{\,b}S^{\,a}$};
\end{scope}

\begin{scope}[xshift=104]
\draw[fill=white,color=white] (0, -6.) rectangle (25, 51.);
\draw (10, 5) node {$=$};
\end{scope}


\begin{scope}[xshift=148]
\draw (-20, 15) node {$|\psi\rangle$};
\draw[color=black,line width=\ewid] (-12,15) -- (12,15);
\draw[fill=white,line width=\ewid] (-6, 15-6) rectangle (6, 15+6);
    \draw (0, 15) node {$T$};
\draw (24, 15) node {$T|\psi\rangle$};
\end{scope}
\end{scope}
    
\end{tikzpicture}
    \caption{Logical circuits to connect different quantum codes and implement non-Clifford gate using $(i)$ unitary logical gate implemented in ancillary code $(ii)$ gate teleportation. Based on measurement outcome $b$, the second register can be corrected to be in $\ket{+}$ final state.}\label{fig:logical_circuit_T_injec}
\end{figure}

\subsection{Logical gates by joint-measurements and symmetries}
Since the universal adapter can sparsely connect auxiliary graphs corresponding the logical operators irrespective of code structure or code block, one can now also consider joint measurements in multi-code systems that comprise of different codes  with various capabilities. The repetition code basis change can thus be used to design code deformation protocols to facilitate fault-tolerantly merging into LDPC codes with other desirable logical gates. One simple use case is simply to teleport logical information from one codeblock to another. Ideally, the codes have similar code distance, in order for a high-distance code not to lose distance during code deformation.

For instance, one can use the adapter to merge with a specialized code of comparable distance where the desired gate can be performed, or alternatively teleport a preprepared magic state \cite{Zhou2000gateteleport}, by applying the desired gate on ancilla beforehand.

A simple example is shown in Fig.~\ref{fig:logical_circuit_T_injec} for the case of $\mathsf{T}$ gate (which commutes with $ZZ$ measurement in merge step). The quantum circuit in Fig.~\ref{fig:logical_circuit_T_injec} also requires a potential Clifford correction in the form of an $\mathsf{S}$-gate, which can similarly be implemented using the quantum circuit above, but with the Pauli $Y$-eigenstate $\ket{i}:=\ket{0}+i\ket{1}$ as a resource state instead of $\ket{T}$. 

\begin{figure}[t]
    \centering
    \usetikzlibrary{decorations}
\usetikzlibrary{decorations.pathreplacing,decorations.pathmorphing}

\pgfmathsetmacro{\ewid}{0.7pt}

\newcommand{\measup}{
\draw[rounded corners=2pt,line width=\ewid] (0,0) rectangle (10,9) {};
\draw[] (5,6) arc (90:150:4pt);
\draw[] (5,6) arc (90:30:4pt);
\draw[-,>=stealth] (5,3) -- +(60:6pt);
}

\newcommand{\measdn}{
\draw[rounded corners=2pt,line width=\ewid] (0,0) rectangle (10,-9) {};
\draw[] (5,-6) arc (270:210:4pt);
\draw[] (5,-6) arc (270:330:4pt);
\draw[-,>=stealth] (5,-3) -- +(300:6pt);
}

\newcommand{\measright}{
\draw[fill=white,rounded corners=2pt,line width=\ewid] (0,0) rectangle (8,10) {};
\draw[] (5,5) arc (0:60:4pt);
\draw[] (5,5) arc (0:-60:4pt);
\draw[-,>=stealth] (3,6) -- +(-30:4.5pt);
}

\begin{tikzpicture}[scale=1.2,x=1pt,y=1pt]

\filldraw[color=white] (0., -30) rectangle (180, 70);

\draw[color=black,line width=\ewid] (0.,45.) -- (153,45.);
\draw[color=black,line width=\ewid] (0.,45.) node[left] {$|c\rangle$};

\draw[color=black,line width=\ewid] (0.,30.) node[left] {$|+\rangle$};
\draw[color=black,line width=\ewid] (0.,30.) -- (76.,30.);

\draw[color=black,line width=\ewid] (0.,15.) node[left] {$|0\rangle$};
\draw[color=black,line width=\ewid] (0.,15.) -- (76.,15.);

\draw[color=black,line width=\ewid] (0.,0.) node[left] {$|t\rangle$};
\draw[color=black,line width=\ewid] (0.,0.) -- (154,0.);

\draw[decorate,decoration={calligraphic brace,mirror},line width=0.8pt]
  (-18,30) -- (-18,15) node[midway,xshift=-2.0em]{\begin{tabular}{c} \scriptsize{Ancillary} \\ \scriptsize{code} \end{tabular}};

\begin{scope}[xshift=6,yshift=24]
    \begin{scope}[xshift=6,yshift=30-5]
       \measup 
       \draw (12, 12) node {\scriptsize $a$};
    \end{scope}
\draw[fill=white,line width=\ewid] (2, 0) rectangle (20,28);
\draw (11, 15.5) node {$ZZ$};
\end{scope}

\begin{scope}[xshift=6,yshift=-8]
\begin{scope}[xshift=6,yshift=3]
   \measdn 
    \draw (12, -12) node {\scriptsize $b$};
\end{scope}
\draw[fill=white,line width=\ewid] (2,0) rectangle (20,28);
\draw (11, 15.5) node {$XX$};
\draw (13, -20) node {\footnotesize $(i)$};

\end{scope}

\begin{scope}[xshift=34]
  \draw[dashed,line width=\ewid,black!50] (0,61) -- (0,-28);  
\end{scope}
\begin{scope}[xshift=55]
  \draw[dashed,line width=\ewid,black!50] (0,61) -- (0,-28);  
\end{scope}

\begin{scope}[xshift=45]
\draw (0, -28) node {\footnotesize $(ii)$};
\draw [line width=\ewid-0.15] (0,30.) -- (0,15.);
\filldraw (0, 30.) circle(1.2pt);
\draw[fill=white] (0, 15.) circle(3.pt);
\clip (0, 15.) circle(3.pt);
\draw [line width=\ewid-0.15](-3, 15.) -- (3, 15.); 
\draw [line width=\ewid-0.15](0, 12.) -- (0, 18.); 

\end{scope}

\begin{scope}[xshift=68,yshift=30]
    \draw (0, -58) node {\footnotesize $(iii)$};
    \begin{scope}[xshift=10,yshift=0]
        \draw[color=black] (0,0.5) -- (6,0.5); 
        \draw[color=black] (0,-0.5) -- (6,-0.5); 
         \draw[color=black] (14,0) node[left] {\scriptsize $c$};
         \begin{scope}[xshift=-6,yshift=-5]
            \measright
        \end{scope}
    \end{scope}
    \draw[fill=white,line width=\ewid] (-6, -6) rectangle (6, 6);
    \draw (0, 0) node {$X$};
\end{scope}

\begin{scope}[xshift=68,yshift=15]
    \begin{scope}[xshift=10,yshift=0]
        \draw[color=black] (0,0.5) -- (6,0.5); 
        \draw[color=black] (0,-0.5) -- (6,-0.5); 
         \draw[color=black] (14,0) node[left] {\scriptsize $d$};
         \begin{scope}[xshift=-6,yshift=-5]
            \measright
        \end{scope}
    \end{scope}
    \draw[fill=white,line width=\ewid] (-6, -6) rectangle (6, 6);
    \draw (0, 0) node {$Z$};
\end{scope}

\begin{scope}[xshift=103]
\draw[fill=white,line width=\ewid] (-11, 39) rectangle (11,51);
\draw (0, 45) node {$Z^{\,b+c}$};
\end{scope}

\begin{scope}[xshift=103]
\draw[fill=white,line width=\ewid] (-11, -6) rectangle (11,6);
\draw (0, 0) node {$X^{a+d}$};
\end{scope}

\begin{scope}[xshift=120]
\draw[fill=white,color=white] (0, -6.) rectangle (15, 51.);
\draw (7.5, 22.5) node {$=$};
\end{scope}

\begin{scope}[xshift=144]
\draw [line width=\ewid-0.15](0,45.) -- (0,0.);
\filldraw (0, 45.) circle(1.2pt);
\draw[fill=white] (0, 0.) circle(3.pt);
\clip (0, 0.) circle(3.pt);
\draw [line width=\ewid-0.15](-3, 0.) -- (3, 0.);
\draw [line width=\ewid-0.15](0, -3.) -- (0, 3.);
\end{scope}
    
\end{tikzpicture}
    \caption{Implementing a $\cnotgate$ within multi-code architectures. Logical circuit describing entire protocol: $(i)$ merge step, $(ii)$ logical $\cnotgate$, $(iii)$ split step: measuring out the ancillary code, followed by Pauli corrections $Z^{\, b + c}$ and $X^{a + d}$}\label{fig:logical_circuit_code_deformation}
\end{figure}

A second example, illustrated in Fig.~\ref{fig:logical_circuit_code_deformation}, implements a logical $\cnotgate$ gate in a similar manner. In the second step $(ii)$ of a logical $\overline{\cnotgate}$ is applied on the ancilla code. Notice that one could put any code in the ancilla as long as it has the ability to implement $\overline{\cnotgate}$ and also has at least the same distance $d$ as the original LDPC code. Note also that since $\cnotgate$ commutes with $ZZ$ and $XX$ parity measurements in step $(i)$, one could also apply the $\overline{\cnotgate}$ gate on the ancilla code $\ket{\overline{+0}}$ in advance to obtain a logical Bell pair $\ket{\overline{00}}+\ket{\overline{11}}$ and consume this resource state to perform $\overline{\cnotgate}$ using code deformation. 

Fig.~\ref{fig:logical_circuit_T_injec}(a) and Fig.~\ref{fig:logical_circuit_code_deformation} suggest an interesting perspective. If we imagine the joint measurements in those figures are performed by auxiliary graph surgery, then we briefly enter a deformed code that is a stitching together of the original code containing the logical information on which we want to compute and an ancillary code that possesses the logical gate we need. That deformed code is a sort of Frankenstein's monster code still encoding the logical information from the original code but with an addressable logical gate inherited from the ancillary code. One could remain in this deformed code and perform the gates in Fig.~\ref{fig:logical_circuit_T_injec}(a) and Fig.~\ref{fig:logical_circuit_code_deformation} rather than immediately splitting back into separate original and ancillary codes after the joint measurements.

In the next section, we illustrate the use of the adapter construction to implement an addressable $\cnotgate$ by connecting to a toric code. We note that while more practical schemes to implement $\cnotgate$s exist (including auxiliary graph surgery itself), we provide this ``toric code adapter" tool as an interesting example and a potential stepping stone to future designs that can leverage symmetries in more exotic codes, with different choices of gates we want to implement.

\subsection{Toric code adapter for addressable $\cnotgate$} \label{sec:toric_adapter}

Topological codes such as the toric and hyperbolic codes host a scheme to implement the $\overline{\cnotgate}$ logical map using physical $\cnotgate$s applied pairwise to qubits in the support of logical operators in topological codes, and moreover, the scheme is fault-tolerant and inherently parallelizable. This logical gate scheme makes use of a topological code deformation technique known as Dehn twists \cite{breuckmann2017hyperbolic}. In this section we review the Dehn twist logical $\overline{\cnotgate}$ for topological codes through the lens of our Tanner graph notation, and then present the use of the toric code adapter to implement Dehn twist-like $\overline{\cnotgate}$ \textit{ex-situ} on logical qubits from arbitrary multi-qubit LDPC codeblocks, while ensuring the code distance and sparsity of code is preserved. The auxiliary graphs interfacing between the LDPC code and toric code require only $O(d\log^3d)$ qubits and provide a more space-efficient way to share logical information between an arbitrary LDPC code to the toric code, in comparison with the naive approach of teleporting to a surface code to perform computation, such as previous methods that used $O(d^2)$-sized ancilla to mediate between a topological code and LDPC code \cite{xu2023constantreconfig,viszlai2024matchingneutral}. 


Let $c$ be the control logical qubit and $t$ be the target logical qubit of the original LDPC code between which we want to implement $\overline{\cnotgate}$. Consider their corresponding logical Pauli operators, specifically, $\bar{Z}^{(c)}$ and $\bar{X}^{(t)}$. If neither $\bar{Z}^{(c)}$ nor $\bar{X}^{(t)}$ contain other logical operators fully in their support, i.e.~they are \textit{irreducible}, then we can find alternative representatives of $\bar{Z}^{(c)}$ and $\bar{X}^{(t)}$ that are non-overlapping. This is due to the fact that any overlapping support can be cleaned from one of the two operators by multiplying with stabilizers (see Lemma \ref{lem:supportlemma} in Appendix). The qubits supporting the non-overlapping $\bar{Z}^{(c)}$ and $\bar{X}^{(t)}$ logicals are labeled as disjoint sets $\mathcal{L}_Z$ and $\mathcal{L}_X$.

The additional components of the overall construction are as follows: two auxiliary graphs $\mathcal{G}_{Z}$ and $\mathcal{G}_{X}$ satisfying graph desiderata 0-3 from section~\ref{sec:skiptree}, and the toric code ancilla, initialized suitably in a specific codestate. The toric code is connected to two ports in the auxiliary graphs via adapter edges as shown in Fig~\ref{fig:unitary_adapter_d_layer}. More concretely, the toric code is connected to the auxiliary graph $\mathcal{G}_Z$ through adapter edges labeled by check matrices $T_Z$ and $P_Z$, obtained as output from $\mathsf{SkipTree}(G_Z)$. The toric code is similarly attached to auxiliary graph $\mathcal{G}_{X}$ at $\lx$ through adapter edges labeled by check matrices $T_X$ and $P_X$, obtained as output from $\mathsf{SkipTree}(G_X)$. 
We emphasize that a special feature in this construction is that the toric code can be directly connected to the auxiliary graphs of the base LDPC code with $\mathsf{SkipTree}$, since it already has the repetition code-like overlap between a logical operator and stabilizers of the opposite Pauli logical.

\begin{figure*}[t]
    \centering
    \includegraphics[width=\linewidth]{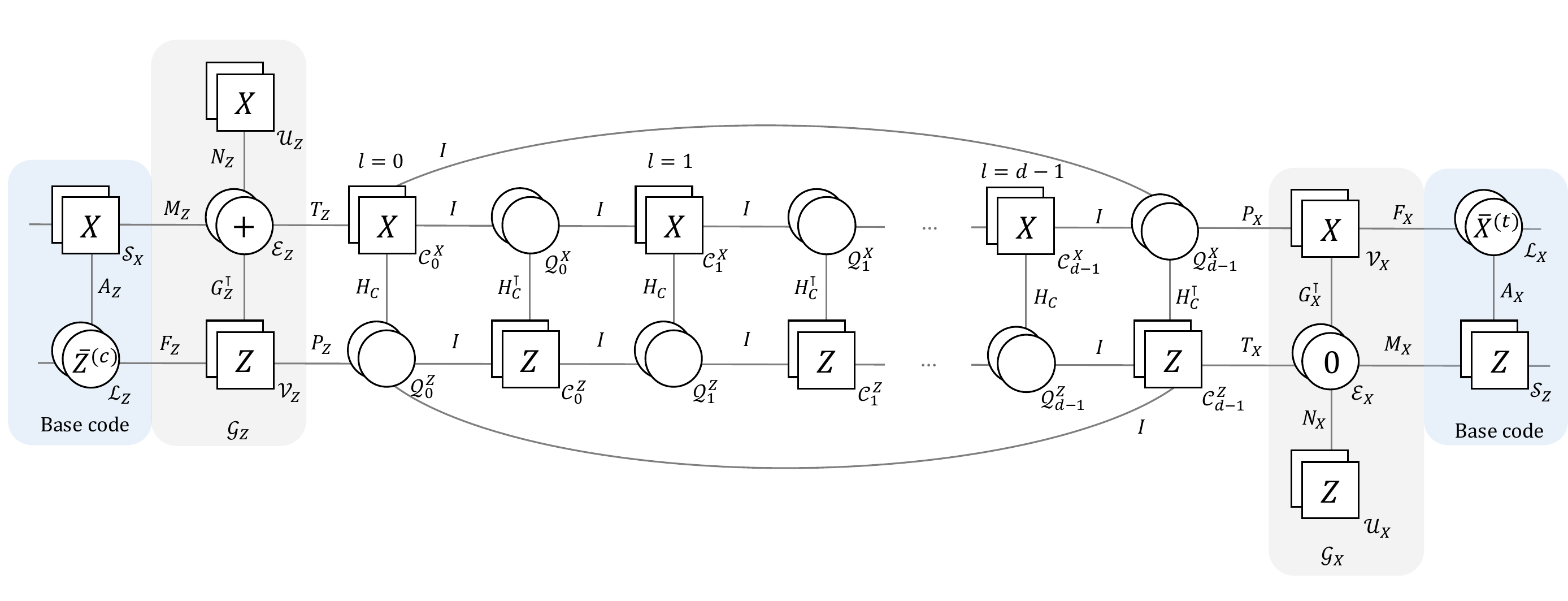}
    \caption{A description of the toric adapter: auxiliary graphs $\mathcal{G}_{Z}$ and $\mathcal{G}_{X}$ are built on the support of logicals $\bar{Z}^{(c)}$ and $\bar{X}^{(t)}$ in the original LDPC code (shown in blue). The distance-$d$ toric code is connected by adapter edges $T_Z$ and $P_Z$ to auxiliary graph $\mathcal{G}_{Z}$ and by adapter edges $T_X$ and $P_X$ to auxiliary graph $\mathcal{G}_{X}$. 
    }
    \label{fig:unitary_adapter_d_layer}
\end{figure*}

Following the convention introduced in section~\ref{subsec:notation_tanner_graph}, let us recap the toric code in terms of its compact Tanner graph. In the rest of this section, $i \in [d]$ where $ d\! \in \!\mathbb{N}$ denotes an element of the finite group with addition modulo $d$. It is implicitly assumed that operations on these elements are carried out modulo $d$, for example $i+1$ refers to $i+1\!\pmod{d}$.

\begin{definition}\label{defn-toric-ancilla} 
    The toric code is described by a Tanner graph with vertex sets labeled $\{\qz_i,\qx_i,\cz_i,\cx_i\}$ for $i \in [d]$ and edges between them defined as follows.

    \begin{itemize}
        \item Edges $(\qz_i,\cz_i)$ and edges $(\qx_i,\cx_i)$ are each labeled by parity check matrix $I$ for each $i \in [d]$, where $I \in \mathbb{F}_2^{d \times d}$ is the identity matrix.

        \item Overall, one can view this Tanner graph as consisting of $2d$ layers: vertex sets $\qz_i$ and $\cx_i$ comprise the `primal' layers, vertex sets $\qx_i$ and $\cz_i$ comprise `dual' layers, indexed by $i$. The primal and dual layers are connected by the identity check matrix $I \in \mathbb{F}_2^{d \times d}$.

        \item Tanner graph edge set $(\qz_i,\cx_i)$ in each primal layer $i$ is described by the canonical check matrix of the $d$-bit cyclic repetition code $H_C= I \,+\, C\in\mathbb{F}_2^{d\times d}$, where $C$ is the cyclic shift matrix $\ket{j}\bra{j+1}$. 
        \item Edge set $(\qx_i,\cz_i)$ in each dual layer $i$ is described by its inverse, $H_C^{\top}$.
    \end{itemize}
\end{definition}

    A compact diagram of the above can be found in Fig~\ref{fig:Tanner_graph_examples}(c). More explicitly, each layer contains $2d$ qubits: $d$ qubits each in sets labeled as $\qx_{i}$ and $\qz_{i}$. The check sets $\cz_i$ and $\cx_i$ each contain $d$ stabilizers, of $X$-type and $Z$-type respectively. 

    The compact Tanner graph can map to the familiar toric code lattice as follows: qubits from dual layers inhabit horizontal edges of the lattice while qubits from primal layers inhabit vertical edges. $X$-stabilizers in sets $\cx_{i}$ represent vertices in the lattice. $Z$ stabilizers in sets $\cz_{i}$ represent faces of the lattice.

\subsubsection{Outline of protocol using toric code adapter}

The circuit describing the entire protocol is the same as in Fig.~\ref{fig:logical_circuit_code_deformation} to implement $\cnotgate$ in a multi-code architecture. 
Begin with two logical qubits of the LDPC code, the control $\ket{c}$ and target $\ket{t}$. Assume access to a prepared toric ancilla logical state, initialized in the codestate $\ket{\overline{+0}}$.

\begin{enumerate}
    \item The first step $(i)$ is to merge fault-tolerantly by adding ancilla qubits and ancilla code initialized suitably and measuring the stabilizers of the new deformed code, described by Tanner graph in Fig.~\ref{fig:unitary_adapter_d_layer}. The number of logical qubits in the merged code is the same as the original LDPC code (shown in Lemma~\ref{lem:gmerge-k-logical-qubits}), and the distance of the original LDPC code is preserved (Theorem~\ref{thm:dist-preserve-static-merge}). 

    \item In the second step $(ii)$, a logical $\overline{\cnotgate}$ applied in the toric code. Section~\ref{subsec:code-evolution} describes how to implement this $\overline{\cnotgate}$ on the toric code, introducing and making use of Dehn twists.

    \item The third step $(iii)$ is to measure out the toric code ancilla. This unentangles the toric code from the original LDPC code. 
    
    \item Apply logical Pauli corrections based on the measurement outcomes.
    One could also apply logical Pauli corrections to obtain the $+1$ eigenstates of $\bar{X}^{(1)}$ and $\bar{Z}^{(2)}$ operators in the toric code, which resets the toric code ancilla to $\ket{\overline{+0}}$, the state we began with. 
    This step can be done transversally.
\end{enumerate}

This circuit implements $\overline{\cnotgate}$ between the control and target on the LDPC code, as will be discussed in the following section.

\subsubsection{Parameters of the deformed code}\label{subsubsec:static-gmerge-param}

When deforming the original code by measuring stabilizers of the new code, we want to make sure that no logical qubits get measured in the process, and also check if any gauge degrees of freedom are introduced. 
 
\begin{lemma}\label{lem:gmerge-k-logical-qubits}
    If the original LDPC code encodes $k$ qubits, then the merged code encodes $k$ qubits.
\end{lemma}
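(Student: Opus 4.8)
The plan is to view the merge as two successive rounds of auxiliary graph surgery and invoke the portion of Theorem~\ref{thm:graph_desiderata} (desideratum~0) stating that a connected auxiliary graph produces a deformed code with exactly one fewer logical qubit than the code it was attached to; starting from the LDPC code together with the toric ancilla, the merge should then cost exactly two logical qubits. Concretely, I would regard the pre-merge system as the disjoint union of the original LDPC code, encoding $k$ qubits, and the distance-$d$ toric code of Definition~\ref{defn-toric-ancilla}, encoding $2$ qubits, so that the combined code encodes $k+2$ logical qubits. Because $\overline{Z}^{(c)}$ and $\overline{X}^{(t)}$ are irreducible, choose non-overlapping representatives supported on $\mathcal{L}_Z$ and $\mathcal{L}_X$ (Lemma~\ref{lem:supportlemma}), and choose a logical representative $\overline{Z}_{\mathrm{tor}}$ of one toric logical qubit and $\overline{X}_{\mathrm{tor}}$ of the other, taken to be non-overlapping; since the control and target are distinct logical qubits ($c\neq t$), $\overline{Z}^{(c)}$ and $\overline{X}^{(t)}$ act on different LDPC logical qubits as well.

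Next I would argue that attaching $\mathcal{G}_Z$ to the toric code through the adapter edges $T_Z,P_Z$ output by $\mathsf{SkipTree}(G_Z)$ is an instance of the joint-measurement construction of Section~\ref{sec:rep_adapter}, in which the toric code's intrinsic repetition-code-like substructure plays the role of the second constituent auxiliary graph; the resulting adapted graph is an auxiliary graph for the combined code that measures the single logical operator $\overline{Z}^{(c)}\,\overline{Z}_{\mathrm{tor}}$. This adapted graph is connected (by the adapter construction, cf.\ Lemma~\ref{lem:skip_adapter} and the discussion following Theorem~\ref{thm:graph_desiderata}) and $\overline{Z}^{(c)}\,\overline{Z}_{\mathrm{tor}}$ is nontrivial because its $\overline{Z}^{(c)}$ factor is, so desideratum~0 gives a deformed code with $k+1$ logical qubits. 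Attaching $\mathcal{G}_X$ to the toric code through the adapter edges $T_X,P_X$ from $\mathsf{SkipTree}(G_X)$ is then a second such surgery, measuring $\overline{X}^{(t)}\,\overline{X}_{\mathrm{tor}}$. To apply Theorem~\ref{thm:graph_desiderata} again I must check this operator is still a nontrivial logical of the $(k+1)$-qubit code from the first step: it commutes with $\overline{Z}^{(c)}\,\overline{Z}_{\mathrm{tor}}$ (its factors act on disjoint LDPC logical qubits and on distinct toric logical qubits), it commutes with the checks introduced in the first step (the support $\mathcal{L}_X$ and the $\mathcal{G}_X$ port are disjoint from $\mathcal{L}_Z$ and the $\mathcal{G}_Z$ port, and $\overline{X}_{\mathrm{tor}}$ commutes with the deformed toric structure), and it is not a product of stabilizers. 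With the second adapted graph again connected, Theorem~\ref{thm:graph_desiderata} drops the count to $k+1-1=k$, which is the logical dimension of the merged code of Fig.~\ref{fig:unitary_adapter_d_layer}.

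The step I expect to be the main obstacle is the bookkeeping guaranteeing that each of the two deformations strictly removes one logical qubit rather than zero: one must verify that $\overline{Z}^{(c)}\overline{Z}_{\mathrm{tor}}$ and $\overline{X}^{(t)}\overline{X}_{\mathrm{tor}}$ are independent nontrivial logicals of the combined code, that the second remains nontrivial after the first deformation, and that both adapted graphs satisfy desideratum~0 (in particular, that the toric code's intrinsic structure legitimately serves as one of the constituent auxiliary graphs). A more computational alternative, which avoids the general framework, is to count directly: the number of physical qubits of the merged code (the LDPC qubits, the toric qubits, the edge qubits of $\mathcal{G}_Z$ and $\mathcal{G}_X$, and the adapter qubits) minus the $\mathbb{F}_2$-rank of the merged stabilizer matrix equals $k$, where one must account for each cyclic check block $H_C\in\mathbb{F}_2^{A\times A}$ contributing rank $A-1$ rather than $A$; this reproduces the answer but is considerably more tedious.
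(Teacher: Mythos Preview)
Your primary approach is correct and genuinely different from the paper's. The paper (Appendix~\ref{app:toric_adapter_kqubits}) does exactly the ``more computational alternative'' you mention at the end: it writes out the full $X$-check matrix $H_X'$ of the merged code, performs row operations, and tallies the ranks of the blocks (using the cyclomatic number for $N_Z$, $\mathrm{rank}(H_C)=d-1$, etc.) to show $\mathrm{rank}(H_X')=\mathrm{rank}(H_X)+|\mathcal{E}_Z|+d|\mathcal{V}_X|$ and symmetrically for $H_Z'$, whence $k'=n'-\mathrm{rank}(H_X')-\mathrm{rank}(H_Z')=k$.

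Your route---start with $k+2$ logical qubits (LDPC plus toric), then apply desideratum~0 of Theorem~\ref{thm:graph_desiderata} twice to drop to $k$---is more structural and avoids the block-by-block rank accounting. One framing correction: the toric code does not ``play the role of the second constituent auxiliary graph'' in the sense of Section~\ref{sec:rep_adapter}. Rather, the auxiliary graph $\mathcal{G}_Z$ alone is attached to the combined (LDPC $\cup$ toric) code via a \emph{non-injective} port function sending each vertex to one qubit of $\mathcal{L}_Z$ and one qubit of $\mathcal{Q}_0^Z$; this is the setup of Theorem~\ref{thm:expansionless_joint} (see condition~(ii) there), and the paper explicitly notes that desideratum~0 does not require injectivity of the port. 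With that adjustment your argument goes through: connectedness of $\mathcal{G}_Z$ and $\mathcal{G}_X$ gives two applications of Lemma~\ref{lem:deformed_codespace}, and the independence/nontriviality checks you flag (that $\overline{X}^{(t)}\overline{X}_{\mathrm{tor}}$ survives the first deformation because its support is disjoint from the first surgery's new checks) are straightforward. Your approach is cleaner conceptually; the paper's buys explicitness about exactly which rank deficiencies occur where.
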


\begin{proof}
One can calculate the dimension of the logical space of the new code by calculating the rank of stabilizers and invoking the rank-nullity theorem. A detailed calculation is included in Appendix \ref{app:toric_adapter_kqubits}. 
\end{proof}

Once we have measured into the deformed code, the $\bar{Z}^{(c)}$ and $\bar{X}^{(t)}$ logicals from the original LDPC code have additional logical representatives which fully lie within the toric code ancilla. This can be seen as follows. The entire logical $\bar{Z}^{(c)}$ can be entirely cleaned from $\lz$ qubit by qubit and simultaneously moved to $\qz_0$, as below (shown in Fig~\ref{fig:g_int_Z}),
    \begin{align}\label{eqn:zrep_on_qz0}
        Z(\vec{1} \in {\lz})\; \mathcal{H}_Z(\vec{1} \in \vz) = Z(\vec{1} \in \qz_0)
    \end{align}

Notice that since the logical is supported on exactly the same number of qubits as before, the distance of $\bar{Z}^{(c)}$ is preserved under multiplication by $\mathcal{H}_Z(\vec{1} \in \vz)$. 

We can further move the support of $\bar{Z}^{(c)}$ from $\qz_0$ to $\qz_1$ by multiplying with all checks from $\cz_0$ to get $Z(\qz_0)\; \mathcal{H}_Z(\vec{1} \in \cz_0) = Z(\qz_1)$. This is because $\vec{1}H_{C}^\top=0$, since each of the cyclic repetition code checks is weight $2$. In fact, the $\bar{Z}^{(c)}$ logical representative can be moved to any of the $d$ primal layers while preserving the weight of $\bar{Z}^{(c)}$,
\begin{align}\label{eqn:zrep_on_qzi}
        Z(\vec{1} \in \qz_i)\; \mathcal{H}_Z(\vec{1} \in \cz_i) = Z(\vec{1} \in \qz_{i+1 (\textrm{mod}\, d)})
    \end{align}

\begin{figure}[t]
\label{fig:g_int_Z}
\subfloat[]{
\hspace{0.2in}
\begin{tikzpicture}[]
\pgfmathsetmacro{\ewid}{1pt}
\pgfmathsetmacro{\drwid}{0.8pt}
\pgfmathsetmacro{\sqrsz}{0.7cm}
\pgfmathsetmacro{\sqrdf}{0.12}
\pgfmathsetmacro{\crcsz}{0.42}
\pgfmathsetmacro{\crcdf}{0.09}
\pgfmathsetmacro{\basecodewid}{1.1}
\pgfmathsetmacro{\brnd}{0.02cm}
\definecolor{bl}{rgb}{0.63, 0.79, 0.95}
\definecolor{gry}{rgb}{0.75, 0.75, 0.75}


 \fill[bl, opacity=0.3]
        (-1.3, -0.2) arc[start angle=180, end angle=270, x radius=\brnd, y radius=\brnd] -- ++(\basecodewid, 0) arc[start angle=270, end angle=360, x radius=\brnd, y radius=\brnd] -- ++(0, 2.6) arc[start angle=0, end angle=90, x radius=\brnd, y radius=\brnd] -- ++(-\basecodewid, 0) arc[start angle=90, end angle=180, x radius=\brnd, y radius=\brnd]-- cycle;

        \draw[draw=gray,line width=\ewid] (-1,0) to (5.6,0);
        \draw[draw=gray,line width=\ewid] (-1,2) to (5.6,2);
        \draw[draw=gray,line width=\ewid] (0,0) to (0,2);
        
        
        \draw[fill=gry!30,line width=\drwid] (0-\crcdf,0+\crcdf) circle (\crcsz);
        \draw[fill=gry!30,line width=\drwid] (0,0) circle (\crcsz) node {$\bar{Z}^{(c)}$};
        \node[draw, line width=\drwid, fill=white, minimum size=\sqrsz] at (0-\sqrdf, 2+\sqrdf) {};
        \node[draw, line width=\drwid, fill=white,minimum size=\sqrsz] at (0, 2) {$X$};

        \node[] at (-0.4, 1) {{\footnotesize $A_Z$}};
        
        \node[] at (1.2, 0.2) {{\scriptsize $F_Z$}};
        \node[] at (1.3, 2.2) {{\scriptsize $M_Z$}};

        \node[] at (0.5, -0.4) {{ $\mathcal{L}_Z$}};
        \node[] at (0.7, 1.65) {{$\mathcal{S}_{X}$}};

        \begin{scope}[shift={(2.4,0)}]
        \draw[draw=gray,line width=\ewid] (0,0) to (0,4);
        \draw[fill=white,line width=\drwid] (0-\crcdf, 2+\crcdf) circle (\crcsz);
        \draw[fill=white,line width=\drwid] (0,2) circle (\crcsz) node {};
         \node[draw, line width=\drwid,fill=white, minimum size=\sqrsz] at (0-\sqrdf,0+\sqrdf) {};
        \node[draw, line width=\drwid,fill=white,minimum size=\sqrsz] at (0, 0) {$Z$};
        \node[draw, line width=\drwid,fill=white, minimum size=\sqrsz] at (0-\sqrdf,4+\sqrdf) {};
        \node[draw, line width=\drwid,fill=white,minimum size=\sqrsz] at (0, 4) {$X$};
        \node[] at (-0.25, 1) {{\footnotesize $G^{\top}_Z$}};
        \node[] at (1.2, 2.2) {{\scriptsize $T_Z$}};
        \node[] at (1.2, 0.2) {{\scriptsize $P_Z$}};
        \node[] at (-0.25, 3) {{\footnotesize $N_Z$}};
        \node[] at (0.5, 1.6) {{ $\mathcal{E}_{Z}$}};
        \node[] at (0.65, -0.4) {{ $\mathcal{V}_{Z}$}};
        \node[] at (0.65, 3.6) {{ $\mathcal{U}_{Z}$}};
        \end{scope}

        \begin{scope}[shift={(4.8,0)}]
        \draw[draw=gray,line width=\ewid] (0,0) to (0,2);
        \node[draw, line width=\drwid,fill=white, minimum size=\sqrsz] at (0-\sqrdf, 2+\sqrdf) {};
        \node[draw, line width=\drwid,fill=white,minimum size=\sqrsz] at (0, 2) {$X$};
        \draw[fill=gry!30,line width=\drwid] (0-\crcdf,0+\crcdf) circle (\crcsz);
        \draw[fill=gry!30,line width=\drwid] (0,0) circle (\crcsz) node {};
        \node[] at (-0.25, 1) {{\footnotesize $H_C$}};
        \node[] at (0.65, 1.5) {{ $\mathcal{C}^{X}_0$}};
        \node[] at (0.55, -0.45) {{ $\mathcal{Q}^{Z}_0$}};
        \node[] at (1.2, 0) {{ $...$}};
        \node[] at (1.2, 2) {{ $...$}};
        \end{scope}

\end{tikzpicture}
}
\newline
\subfloat[]{
\begin{tikzpicture}
\pgfmathsetmacro{\ewid}{1pt}
\pgfmathsetmacro{\drwid}{0.8pt}
\pgfmathsetmacro{\sqrsz}{0.7cm}
\pgfmathsetmacro{\sqrdf}{0.12}
\pgfmathsetmacro{\crcsz}{0.42}
\pgfmathsetmacro{\crcdf}{0.09}
\pgfmathsetmacro{\basecodewid}{0.8}
\pgfmathsetmacro{\brnd}{0.02cm}
\definecolor{bl}{rgb}{0.63, 0.79, 0.95}
\definecolor{clr}{rgb}{0.47, 0.62, 0.8}
\definecolor{gry}{rgb}{0.75, 0.75, 0.75}


\begin{scope}
\node[] at (-1.4, 0) {{ $...$}};
\node[] at (-1.4, 2) {{ $...$}};

 \fill[bl, opacity=0.3]
        (4.1, -0.2) arc[start angle=180, end angle=270, x radius=\brnd, y radius=\brnd] -- ++(\basecodewid, 0) arc[start angle=270, end angle=360, x radius=\brnd, y radius=\brnd] -- ++(0, 2.6) arc[start angle=0, end angle=90, x radius=\brnd, y radius=\brnd] -- ++(-\basecodewid, 0) arc[start angle=90, end angle=180, x radius=\brnd, y radius=\brnd]-- cycle;

        \begin{scope}
        \draw[draw=gray,line width=\ewid] (-1,0) to (5.6,0);
        \draw[draw=gray,line width=\ewid] (-1,2) to (5.6,2);
        \draw[draw=gray,line width=\ewid] (0,0) to (0,2);
        
        
        \draw[fill=gry!30,line width=\drwid] (0-\crcdf,2+\crcdf) circle (\crcsz);
        \draw[fill=gry!30,line width=\drwid] (0,2) circle (\crcsz) node {};
        \node[draw, line width=\drwid, fill=white, minimum size=\sqrsz] at (0-\sqrdf, 0+\sqrdf) {};
        \node[draw, line width=\drwid, fill=white,minimum size=\sqrsz] at (0, 0) {$Z$};

        \node[] at (-0.3, 1) {{\footnotesize $H_C^{\top}$}};
        
        \node[] at (1.2, 0.2) {{\scriptsize $T_X$}};
        \node[] at (1.3, 2.2) {{\scriptsize $P_X$}};

        \node[] at (0.72, -0.45) {{ $\mathcal{C}^{Z}_{d-1}$}};
        \node[] at (0.72, 1.55) {{$\mathcal{Q}^{X}_{d-1}$}};
        \end{scope}
        
        \begin{scope}[shift={(2.4,0)}]
        \draw[draw=gray,line width=\ewid] (0,2) to (0,-2);
        \draw[fill=white,line width=\drwid] (0-\crcdf, 0+\crcdf) circle (\crcsz);
        \draw[fill=white,line width=\drwid] (0,0) circle (\crcsz) node {};
         \node[draw, line width=\drwid,fill=white, minimum size=\sqrsz] at (0-\sqrdf,2+\sqrdf) {};
        \node[draw, line width=\drwid,fill=white,minimum size=\sqrsz] at (0, 2) {$X$};
        \node[draw, line width=\drwid,fill=white, minimum size=\sqrsz] at (0-\sqrdf,-2+\sqrdf) {};
        \node[draw, line width=\drwid,fill=white,minimum size=\sqrsz] at (0, -2) {$Z$};
        \node[] at (-0.25, 1) {{\footnotesize $G^{\top}_X$}};
        \node[] at (1.2, 2.2) {{\scriptsize $F_X$}};
        \node[] at (1.2, 0.2) {{\scriptsize $M_X$}};
        \node[] at (-0.25, -2+1) {{\footnotesize $N_X$}};
        \node[] at (0.6, -0.4) {{ $\mathcal{E}_{X}$}};
        \node[] at (0.65, 1.6) {{ $\mathcal{V}_{X}$}};
        \node[] at (0.65, -2-0.4) {{ $\mathcal{U}_{X}$}};
        \end{scope}

        \begin{scope}[shift={(4.8,0)}]
        \draw[draw=gray,line width=\ewid] (0,0) to (0,2);
        \node[draw, line width=\drwid,fill=white, minimum size=\sqrsz] at (0-\sqrdf, 0+\sqrdf) {};
        \node[draw, line width=\drwid,fill=white,minimum size=\sqrsz] at (0, 0) {$Z$};
        \draw[fill=gry!30,line width=\drwid] (0-\crcdf,2+\crcdf) circle (\crcsz);
        \draw[fill=gry!30,line width=\drwid] (0,2) circle (\crcsz) node {$\bar{X}^{(t)}$};
        \node[] at (-0.25, 1) {{\footnotesize $A_X$}};
        \node[] at (0.65, -0.45) {{$\mathcal{S}_{Z}$}};
        \node[] at (0.55, 1.5){{ $\mathcal{L}_X$}};
        \end{scope}
\end{scope}
\end{tikzpicture}
}
     \caption{(a) $\bar{Z}^{(c)}$ has equivalent representatives $Z(\lz)$ and $Z(\qz_0)$, shaded in gray, which are equivalent up to a product of $Z$ vertex checks $\mathcal{V}_{Z}$.
     (b) Likewise, $\bar{X}^{(t)}$ has equivalent representatives $X(\lx)$ and $X(\qx_{d-1})$, up to a product of $X$ vertex checks $\mathcal{V}_{X}$.}
\end{figure}

Proceeding analogously as for $\bar{Z}^{(c)}$, the logical $\bar{X}^{(t)}$ can be moved entirely to qubit set $\qx_{d-1}$ using the product of all vertex checks in $\vx$,
    \begin{align}\label{eqn:xrep_on_qxd1}
        X(\vec{1} \in {\lx})\; \mathcal{H}_X(\vec{1} \in \vx) = X(\vec{1} \in \qx_{d-1})
    \end{align}
  Similarly $X$-checks from any of the sets $\cx_{i}$ for $i \in [d]$ would simply move the entire logical operator between the dual layers. Thus code deformation provides an avenue for effectively `moving' logical information from the original LDPC code to the ancilla code by entangling the two systems.


In order to retain the error-correcting ability of the original quantum LDPC code, it is desirable that the code distance is preserved during code deformation.

\subsubsection{Toric code $\cnotgate$ using a Dehn twist
}\label{subsec:code-evolution}

Once the joint measurement value has been inferred, the stabilizers of the original LDPC code and the toric code are measured. This is the ``split" step between codes, though the logical information remains entangled. Next, following the logical circuit in Fig.~\ref{fig:logical_circuit_code_deformation}, the $\cnotgate$ is implemented on the toric code. This can be done with a depth-one physical CNOT circuit, followed by a qubit permutation, which together implement a Dehn twist \cite{breuckmann2017hyperbolic}.

Let us describe this Dehn twist circuit in our notation. Transversal physical $\cnotgate$ gates are applied within each layer of the toric code, between qubits in sets $\qz_{i}$ and $\qx_{i}$, as shown in Figure~\ref{fig:cyclic_permutations_codespace}. In each layer $i$, the transversal $\cnotgate$s act between corresponding qubits sharing the same index $j$.
Let gates between two qubit sets in layer $i$ be described as ${\cnotgate}(\qz_i,\qx_i)$ where we use the shorthand
${\cnotgate}(\mathcal{A},\mathcal{B})$ to denote transversal physical $\cnotgate$ gates between qubits in ordered sets $\mathcal{A} = \{a_j\}$ and $\mathcal{B} = \{b_j\}$ for $j \in [d]$, 
\begin{equation}\label{eqn:shorthand-cnot-notation}
    {\cnotgate}(\mathcal{A},\mathcal{B}) = \prod_{j=0}^{d-1} {\cnotgate}_{a_j b_j}
\end{equation}

\begin{figure*}[t]
    \centering
    \includegraphics[width=\linewidth]{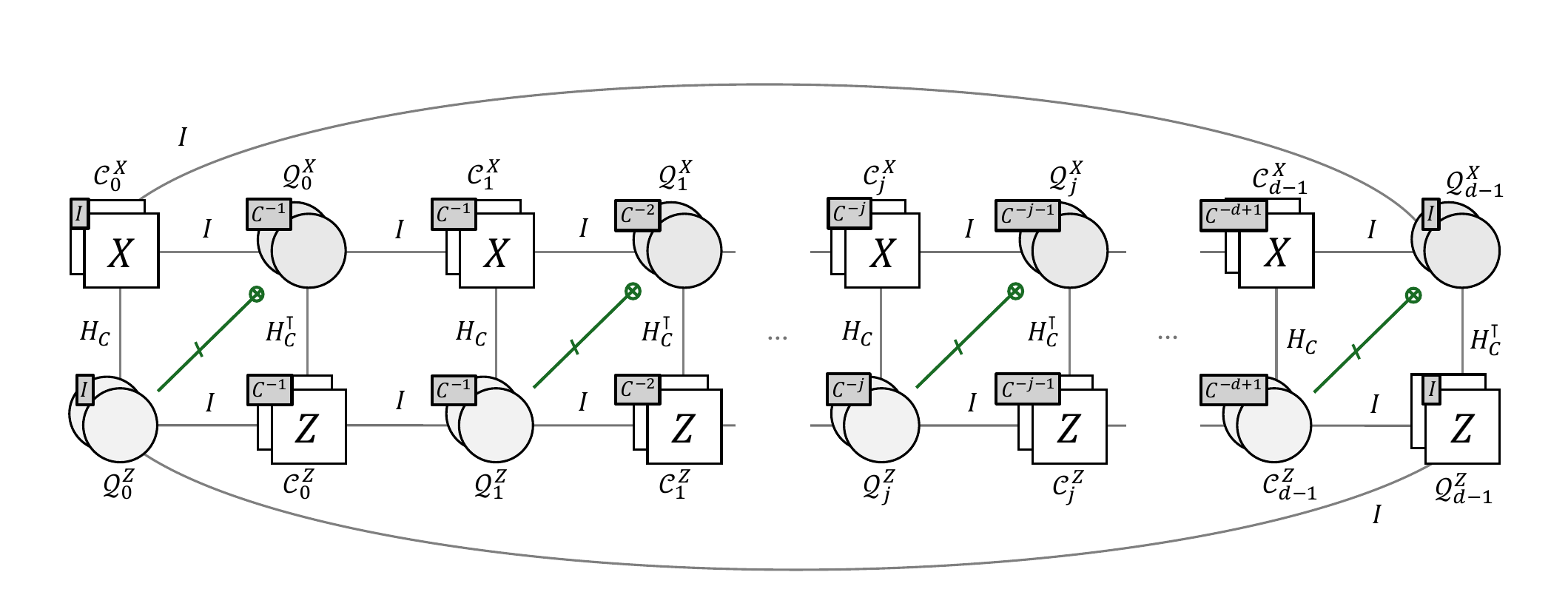}
    \caption{Transversal physical $\cnotgate$s are applied with qubits in sets $\qz_{i}$ as control qubits and $\qx_{i}$ as target qubits, in each layer $i$. Cyclic permutations on qubits and checks restore the system back to the original codespace. Here $C^{-1}$ refers to the inverse cyclic shift matrix by 1 with entries $\ket{r+1}\bra{r}\!{\pmod d}$.}
    \label{fig:cyclic_permutations_codespace}
\end{figure*}

\noindent The complete physical CNOT circuit can be written as a product of transversal gates from all layers $i \in [d]$,
\begin{equation}\label{eqn-dehn-twist-circuit}
  \prod_{i=0}^{d-1}{\cnotgate}(\qz_i,\qx_i)  
\end{equation}

\noindent Since all the qubit sets $\qz_{i}$ and $\qx_{i}$ are disjoint, these physical $\cnotgate$ gates can be implemented in parallel. Note that each set of transversal $\cnotgate$s effectively takes place within a single layer of the toric code.

The stabilizer evolution under the unitary circuit is described in Appendix~\ref{app:dehntwist_unitary}. The key takeaway is that following the unitary circuit (Eq.~\ref{eqn-dehn-twist-circuit}) with a permutation relabeling of the stabilizers and qubits, as shown in Fig.~\ref{fig:cyclic_permutations_codespace} preserves the codespace of the toric code.

In the toric code, permutations on qubits and checks have a geometric interpretation - the plaquettes ($Z$ stabilizers) and qubits on horizontal edges of the plaquette, are shifted by one unit cell with respect to the previous layer, along a nontrivial loop in the dual space defined by qubits supported on $\qx_0$.
The combined operation of transversal $\cnotgate$ gates followed by permutations determined by $\{\sigma_i,\pi_i\}$ preserve the toric code codespace and hence belong to the normalizer group of stabilizers of the toric code. In the following section we prove that this executes the logical $\overline{\cnotgate}$ map. Existing literature \cite{breuckmann2017hyperbolic,koenig2010turaev,lavasani2019dehn} refers to these as gates via Dehn twists. Dehn twists are useful to move logical qubits in storage and implement a logical two-qubit Clifford by twisting a toric code along a boundary and applying physical gates acting on pairs of qubits over increasing distances. Dehn twists are parallelizable although at the cost of a similar overhead to topological codes. 



We summarize this Dehn twist construction in the following lemma, which we prove in Appendix~\ref{subsec:dehntwist_toric} for completeness.
\begin{lemma}[$\overline{\cnotgate}$ Logical action via Dehn Twists]\label{thm:dehntwist_toric_action}
Consider the toric code, a $[[2d^2,2,d]]$ CSS code with Tanner graph described as in Definition \ref{defn-toric-ancilla}. A $\overline{\cnotgate}_{12}$ gate can be performed between its two logical qubits (w.l.o.g.$\!$ $1$ is chosen to be the control qubit) using the following circuit
\begin{align}\label{eqn:toric_dehntwist_cnot}
    \overline{\cnotgate}_{12} =  \; \prod_{i=0}^{d-1} \, 
    \mathsf{\Gamma}_i^{\prime}(\qz_{i}) \;
    \mathsf{\Gamma}_i(\qx_{i}) \; {\cnotgate}(\mathcal{Q}^{Z}_{i},\mathcal{Q}^{X}_{i})
\end{align}
where $\mathsf{\Gamma}_i(\mathcal{A})$ is a unitary corresponding to permutations by $C^{-i-1}$ on qubits $e_j \in \mathcal{A}$, and $\mathsf{\Gamma}_i^{\prime}(\mathcal{A})$ is a unitary corresponding to permutations by $C^{-i}$ on qubits $e_j \in \mathcal{A}$, as shown in Fig.~\ref{fig:cyclic_permutations_codespace}
and ${\cnotgate}(\mathcal{A},\mathcal{B})$ is shorthand for transversal $\cnotgate$ gates between qubits in ordered sets $\mathcal{A}$ and $\mathcal{B}$, $|\mathcal{A}|=|\mathcal{B}|$, defined as in Eq.~\ref{eqn:shorthand-cnot-notation}.
\end{lemma}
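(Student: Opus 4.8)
The plan is to show that the circuit $U=\prod_{i=0}^{d-1}\mathsf{\Gamma}_i'(\qz_i)\,\mathsf{\Gamma}_i(\qx_i)\,\cnotgate(\qz_i,\qx_i)$ in Eq.~\eqref{eqn:toric_dehntwist_cnot} is a Clifford unitary that (i) preserves the codespace of the toric code of Definition~\ref{defn-toric-ancilla}, and (ii) conjugates a full generating set of logical Paulis exactly as $\overline{\cnotgate}_{12}$ does. Since a logical Clifford is determined up to a logical Pauli by its symplectic action on logical Paulis, and the construction will leave no spurious logical Pauli on the operators it fixes, this pins down the logical map as $\overline{\cnotgate}_{12}$. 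Note that $U$ acts block-diagonally across the $d$ layers (the sets $\qz_i,\qx_i$ are mutually disjoint), so all tracking can be done layer by layer.

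For codespace preservation I would track the $X$- and $Z$-stabilizer generators $\mathcal{H}_X(u\in\cx_i)$ and $\mathcal{H}_Z(u\in\cz_i)$ through one layer block, using $\vec1H_C=0=H_C\vec1^{\,\top}$ (weight-two cyclic checks) to see that the transversal $\cnotgate$ composed with the prescribed cyclic relabelings returns each generator to a generator of the same group; this is exactly the content summarized in Fig.~\ref{fig:cyclic_permutations_codespace}, with the detailed computation deferred to Appendix~\ref{app:dehntwist_unitary}. For the logical action I would fix canonical representatives $\bar{Z}^{(1)}=Z(\vec1\in\qz_0)$ and $\bar{X}^{(2)}=X(\vec1\in\qx_0)$ (a full primal resp.\ dual layer), and $\bar{X}^{(1)}=\prod_{i=0}^{d-1}X(e_0\in\qz_i)$ and $\bar{Z}^{(2)}=\prod_{i=0}^{d-1}Z(e_0\in\qx_i)$ (one aligned qubit per layer), and first verify the standard mutual commutation relations. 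Because every $\cnotgate(\qz_i,\qx_i)$ uses $\qz_i$ as control and $\qx_i$ as target, a $Z$ on a $\qz$-qubit and an $X$ on a $\qx$-qubit survive the CNOT layer untouched, and the cyclic permutations $\mathsf{\Gamma}_i',\mathsf{\Gamma}_i$ fix the all-ones vector; hence $\bar{Z}^{(1)}$ and $\bar{X}^{(2)}$ are mapped to themselves exactly. For $\bar{X}^{(1)}$, layer $i$ copies $X(e_0\in\qz_i)$ onto $X(e_0\in\qx_i)$ and then $\mathsf{\Gamma}_i',\mathsf{\Gamma}_i$ shift these to indices $e_{-i}$ and $e_{-i-1}$ (mod $d$); collecting layers yields a ``staircase'' operator $\prod_i X(e_{-i}\in\qz_i)\,X(e_{-i-1}\in\qx_i)$ whose transverse index winds exactly once as $i$ ranges over $\mathbb{Z}_d$, and deforming it with the explicit $X$-stabilizer generators (again using $\vec1H_C=0$) shows it is equivalent modulo stabilizers to $\bar{X}^{(1)}\bar{X}^{(2)}$. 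The mirror argument, with $X\leftrightarrow Z$ and $\qz\leftrightarrow\qx$ exchanged (now via the $Z$-spread of $\cnotgate$ back onto the control), gives $\bar{Z}^{(2)}\mapsto\bar{Z}^{(1)}\bar{Z}^{(2)}$, and one checks $\bar{X}^{(1)},\bar{Z}^{(2)}$ contribute nothing extra to the fixed operators. These four relations are precisely the conjugation rules of $\overline{\cnotgate}_{12}$ with logical qubit $1$ as control.

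The hard part will be the staircase-straightening bookkeeping: confirming that the specific shifts $C^{-i}$ on $\qz_i$ and $C^{-i-1}$ on $\qx_i$ are exactly those for which the post-conjugation representative of $\bar{X}^{(1)}$ winds \emph{once} — not zero or twice — around the transverse direction, so that it acquires a single copy of $\bar{X}^{(2)}$ with no stabilizer-inequivalent remainder, and likewise for $\bar{Z}^{(2)}$. Equivalently, one must verify that these relabelings implement a single full Dehn twist along the non-contractible loop supported on $\qx_0$ (the mapping-class-group transvection $[\bar{X}^{(1)}]\mapsto[\bar{X}^{(1)}]+[\bar{X}^{(2)}]$ on $H_1$ of the torus), which is the geometric fact underlying the algebra. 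The remaining ingredients — the commutation checks, the two fixed operators, and the normalizer/codespace bookkeeping cited from Appendix~\ref{app:dehntwist_unitary} — are routine.
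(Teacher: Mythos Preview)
Your proposal is correct and follows essentially the same approach as the paper's proof in Appendix~\ref{subsec:dehntwist_toric}: both fix the same canonical logical representatives, argue that $\bar Z^{(1)}$ and $\bar X^{(2)}$ are fixed because $Z$ on the control and $X$ on the target of each physical $\cnotgate$ are inert (and the all-ones vector is preserved by any cyclic shift), and then track $\bar X^{(1)}$ and $\bar Z^{(2)}$ through the $\cnotgate$ layer plus cyclic permutations to obtain a ``staircase'' operator that is straightened into $\bar X^{(1)}\bar X^{(2)}$ (resp.\ $\bar Z^{(1)}\bar Z^{(2)}$) by explicit stabilizer products, with codespace preservation delegated to the stabilizer-tableau computation in Appendix~\ref{app:dehntwist_unitary}. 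Your geometric framing of the staircase-straightening as verifying a single winding (the transvection on $H_1$) is a nice gloss on what the paper does purely algebraically, but the underlying computation is the same.
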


This useful result from topological code literature can be extended efficiently to an arbitrary LDPC code using universal adapter based code-switching, in Appendix~\ref{app:toric_adapter_cnot_exsitu}, Theorem~\ref{thm:ldpc-dehntwist-cnot}.

Further, we show that this deformation preserves the code distance of the original quantum LDPC code,
\begin{theorem}\label{thm:dist-preserve-static-merge}
    The merged code has distance $d$ if the original code had distance $d$, provided auxiliary graphs $\mathcal{G}_Z$ and $\mathcal{G}_X$ satisfy the graph desiderata 0-3 in Theorem~\ref{thm:graph_desiderata}.   
\end{theorem}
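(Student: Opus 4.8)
My plan is to reduce the statement to the distance-preservation part of the Graph Desiderata theorem (Theorem~\ref{thm:graph_desiderata}, desideratum~4), applied to an auxiliary graph that is the union of $\mathcal{G}_Z$, $\mathcal{G}_X$, the toric code, and the two sets of adapter edges. The key structural observation is that the whole merged Tanner graph of Fig.~\ref{fig:unitary_adapter_d_layer} is itself an instance of auxiliary graph surgery: the toric code's $\qz$--$\cx$ primal layers already carry a repetition-code-style overlap $H_C$ between a logical operator of the toric code and the stabilizers of opposite type, so the adapter edges $T_Z,P_Z$ (resp.\ $T_X,P_X$) produced by $\mathsf{SkipTree}$ attach $\mathcal{G}_Z$ (resp.\ $\mathcal{G}_X$) to the toric code exactly as in Lemma~\ref{lem:skip_adapter}. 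First I would make this precise: define the combined auxiliary graph $\mathcal{G}_{\text{merge}} = \mathcal{G}_Z \sim_{\mathcal{A}_Z} \mathcal{T} \sim_{\mathcal{A}_X} \mathcal{G}_X$ where $\mathcal{T}$ denotes the relevant subgraph of the toric code, and the port function $f$ sending the support $\mathcal{L}_Z$ of $\bar Z^{(c)}$ and $\mathcal{L}_X$ of $\bar X^{(t)}$ into $\mathcal{G}_Z$ and $\mathcal{G}_X$ respectively. Then I would check that desiderata 0--3 for $\mathcal{G}_{\text{merge}}$ follow from the hypothesis that $\mathcal{G}_Z,\mathcal{G}_X$ satisfy 0--3 together with Lemma~\ref{lem:skip_adapter} (the $\mathsf{SkipTree}$ adapters only inflate cycle-basis sparsity by additive constants and preserve constant degree) and the fact that the toric code is manifestly LDPC with a sparse cycle basis of length-four plaquette cycles.

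The substance of the proof is desideratum~4: I must show $\mathcal{G}_{\text{merge}}$ has expansion at least $1$ relative to the combined port $\mathrm{im}\,f = f(\mathcal{L}_Z)\cup f(\mathcal{L}_X)$, i.e.\ $\beta_d(\mathcal{G}_{\text{merge}},\mathrm{im}\,f)\ge 1$. This should follow from Lemma~\ref{lem:adapted_expansion} chained twice, provided $\mathcal{G}_Z$ and $\mathcal{G}_X$ individually have $\beta_d(\mathcal{G}_Z,\mathcal{P}_Z)\ge1$, $\beta_d(\mathcal{G}_X,\mathcal{P}_X)\ge1$ (guaranteed by desideratum~4 for the individual graphs, which is part of ``satisfying the graph desiderata'' — though the theorem statement only cites 0--3, so I would either invoke the thickening construction of Section~\ref{sec-gauging-meas} to upgrade to desideratum~4, or note it is implicitly assumed), the toric code subgraph has adequate relative expansion on its own ports $\qz_0$ and $\qx_{d-1}$, and the adapters $\mathcal{A}_Z,\mathcal{A}_X$ each have size at least $d$. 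The toric-code relative-expansion input is the one genuinely new ingredient: I would argue that the primal-layer port $\{\qz_0\}$ (size $d$) satisfies $\beta_d(\mathcal{T},\qz_0)\ge1$ because any vertex subset $v$ restricted to that port of size $w\le d/2$ has at least $w$ outgoing edges along the $H_C = I+C$ cycle structure within the primal layer — this is just the statement that the repetition-code cycle graph $C_d$ has Cheeger constant $\ge1$ on balanced cuts, combined with Lemma~\ref{lem:relative_expansion_relation} to pass to the relative version. Symmetrically for $\qx_{d-1}$ through the dual layers.

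The main obstacle I anticipate is handling the case $c=t$-in-the-same-block subtleties and, more seriously, the non-trivial interaction of the two adapters when they attach to the \emph{same} toric code block: Lemma~\ref{lem:adapted_expansion} is stated for joining two graphs, and here I am joining three ($\mathcal{G}_Z$, $\mathcal{T}$, $\mathcal{G}_X$) in a path, so I need to verify that applying the lemma twice — first forming $\mathcal{G}_Z\sim_{\mathcal{A}_Z}\mathcal{T}$ with port $\mathcal{P}_Z\cup\qz_0$, then attaching $\mathcal{G}_X$ via $\mathcal{A}_X$ on the subset $\qx_{d-1}$ of that port — still yields relative expansion at least $1$ on $\mathcal{P}_Z\cup\mathcal{P}_X$, and in particular that the toric code's internal connectivity between the $\qz$-port and $\qx$-port (through the primal--dual identity checks) does not obstruct the argument but rather helps it. A secondary technical point is confirming that $\bar Z^{(c)}\bar Z^{(t)}_{\text{toric}}$ and $\bar X^{(t)}\bar X^{(t)}_{\text{toric}}$ are the operators made into stabilizers of the merged code (via Eqs.~\eqref{eqn:zrep_on_qz0} and \eqref{eqn:xrep_on_qxd1}), so that a minimum-weight logical of the merged code, after multiplying by vertex checks, is forced to have weight $\ge d$ somewhere — either on the original LDPC qubits, where it inherits distance $d$, or pushed onto the toric code, where the toric code's own distance $d$ (by construction of the distance-$d$ toric code) provides the bound. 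I would finish by invoking the Appendix~\ref{app:desiderata_proofs} argument for desideratum~4 verbatim, now that all four desiderata hold for $(\mathcal{G}_{\text{merge}}, f)$.
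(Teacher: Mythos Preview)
Your approach goes through desideratum~4 (relative expansion) for a combined graph $\mathcal{G}_{\text{merge}}$, but the theorem explicitly assumes only desiderata~0--3 for $\mathcal{G}_Z$ and $\mathcal{G}_X$. You notice this yourself and propose to ``upgrade to desideratum~4'' via thickening or to assume it implicitly, but that changes the statement being proved: the whole point of only requiring 0--3 is that the toric code's special structure makes expansion unnecessary here. The paper's proof does not touch desideratum~4 at all; instead it invokes Theorem~\ref{thm:expansionless_joint_toric} from Section~\ref{subsec:replace_desideratum_4}, which gives a direct weight-counting argument (in the style of Theorem~\ref{thm:expansionless_joint}) for the joint measurement of $\bar Z_l\bar Z_r$ and $\bar X_l\bar X_r$ when the ``left'' code has the property that $|\bar Z_l\bar X_l|$ cannot be reduced below $2d_l$ by stabilizers and irrelevant logicals. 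The bulk of the paper's proof is verifying that the toric code satisfies this condition: any representative of $\bar Z^{(1)}$ must touch every primal column and any representative of $\bar X^{(2)}$ must touch every dual layer, forcing $|\bar Z^{(1)}\bar X^{(2)}\xi|\ge 2d$.

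Your route also has a concrete technical error that would block it even under stronger hypotheses: the claim that the port $\qz_0$ in the toric code has relative expansion at least~$1$ because ``the repetition-code cycle graph $C_d$ has Cheeger constant $\ge 1$'' is false. For a set of $d/2$ consecutive vertices in $C_d$ the edge boundary has size exactly~$2$, so the Cheeger constant of $C_d$ is $\Theta(1/d)$; more generally the $d\times d$ torus graph underlying the toric code has expansion $\Theta(1/d)$. So Lemma~\ref{lem:adapted_expansion} applied to $\mathcal{G}_Z\sim_{\mathcal{A}_Z}\mathcal{T}$ would fail to give $\beta_d\ge 1$ on the combined port. This is precisely why the paper develops the expansionless route in Section~\ref{subsec:replace_desideratum_4}: the toric code does not furnish the expansion input needed for your chaining argument, but it does furnish the weight-rigidity of $\bar Z^{(1)}\bar X^{(2)}$ needed for Theorem~\ref{thm:expansionless_joint_toric}. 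A secondary issue is that your single merged graph $\mathcal{G}_{\text{merge}}$ would have to host both $Z$-type vertex checks (for $\mathcal{G}_Z$) and $X$-type vertex checks (for $\mathcal{G}_X$); the auxiliary-graph framework of Theorem~\ref{thm:graph_desiderata} handles one Pauli type at a time, and the paper's Theorem~\ref{thm:expansionless_joint_toric} is designed exactly to accommodate two simultaneous graphs of opposite type.
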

\begin{proof}
    Proof in Appendix~\ref{app:toric_adapter_distance}.
\end{proof}

\section{Examples}\label{sec:examples}

 To demonstrate the flexibility and low additional space overhead of the joint logical measurement scheme, as well as understand the constant factors in a non-asymptotic setting, let us consider measuring the parity of logical operators in two different high-rate quantum codes: a~$[[98,6,12]]$ bivariate bicyclic code~\cite{bravyi2024highthreshold}, and a lifted product code~\cite{panteleev2022almostlinear} with parameters $[[200,20,10]]$. 

Auxiliary graphs to measure individual logicals are constructed in (Section~\ref{subsec-eg-individ-auxiliary-graphs}) and then connected with universal adapters for multi-code surgery (Section~\ref{subsec-eg-multi-code-BB-LP-adapter}) as well as for logical measurements within the same high-rate codeblock (Section~\ref{subsec-eg-intra-code-BB-BB-adapter}). Both situations present their own challenges: in the former case, different logicals operators can have very different structure relative to the underlying stabilizer group, making connecting the auxiliary graphs non-trivial, despite the convenience of having disjoint logical support. In the latter case, logical operators could potentially overlap on a subset of physical qubits. This requires generalizing beyond an injective port function $f$ from the logical support to construct auxiliary graphs. 

Since the focus is on code deformation techniques rather than peculiarities of specific codes, we keep the description of the code families below concise.

The \textbf{lifted product code} is obtained from two base protographs, which are matrices over the quotient polynomial ring $R_{\ell} = \mathbb{F}_2[x] /\left(x^{\ell}-1\right)$, where $\ell$ is a parameter choice called the \textit{lift} size. For a polynomial $g(x) = g_{0} + g_{1}x + ... + g_{\ell-1}x^{\ell-1} \in R_{\ell}$, define $g(x)^{T}$ as $g_{0} + g_{\ell-1}x + ... + g_{1}x^{\ell-1} \in R_{\ell}$. The lift of $g(x)$ is an $\ell \times \ell$ circulant matrix $\mathbb{B}(g(x))$ where the first column is given by the coefficients of $g(x)$, and $i$-th column is given by the coefficients of $x^{i-1}g(x)$. The lift $\mathbb{B}(A)$ of a matrix $A$ over $R_{\ell}$ is obtained by replacing each element of $A$ by its lift. Given two matrices $A_1$ and $A_2$, of size $m_1 \times n_1$ and $m_2 \times n_2$ respectively with entries in $R_{\ell}$, the lifted product code $\operatorname{LP}\left(A_1, A_2\right)$ is the CSS code with stabilizer matrices

\begin{align}
H_X & =\mathbb{B}\left(\left[\begin{array}{ll}
A_1 \otimes I_{n_2} & I_{m_1} \otimes A_2^T
\end{array}\right]\right) \label{eqn:defn-lp-code-hx} \\
H_Z & =\mathbb{B}\left(\left[\begin{array}{ll}
I_{n_1} \otimes A_2 & A_1^T \otimes I_{m_2}
\end{array}\right]\right) \label{eqn:defn-lp-code-hz}
\end{align}
The block length of the above code is $\ell\left(n_1 m_2+n_2 m_1\right)$. When $\ell=1$, the LP code specializes to the hypergraph product code~\cite{tillich2013quantum}.

To construct a small example, we consider a lifted product code where both protographs in the product are related as $A_1 = A_2 = A$, so the construction takes a single base matrix $A$ and lift $\ell$ as input. Through a search in parameter space, we obtained a $[[200,20,10]]$ lifted product code based on the following lift parameter and base matrix:

\begin{equation}
    \begin{array}{|c|c|c|}
\hline \text { LP Code } & \ell & \text { Matrix over } R_{\ell} \\
\hline \hline \mathrm{LP} & 8 & \left(\begin{array}{cccc}
x^2 & 1 & 1 & x^2 \\
1 & x & x^2 & x \\
x^2 & x & x^3 & x^2
\end{array}\right) \\
\hline
\end{array}
\end{equation}
In the canonical stabilizer basis (Eqs.~\ref{eqn:defn-lp-code-hx}, \ref{eqn:defn-lp-code-hz}), all the stabilizers of both $X$ and $Z$ type each have weight exactly $7$. The total qubit check degrees are either $6$ or $8$.

Complementarily to the above product-based construction, the \textbf{bivariate bicyclic code}~\cite{bravyi2024highthreshold,kovalev2013hyperbicycle} check matrices are defined in terms of two variables $x$ and $y$, and can be directly constructed as

\begin{equation}\label{eqn:defn-bb-code}
H_X=[A \mid B], \quad H_Z=\left[B^{\top} \mid A^{\top}\right],    
\end{equation}
where matrices $A, B \in \mathbb{F}_2[x, y]$ are polynomials in the matrices $x$ and $y$,
\begin{equation}\label{eqn-bb-defn-x-and-y-circulants}
    x=S_{\ell} \otimes I_m \quad \text { and } \quad y=I_{\ell} \otimes S_m
\end{equation}
and $I_{r}$ is the $r\times r$ identity matrix,  $S_{r}$ is the $r \times r$ cyclic shift matrix, i.e.~$\bra{i}S_{r}=\bra{i+1\;\text{mod}\;r}$. Note that $x^l = y^m = I_{lm}$.

From above, one can see $AB = BA$, since $xy = yx$. Also, $A^{\top}=A(x, y)^{\top}=A\left(x^{\top}, y^{\top}\right)=A\left(x^{-1}, y^{-1}\right)$ and likewise for $B^{\top}$. It follows that $H_XH_Z^T=0$. There are $\ell m$ $X$ checks and $\ell m$ $Z$ checks in this generating set, though only $(n-k) / 2$ checks of either type are independent for a code encoding $k$ qubits.

We implemented a computerized search and obtained a new bivariate bicyclic code $[[98,6,12]]$ (not previously mentioned in Ref.~\cite{bravyi2024highthreshold}, but equivalent to a code in the database of Ref.~\cite{lin2023quantumtwoblockgroupalgebra}). This code uses fewer physical qubits than the $[[144,12,12]]$ Gross code for the same code distance, but has half the number of logical qubits.

The $[[98,6,12]]$ code is obtained from the BB construction by choosing $l= 7, m = 7$, and
\begin{equation}
    A = x^3+y^3+y^4, \quad B = y^6+x^2+x^5.
\end{equation}
Since both $A$ and $B$ polynomials have $3$ terms each, the $X$ and $Z$ stabilizers of this code in the canonical stabilizer basis (Eqn.~\ref{eqn:defn-bb-code}) are all weight $6$. The total qubit degrees are also exactly $6$.

Consider a $\overline{ZZ}$ logical parity measurement between logicals from two different LDPC codes, where $\overline{Z}_1$ is from $\mathrm{BB}_1$ code (now subscripted as code 1), and  $\overline{Z}_2$ from $\mathrm{LP}_2$ code (now subscripted as code 2). For simplicity, let us choose logical operators both of weight $14$. We also consider a joint-measurement between logicals $\overline{Z}_1$ (weight $14$) and $\overline{Z}_3$ (weight $12$) within the same high-rate code, $\mathrm{BB}_1$. The support of these operators within their respective codes is shown in Table~(\ref{tab:eg-logicalZsupports}) below.

\begin{table*}
\centering
\begin{tabular}{|c|c|c|c|c|}
\hline \hline
\text{Code} & $[[n,k,d]]$ & Logical & \text{$\overline{Z}$ 
Logical operator support} & \text { Weight } \\
\hline
$\mathrm{BB}_1$ & $[[98,6,12]]$ & $\overline{Z}_1$ & 6,  8, 13, 17, 31, 32, 33, 35, 36, 37, 41, 50, 51, 93 & 14 \\
$\mathrm{LP}_2$ & $[[200,20,10]]$ & $\overline{Z}_2$ & \; 24,  25,  26,  29,  30,  56,  58,  59,  60,  61,  90,  93,  94, 121 \; &  14 \\
$\mathrm{BB}_1$ & $[[98,6,12]]$ & $\overline{Z}_3$ & \; 10, 17, 35, 39, 42, 43, 53, 55, 61, 70, 84, 89 \; & 12 \\
\hline \hline
\end{tabular}
\caption{\label{tab:eg-logicalZsupports}Logical  operator supports for $\overline{Z}_1$ from $\mathrm{BB}_1$ code and $\overline{Z}_2$ from $\mathrm{LP}_2$ code. Logicals $\overline{Z}_1$ and $\overline{Z}_3$ are distinct logical operators in the same code, $\mathrm{BB}_1$. Note, the indices $1$, $2$ and $3$ subscripting the logical operators only serve a purpose to refer to an operator in this example and are unrelated to indexing of logical qubits in any specific logical basis. The indices subscripting the physical Pauli operators in the table correspond to the same ordering as columns from parity check matrices in Eq.~\ref{eqn:defn-bb-code} for BB$_1$ and Eqs.~\ref{eqn:defn-lp-code-hx}, \ref{eqn:defn-lp-code-hz} for LP${_2}$, where indexing of physical qubits begins from $0$.}
\end{table*}

\subsection{Auxiliary graphs for individual logical measurements}\label{subsec-eg-individ-auxiliary-graphs}

The scheme for building the auxiliary graph is generally applicable for any quantum LDPC code, and proceeds analogously for both the $\mathrm{BB}_1$ code and the lifted product $\mathrm{LP}_2$ code.

\begin{table*}[t]
\centering
\begin{tabular}{|c|c|c|c|}
\hline
& & & \\
\text{Original Code $[[\, n,k,d \,]]$}& $\mathrm{BB}_1$ $[[\, 98,6,12\,]]$ & $\mathrm{LP}_2$ $[[\,200,20,10\,]]$ & $\mathrm{BB}_1$ $[[\, 98,6,12\,]]$\\
\hline
\hline
& & & \\
Total phys qubits & $98$ (data) + $98$ (ancilla) & $200$ (data) + $192$ (ancilla) & $98$ (data) + $92$ (ancilla) \\
(data $+$ ancilla) & = $196$ & $=$ $392$ & = $190^{***}$ \\
in original codeblock & & & \\
\hline
Max stabilizer weights (X,Z) & $6,6$ & $7,7$ & $6,6$ \\
Max qubit degree & $6$ & $8$ & $6$ \\
\hline
\hline
& & & \\
Auxiliary graph & $G_1(V_1,E_1)$ to measure $\overline{Z}_1$ & $G_2(V_2,E_2)$ to measure $\overline{Z}_2$ & $G_3(V_3,E_3)$ to measure $\overline{Z}_3$ \\
\hline
& & & \\
Parameters of deformed code & $[[\, 121 , 5 , 12 \,]] $ & $[[\, 220 , 19 , 10 \,]] $ & $[[\, 115 , 5 , 12 \,]] $ \\
\hline
& & & \\
No. of edges (addnl data qubits) &  23   (incl 2 extra edges$^{*}$) & 20 & $17$ (incl 1 extra edge$^{**}$) \\
No. of vertices (addnl Z checks) & 14 & 14 & $12$\\
No. of cycles (addnl X checks) & 10 (incl 2 new cycles$^{*}$) & 7 & $6$ (incl 1 new cycle$^{**}$) \\
\hline
& & & \\
Cycle basis & $^{*}$Cellulated to obtain cycles & No cellulation or & $^{**}$Cellulated to obtain cycles \\
& at most weight $6$. & decongestion needed. & at most weight $6$ \\
& & & \\
\hline
& & & \\
Total additional overhead & In total, $23$ (data) & In total, $20$ (data) & In total, $17$ (data) \\
& + $24$ (new checks) & + $21$ (new checks) & + $18$ (new checks) \\
& $= 47$ phys qubits added & 
$= 41$ phys qubits added & $= 35$ phys qubits added \\
\hline
Max stabilizer weights (X,Z) & $7,6$ & $8,7$ & $7,6$ \\
Max qubit degree & $7$ & $8$ & $7$ \\
\hline
& & & \\
Total No. of phys qubits & $121$ (data) + $122$ (ancilla)  & $220$ (data) + $213$ (ancilla) & $115$ (data) + $110$ (ancilla) \\
(data $+$ ancilla) & $= 243$ & = $433$  & = $225$ \\
in deformed code & & & \\
\hline \hline
\end{tabular}
\caption{\textbf{A tale of three auxiliary graphs}. Parameters and sparsity of the deformed codes to individually measure logicals in the two example LDPC codes. Here, logicals $\overline{Z}_1$, $\overline{Z}_2$, $\overline{Z}_3$ are from BB$_1$, LP$_2$, BB$_1$ codes respectively. The deformed code in either case has one less logical qubit (which has been measured), but preserves code distance. For all $3$ constructions above, the underlying graph already had sufficient expansion to preserve code distance, and did not require additional edge qubits after cellulation to boost relative expansion. $^{***}$The reader may observe that the original BB$_1$ codeblock size is different in columns 1 and 3. This is because the above reports $G_1$ deformed code parameters when deforming $X$ checks from the canonical stabilizer basis (Eqn.~\ref{eqn:defn-bb-code}) for BB codes which has redundancy in checks, for fair comparison with LP$_2$, which also has redundancy in its canonical stabilizer basis (Eqn.~\ref{eqn:defn-lp-code-hx}). On the other hand, $G_3$ deformed code parameters are using full-rank $X$ checks. We observed that auxiliary graph $G_1$ had the same structure and overhead for both canonical and full-rank choices of BB$_1$ $X$ stabilizers, while $G_3$ required $4$ less physical qubits. The sparsity of deformed code is left underoptimized for simplicity as it is $\leq 8$. The $\mathrm{LP}_2$ auxiliary graph has cycle basis lengths bounded to at most $7$ edges, and since the original $\mathrm{LP}_2$ code already had stabilizers of weight $7$, we did not cellulate cycles in $G_2$ of edge length $7$. Some of the original code $X$ stabilizer weights increased by exactly 1 since they deformed to gain on a single new edge qubit. $\overline{Z}_1$ and $\overline{Z}_2$ will serve as inputs for inter-code joint-measurement in Section~\ref{subsec-eg-multi-code-BB-LP-adapter}, and $\overline{Z}_1$ and $\overline{Z}_3$ will serve as inputs for intra-code joint-measurement within the BB$_1$ codeblock in Section~\ref{subsec-eg-intra-code-BB-BB-adapter}.}
\label{tab:tale_of_three_auxiliary_graphs_BB_LP}
\end{table*}

In case of both $[[98,6,12]]$ $\mathrm{BB}_1$ code as well as the $[[200,20,10]]$ $ \mathrm{LP}_2$ code, the structure of $H_{X}$ restricted to the support of ${\overline{Z}}$, is graph-like, which is to say that each $X$ stabilizer from the original code anti-commutes with exactly $0$ or $2$ vertex $Z$ checks. This bypasses the need for decomposing each hyperedge into disjoint edges. Corresponding to each pair of vertex checks that anti-commute within the support of any $X$ stabilizer, an additional ``edge" qubit is introduced. This is the first step that introduces any additional physical data qubits. For the $[[98,6,12]]$ $\mathrm{BB}_1$ code, $21$ edges are added to the auxiliary graph, which corresponds to $21$ physical qubits introduced at this stage. Similarly, for the $[[200,20,10]]$ $\mathrm{LP}_2$ code, $20$ edge qubits are introduced into its auxiliary graph at this stage.

Now the checks in the deformed codes form an abelian group. However, there are additional gauge degrees of freedom due to cycles in the auxiliary graphs. One way to gauge fix the subsystem code is to add $X$ stabilizers supported on the edges of a generating set of cycles in the auxiliary graph. 

The $[[98,6,12]]$ $\mathrm{BB}_1$ logical auxiliary graph in its present form has $8$ independent cycles in its cycle basis. For one choice of such a basis, the \textit{edge length} (the number of edges present in the cycle) of cycles are given by $[\,3, 3, 5, 5, 5, 6, 7, 9\,]$. Let us assume a limit to cycle edge lengths to be $6$, the same as the stabilizer weight in the original $\mathrm{BB}_1$ code. We cellulate by adding an additional edge to split the longest cycle on $9$ edges into two cycles of edge lengths $5$ and $6$, and also cellulate the next longest cycle of edge length $7$ by adding an edge to split into two cycles of edge lengths $4$ and $5$. In total, $2$ edges have been added. Post cellulation, the cycle basis in the auxiliary graph of $[[98,6,12]]$ logical $\overline{Z}_1$ has $10$ cycles in its cycle basis. The weights of new $X$ checks defined on these cycles are $[\,3, 3, 4, 5, 5, 5, 5, 5, 6, 6]$, bounded at weight $6$. Further, each edge is in at most $4$ cycles. In fact, most edges are in only 1 or $2$ cycles. So qubit $X$-check degree is naturally bounded with no further need for decongestion.

At this stage, we have a sparse, deformed $[[121 , 5 , 12]]$ code in which one logical qubit from the original $[[98,6,12]]$ $\mathrm{BB}_1$ code has been measured. Using BP-OSD~\cite{Roffe_LDPC_Python_tools_2022} as well as integer programming (CPLEX~\cite{ibm-cplex}), we find the deformed code already has code distance of $12$, same as the original code. Hence, there is no need to add more edge qubits to boost relative expansion in this auxiliary graph, which we will use as the final auxiliary graph for $\mathrm{BB}_1$ code logical, $\overline{Z}_1$. Let the graph incidence matrix for this graph be denoted $G_1(V_1,E_1)$ where $V_1$ is the set of vertices and $E_1$ the set of edges. In total $|E_1| = 21$ (perfect matching step) + $2$ (cellulation) $=23$ edge qubits are required. The total additional overhead of the single logical $\overline{Z}_1$ measurement is $23$ (data qubits) + $14$ (Z-check ancilla) + $10$ (X-check ancilla) $=47$ extra physical qubits. The $\mathrm{BB}_1$ codebock originally originally contained $196$ physical qubits ($98$ data qubits and $98$ overcomplete set of checks), so this deformed code allows extending the quantum memory to a partial logic processor at only $37.7\%$ additional space cost.

Let us return to the construction of the graph for the $[[200,20,10]]$ lifted product code logical $\overline{Z}$ operator. The cycle basis of this graph has $7$ elements. One particular choice of cycle bases generated using NetworkX~\cite{networkX} has cycles with edge lengths $[4, 4, 4, 6, 6, 7, 7]$, with longest cycle having $7$ edges. Note, in this case, since the original $\mathrm{LP}_2$ code already has stabilizer weights $7$, we choose not to cellulate these cycles. $7$ new $X$ cycle checks are defined, as supported on edge qubits in these cycles. Each edge appears in at most $5$ cycles. In fact, the predominant nonzero cycle degree of edges is $1$ and there is only a single edge which participates in $5$ cycles. Hence for simplicity we choose to omit the thickening step for decongesting this cycle basis. 

Based on BP-OSD and mixed-integer programming using CPLEX, the deformed $\mathrm{LP}_2$ code has code distance of $10$, the same as the original code, so we do not add additional edges.  Let $G_2(V_2,E_2)$ denote the graph incidence matrix of this final auxiliary graph for $\mathrm{LP}_2$ code logical $\overline{Z}_2$. Only $|E_2|=20$ edges were added between $|V_2|=14$ vertices to create an auxiliary graph for measuring a weight $|V_2|$ logical. Speaking in terms of physical overhead, in total this auxiliary system required $20$ (data qubits) + $14$ (Z check ancilla) + $7$ (X check ancilla) $= 41$ additional physical qubits, which is only about $10.5\%$ additional space overhead over the original memory cost of $392$ physical qubits (including both data and checks) to store the quantum code. The deformed code obtained from $\mathrm{LP}_2$ has parameters $[[220,19,10]]$.

\begin{figure}[t]
    \centering
        \includegraphics[width=\linewidth]{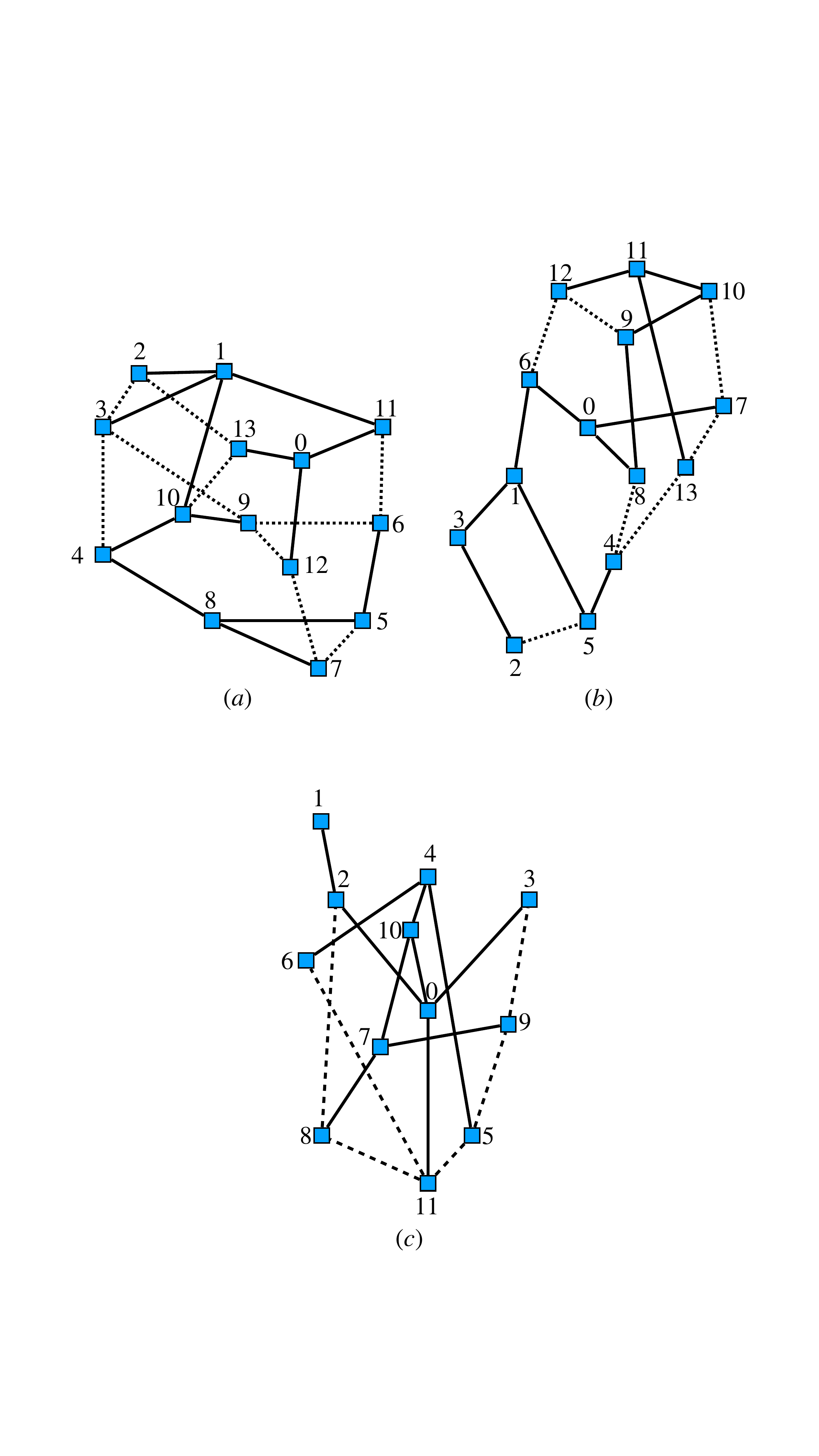}
    \caption{Auxiliary graphs for chosen BB and LP code logicals. Solid edges are the subset of edges in the original graph $G_i(V_i,E_i)$ that belong to a chosen spanning tree to input to the $\mathsf{SkipTree}$ algorithm in Sections~\ref{subsec-eg-multi-code-BB-LP-adapter} and~\ref{subsec-eg-intra-code-BB-BB-adapter}. Dotted edges are the redundant edges in the graph which will be ignored for $\mathsf{SkipTree}$. The vertex labeling $0,1,...|V_i-1|$ is the output obtained from the $\mathsf{SkipTree}$ algorithm.
    (a) $G_1(V_1,E_1)$ represents the graph to measure a weight-$14$ logical~$\overline{Z}_1$ in the $[[98,6,12]]$~$\text{BB}_1$~code.
    (b) $G_2(V_2,E_2)$ represents the graph to measure a weight-$14$ logical~$\overline{Z}_2$ in the $[[200,20,10]]$~$\mathrm{LP}_2$~code.
    (c) $G_3(V_3,E_3)$ represents the graph to measure a weight-$12$ logical~$\overline{Z}_3$ in the $[[98,6,12]]$~$\text{BB}_1$~code.}
    \label{fig:bb-lp-aux-graphs}
\end{figure}

The procedure to design the auxiliary graph to individually measure $\mathrm{BB}_1$ logical $\overline{Z}_3$ follows analogously to the previous two examples, with the difference being that $\overline{Z}_3$ has weight $12$, so $12$ $Z$ vertex checks are added in this individual auxiliary graph. We also observed that using a full-rank basis of $X$ stabilizers in the original code required fewer additional edge qubits during the deformation through perfect matchings step (only $16$ edge qubits), in contrast to deforming $X$ stabilizers in the canonical stabilizer basis (Eqn.~\ref{eqn:defn-bb-code}) which required $18$ edge qubits. (For completeness, we mention that the auxiliary graph $G_1$ for measuring $\overline{Z}_1$ and its physical overhead remained identical for both the canonical stabilizer basis (Eqn.~\ref{eqn:defn-bb-code}) as well as full rank $X$ stabilizer basis). We denote the incidence matrix of this graph as $G_3(V_3,E_3)$, where $V_3$ is the set of vertices and $E_3$ the set of edges. The number of edge qubits $E_3$ is $16$ (perfect matching step) + $1$ (cellulation) = $17$. The total overhead for this measurement graph is $17$ (data qubits) + $12$ (Z check ancilla) + $6$ (X check ancilla) = $35$ additional physical qubits, which is less than that for the $\overline{Z}_1$ logical due to lower weight of the logical representative. The deformed code obtained from $\mathrm{BB}_1$ to measure $\overline{Z}_2$ has parameters $[[115,5,12]]$. Constructions for all three auxiliary graphs for single logical measurements, along with their space overhead, are summarized in Table~\ref{tab:tale_of_three_auxiliary_graphs_BB_LP}. 

Notice that the multiplicative factor of $O(\log^3(d))$ for decongestion guarantees did not play a role in this specific example due to already low constants, although this step can also be very useful for hardware with limited qubit degree. 

Without optimizing further for these specific codes, we now move on to illustrating the joint measurement scheme for the chosen logical operators, using auxiliary graphs $G_1$ and $G_2$ as inputs.

\subsection{BB-LP inter-code joint logical measurements using a universal adapter}\label{subsec-eg-multi-code-BB-LP-adapter}

The goal is to measure the parity of these two logicals in different quantum LDPC codes without inferring either of the individual logical eigenvalues individually. Such a primitive is very useful for schemes such as state teleportation, lattice surgery, and more general code switching. In the present form, the structures of the two auxiliary graphs are very different (See Fig.~\ref{fig:bb-lp-aux-graphs} for $G_1(V_1,E_1)$ and $G_2(V_2,E_2)$) and it is not immediately apparent how to connect these. A key insight in this work is that the repetition code can be used to mediate between these two structures. 

Applying the slightly optimized flag-based $\mathsf{SkipTree}$ algorithm (Algorithm~\ref{alg:skiptree_HR} in Appendix~\ref{app:skip_fullrank}) to the auxiliary graph $G_1$ for $[[98,6,12]]$ $\mathrm{BB}_1$ code logical, we obtain sparse $T_1$ and permutation $P_1$, such that $T_1 G_1 P_1=H_{R(14)}$, where $H_{R(d)}$ is the canonical basis of the repetition code, here of distance $d=14$. The algorithm thus provides a new set of edges between the same vertices in $G_1$, as well as a relabeling of the $14$ vertices such that one can follow a path in order in the new graph (in other words, one can visit newly labeled vertices in order $0,1,2,...,13$), see Fig.~\ref{fig:bb-lp-aux-graphs}(a) for such an example. In the new basis and relabeling, the deformed $X$ stabilizers from original code appear to support an $X$-type repetition code on the edge qubits. $T_1$ obtained from $\mathsf{SkipTree}(G_1)$ is $(3,2)$-sparse, as expected (Theorem \ref{thm:skiptree}).

Next, also apply $\mathsf{SkipTree}$ algorithm to the auxiliary graph $G_2$ for $[[200,20,10]]$ lifted product code logical, and obtain sparse $T_2$ and permutation $P_2$, such that $T_2 G_2 P_2=H_{R(14)}$, where $T_2$ is also $(3,2)$-sparse. We obtain an analogous relabeling of vertices in $V_2$, from $0$ to $13$, see Fig.~\ref{fig:bb-lp-aux-graphs}(b).

For the inter-code joint measurement, let us now consider the combined system of both the $\mathrm{BB}_1$ bivariate bicyclic codeblock and $\mathrm{LP}_2$ lifted product codeblock, with their respective individual auxiliary graphs $G_1$ and $G_2$ constructed in the previous section. At present, these are two disconnected graphs. The procedure to connect these is straightforward; simply add new edges $E_A$ that connect vertices with the same label in auxiliary graphs $G_1$ and $G_2$, e.g.~the vertex labeled $0$ in $G_1$ to the vertex labeled $0$ in $G_2$, vertex labeled $1$ in $G_1$ to vertex labeled $1$ in $G_2$, and so on, for the $w=14$ pairs of vertices in the two auxiliary graphs. This requires $14$ edges, which in physical terms translates to $14$ additional physical data qubits.

The additional cycles in this adapted graph create new $X$ cycle checks. Alternatively to the graph-based view, a very straightforward way to construct the adapter $X$-check matrix is to follow our compact Tanner graph notation, which is used to describe the jointly deformed code in Fig.~\ref{fig:joint_measurement}, where $T_1$, $T_2$, $P_1$, $P_2$ define compact Tanner graph edges, that represent parity check matrices between relevant qubit sets and check sets. For just the adapter $X$ checks, the compact Tanner graph diagram contains the following information,
\begin{align}\label{eqn-adapter-x-checks-BB-LP}
  H_X|_{\supp{(E_1)}} &= T_1 \nonumber \\ H_X|_{\supp{(E_2)}} &= T_2  \nonumber \\
  H_X|_{\supp{(E_A)}} &= H_{R(14)}
\end{align}
where $E_A$ refers to the set of edges just added as part of the inter-code adapter, and $H_{R(14)}$ is the full-rank canonical check matrix of the repetition code of distance $14$.

\begin{table}[t]
\centering
\begin{tabular}{|c|c|}
\hline
& \\
\text{Original Codes}& $\mathrm{BB}_1$ $[[98,6,12]]$, and \\
& $\mathrm{LP}_2$ $[[200,20,10]]$ \\
\hline
& \\
Auxiliary graph & $G_1(V_1,E_1)\sim_{\mathcal{A}}G_2(V_2,E_2)$ \\
(using adapter) & to measure $\overline{Z}_1 \overline{Z}_2$ \\
\hline \hline
& \\
No.\! edges (addnl data qubits) & $57$ \\
No.\! vertices (addnl Z checks) & $28$\\
No.\! cycles (addnl X checks) & $17$ (incl 2 new cycles) \\
& + $13$ (adapter) \\
\hline
& \\
Total additional overhead & $47+41$ (individual aux graphs) \\
& + $27$ (for adapter) \\
&  $=$ $115$ qubits added in total \\
\hline
& \\
Adapter checks
& at most weight $8$.\\
\hline
& \\
Parameters of deformed code &  $[[\, 355,25,10\, ]]$ \\
\hline
& \\
Max stabilizer weights (X,Z) & $8$, $7$ \\
Max qubit degree & $9$ \\
\hline \hline
\end{tabular}
\caption{\textbf{Inter-codeblock logical measurements, using a universal adapter}. The logical operators being measured are from the $[[98,6,12]]$ bivariate bicyclic code and $[[200,20,10]]$ lifted product code. A modular graph to measure $\overline{Z_1 Z_2}$ is composed from individual auxiliary graphs $G_1$ and $G_2$ (which measure $\bar{Z}_1$ and $\bar{Z}_2$ respectively) described in Table.~\ref{tab:tale_of_three_auxiliary_graphs_BB_LP}, and the universal adapter qubits. The cycles that occur within individual auxiliary graphs are already listed in Table.~\ref{tab:tale_of_three_auxiliary_graphs_BB_LP}, so only adapter cycle checks which occur between the two graphs are mentioned here.}
\label{tab:params_auxiliary_graphs_BB-LP_intercode}
\end{table}

The new $X$ stabilizers introduced to interface between the bivariate bicyclic $\mathrm{BB}_1$ code and lifted product $\mathrm{LP}_2$ code are at most weight $8$, as guaranteed by $\mathsf{SkipTree}$ algorithm. For a run of the algorithm with random choices of spanning trees for $G_1$ and $G_2$, the adapter $X$-check weights were $[4, 5, 5, 5, 5, 5, 5, 5, 6, 6, 6, 7, 8]$.

\begin{remark}
We note that specific choices of spanning tree can yield lower adapter check weights. For example, both $G_1$ and $G_2$ contain a \textit{Hamiltonian path}, which is a path that visits each vertex in the graph exactly once. Inspect the graphs in Fig.~\ref{fig:bb-lp-aux-graphs} to see that they have such paths. As noted in Section~\ref{sec:skiptree}, Hamiltonian paths can be used to construct an adapter -- simply add edges connecting corresponding vertices in the paths. This is exactly the solution given by the Appendix~\ref{app:skip_fullrank} version of the $\mathsf{SkipTree}$ algorithm when the spanning trees of both auxiliary graphs $G_1$ and $G_2$ are chosen to be the Hamiltonian paths with roots at ends of the paths. The adapter checks will then all have weight four.

For the purpose of demonstrating the flexibility of the $\mathsf{SkipTree}$ algorithm in a general setting, we choose to apply it to randomly generated spanning trees for each graph, which are shown in Fig.~\ref{fig:bb-lp-aux-graphs}. Generally, even for a graph with a Hamiltonian path, it could be desirable for other practical reasons to choose a different spanning tree, such as the presence of defective couplers, or a desire to not increase the degree of certain edge qubits (note, qubits on edges belonging to the spanning tree will have their degree increase by at least one, as the edge qubit will be forced to also participate in at least one adapter check).
\end{remark}


The complete set of $X$ stabilizers for the adapted code consists of the $X$ stabilizers the original codes (some of which are deformed onto auxiliary graph edges $E_1$ and $E_2)$, $X$ cycle checks in both auxiliary graphs, as well as the new checks from the universal adapter (Eqn.~\ref{eqn-adapter-x-checks-BB-LP}).

The merged code described by the adapted auxiliary graph $G_1(V_1,E_1)\sim_{\mathcal{A}}G_2(V_2,E_2)$ contains $25$ logical qubits, which is one less than the logical qubits originally in the two codeblocks (BB$_1$ codeblock contained $6$ logical qubits and the LP$_2$ codeblock contained $20$ logical qubits). This corroborates that a single logical operator has been measured out. It is observed using CPLEX and BP-OSD that the deformed code has code distance $10$, which is the minimum of the code distances of the two input codes, as expected based on Section~\ref{sec:rep_adapter}. The overall parameters of this deformed code formed by fusing the bivariate-bicylic $+$ lifted product code is $[[355,25,10]]$

The adapter itself uses only $w$ extra qubits and $w-1$ checks, here costing $27$ physical qubits for $w=14$.

For the purpose of illustrating typical sparsity of deformed codes in this scheme, we present final check matrices of deformed codes with the arbitrary spanning tree inputs to $\mathsf{SkipTree}$ as supplementary information~\cite{zenodopcmadapter}.

\subsection{BB-BB intra-code joint logical measurements}\label{subsec-eg-intra-code-BB-BB-adapter}

Pauli-based computation on high-rate quantum codes often relies on joint logical qubit measurements within the same codeblock. Consider a $\overline{Z_1 Z_3} $ measurement within the $[[98,6,12]]$ $\mathrm{BB}_1$ code, where the support of the logicals is described in Table.~\ref{tab:eg-logicalZsupports}, and $\overline{Z}_1$ is same as in the inter-code joint-measurement in Sec.~\ref{subsec-eg-multi-code-BB-LP-adapter}. From Table~\ref{tab:eg-logicalZsupports}, we see that logicals $\overline{Z}_1$ and $\overline{Z}_3$ overlap on exactly two physical qubits, indexed $17$ and $35$ here.

There are two possibilities to proceed. The most general setting for measuring logicals with overlapping support is handled using the set-valued port function introduced in Appendix~\ref{app:set-valued_port_function}, where the same physical qubit supports a vertex check for each logical operator supported on it. The second approach is offered by support lemma~\ref{lem:supportlemma} in Appendix~\ref{app:supportlemma}, we know that any pair of $\overline{Z}$ logicals can be made disjoint by cleaning the overlapping support through stabilizers. However, in some cases multiplication with stabilizers could potentially lead to higher weight representatives, which also affects the qubit overhead. Since overlapping logical operators are a common occurrence in high-rate quantum codes, we demonstrate the first approach.

The auxiliary graphs for $\mathrm{BB}_1$ logical $\overline{Z}_1$ and $\overline{Z}_3$ were constructed in~\ref{subsec-eg-individ-auxiliary-graphs}, and the $\overline{Z}_1$ graph used for inter-code joint-measurement in Sec.~\ref{subsec-eg-multi-code-BB-LP-adapter} can be reused now for the purpose of the intra-codeblock joint measurement, which is a benefit of a modular joint measurement scheme. This could also potentially reduce requirements on chip design in hardware such as superconducting qubits.

First, we consider how the $\mathrm{BB}_1$ $\overline{Z}_1$ and $\mathrm{BB}_1$ $\overline{Z}_3$ auxiliary graphs attach to the original LDPC codeblock. Recall that the logical operators do not have disjoint support on the physical qubits in the codeblock, in fact share support on two physical qubits in the original $\mathrm{BB}_1$ code. In practice, the auxiliary graphs to measure $\overline{Z}_1$ and $\overline{Z}_3$ can be simultaneously attached to the original code. When viewed as a single, larger graph, this is atypical as it deviates from the scheme in Sec.~\ref{sec-gauging-meas} with there now being \textit{two} vertex checks from one physical qubit, for all physical qubits in $\supp{\overline{Z}_1} \cap \supp{\overline{Z}_3}$, meaning that the port function $f$ for the combined graph is no longer injective, but instead set-valued. This case falls under the generalization presented in Theorem~\ref{thm:joint_logical_measurement}, which discusses joint-measurements when $\overline{Z}$ logicals are \textit{sparsely overlapping}. In this case, $X$ checks from the original code, which have support on physical qubits in this overlap are deformed onto two separate matchings since there are two vertices corresponding to each shared physical qubit. When taking a product of all vertex checks from both auxiliary graphs, the Pauli $Z$ support on the physical qubits in the intersection is cancelled out by the two different vertex checks, and we obtain a $\overline{Z}$ operator supported only on $\supp(\overline{Z_1 Z_3})$. 

\begin{table}[t]
\centering
\begin{tabular}{|c|c|}
\hline
& \\
\text{Original Code}& $\mathrm{BB}_1$ $[[98,6,12]]$  \\
& \\
\hline
& \\
Auxiliary graph & $G_1(V_1,E_1)\sim_{\mathcal{A}}G_3(V_3,E_3)$ \\
(using adapter) & to measure $\overline{Z}_1 \overline{Z}_3$ \\
\hline \hline
& \\
No. edges (addnl data qubits) & 52 \\
No. vertices (addnl Z checks) & $26$\\
No. cycles (addnl X checks) & $16$ (incl 3 new cycles) \\
& + $11$ (adapter) \\
\hline
& \\
Total additional overhead & $47+35$ (individual aux graphs) \\
& + $23$ (for adapter) \\
&  $=$ $105$ qubits added in total \\
\hline
& \\
Adapter checks & at most weight $8$.\\
& \\
\hline
& \\
Parameters of deformed code &  $[[\,150,5,12\,]]$\\
\hline
& \\
Max stabilizer weights (X,Z) & $8,6$ \\
Max qubit degree & $9$ \\
& \\
\hline
& \\
Total phys qubits (data $+$ & $150$ (data) + $145$ (ancilla) \\
 ancilla) in deformed code & = $295$ \\
 & \\
\hline \hline
\end{tabular}
\caption{\textbf{Intra-codeblock logical measurements, using a universal adapter}. A modular graph to measure $\bar{Z}_1\bar{Z}_3$ in the $[[98,6,12]]$ $\mathrm{BB}_1$ code, composed of individual auxiliary graphs to measure $\bar{Z}_1$ and $\bar{Z}_3$, and the adapter qubits. Parameters and sparsity the deformed codes used for joint logical measurement. For comparison, the $\mathrm{BB}_1$ codeblock has $98$ (data) + $92$ (ancilla for full rank check matrices) = $190$ physical qubits before code deformation. }
\label{tab:params_auxiliary_graphs_BB_intracode}
\end{table}

Following the above description of how the two auxiliary graphs connect to the original code, it now remains to connect the two auxiliary graphs to each other. Connecting the $Z$ vertex checks of the two graphs to an additional set of adapter qubits will also ensure that the product of only $V_1$ or $V_3$ vertex checks cannot be used to learn the $\overline{Z}_1$ or $\overline{Z}_3$ logical measurement value alone.

At this stage, a second feature to note in this example is that the logicals $\overline{Z}_1$ and $\overline{Z}_3$ are of unequal weight. In such a case, the adapter is constructed using a repetition code of distance equal to the minimum of the two logical weights, here, $12$.

Apply $\mathsf{SkipTree}$ algorithm to $G_3(V_3,E_3)$, to obtain output $T_3$, an $11 \times 11$ binary matrix and permutation $P_3$, a $12 \times 12$ binary matrix, such that $T_3 G_3 P_3 = H_{R(12)}$. This provides a relabeling of indices for vertex checks $V_3$, which are labeled $0$ to $11$ (or $1$ to $12$ if $1$-indexing) for vertices in graph $G_3$, see Fig.~\ref{fig:bb-lp-aux-graphs}(c). In Sec.~\ref{subsec-eg-individ-auxiliary-graphs}, we had already obtained $T_1$ and $P_1$ such that $T_1 G_1 P_1 = H_{R(14)}$, which provide a relabeling of indices of vertex checks $V_1$, which are labeled $0$ to $13$ (or $1$ to $14$ if $1$-indexing).

To construct the adapted graph, we will add $\min(w_1,w_2)=12$ more edges $E_a$, by simply adding a new edge to connect each vertex with the same label from either graph. Since $w_1=14 < 12$, we only connect edges to the first $12$ vertices in $G_1$, and therefore $|E_1|=12$. The additional cycles created in this adapted graph correspond to new $X$ cycle checks, which had repetition code-like support. The solutions to $\mathsf{SkipTree}(G_1)$ and $\mathsf{SkipTree}(G_3)$, given by $T_1$, $P_1$ and $T_3$, $P_3$ respectively, simply define parity check matrices between relevant qubit and check sets in the compact Tanner graph notation (Fig.~\ref{fig:joint_measurement}). Since qubit set $E_1$ is chosen to be of cardinality $|E_1| = 12$, for the deformed code Tanner graph we only need the first $12$ rows of $T_1$ instead of all $14$, defined as $T_1'$ (alternatively written as $T_1' = T_1[:\!11]$ in pythonic terms). So in a similar vein to Eqn.~\ref{eqn-adapter-x-checks-BB-LP} we obtain,
\begin{align}
  H_X|_{\supp{(E_1)}} &= T_1' \nonumber \\ H_X|_{\supp{(E_3)}} &= T_3 \nonumber \\
  H_X|_{\supp{(E_{a})}} &= H_{R(12)}
\end{align}
where recall $E_a$ refers to the set of edges just added as part of the intra-code adapter, and $H_{R(12)}$ is the full-rank canonical check matrix of the repetition code of distance $12$.

The additional cost of the adapter alone is $12$ (data qubits) + $11$ (X ancilla) $= 23$ extra physical qubits. In total, $52$ (data qubits) + $26$ (Z check ancilla) + $27$ (X check ancilla) $= 105$ additional physical qubits are required. Starting from the original $[[98,6,12]]$ codeblock, which consists of $190$ physical qubits when using the full rank stabilizer basis, the deformed code for joint $\overline{Z_1 Z_3}$ measurement has parameters $[[150,5,12]]$ and a total of $295$ physical qubits, assuming a separate ancilla qubit for each stabilizer readout as has been consistent throughout this study. The deformed code distance was verified using CPLEX~\cite{ibm-cplex} and BPOSD~\cite{Roffe_LDPC_Python_tools_2022}.

In contrast, a specialized graph to measure $\overline{Z_1 Z_3}$ requires a deformed code with parameters $[[137,5,12]]$, constructed using additional $39$ edge qubits and $40$ new checks, making the combined system a total of $269$ physical qubits. Despite the lower aggregate physical qubit count of a bespoke graph for $\bar{Z}_1\bar{Z}_2$, the clear advantage of a modular graph is the reusability of components for different joint measurements instead of being restricted for a specialized task. The adapter size $\simeq 23$ qubits can be reused independently of the code choice. 

Parity check matrices of original and deformed LDPC codes are available as Supplementary information at Zenodo~\cite{zenodopcmadapter,univ_adapters_gitrepo}.

\section{Discussion} \label{sec:discussion}

We present a $O(d\log^3d)$ space overhead scheme to implement joint Pauli logical measurements, a gateset to implement logical Clifford gates fault-tolerantly on and between arbitrary quantum LDPC codes. The universal adapter scheme also finds use in efficiently connecting different LDPC codes with similarly low overhead, thus forming a useful primitive for multi-code architectures including magic state factories for universal computation. 

Our sparse basis transformation method by means of the $\mathsf{SkipTree}$ algorithm helped give rigorous guarantees on the merged code to be LDPC. Our construction not only provides guarantees in the asymptotic regime, but also boasts favorable constants which can encourage usage for fault-tolerant computation in the near-term: the universal adapter costs only $2d$ extra qubits and checks, and the additional checks are of weight at most $8$ (and likely less in practice).


Auxiliary graph LDPC surgery essentially reduces the problem of code deformation to a graph design problem. There are assumptions placed in graph desiderata and adapters of Sections~\ref{sec-gauging-meas} and \ref{sec:rep_adapter} that are likely not strictly necessary, and we explored cases where some of these assumptions are relaxed in Section~\ref{sec:leveraging_code_properties}, including a relaxation on expansion in some cases, and decongestion for some geometrically local codes.

 We found the notion of relative expansion defined in Definition~\ref{def:expansion} useful for our proofs of code distance and our graph constructions. It is also significantly weaker than requiring the graph have constant global expansion. However, in some cases, even relative expansion may be unnecessary. One simple example of this case is the fault-tolerant lattice surgery between two surface code patches, which can be viewed as a special case of the auxiliary graph surgery framework \cite{williamson2024gauging} (and an example of CSS code surgery \cite{cowtan2024css}), but the relative expansion of the auxiliary graph decreases with the code distance as $O(1/d)$. This seems to be sufficient partly because the surface code does not encode more than one logical qubit. Is there a way to more accurately capture the necessary expansion properties of the auxiliary graph in general?

In future work we also anticipate a scope to attach adapters to separate ports and the potential to define a notion of multi-port relative expansion. 


The results for joint measurements in this work are applicable to logical operators that overlap on at most a constant number of qubits. While this is significantly stronger than the naive scenario of dealing with physically disjoint logical operators, it would be interesting to be able to measure the product of overlapping logical operators in general within the framework we have presented.

Decoding through the process of auxiliary graph surgery involves decoding the merged code(s) created in the process. In Ref.~\cite{cross2024linear}, a modular decoder is introduced which can split this decoding problem (for one type of Pauli error) into two spatially separate pieces -- an LDPC decoding problem on the original code, and a matching problem on the auxiliary graph. We expect the same modular decoding can be done for auxiliary graph LDPC surgery, but splitting the decoding problem further when measuring a large product of logical operators would seem to be necessary for fast decoding of LDPC surgery in practice, unless this situation can otherwise be avoided.

As a byproduct of the discussion on targeted $\overline{\cnotgate}$ logical gates using the toric code adapter, we also provided an alternate proof of Dehn twists on a toric code using the simple language of linear algebra instead of geometric topology \cite{koenig2010turaev,zhu2020dehn}, which could perhaps be of independent interest. We anticipate the flexibility of the universal adapters developed in this paper to facilitate similar constructions that enable known gates from a richer variety of codes with similar distance to be connected to an efficient quantum LDPC memory.



\section*{Acknowledgements}
The authors thank Andrew Cross, Sunny He, Anirudh Krishna and Dominic Williamson for inspiring discussions. TY also thanks Sunny for suggesting the use of the term universal adapter during a discussion in September 2024 and for pointing out that actually $\log^3d$ layers are sufficient for decongestion. This work was done in part while ES was visiting the Simons Institute for the Theory of Computing, Berkeley, supported by DOE QSA grant \#FP00010905. Research at Perimeter Institute
is supported in part by the Government of Canada through
the Department of Innovation, Science and Economic Development Canada and by the Province of Ontario through the Ministry of Colleges and Universities.

\bibliographystyle{unsrt}
\bibliography{refs}

\appendix


\section*{Appendices A-G}\label{app:header}
\addcontentsline{toc}{section}{\nameref{app:header}}
\addtocontents{toc}{\protect\setcounter{tocdepth}{0}}

\newcommand{\nocontentsline}[3]{}
\newcommand{\tocless}[2]{\bgroup\let\addcontentsline=\nocontentsline#1{#2}\egroup}

\tocless\section{Proofs for Theorem~\ref{thm:graph_desiderata}}\label{app:desiderata_proofs}

\tocless\subsection{Codespace of the deformed code in auxiliary graph surgery}

Here we show that desideratum 0 in Theorem~\ref{thm:graph_desiderata} is sufficient for the gauging logical measurement to measure the desired operator $\overline{Z}=Z(\mathcal{L})$. It is also necessary without loss of generality, as we remark below.

\begin{lemma}\label{lem:deformed_codespace}
Provided $\mathcal{G}$ is connected, the deformed code in auxiliary graph surgery (see Fig.~\ref{fig:gauging_measurement}) has one less logical qubit than the original code and $\overline{Z}=Z(\mathcal{L})$ is in the stabilizer group of the deformed code.
\end{lemma}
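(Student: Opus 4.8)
The plan is to establish two facts separately: that $\overline{Z}=Z(\mathcal{L})$ is a product of the new checks (hence lies in the deformed stabilizer group), and that the deformed code has exactly $k-1$ logical qubits, where $k$ is the number of logical qubits of the original code. For the first, I would compute $\prod_{v\in\mathcal{V}}A_v$ directly from Eq.~\eqref{eq:vertex_checks}. Each edge qubit lies in exactly the two vertex checks of its endpoints, so its $Z(e)$ factors cancel in the product; since the port function $f$ is injective, each $q\in\mathcal{L}$ contributes its $Z(q)$ factor to exactly one vertex check, namely $A_{f(q)}$. Hence $\prod_{v\in\mathcal{V}}A_v=\prod_{q\in\mathcal{L}}Z(q)=Z(\mathcal{L})=\overline{Z}$, exhibiting $\overline{Z}$ as a product of generators of the (well-defined, by construction) deformed stabilizer group. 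Note this step uses only injectivity of $f$, not connectivity.

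For the logical-qubit count I would use $k_{\mathrm{def}}=N_{\mathrm{def}}-\mathrm{rank}(\text{deformed stabilizers})$. Taking the original code to be presented by $r=n-k$ independent checks (without loss of generality for this count), the deformed code has $N_{\mathrm{def}}=n+m$ qubits with $m=|\mathcal{E}|$, and its generators are the $r$ deformed original checks, the $|\mathcal{V}|$ vertex checks $A_v$, and one cycle check $B_c$ per element of a cycle basis. Because $\mathcal{G}$ is connected, $\mathrm{rank}(N)=m-|\mathcal{V}|+1$, so there are precisely $r+|\mathcal{V}|+(m-|\mathcal{V}|+1)=r+m+1$ generators; it therefore suffices to show these $r+m+1$ symplectic vectors are $\mathbb{F}_2$-linearly independent, which yields $k_{\mathrm{def}}=(n+m)-(r+m+1)=k-1$.

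To prove independence, suppose a subcollection of the generators sums (in the symplectic $\mathbb{F}_2$-space) to the identity; say it uses a vertex set $S\subseteq\mathcal{V}$ of vertex checks together with some original and some cycle checks. Restricting to the $n$ original qubits kills the cycle checks and all $X$-on-edge contributions; $A_v$ restricts to $Z(q)$ when $v=f(q)$ and to the identity otherwise; each deformed original check restricts to the corresponding original check. So the chosen original checks multiply to $Z(\mathcal{L}')$ on the original qubits, where $\mathcal{L}'=\{q\in\mathcal{L}:f(q)\in S\}$. Separately, the $Z$-component on the edge qubits receives contributions only from the $A_v$'s (via incidence), so its vanishing forces every edge to have an even number of endpoints in $S$, i.e.\ $S$ is a union of connected components of $\mathcal{G}$; by connectivity $S\in\{\emptyset,\mathcal{V}\}$. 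If $S=\emptyset$ then $\mathcal{L}'=\emptyset$, so the chosen original checks multiply to the identity and — being independent — none were used; the equation then involves only cycle checks, which are independent since $N$ has full rank, so none were used either. If $S=\mathcal{V}$ then injectivity of $f$ gives $\mathcal{L}'=\mathcal{L}$, whence $Z(\mathcal{L})=\overline{Z}$ would equal a product of original-code checks, contradicting that $\overline{Z}$ is a nontrivial logical operator. Hence the subcollection is empty.

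The main obstacle is the bookkeeping around the deformed original checks, which carry extra $X$-type support on the edge qubits (the matrix $M$ in Fig.~\ref{fig:gauging_measurement}): one must rule out this support conspiring with the cycle checks $B_c$. The device that makes it routine is to argue in two orthogonal slices — the restriction to original qubits and the $Z$-part on edge qubits — in both of which the $M$-part is invisible. Connectivity of $\mathcal{G}$ is exactly what collapses the edge-qubit constraint to $S\in\{\emptyset,\mathcal{V}\}$; it is also the reason the conclusion genuinely needs desideratum~0, since for a disconnected $\mathcal{G}$ the product of the $A_v$ over a single component would be a proper $Z$-type ``sub-operator'' of $\overline{Z}$, so the scheme would measure more than just $\overline{Z}$ and could remove more than one logical qubit.
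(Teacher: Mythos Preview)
Your proof is correct and takes a genuinely different route from the paper's. The paper argues operationally via stabilizer update rules: starting from the original code tensored with $|+\rangle^{\otimes|\mathcal{E}|}$, it tracks how the normalizer evolves as the vertex and cycle checks are measured, explicitly exhibiting the $k-1$ surviving logical pairs as the updated $\overline{X}^{(i)},\overline{Z}^{(i)}$ for $i\ge1$ (each picks up edge support along a perfect matching, which exists because $\mathcal{G}$ is connected). You instead compute the dimension directly, showing that the $r+|\mathcal{V}|+(m-|\mathcal{V}|+1)$ generators are $\mathbb{F}_2$-independent via a clean two-slice argument (restriction to original qubits; $Z$-part on edge qubits). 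Both proofs establish $\overline{Z}\in\mathcal{S}_{\mathrm{def}}$ identically, as the product of all vertex checks. The paper's approach has the advantage of naming the surviving logical operators and mirroring the physical protocol; yours is more self-contained (no appeal to stabilizer update formalism) and isolates the role of connectivity more sharply --- it enters exactly at the step $S\in\{\emptyset,\mathcal{V}\}$, and your closing remark that a disconnected $\mathcal{G}$ would produce a proper $Z$-type sub-operator of $\overline{Z}$ is precisely the content of the paper's remark following its proof.
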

\begin{proof}
Let $\overline{X}^{(i)},\overline{Z}^{(i)}$ for $i=0,1,\dots,k-1$ denote a symplectic basis of logical operators for the original code. That is, $\overline{X}^{(i)}$ and $\overline{Z}^{(j)}$ anticommute if and only if $i=j$. Note these do not have to be $X$ and $Z$ type operators as we are considering an arbitrary (potentially non-CSS) stabilizer code. We let $\overline{Z}^{(0)}$ be the operator $\overline{Z}=Z(\mathcal{L})$ being measured.

We apply stabilizer update rules \cite{gottesman1997Stabilizer} to a basis of the normalizer of the original stabilizer group. Initially, this is the normalizer group of the original code as well as single qubit operators $\{X(e)\}_{e\in\mathcal{E}}$ on each edge qubit (since the edges qubits are initialized in $\ket{+}^{\otimes|\mathcal{E}|}$ product state). We measure all stabilizers of the deformed code Fig.~\ref{fig:gauging_measurement}.

Suppose $\Lambda$ is one of the stabilizers of the original code or one of the logicals $\overline{X}^{(i)},\overline{Z}^{(i)}$ for $i=1,2,\dots,k-1$. These all commute with $\overline{Z}$. If the $X$-type support of $\Lambda$ on $\mathcal{L}$ is $\mathcal{L}_\Lambda$, then $|\mathcal{L}_\Lambda|$ is even, and $\Lambda$ anticommutes with an even number of checks in $\mathcal{V}$, exactly those on vertices $f(\mathcal{L}_\Lambda)$. Find a perfect matching $\mu(\mathcal{L}_\Lambda)$ of those vertices in the graph $G$ (which exists because $G$ is connected). We see we can update $\Lambda$ to commute with all $\mathcal{V}$ checks by multiplying by the appropriate combination of initial edge stabilizers $X(e)$, namely,
\begin{equation}
\Lambda\rightarrow\Lambda\prod_{e\in\mu(\mathcal{L}_\Lambda)}X(e).
\end{equation}
Thus, by updating $\overline{X}^{(i)},\overline{Z}^{(i)}$ for $i=1,2,\dots,k-1$ this way, there is a symplectic basis for $k-1$ logical qubits in the deformed code. 

The original single-qubit stabilizers $X(e)$ on edge qubits must be updated to form products $X(c\in\mathcal{E})$ that commute with all vertex checks $\mathcal{V}$. This necessitates they form cycles in the graph, i.e.~$c\,G=0$. Therefore, the new $X$-type stabilizers $\mathcal{U}$, defined as elements of a suitable cycle basis of graph $G$, capture a complete basis of these combinations.

Moreover, the product of all checks in $\mathcal{V}$, i.e.~$\mathcal{H}_Z(\mathcal{V})$, equals $\overline{Z}=Z(\mathcal{L})$. This product of checks has no support on edge qubits $\mathcal{E}$ because $\vec1G^\top=0$. Thus, $\overline{Z}$ is in the stabilizer group of the deformed code.
\end{proof}
\noindent We remark that desiderata 0 is technically a little too strong to reach the conclusion of this lemma and thus for the claim in Theorem~\ref{thm:graph_desiderata}. All we really need is $f(\mathcal{L})$ to be a connected subset of vertices in $\mathcal{G}$, i.e.~it is possible to get between any two vertices of $f(\mathcal{L})$ via a path in $\mathcal{G}$. However, this places $f(\mathcal{L})$ entirely within one connected component of $\mathcal{G}$ and other connected components can easily be shown to represent stabilizer states separable from the deformed code. Thus, we consider $\mathcal{G}$ to be connected without loss of generality.

\tocless\subsection{Code distance of the deformed code in auxiliary graph surgery}

Here we show that desideratum 4 of Theorem~\ref{thm:graph_desiderata} is sufficient to guarantee the deformed code preserves the code distance.

\begin{lemma}\label{lem:deformed_code_distance}
Provided $\mathcal{G}$ and an injective port function $f$ are chosen such that $\beta_d(\mathcal{G},\mathrm{im}\;f)\ge1$ where $d$ is the code distance of the original code, the deformed code in auxiliary graph surgery (see Fig.~\ref{fig:gauging_measurement}) has code distance at least $d$.
\end{lemma}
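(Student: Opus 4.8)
The plan is to take any nontrivial logical operator of the deformed code, show it must have weight at least $d$ by mapping a low-weight representative onto a nontrivial logical of the original code, and use relative expansion to control the ``price'' paid on the auxiliary graph. First I would set up coordinates matching Fig.~\ref{fig:gauging_measurement}: a Pauli operator of the deformed code is specified by its $X$- and $Z$-support on the original code qubits together with its $X$-support $w\in\mathbb{F}_2^{|\mathcal{E}|}$ on the edge qubits (there is no $Z$-support on edge qubits in a minimal-weight representative, since $Z(e)$ for a leaf-ish edge could only help by anticommuting with a cycle check $B_c$, and one argues these can be removed, or one simply notes the $Z$-type edge support is itself a codeword of the classical code with check matrix $N$ and can be absorbed). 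The key structural fact is: commuting with all vertex checks $A_v$ forces $w G^\top = F^\top y^\top$-type constraints, i.e.\ the edge support $w$ must ``match up'' the parity defect that the original-code $X$-support $x$ creates at the port vertices. Concretely, if $x\in\mathbb{F}_2^{\mathcal{L}}$ is the $X$-support restricted to $\mathcal{L}$ (extended by $0$ elsewhere), then $wG^\top = $ (indicator of $f(x)$) as vectors on $\mathcal{V}$, so that $|wG^\top| = |x|$ since $f$ is injective.

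Next I would do the weight accounting. Let $\overline{W}$ be a nontrivial logical of the deformed code and write its weight as $|\overline{W}| = |\overline{W}|_{\text{orig}} + |w|$ where $|\overline{W}|_{\text{orig}}$ is the weight on original code qubits and $|w|$ the weight on edge qubits. Multiplying $\overline{W}$ by the deformed stabilizers that came from original checks (the rows with $[M|0]$ support) changes only the edge support, and multiplying by the product $\mathcal{H}_Z(\vec1\in\mathcal{V})=\overline{Z}=Z(\mathcal{L})$ is allowed since this is now a stabilizer (Lemma~\ref{lem:deformed_codespace}); neither affects whether $\overline{W}$ is nontrivial. The point is: either the original-code part of $\overline{W}$, after stripping edge support, is already a nontrivial logical of the original code — in which case $|\overline{W}|_{\text{orig}}\ge d$ and we are done — or it is trivial in the original code. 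In the latter case I claim $\overline{W}$ must have enough edge support to compensate. Here is where relative expansion enters: apply Definition~\ref{def:rel_expansion} with $\mathcal{U}=\mathrm{im}\,f$ to get $|wG^\top|\ge \beta_d(\mathcal{G},\mathrm{im}\,f)\min(d,|u|,|\mathrm{im}\,f|-|u|)\ge \min(d,|x|,|\mathrm{im}\,f|-|x|)$ where $u$ is $w$ restricted to — wait, rather $u$ is the port-restriction of the *vertex* indicator; the cleaner statement is $|w| \ge |wG^\top|/(\max\text{degree}) $ is too lossy, so instead I would use expansion the other direction: the number of edges $|w|$ needed to route a matching among $|x|$ port vertices is at least $\beta_d \cdot \min(d,|x|, |\mathrm{im}f|-|x|)$ divided by nothing — expansion gives $|wG^\top|\ge \beta\min(\dots)$, and $|wG^\top|$ counts vertices where $w$ has odd incidence, which equals $|x|$. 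So the useful inequality is really on $|w|$ itself via an isoperimetric argument: a $\mathbb{F}_2$-chain $w$ whose boundary is the $|x|$ port vertices has $|w|\ge$ (isoperimetric bound). I would phrase this by noting $\min(d,|x|,|\mathrm{im}f|-|x|) \le |wG^\top| = |x|$ trivially, so the substantive bound is $|w| \ge \min(d,|x|,|\mathrm{im}f|-|x|)$ coming from $\beta_d(\mathcal{G},\mathrm{im}f)\ge1$ applied to the vertex-set dual formulation.

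The cleanest route, and the one I would actually write: show that if the original-code part is trivial, then $\overline{W}$ restricted to $\mathcal{L}$ equals (after absorbing the stabilizer $Z(\mathcal{L})$ if needed) something whose $X$-support $x$ on $\mathcal{L}$ satisfies $0<|x|$, and moreover $x$ can be taken with $|x|\le |\mathrm{im}f|/2$ by possibly multiplying by $\overline{Z}$; then $|\overline{W}| \ge |x| + |w| \ge |x| + \beta_d(\mathcal{G},\mathrm{im}f)\min(d,|x|) \ge |x| + \min(d,|x|) \ge d$, using $\beta_d\ge1$ and that $|x|+\min(d,|x|) \ge d$ for all $|x|\ge 1$ (if $|x|\ge d$ this is clear; if $|x| < d$ it reads $2|x| \ge d$ — hmm, that fails for small $|x|$). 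So I need the sharper version where expansion is applied to the *vertex* set $v = $ port-indicator of $x$: $|vG^\top| \ge \beta_d \min(d,|v|,|\mathrm{im}f|-|v|)$, and $|vG^\top|$ is the number of edges incident to an odd number of selected vertices, which is $\le |w|$ is not quite it either. I expect this bookkeeping — correctly identifying which set plays the role of $v$ in Definition~\ref{def:rel_expansion} so that the isoperimetric inequality directly lower-bounds $|w|$, and handling the case $0<|x|<d$ so the bound closes to exactly $d$ — to be the main obstacle; the resolution is to observe that a nontrivial deformed logical with trivial original-code part but nonzero port-support $x$ would, together with the matching edge support, give a nontrivial *graph* cycle-or-path whose existence contradicts $\beta_d(\mathcal{G},\mathrm{im}f)\ge1$ unless $|x| \ge d$ or its complement in $\mathrm{im}f$ does, at which point $|\overline{W}|_{\text{orig}}$ or the analysis forces weight $\ge d$ anyway. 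I would lean on the parallel argument already sketched in \cite{williamson2024gauging} and in the proof of Lemma~\ref{lem:adapted_expansion}, adapting it to the relative (rather than global) expansion hypothesis, which is precisely the novelty being claimed.
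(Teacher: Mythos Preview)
Your proposal has the right overall shape but a genuine gap at the core step: you apply expansion to the wrong object and in the wrong direction, which is exactly why your bound $|x|+\min(d,|x|)\ge d$ fails to close.

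In the paper's argument, after cleaning $Z$-support off the edge qubits (your handwave here is fine; it is done by multiplying by vertex checks since $Nu^\top=0$ forces $u=vG^\top$), the original-code restriction $\overline{\Lambda}$ of the deformed logical is \emph{always} a nontrivial logical of the original code. There is no ``trivial original-code part'' case: if $\overline{\Lambda}$ were a stabilizer, the remaining edge-only $X$-support would be a cycle and hence a product of cycle checks, contradicting nontriviality. So your case split is unnecessary, and the real work is elsewhere.

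The real work is showing that $\overline{\Lambda}$ cannot be shrunk below weight $d$ by multiplying by vertex checks $\mathcal{H}_Z(v\in\mathcal{V})$. Here $v$ is a \emph{vertex} indicator, and multiplying by those checks adds $Z$-support on exactly $|vG^\top|$ edge qubits. Relative expansion applied to $v$ (not to your edge vector $w$) gives
\[
|vG^\top|\ge \min\bigl(d,\ |w|,\ |\mathcal{L}^*|-|w|\bigr),
\]
where $w$ is the restriction of $v$ to the port $f(\mathcal{L}^*)$ with $\mathcal{L}^*=\mathcal{L}\setminus\supp\Lambda_X$. This edge cost is \emph{added} to whatever weight remains on the original code. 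The three-way minimum is then closed against the two inequalities $d\le|\overline{\Lambda}|$ and $d\le|\overline{\Lambda}\,\overline{Z}|$ (both hold since $\overline{\Lambda}$ is nontrivial), via a partition of $\mathcal{L}^*$ into four regions depending on membership in $\supp\Lambda_Z$ and in $f^{-1}(\supp w)$. Each branch of the $\min$ pairs with one of these two distance bounds to give $\ge d$.

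Your attempt instead treats $w$ as the edge support of the logical and tries to lower-bound $|w|$ via expansion; but Definition~\ref{def:rel_expansion} gives no such bound (it bounds the boundary of a vertex set, not the size of an edge set with prescribed boundary). That inversion is why your arithmetic stalls at $2|x|\ge d$. The fix is to stop trying to bound existing edge support and instead bound the edge \emph{cost incurred} when an adversary multiplies by vertex checks in an attempt to reduce weight.
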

\begin{proof}
Similar proofs exist in \cite{cross2024linear,williamson2024gauging}. Our new contribution is to use relative expansion.

Consider an arbitrary nontrivial logical operator of the deformed code $\overline{L}=L_XL_Z$ where $L_X$ is an $X$-type Pauli and $L_Z$ a $Z$-type Pauli. Let $\overline{L}\equiv\overline{L}'$ indicate the equivalence of $\overline{L}$ and $\overline{L}'$ modulo the checks of the deformed code. We start by finding such an equivalent $\overline{L}'$ with convenient qubit support. Then, we show that the weight of $\overline{L}'$ cannot be reduced below $d$, the original code's distance, by multiplying by any combination of the deformed code's checks. Thus, the weight of $\overline{L}$ cannot be reduced in this way either.

Consider the restriction of $L_Z$ to the qubits $\mathcal{E}$, denoted $Z(u\in\mathcal{E})$. Because $\overline{L}$ commutes with all checks in $\mathcal{U}$, we must have $Nu^\top=0$. By construction of $N$, the column \textit{nullspace} of $N$ equals the column space of $G$, i.e.~$NG=0$. This also means the row space of $N$ equals the row nullspace of $G$. Thus, there exists $v$ such that $u^\top=Gv^\top$ and $u=vG^\top$. As a result, $L_Z$ can be cleaned from $\mathcal{E}$ by multiplying by appropriate checks from $\mathcal{V}$. Explicitly, 
\begin{equation}
\overline{L}\equiv\overline{L}'=\overline{L}\mathcal{H}_Z(v\in\mathcal{V})
\end{equation}
and $\overline{L}'$ has no $Z$-type support on $\mathcal{E}$.

Now the restriction of $\overline{L}'$ to the qubits of the original code must commute with all checks of the original code and is therefore a logical operator $\overline{\Lambda}$ of the original code. That is,
\begin{equation}
\overline{L}'=\overline{\Lambda}X(c\in\mathcal{E})
\end{equation}
for some vector $c$.

Suppose $\overline{\Lambda}$ is a trivial logical operator of the original code, i.e.~a product of original code checks. Then, multiplying by those checks to remove it from the original code, we obtain
\begin{equation}
\overline{L}'\equiv X(c'\in\mathcal{E})
\end{equation}
Of course, this operator must commute with all vertex $Z$ checks $\mathcal{V}$, so $cG=0$. This implies $\overline{L}'$ is a cycle in the graph, so it will be a product of the checks $\mathcal{U}$. This is a contradiction with our assumption that $\overline{L}$ is a nontrivial logical operator of the deformed code.

Therefore, $\overline{\Lambda}$ is a nontrivial logical operator of the original code, and the weight of $\overline{L}'=\overline{\Lambda}X(c\in\mathcal{E})$ is at least $d$ even if the $X$-type support on $\mathcal{E}$ is ignored. To show $\overline{L}'$ cannot be reduced in weight by multiplying by checks from $\mathcal{V}$, we show that $\overline{\Lambda}$ cannot be reduced in weight this way. 

Suppose $\overline{\Lambda}=\Lambda_X\Lambda_Z$ and consider $|\overline{\Lambda}\mathcal{H}_Z(v\in\mathcal{V})|$ for any choice of vector $v$ indicating a set of vertices $\mathcal{V}$. We define the sets of qubits $\mathcal{L}^*:=\mathcal{L}\setminus\supp{\Lambda_X}$ and $\mathcal{R}:=\supp{\overline{\Lambda}}\setminus\mathcal{L}^*$. We also let $w$ be the restriction of $v$ to the vertices $f(\mathcal{L}^*)\subseteq\mathcal{V}$. Thus, $\mathcal{W}:=f^{-1}(\supp{w})$ is a subset of qubits in $\mathcal{L}^*$. The intersection profile of $\mathcal{L}^*$, $\mathcal{W}$, and $\supp{\Lambda_Z}$ defines four subsets $\alpha,\beta,\gamma,\delta\subseteq\mathcal{L}^*$. These sets are illustrated in Fig.~\ref{fig:distance_proof_sets}. From the figure, it is clear that 
\begin{align}\label{eq:d_lower_1}
d&\le|\overline{\Lambda}|=|\gamma|+|\delta|+|\mathcal{R}|,\\
\label{eq:d_lower_2}
d&\le|\overline{\Lambda}\;\overline{Z}|=|\alpha|+|\beta|+|\mathcal{R}|.
\end{align}

\begin{figure}[t]
    \centering
    \includegraphics[width=0.45\textwidth]{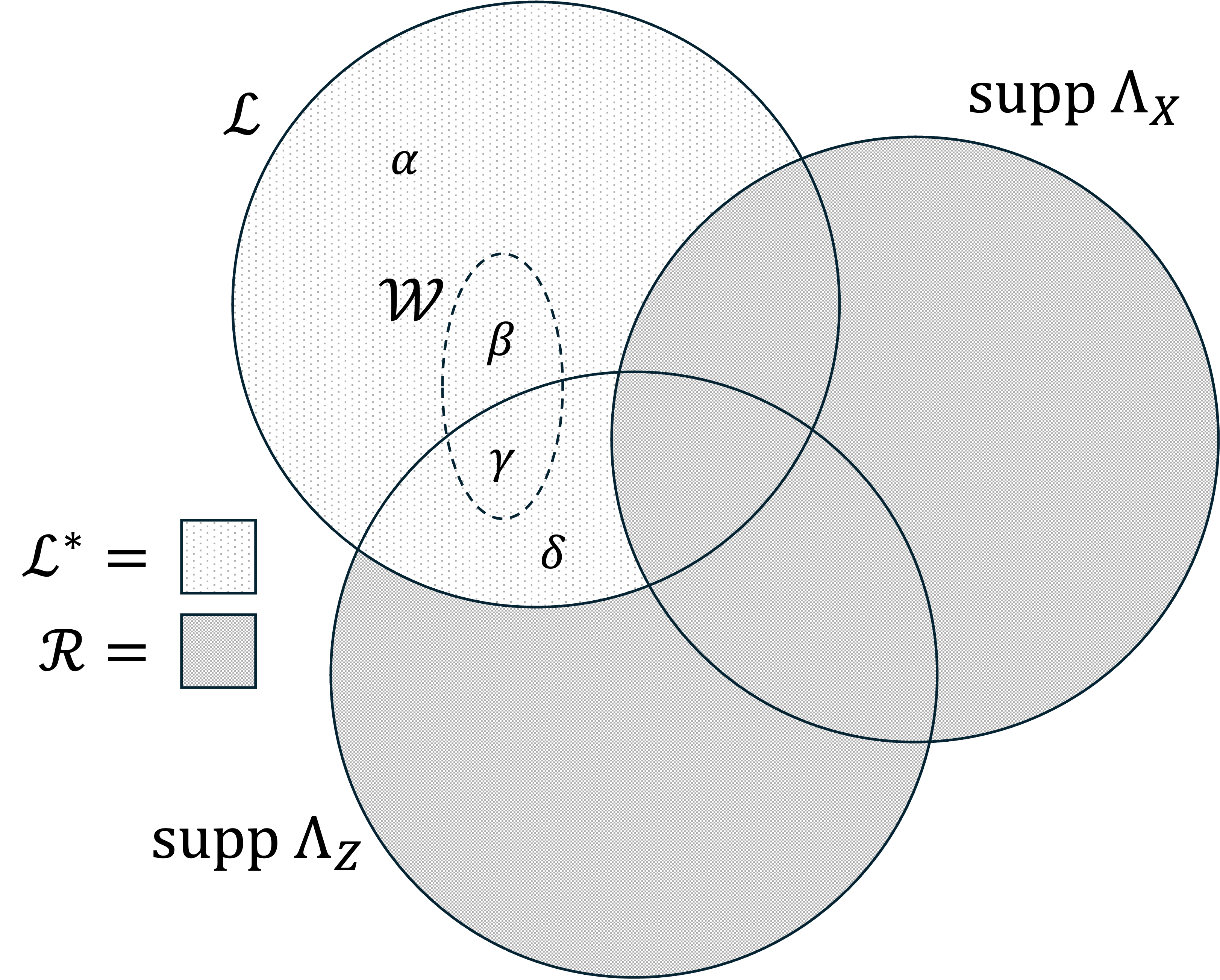}
    \caption{A summary of the qubit sets used in the final part of the proof of Lemma~\ref{lem:deformed_code_distance}, the deformed code distance in auxiliary graph surgery. These are all subsets of the qubits of the original code. Here $\mathcal{L}$ supports the logical operator $\overline{Z}=Z(\mathcal{L})$ being measured. Another logical operator $\overline{\Lambda}=\Lambda_X\Lambda_Z$ generally intersects $\overline{Z}$. The dashed region within $\mathcal{L}^*$ represents $\mathcal{W}$, which defines the four subregions $\alpha,\beta,\gamma,\delta\subseteq\mathcal{L}^*$.}
    \label{fig:distance_proof_sets}
\end{figure}

By assumption the relative expansion $\beta_d(\mathcal{G},f(\mathcal{L}))$ is at least $1$. Because  $f(\mathcal{L^*})\subseteq f(\mathcal{L})$, Lemma~\ref{lem:relative_expansion_relation} implies $\beta_d(\mathcal{G},f(\mathcal{L^*}))\ge\beta_d(\mathcal{G},f(\mathcal{L}))\ge1$. 

We now calculate
\begin{align}
|\overline{\Lambda}\mathcal{H}_Z(v\in\mathcal{V})|&=|\beta|+|\delta|+|\mathcal{R}|+|vG^\top| \nonumber \\
&\ge|\beta|+|\delta|+|\mathcal{R}|+\min(d,|w|,|\mathcal{L}^*|-|w|) \nonumber \\
&=|\beta|+|\delta|+|\mathcal{R}|+\min(d,|\mathcal{W}|,|\mathcal{L}^*|-|\mathcal{W}|) \nonumber \\
&=|\beta|+|\delta|+|\mathcal{R}|+\min(d,|\beta|+|\gamma|,|\alpha|+|\delta|)
\end{align}
where the inequality makes use of relative expansion on $f(\mathcal{L}^*)$. If the minimum evaluates to $d$, then it is immediate that $|\overline{\Lambda}\mathcal{H}_Z(v\in\mathcal{V})|\ge d$. If the minimum evaluates to $|\beta|+|\gamma|$, use Eq.~\eqref{eq:d_lower_1} to conclude $|\overline{\Lambda}\mathcal{H}_Z(v\in\mathcal{V})|\ge d$. If the minimum evaluates to $|\alpha|+|\delta|$, use Eq.~\eqref{eq:d_lower_2} to conclude similarly.
\end{proof}

\tocless\section{A set-valued port function and the general proof of Theorem~\ref{thm:joint_logical_measurement}}\label{app:set-valued_port_function}
Here we generalize the main text to allow the port function to be a set-valued (sometimes called multi-valued) function. Let $\mathcal{Q}$ denote the set of qubits of the original code, and suppose we have auxiliary graph $\mathcal{G}=(\mathcal{V},\mathcal{E})$. The set-valued port function is denoted $f:\mathcal{Q}\rightarrow2^\mathcal{V}$, where $2^\mathcal{V}$ is the power-set of $\mathcal{V}$. Thus, each qubit $q\in\mathcal{Q}$ is connected to a set of vertices $f(q)$, possibly empty. We place two constraints on $f$.
\begin{itemize}
\item For simplicity, we demand that $f$ be injective in the sense that for all $q\neq q'$, $f(q)\cap f(q')=\emptyset$. Thus, each vertex is connected to at most one qubit.  
\item If $\overline{Z}=Z(\mathcal{L})$ is the logical we wish to measure, we require $|f(q)|$ be odd for all $q\in\mathcal{L}$ and even for all $q\in\mathcal{Q}\setminus\mathcal{L}$. This is explained next.
\end{itemize}

Vertex checks are defined almost identically to the main text, Eq.~\eqref{eq:vertex_checks}, with only a notational difference because $f(q)$ is now a set.
\begin{equation}
A_v=\bigg\{
\begin{array}{ll}
Z(q)\prod_{e\ni v}Z(e),& \exists q\in\mathcal{L}, f(q)\ni v\\
\prod_{e\ni v}Z(e),& \mathrm{otherwise}
\end{array}
\end{equation}
The product of all vertex checks is
\begin{equation}
\prod_{v\in\mathcal{V}}A_v=\prod_{q\in\mathcal{Q}}Z(q)^{|f(q)|}
\end{equation}
To measure a logical operator $\overline{Z}=Z(\mathcal{L})$ using gauging, it must equal the product of vertex checks $\overline{Z}=\prod_{v\in\mathcal{V}}A_v$. This leads to the second constraint on the set-valued $f$ we stated above.

We often apply $f$ to a subset of $\mathcal{Q}'\subseteq\mathcal{Q}$ by defining $f(\mathcal{Q}')=\bigcup_{q\in\mathcal{Q}'}f(q)$. Suppose a stabilizer $s$ in the original code has $X$-type support on qubits $\mathcal{L}_s$. Then, because $s$ must commute with the logical $Z(\mathcal{L})$, $|\mathcal{L}_s\cap\mathcal{L}|$ must be even, which in turn implies $|f(\mathcal{L}_s)|$ is even. This means we can create a perfect matching $\mu(\mathcal{L}_s)\subseteq\mathcal{E}$ of vertices $f(\mathcal{L}_s)$ in the auxiliary graph. With this more general definition of $\mu(\mathcal{L}_s)$, the stabilizer $s$ is deformed onto the edge qubits just as in the main text
\begin{equation}
s\rightarrow s\prod_{e\in\mu(\mathcal{L}_{s})}X(e).
\end{equation}

Finally, we define the image of $f$ to be $\mathrm{im}f=f(\mathcal{Q})\subseteq\mathcal{V}$. Just as in the main text, we call this subset of vertices the port.

The more general presentation of $f$ as a set-valued function and the generalized definitions of perfect matching $\mu$ and image $\mathrm{im}f$ leave the statement of graph desiderata in Theorem~\ref{thm:graph_desiderata} largely unchanged. However, the degree of a qubit $q\in\mathcal{Q}$ in the original code now is allowed to increase by more than one. It increases in fact by $|f(q)|$. Thus, we modify desideratum 1 to consist of two parts, the first being the same as before and the second expressing this qubit degree constraint.
\begin{enumerate}[label={1\alph*.}]
\item $\mathcal{G}$ has $O(1)$ vertex degree.
\item For all $q\in\mathcal{Q}$, $|f(q)|=O(1)$.
\end{enumerate}

Desiderata 0-3 including the new desideratum 1 are easily proven using a set-valued $f$. However, desideratum 4 is not obviously true. We provide a proof here, generalizing Lemma~\ref{lem:deformed_code_distance} in Appendix~\ref{app:desiderata_proofs}.

\begin{lemma}\label{lem:generalized_4}
Provided $\mathcal{G}$ and the set-valued port function $f$ are chosen such that $\beta_d(\mathcal{G},\mathrm{im}\;f)\ge1$ where $d$ is the code distance of the original code, the deformed code in auxiliary graph surgery (see Fig.~\ref{fig:gauging_measurement}) has code distance at least $d$.
\end{lemma}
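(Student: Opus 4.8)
The plan is to follow the structure of the proof of Lemma~\ref{lem:deformed_code_distance} essentially verbatim, pointing out the few places where the set-valued nature of $f$ requires care. The key observation is that the only place injectivity of $f$ was used in the original proof was to identify $|f(\mathcal{L}_s)|$ with $|\mathcal{L}_s\cap\mathcal{L}|$ and, more importantly, to identify the restriction $w$ of a vertex-subset vector $v$ to the port $f(\mathcal{L}^*)$ with a corresponding subset $\mathcal{W}$ of qubits in $\mathcal{L}^*$. With a set-valued $f$, a single qubit may map to several vertices, so this identification no longer holds; the fix is to work on the vertex side throughout and only translate the expansion bound into a qubit-weight statement at the very end.

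First I would set up exactly as before: take a nontrivial logical $\overline{L}=L_XL_Z$ of the deformed code, clean its $Z$-support off the edge qubits $\mathcal{E}$ using the relation $\mathrm{rowspace}(N)=\mathrm{rownull}(G)$ to get $\overline{L}\equiv\overline{L}'=\overline{\Lambda}X(c\in\mathcal{E})$ with $\overline{\Lambda}$ a logical operator of the original code. The argument that $\overline{\Lambda}$ must be nontrivial (else $\overline{L}'$ reduces to a product of cycle checks $\mathcal{U}$, contradicting nontriviality of $\overline{L}$) is unchanged and uses only that $N$ is a complete cycle basis. So it remains to show that $|\overline{\Lambda}\,\mathcal{H}_Z(v\in\mathcal{V})|\ge d$ on the original-code qubits alone, for every $v\in\mathbb{F}_2^{|\mathcal{V}|}$.

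Here is where I would diverge. Write $\overline{\Lambda}=\Lambda_X\Lambda_Z$ and set $\mathcal{L}^*:=\mathcal{L}\setminus\supp\Lambda_X$, $\mathcal{R}:=\supp\overline{\Lambda}\setminus\mathcal{L}^*$, exactly as before. Multiplying $\overline{\Lambda}$ by $\mathcal{H}_Z(v\in\mathcal{V})$ toggles the $Z$-support on each qubit $q$ by $\bigl(\sum_{u\in f(q)} v_u\bigr)\bmod 2$; only qubits in $\mathcal{L}^*$ (those not already carrying $X$-type support from $\Lambda_X$ and not being vertex-check-connected to qubits outside $\mathcal{L}$) can have this support changed within $\mathcal{L}$, and qubits outside $\mathcal{L}$ are never connected to vertex checks. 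Let $w$ be the restriction of $v$ to the port vertices $f(\mathcal{L}^*)\subseteq\mathcal{V}$. Relative expansion (via Lemma~\ref{lem:relative_expansion_relation}, since $f(\mathcal{L}^*)\subseteq\mathrm{im}f$, giving $\beta_d(\mathcal{G},f(\mathcal{L}^*))\ge1$) yields
\begin{equation}
|vG^\top|\ge\min\bigl(d,\,|w|,\,|f(\mathcal{L}^*)|-|w|\bigr).
\end{equation}
Now the crucial combinatorial point: partition $\mathcal{L}^*$ into $\alpha,\beta,\gamma,\delta$ exactly as in Fig.~\ref{fig:distance_proof_sets}, but now define the ``dashed region'' $\mathcal{W}\subseteq\mathcal{L}^*$ to be the set of qubits $q\in\mathcal{L}^*$ whose $Z$-support gets \emph{flipped} by $\mathcal{H}_Z(v\in\mathcal{V})$, i.e.\ $\sum_{u\in f(q)}v_u=1$. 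Because $|f(q)|\le\sum_{q'}|f(q')\cap\supp w|$ summed appropriately, one has $|\mathcal{W}|\le|w|$ and, using that each vertex of $f(\mathcal{L}^*)$ lies in $f(q)$ for exactly one $q$, also $|f(\mathcal{L}^*)|-|w|\ge$ (number of qubits in $\mathcal{L}^*$ not flipped) $=|\mathcal{L}^*|-|\mathcal{W}|$ fails in general — so instead I would keep the bound purely in terms of $|w|$ and $|f(\mathcal{L}^*)|-|w|$ and argue directly. The cleanest route: since flipping qubit $q$ requires at least one vertex of $f(q)$ to be in $\supp w$, we get $|\mathcal{W}|\le|w|$; and since \emph{not} flipping $q$ means an even (possibly zero) number of its $\ge1$ vertices are in $\supp w$, each unflipped qubit with $f(q)\cap\supp w\neq\emptyset$ contributes $\ge2$ to $|w|$, while $|f(\mathcal{L}^*)|-|w|\ge\sum_{q:\,q\text{ unflipped}}(|f(q)|-|f(q)\cap\supp w|)\ge$ (number of unflipped qubits with $f(q)\not\subseteq\supp w$); combined with $|f(q)|$ odd for $q\in\mathcal{L}$, every flipped qubit also has $|f(q)\cap\supp w|<|f(q)|$ so contributes to $|f(\mathcal{L}^*)|-|w|$ too, giving $|f(\mathcal{L}^*)|-|w|\ge|\mathcal{L}^*|-|\mathcal{W}|$. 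Hence $\min(d,|w|,|f(\mathcal{L}^*)|-|w|)\ge\min(d,|\mathcal{W}|,|\mathcal{L}^*|-|\mathcal{W}|)$, and the rest of the original computation goes through unchanged:
\begin{align}
|\overline{\Lambda}\,\mathcal{H}_Z(v\in\mathcal{V})| &= |\beta|+|\delta|+|\mathcal{R}|+|vG^\top|\\
&\ge |\beta|+|\delta|+|\mathcal{R}|+\min(d,|\beta|+|\gamma|,|\alpha|+|\delta|),
\end{align}
which is $\ge d$ in each of the three cases using $d\le|\overline{\Lambda}|=|\gamma|+|\delta|+|\mathcal{R}|$ and $d\le|\overline{\Lambda}\,\overline{Z}|=|\alpha|+|\beta|+|\mathcal{R}|$.

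The main obstacle I anticipate is precisely the parity bookkeeping in the last paragraph: unlike the injective case where $|w|=|\mathcal{W}|$ exactly, with a set-valued $f$ one needs the constraint that $|f(q)|$ is odd for $q\in\mathcal{L}$ (from Appendix~\ref{app:set-valued_port_function}) to guarantee that every port vertex ``wasted'' on a qubit whose net $Z$-flip is trivial still gets charged against $|f(\mathcal{L}^*)|-|w|$, so that $\min(d,|w|,|f(\mathcal{L}^*)|-|w|)$ cannot be smaller than the quantity $\min(d,|\mathcal{W}|,|\mathcal{L}^*|-|\mathcal{W}|)$ we actually want. Getting this inequality airtight — rather than the two quantities merely being ``morally equal'' — is the only genuinely new content; everything else is a transcription of Lemma~\ref{lem:deformed_code_distance}. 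I would state it as a short sub-claim: for any $v$, writing $w$ and $\mathcal{W}$ as above, $|w|\ge|\mathcal{W}|$ and $|f(\mathcal{L}^*)|-|w|\ge|\mathcal{L}^*|-|\mathcal{W}|$, with the second inequality using the odd-cardinality constraint on $f(q)$ for $q\in\mathcal L^\ast\subseteq\mathcal L$.
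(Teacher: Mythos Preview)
Your outline handles the restricted case where $f(q)=\emptyset$ for all $q\notin\mathcal{L}$, but not the general lemma. You assert that ``qubits outside $\mathcal{L}$ are never connected to vertex checks''; however, in the set-valued setup of Appendix~\ref{app:set-valued_port_function} the port function is $f:\mathcal{Q}\to 2^{\mathcal{V}}$ on \emph{all} code qubits, constrained only so that $|f(q)|$ is even (possibly nonzero) for $q\notin\mathcal{L}$. This is precisely what occurs when measuring a product of overlapping logicals: a qubit lying in the support of an even number of the $\overline{Z}_i$ is outside $\mathcal{L}$ (the symmetric difference) yet is connected to that many port vertices. Hence $\mathcal{H}_Z(v\in\mathcal{V})$ can flip $Z$-support on qubits of $\supp\Lambda_Z\setminus(\mathcal{L}\cup\supp\Lambda_X)\subseteq\mathcal{R}$, and can also add $Z$-support to qubits outside $\supp\overline{\Lambda}\cup\mathcal{L}$; your equality $|\overline{\Lambda}\,\mathcal{H}_Z(v)| = |\beta|+|\delta|+|\mathcal{R}|+|vG^\top|$ therefore fails, and with it the final bound.

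The paper repairs this by restricting $v$ to the larger port subset $\mathcal{P}^*=\mathrm{im}\,f\setminus f(\supp\Lambda_X)$ rather than your $f(\mathcal{L}^*)$, and by passing to a seven-region partition (Fig.~\ref{fig:distance_proof_sets_2}) that tracks membership in $\mathcal{L}$, in $\supp\Lambda_Z$, and in the flipped set $\mathcal{W}$ independently --- three of those regions lie outside $\mathcal{L}$. Your parity idea is exactly the right mechanism and reappears there: for each region one checks that $|f(q)|$ and $|f(q)\cap\supp w|$ have prescribed parities, forcing at least one vertex of $f(q)$ into either $\supp w$ or its complement in $\mathcal{P}^*$. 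A minor aside: your claim that ``every flipped qubit also has $|f(q)\cap\supp w|<|f(q)|$'' is false (take $|f(q)|=3$ with all three vertices selected) and also unnecessary, since $|f(\mathcal{L}^*)|-|w|\ge|\mathcal{L}^*\setminus\mathcal{W}|$ already follows from the unflipped qubits alone.
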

\begin{proof}
We follow the proof of Lemma~\ref{lem:deformed_code_distance}. In fact, the proof is identical to the point that we obtain a nontrivial logical operator of the deformed code $\overline{L}'=\overline{\Lambda}X(c\in\mathcal{E})$, where $\overline{\Lambda}$ is a nontrivial logical operator of the original code, and must show that $\overline{\Lambda}$ cannot be reduced in weight by multiplying by vertex checks.

Let us therefore begin from that point with $\overline{\Lambda}=\Lambda_X\Lambda_Z$ and lower bound the weight of $\overline{\Lambda}\mathcal{H}_Z(v\in\mathcal{V})$ for any vector $v$ indicating a subset of vertices. We use $\mathcal{Q}$ to denote the set of original code qubits, $\mathcal{L}\subseteq\mathcal{Q}$ the support of the $Z$-type logical we are measuring, and $\mathcal{P}$ to denote $\mathrm{im}f=f(\mathcal{Q})$. We have by assumption $\beta_d(\mathcal{G},\mathcal{P})\ge1$.

Since every vertex in the port is connected to a single qubit, it is sensible to define $f^{-1}:\mathcal{P}\rightarrow\mathcal{Q}$, and when applied to a subset $\mathcal{U}\subseteq\mathcal{P}$ we define $f^{-1}(\mathcal{U})=\{q\in\mathcal{Q}:|f(q)\cap\mathcal{U}|\text{ is odd}\}$. The upshot of this notation is that the product of all vertex checks from $\mathcal{U}$, i.e.~$\mathcal{H}_Z(\mathcal{U})$, is supported on the original code qubits as $Z(f^{-1}(\mathcal{U}))$ (in addition to its support on edge qubits).

Now, reduce the port by excluding vertices from $f(\mathrm{supp}\;{\Lambda_X})$, or $\mathcal{P}^*=\mathcal{P}\setminus f(\mathrm{supp}\;{\Lambda_X})$. By Lemma~\ref{lem:relative_expansion_relation} we have $\beta_d(\mathcal{G},\mathcal{P}^*)\ge1$. We let $w$ be the restriction of $v$ onto $\mathcal{P}^*$ and define $\mathcal{W}=f^{-1}(\mathrm{supp}\;w)$. When multiplying $\overline{\Lambda}=\Lambda_X\Lambda_Z$ by the checks $\mathcal{H}_Z(v\in\mathcal{V})$ only checks $\mathcal{H}_Z(w\in\mathcal{V})$ modify its support on $\mathcal{Q}$ because checks connected to $\mathrm{supp}\;{\Lambda_X}$ cannot eliminate the $X$-type support on those qubits. The $Z$-type support on $\mathcal{Q}$ is modified by exactly a factor of $Z(\mathcal{W})$.

\begin{figure}
    \centering
    \includegraphics[width=0.9\linewidth]{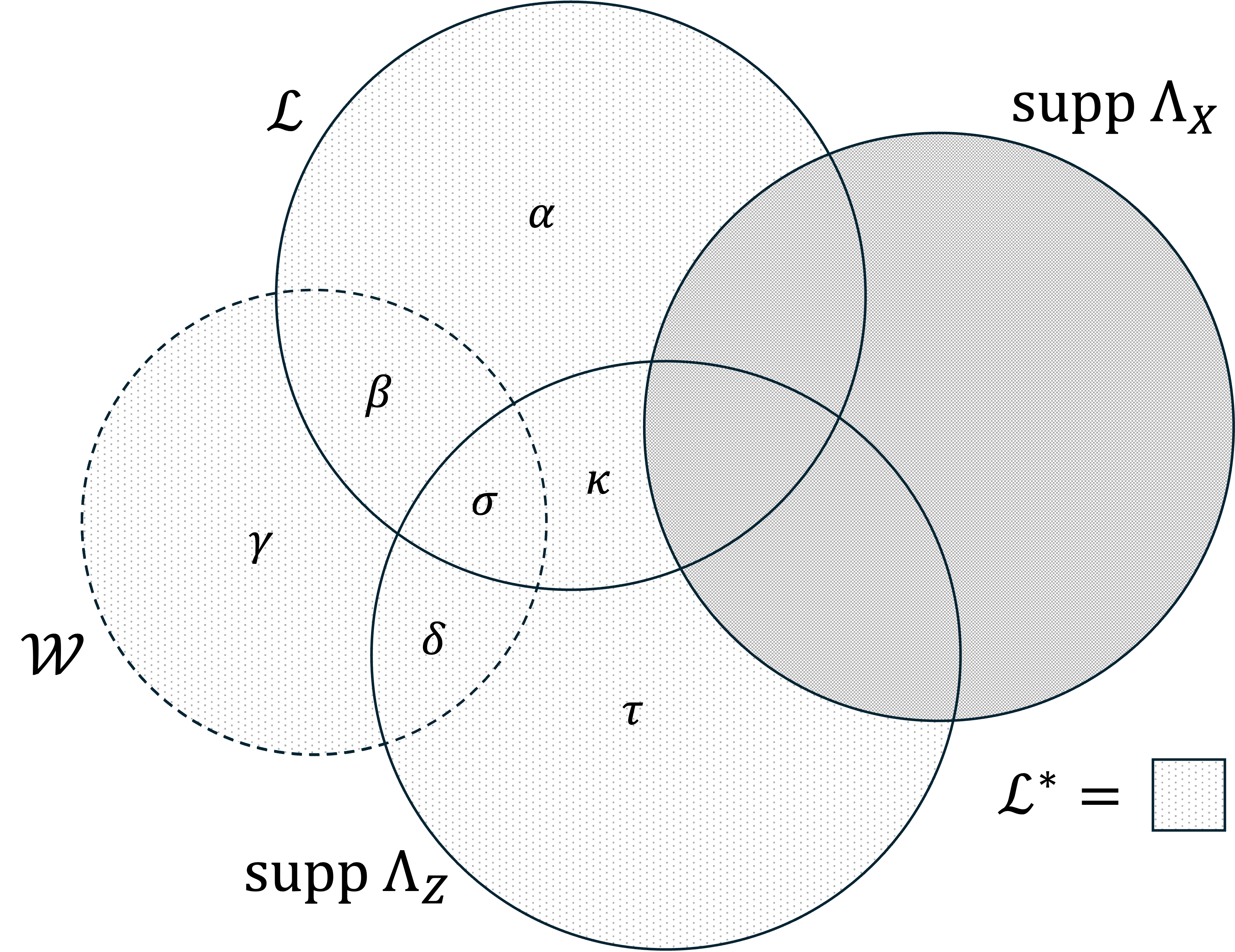}
    \caption{A summary of the qubit sets used in the final part of the proof of Lemma~\ref{lem:generalized_4}. These are all subsets of the qubits of the original code. In particular, we define a set of qubits $\mathcal{L}^*=(\mathcal{L}\cup\mathrm{supp}\;\Lambda_Z\cup\mathcal{W})\setminus\mathrm{supp}\;\Lambda_X\subseteq\mathcal{Q}$ such that $f(\mathcal{L}^*)\subseteq\mathcal{P}^*$.}
    \label{fig:distance_proof_sets_2}
\end{figure}

We summarize the previous discussion in Fig.~\ref{fig:distance_proof_sets_2}, in which subsets of $\mathcal{Q}$ are illustrated in relation to one another. Using the notation from that figure, we see that
\begin{align}\label{eq:d_gen_lower_1}
d&\le|\overline{\Lambda}|=|\delta|+|\sigma|+|\kappa|+|\tau|+|\Lambda_X|\\\label{eq:d_gen_lower_2}
d&\le|\overline{\Lambda}\;\overline{Z}|=|\alpha|+|\beta|+|\delta|+|\tau|+|\Lambda_X|
\end{align}
We also observe that
\begin{equation}
|w|\ge|\mathcal{W}|=|\beta|+|\gamma|+|\delta|+|\sigma|
\end{equation}
because there must be at least one vertex in $w$ for each qubit in $\mathcal{W}$. Similarly,
\begin{equation}
|\mathcal{P}^*|-|w|\ge|\alpha|+|\gamma|+|\delta|+|\kappa|
\end{equation}
This is because $f(q)$ and $f(q)\cap\mathrm{supp}\;w$ must differ by at least one for every qubit $q$ in $\alpha\cup\gamma\cup\delta\cup\kappa$. Specifically, $\mathcal{P}^*$ contains an even number of vertices connected to $q\in\gamma\cup\delta$, while $\mathrm{supp}\;w$ contains an odd number. Likewise, $\mathcal{P}^*$ contains an odd number of vertices connected to $q\in\alpha\cup\kappa$, while $\mathrm{supp}\;w$ contains an even number.

Using expansion relative to $\mathcal{P}^*$, we now calculate
\begin{align}
|\overline{\Lambda}\mathcal{H}_Z(v\in\mathcal{V})|&=|\beta|+|\gamma|+|\kappa|+|\tau|+|\Lambda_X|+|vG^\top|\\
&\ge |\beta|+|\gamma|+|\kappa|+|\tau|+|\Lambda_X| \nonumber \\ & \qquad +\min(d,|w|,|\mathcal{P}^*|-|w|)\\
&\ge |\beta|+|\gamma|+|\kappa|+|\tau|+|\Lambda_X| \nonumber \\ & \qquad +\min(d,|\beta|+|\gamma|+|\delta|+|\sigma|, \nonumber \\ & \qquad \qquad  |\alpha|+|\gamma|+|\delta|+|\kappa|).
\end{align}
No matter which of the three arguments the minimum evaluates to, the result is at least $d$, making use of Eqs.~\eqref{eq:d_gen_lower_1} and \eqref{eq:d_gen_lower_2} for arguments two and three, respectively.
\end{proof}

Finally, let us complete the proof of Theorem~\ref{thm:joint_logical_measurement}.

\begin{reptheorem}{thm:joint_logical_measurement}
Consider a set of nontrivial, sparsely overlapping logical operators $\overline{Z}_0,\overline{Z}_1,\dots,\overline{Z}_{t-1}$ that can all be made $Z$-type simultaneously by applying single-qubit Cliffords. Provided $t$ auxiliary graphs that satisfy the graph desiderata of Theorem~\ref{thm:graph_desiderata} to measure these $t$ logical operators, there exists an auxiliary graph to measure the product $\overline{Z}_0\overline{Z}_1\dots\overline{Z}_{t-1}$ satisfying the desiderata of Theorem~\ref{thm:graph_desiderata}. In particular, if the individual deformed codes are LDPC with distance $d$, the deformed code for the joint measurement is LDPC with check weights and qubit degrees independent of $t$ and has distance $d$.
\end{reptheorem}
\begin{proof}
Denote the qubit supports of the individual logical operators by $\mathcal{L}_i$ for $i=0,1,\dots,t-1$ and the set of all qubits in the original code by $\mathcal{Q}$. Denote the auxiliary graphs used to measure the individual logical operators by $\mathcal{G}_i=(\mathcal{V}_i,\mathcal{E}_i)$, the set-valued port functions by $f_i:\mathcal{Q}\rightarrow\mathcal{V}_i$, and the ports by $\mathcal{P}_i=f_i(\mathcal{L}_i)$. 

Even if the original port functions $f_i$ were not set-valued, to deal properly with potentially overlapping logicals $\overline{Z}_i$, we create a port function $f$ for the final graph $\mathcal{G}$ that is set-valued. Namely, $f(q)=\bigcup_{i=0}^{t-1}f_i(q)$. This is the only substantive difference with the proof in the main text, which considered only the case of non-overlapping logicals.

To deal with irreducible logical operators and ensure connected subsets of the port functions, we proceed as in the main text proof to add edges to the graphs $\mathcal{G}_i$ and pick port subsets $\mathcal{P}'_i\subseteq\mathcal{P}_i$ that are connected. This process works for set-valued port functions $f_i$ just as well, so we do not rewrite it here.

Also following the proof in the main text, we create adapters to join all the graphs together. For each $i=0,1,\dots,t-2$, we create an adapter $\mathcal{A}_i$ between $\mathcal{P}_i^*\subseteq\mathcal{P}'_i$ and $\mathcal{P}_{i+1}^*\subseteq\mathcal{P}'_{i+1}$. These subsets can always be chosen so that $|\mathcal{P}_i^*|=|\mathcal{P}_{i+1}^*|\ge d$ and both $\mathcal{P}_i^*$ and $\mathcal{P}_{i+1}^*$ induce connected subgraphs in their respective auxiliary graphs $\mathcal{G}_i$ and $\mathcal{G}_{i+1}$. Iterative application of Lemma~\ref{lem:skip_adapter} creates the adapted graph
\begin{equation}
\mathcal{G}=(\mathcal{V},\mathcal{E})=\mathcal{G}_0\sim_{\mathcal{A}_0}\mathcal{G}_1\sim_{\mathcal{A}_1}\mathcal{G}_2\sim_{\mathcal{A}_2}\dots\sim_{\mathcal{A}_{t-2}}\mathcal{G}_{t-1}.
\end{equation}

The vertex set of $\mathcal{G}$ is $\mathcal{V}=\bigcup_{i=0}^{t-1}\mathcal{V}_i$ and the port function $f:\mathcal{Q}\rightarrow2^{\mathcal{V}}$ is defined as we stated above: $f(q)=\bigcup_{i=0}^{t-1}f_i(q)$. Note that the port is simply the union of the ports of the original graphs $\mathcal{P}=f(\mathcal{Q})=\bigcup_{i=0}^{t-1}\mathcal{P}_i$. 

The adapted graph $\mathcal{G}$ and set-valued port function $f$ satisfy the graph desiderata of Theorem~\ref{thm:graph_desiderata} (including the two-part desideratum 1 described earlier in this section). For desideratum 0, note that $\overline{Z}=\bigcup_{i=0}^{t-1}\overline{Z}_i$ is supported exactly on qubits $\mathcal{L}=\mathcal{L}_0\triangle\mathcal{L}_1\triangle\dots\triangle\mathcal{L}_{t-1}$, where $\triangle$ denotes the symmetric difference of sets. Moreover, $|f(q)|$ is odd if and only if $q\in\mathcal{L}$, so that the product of all vertex checks is indeed $\overline{Z}=Z(\mathcal{L})$.

Desideratum 1a is satisfied because we do not increase vertex degree by more than a constant, and 1b is satisfied because the logicals $\overline{Z}_i$ are sparsely overlapping. Desideratum 2 is inherited from the original graphs. Specifically, suppose stabilizer $s$ has sparse perfect matchings $\mu_i$ in each of the original graphs $\mathcal{G}_i$. Only a constant number of these matchings are non-empty because $s$ is constant size and the logicals are sparsely overlapping. Then, the union of those perfect matchings is sparse in the adapted graph.

We used Lemma~\ref{lem:skip_adapter} to construct adapters that guarantee desideratum 3 is satisfied. Lemma~\ref{lem:adapted_expansion} implies $\beta_d(\mathcal{G},\mathcal{P})\ge1$ and thus we have desideratum 4.
\end{proof}

\tocless\section{Thickening to increase relative expansion}\label{app:relative_exp_lemma}

Recall that thickening a graph means taking its Cartesian product with a path graph, see Definition~\ref{def:thickening}. Here we show that thickening a graph can increase its relative expansion.
\begin{lemma}[Relative Expansion Lemma]\label{lem:expansion_lemma}
For any connected graph $\mathcal{G}_0$, subset of vertices $\mathcal{U}$, and integer $t>0$, the $L\ge1/\beta_t(\mathcal{G}_0,\mathcal{U})$ times thickened graph $\mathcal{G}_0^{(L)}$ has relative expansion $\beta_t(\mathcal{G}_0^{(L)},\mathcal{U}')\ge1$ for all $\mathcal{U}'=\mathcal{U}\times\{l\}$, $l=0,1,\dots,L-1$ (i.e.~$\mathcal{U}'$ is a subset of vertices in the thickened graph identified with $\mathcal{U}$ in any one copy of $\mathcal{G}_0$).
\end{lemma}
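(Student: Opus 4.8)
The goal is to show that thickening a graph $\mathcal{G}_0$ enough times boosts its relative expansion (on any copy of a distinguished vertex subset $\mathcal{U}$) up to at least $1$. The plan is to work directly with the Cheeger-type inequality in Definition~\ref{def:rel_expansion}. Let $\beta_0 := \beta_t(\mathcal{G}_0,\mathcal{U})$, fix $L \ge 1/\beta_0$, and write $G^{(L)}$ for the incidence matrix of $\mathcal{G}_0^{(L)}$. I would take an arbitrary vertex subset of the thickened graph, represented by a binary vector whose restriction to layer $l$ (a copy of $\mathcal{V}_0$) I call $v_l \in \mathbb{F}_2^{|\mathcal{V}_0|}$ for $l = 0,1,\dots,L-1$. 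The boundary $|v (G^{(L)})^\top|$ decomposes naturally into two contributions: the \emph{intra-layer} edges, contributing $\sum_{l=0}^{L-1} |v_l G_0^\top|$ where $G_0$ is the incidence matrix of $\mathcal{G}_0$; and the \emph{transversal} (rungs-of-the-ladder) edges between consecutive layers, contributing $\sum_{l=0}^{L-2} |v_l + v_{l+1}|$ (the symmetric difference). So I need to lower bound
\begin{equation}
|v(G^{(L)})^\top| = \sum_{l=0}^{L-1} |v_l G_0^\top| + \sum_{l=0}^{L-2} |v_l \oplus v_{l+1}|.
\end{equation}

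\textbf{Main argument.} Without loss of generality take the designated copy to be layer $0$, so $\mathcal{U}' = \mathcal{U}\times\{0\}$ and $u := (v_0)|_{\mathcal{U}}$ is the relevant restriction; I must show $|v(G^{(L)})^\top| \ge \min(t, |u|, |\mathcal{U}| - |u|)$. Each term $|v_l G_0^\top|$ is bounded below, by the relative expansion of $\mathcal{G}_0$, by $\beta_0 \cdot \min(t, |u_l|, |\mathcal{U}| - |u_l|)$ where $u_l := (v_l)|_{\mathcal{U}}$. The telescoping triangle inequality $|u| = |u_0| \le \sum_{l=0}^{L-2}|u_l \oplus u_{l+1}| + |u_{L-1}| \le \sum_{l=0}^{L-2}|v_l\oplus v_{l+1}| + |u_{L-1}|$ ties the transversal boundary to how $u_0$ drifts across layers; a symmetric inequality holds for $|\mathcal{U}| - |u|$ replacing each $u_l$ by its complement. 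I would then split into cases on whether \emph{some} layer $l^*$ has $\min(t, |u_{l^*}|, |\mathcal{U}|-|u_{l^*}|) \ge 1/\beta_0$ hmm — more cleanly: argue that either (i) every layer $l$ satisfies $\min(t,|u_l|,|\mathcal{U}|-|u_l|) \ge \lceil 1/\beta_0\rceil$, whence $\sum_l |v_l G_0^\top| \ge L\beta_0\lceil 1/\beta_0\rceil \ge L \ge \min(t, |u|, |\mathcal{U}|-|u|)$ trivially since the last quantity is at most $t$ and also at most... no, need $\min \le L$; but $\min(t,|u|,\dots) \le t$ and we cannot assume $t \le L$. So instead: the right comparison is against $\min(t,|u|,|\mathcal{U}|-|u|)$ directly. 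If $|u_l| = 0$ for some $l$, then $|u| = |u_0| \le \sum |v_l \oplus v_{l+1}|$ via the telescoping path from layer $0$ to layer $l$, and that transversal sum already exceeds $\min(t,|u|,\dots) \le |u|$ — wait, we'd get $\ge |u|$, done. Symmetrically if $|u_l| = |\mathcal{U}|$ for some $l$ we get $\ge |\mathcal{U}| - |u|$. Otherwise every layer has $1 \le |u_l| \le |\mathcal{U}| - 1$, so $\min(t,|u_l|,|\mathcal{U}|-|u_l|) \ge 1$ (as $t \ge 1$), hence $\sum_l |v_l G_0^\top| \ge L\beta_0 \ge 1$; and since $t \ge 1$ and we've shown the quantity is $\ge 1$, but we actually need $\ge \min(t,|u|,|\mathcal{U}|-|u|)$ which could be larger than $1$. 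Here I would combine: if additionally all $|u_l|$ stay "small" (say $\le \min(t,|u_0|,\dots)$) the intra-layer terms give $L\beta_0 \cdot(\text{that min})\ge \min$, and if some $|u_l|$ jumps above it then the telescoping transversal sum from layer $0$ covers the gap. Working out this case bookkeeping carefully is the crux.

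\textbf{Main obstacle.} The delicate point — and where I expect to spend the real effort — is the case analysis combining the intra-layer expansion bound with the transversal telescoping bound to recover exactly $\min(t, |u|, |\mathcal{U}|-|u|)$ rather than just $\min(t, \text{something})$ or just $1$. The subtlety is that $|u_l|$ can vary with $l$: a configuration could "cheat" by being small in layer $0$ (so $|u_0|$ is small) but the vector $v$ is large elsewhere, or vice versa, and one must show the transversal edges then pay the cost. The clean way is probably: let $s := \min(t, |u_0|, |\mathcal{U}| - |u_0|)$, the target. Either every layer $l$ has $\min(|u_l|, |\mathcal{U}|-|u_l|) \ge s/\beta_0$ (recall $L \ge 1/\beta_0$, so $L\beta_0 \ge 1$), giving $\sum_l |v_l G_0^\top| \ge \sum_l \beta_0 \min(t,|u_l|,|\mathcal{U}|-|u_l|) \ge L\beta_0 s \ge s$; or some layer $l^*$ has $\min(|u_{l^*}|, |\mathcal{U}|-|u_{l^*}|) < s/\beta_0 \le s$, and then the path of transversal symmetric differences from layer $0$ to layer $l^*$ has total weight $\ge \bigl||u_0| - |u_{l^*}|\bigr|$ and also $\ge \bigl|(|\mathcal{U}|-|u_0|) - (|\mathcal{U}|-|u_{l^*}|)\bigr|$, one of which is $\ge s - s/\beta_0 \cdot \beta_0$... the arithmetic needs care but this dichotomy is the shape of the argument. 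I'd also need the parenthetical remark about global expansion \emph{not} being boostable this way, which follows from a simple example (two cliques joined by one edge, thickened, still has a sparse cut), but that is a side comment, not needed for the lemma itself. Finally I should note the statement "$\mathcal{G}_0$ connected" is used only to guarantee $\beta_t(\mathcal{G}_0,\mathcal{U}) > 0$ so that $L = \lceil 1/\beta_t(\mathcal{G}_0,\mathcal{U})\rceil$ is finite.
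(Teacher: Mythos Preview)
Your setup is exactly right and matches the paper: decompose the boundary into intra-layer terms $\sum_l |v_l G_0^\top|$ and transversal terms $\sum_l |v_l \oplus v_{l+1}|$, bound the former via the relative expansion of $\mathcal{G}_0$, and bound the latter via telescoping. The gap is in how you combine the two.

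Your threshold-based dichotomies do not close. In the final attempt you split on whether every layer satisfies $\min(|u_l|,|\mathcal{U}|-|u_l|) \ge s/\beta_0$; but $\beta_0 \le 1$, so $s/\beta_0 \ge s$, and the subsequent claim ``$s/\beta_0 \le s$'' is backwards. In the ``all layers large'' branch you write $\sum_l \beta_0 \min(t,|u_l|,\dots) \ge L\beta_0 s$, but the assumption only gives each $\min(t,|u_l|,\dots) \ge \min(t, s/\beta_0)$, not $\ge s$. In the ``some layer small'' branch your computation collapses to $s - s = 0$, as you noticed. None of the variants you sketch actually yields $\ge s$.

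The missing idea, which the paper uses, is to avoid a threshold dichotomy altogether and instead pick the \emph{single worst layer}. Let $r = \operatorname{argmin}_j \min(t, |u_j|, |\mathcal{U}|-|u_j|)$ and write $s_r$ for that minimum. Since $r$ is the argmin, \emph{every} intra-layer term is at least $\beta_0 s_r$, so the intra-layer sum is at least $L\beta_0 s_r \ge s_r$. The transversal sum telescopes to at least $|v_r \oplus v_{l}| \ge |u_r \oplus u_{l}|$, where $l$ is the designated layer. The proof finishes with the single inequality
\[
|u_r \oplus u_{l}| + \min(t,|u_r|,|\mathcal{U}|-|u_r|) \;\ge\; \min(t,|u_{l}|,|\mathcal{U}|-|u_{l}|),
\]
which follows from $|u_r \oplus u_{l}| \ge \bigl||u_r|-|u_{l}|\bigr|$. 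No case split on thresholds is needed; the argmin absorbs both of your cases at once. Replace your dichotomy with this and the proof is three lines.
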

\begin{proof}
This follows the proof of Lemma 5 in Ref.~\cite{cross2024linear} but suitably generalized to relative expansion. Let $G\in\mathbb{F}_2^{m\times n}$ be the incidence matrix of $\mathcal{G}_0$ and $G_L=\mathbb{F}_2^{m_L\times n_L}$ be the incidence matrix of the thickened graph $\mathcal{G}^{(L)}$. Explicitly, $m_L=mL+n(L-1)$, $n_L=nL$, and
\begin{equation}
G_L=\left(\begin{array}{c}I_L\otimes G\\H_R\otimes I_n\end{array}\right),
\end{equation}
where $H_R$ is the $L-1\times L$ check matrix of the repetition code, i.e.~$H_C$ missing its last row.

Let $v_j\in\mathbb{F}_2^n$ for $j=0,1,\dots,L-1$ be vectors indicating choices of vertices from each copy of $\mathcal{G}_0$. Also define $r=\mathrm{argmin}_{j}\min(t,|u_j|,n-|u_j|)$ where $u_j$ is $v_j$ restricted to vertices in $\mathcal{U}$. Making judicious use of the triangle inequality and the assumption that $L\beta_t(\mathcal{G}_0,\mathcal{U})\ge1$, we calculate
\begin{align}
|(v_0\text{\space}v_1\dots v_{L-1})G_L^\top|&=\sum_{j=1}^{L-1}|v_{j-1}+v_j|+\sum_{j=0}^{L-1}|v_jG^\top| \nonumber \\
&\ge|v_r+v_l| \nonumber \\ & \; + L\beta_t(\mathcal{G}_0,\mathcal{U})\min(t,|u_r|,n-|u_r|) \nonumber \\
&\ge|v_r+v_l|+\min(t,|u_r|,n-|u_r|) \nonumber \\
&\ge|u_r+u_l|+\min(t,|u_r|,n-|u_r|) \nonumber \\
&\ge\min(t,|u_l|,n-|u_l|).
\end{align}
Because $u_l$ is exactly the restriction of $(v_0\text{\space}v_1\dots v_{L-1})$ to the vertices $\mathcal{U}'=\mathcal{U}\times\{l\}$ of the thickened graph, this implies $\beta_t(\mathcal{G}_0^{(L)},\mathcal{U}')\ge1$. 
\end{proof}

Because of the Relative Expansion Lemma, we can satisfy Theorem~\ref{thm:graph_desiderata}, desideratum 4 by sufficiently thickening any initial graph $\mathcal{G}_0=(\mathcal{V}_0,\mathcal{E}_0)$ and choosing a port function $f:\mathcal{L}\rightarrow\mathcal{V}_0\times\{0,1,\dots,L-1\}$ to be injective on $\mathcal{V}_0\times\{l\}$ for any $l=0,1,\dots,L-1$. This discussion clarifies the role of thickening as it is used in Ref.~\cite{cross2024linear} in cases where the initial graph $\mathcal{G}_0$ (though it is more generally a hypergraph in that reference) is not sufficiently expanding.

\tocless\section{Support lemma for irreducible logical operators}\label{app:supportlemma}

Here we record a helpful support lemma formerly appearing in Ref.~\cite{cowtan2024css}, Appendix~H, and Ref.~\cite{cross2024linear}.

\begin{lemma}\cite{cross2024linear}\label{lem:supportlemma}
Let $H=[H_X|H_Z]$ be the (symplectic) parity check matrix of an $n$-qubit stabilizer code, $\overline{Z}$ be an irreducible logical $Z$-type operator, and $H'_X$ denote the sub-matrix of $H_X$ restricted to qubits in the support of $\overline{Z}$. Then, $H'_Xv^\top=0$ implies $v=0$ or $v=\vec1$. Equivalently, $H'_X$ is a check matrix of the classical repetition code.
\end{lemma}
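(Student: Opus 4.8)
The plan is to prove the contrapositive-flavored statement directly: assume $\overline{Z}$ is irreducible and $v\in\mathbb{F}_2^{|\supp\overline{Z}|}$ satisfies $H'_X v^\top = 0$, and show $v\in\{0,\vec1\}$. The key object is the $Z$-type Pauli operator $P = Z(w)$ where $w$ is the extension of $v$ to all $n$ qubits by zeros outside $\supp\overline{Z}$ — so $P$ is supported entirely within $\supp\overline{Z}$. First I would observe that $H'_X v^\top = 0$ is exactly the statement that $P$ commutes with every $X$-type row of $H$ (the $Z$-rows automatically commute with a $Z$-type operator in a CSS code, or one handles the general symplectic case by noting $[u|v]$ must have vanishing symplectic product with each check and the $X$-part of the check is what pairs with $v$). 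Hence $P$ is a $Z$-type logical operator or stabilizer of the code, and it is supported inside $\supp\overline{Z}$.

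Next I would invoke irreducibility. By definition, $\overline{Z}$ irreducible means there is no $Z$-type logical operator or $Z$-type stabilizer supported entirely within $\supp\overline{Z}$ other than (up to the trivial choices) $\overline{Z}$ itself and the identity. So the operator $P$ constructed above must be either the identity or $\overline{Z}$ itself. In the first case $w=0$, hence $v=0$; in the second case $w$ is the all-ones vector on $\supp\overline{Z}$, hence $v=\vec1$. This establishes that the nullspace of $H'_X$ over $\mathbb{F}_2$ is precisely $\{0,\vec1\}$, which is one-dimensional, spanned by $\vec1$. A binary matrix on $\ell = |\supp\overline{Z}|$ columns whose nullspace is exactly the span of $\vec1$ has rank $\ell-1$ and its row space is the even-weight subcode — equivalently, after possibly deleting redundant rows, it is (row-equivalent to) a parity check matrix of the length-$\ell$ classical repetition code. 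That gives the "equivalently" clause. One should also confirm the easy direction that $\vec1$ is genuinely in the nullspace: $H'_X \vec1^\top = 0$ holds because each $X$-type check of the original code must commute with $\overline{Z} = Z(\supp\overline{Z})$, i.e.\ overlaps it in an even number of qubits.

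The main subtlety — and the step I would be most careful about — is the reduction from "arbitrary (possibly non-CSS) stabilizer code" to the clean commutation statement. For a genuinely non-CSS code, a row of $H$ is $[u_s | v_s]$ and commuting with the $Z$-type operator $Z(w)$ requires $u_s \cdot w = 0$; the matrix $H_X$ collects the $u_s$ parts, so $H'_X$ is $H_X$ restricted to columns in $\supp\overline{Z}$ and the condition $H'_X v^\top = 0$ is literally $u_s\cdot w = 0$ for all $s$, i.e.\ $Z(w)$ is in the normalizer. So the reduction is actually immediate once one writes it symplectically; I expect no real obstacle there, just bookkeeping. A second minor point worth stating explicitly is why "nullspace $=\operatorname{span}\{\vec1\}$" implies "check matrix of the repetition code": the repetition code on $\ell$ bits is by definition the code $\{0,\vec1\}^\perp$... wait, rather the repetition code \emph{codewords} are $\{0,\vec1\}$ and its parity checks cut out exactly that, so $H'_X$ having kernel $\{0,\vec1\}$ means $H'_X$ is a (possibly redundant) parity check matrix for the repetition code — that is the content of the canonical $H_C$ having kernel $\operatorname{span}\{\vec1\}$ as noted in Section~2. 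I would cite that fact from the preliminaries rather than reprove it.
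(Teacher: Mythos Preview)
Your proposal is correct and follows essentially the same approach as the paper's proof: construct the $Z$-type operator $Z(w)$ supported within $\supp\overline{Z}$, observe that $H'_Xv^\top=0$ means it commutes with all checks, and invoke irreducibility to force $v\in\{0,\vec1\}$. The paper's proof is a single terse sentence to this effect; your version is simply a more careful elaboration, including the symplectic bookkeeping for the non-CSS case and the ``equivalently'' clause, neither of which the paper spells out.
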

\begin{proof}
If $v$ is not $0$ or $\vec1$, then it implies a logical $Z$-type operator $Z(v)$ of the stabilizer code exists that is entirely supported within the support of $\overline{Z}$. This contradicts the definition of $\overline{Z}$ being irreducible.
\end{proof}

This lemma has a simple corollary.
\begin{corollary}\label{cor:supportlemma}
If $\overline{Z}$ is an irreducible $Z$-type logical operator of a stabilizer code, and $P_X$ is an $X$-type operator (not necessarily logical) commuting with $\overline{Z}$, then there is a stabilizer $S$ of the code such that $SP_X$ does not have $X$-type support overlapping $\overline{Z}$. If the code is CSS, then $S$ can be chosen to be $X$-type and $SP_X$ does not overlap $\overline{Z}$ at all.
\end{corollary}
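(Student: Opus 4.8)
The plan is to prove Corollary~\ref{cor:supportlemma} as a direct consequence of Lemma~\ref{lem:supportlemma}. Let $\overline{Z}$ be an irreducible $Z$-type logical operator supported on a set of qubits $\mathcal{L}$, and let $P_X$ be an $X$-type operator commuting with $\overline{Z}$. Write $v\in\mathbb{F}_2^{|\mathcal{L}|}$ for the restriction of the $X$-support of $P_X$ to $\mathcal{L}$. Since $P_X$ commutes with $\overline{Z}=Z(\mathcal{L})$, the overlap of their supports is even, i.e.~$|v|$ is even; in particular $\vec1^{\,\top}v^\top=0$ (the parity check), but more is needed --- we want to \emph{eliminate} this overlap entirely.

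First I would observe that $P_X$ acting on qubits $\mathcal{L}$ with support vector $v$ can be thought of as a codeword-like vector with respect to the restricted check matrix. Actually the cleanest route is via the transpose direction: by Lemma~\ref{lem:supportlemma}, $H'_X$ (the restriction of $H_X$ to $\mathcal{L}$) is a parity check matrix of the repetition code, so its row space is the full even-weight space... but we actually want the \emph{column} structure. Let me instead argue as follows. The vector $v\in\mathbb{F}_2^{|\mathcal{L}|}$ has even weight. I claim $v$ is in the row space of $H'_X$. Indeed, Lemma~\ref{lem:supportlemma} tells us $\mathrm{null}(H_X'^{\;}) = \{0,\vec1\}$ when we look at right-nullspace; but for the statement we want, consider: the even-weight vectors in $\mathbb{F}_2^{|\mathcal{L}|}$ form a space of dimension $|\mathcal{L}|-1$, and this is exactly the orthogonal complement of $\{0,\vec1\}$, which by Lemma~\ref{lem:supportlemma} equals the row space of $H'_X$ (since row space $=$ orthogonal complement of right-nullspace over $\mathbb{F}_2$). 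Hence there is a combination of rows of $H'_X$, say indexed by a set of stabilizers $\mathcal{S}_0$, whose sum equals $v$. Taking $S=\prod_{s\in\mathcal{S}_0}s$ (or rather the $X$-part of that product in the general case), multiplying $P_X$ by $S$ cancels precisely the $X$-support of $P_X$ on $\mathcal{L}$, giving an operator $SP_X$ with no $X$-type support overlapping $\overline{Z}$.

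For the CSS refinement, note the rows of $H_X$ are $X$-type stabilizers, so $\mathcal{S}_0$ may be taken to consist of $X$-type stabilizer generators and $S$ is itself $X$-type. Then $SP_X$ is still $X$-type, and its support on $\mathcal{L}$ has been completely removed, so $SP_X$ does not overlap $\overline{Z}$ at all (not merely in $X$-type support --- there is no $Z$-type support to worry about since everything is $X$-type). This completes the argument.

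The main obstacle --- really the only subtlety --- is the linear-algebra identification: making sure that ``even-weight vectors on $\mathcal{L}$'' $=$ ``row space of $H'_X$'' is correctly justified from Lemma~\ref{lem:supportlemma}. The lemma gives $\mathrm{null}(H'_X)=\langle\vec1\rangle$ (plus $0$), and over $\mathbb{F}_2$ the row space of a matrix is the orthogonal complement of its right nullspace; the orthogonal complement of $\langle\vec1\rangle$ is exactly the even-weight subspace; and $v$ is even-weight because $P_X$ commutes with $\overline{Z}$. One should also double-check the edge case where $\overline{Z}$ has trivial overlap with $P_X$ already ($v=0$), in which case we simply take $S$ to be the identity. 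No deep difficulty here; the corollary is essentially a one-line consequence once the row-space/nullspace duality is stated cleanly.
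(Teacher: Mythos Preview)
Your proof is correct and follows essentially the same route as the paper: use Lemma~\ref{lem:supportlemma} to conclude the row space of $H'_X$ is the full even-weight subspace of $\mathbb{F}_2^{|\mathcal{L}|}$, observe the restriction $v$ of $P_X$ to $\mathcal{L}$ is even-weight by commutation, and hence find a stabilizer product $S$ whose $X$-part on $\mathcal{L}$ equals $v$. One small wording caution: in the non-CSS case you should take $S$ to be the full product $\prod_{s\in\mathcal{S}_0}s$ (a genuine stabilizer), not ``the $X$-part of that product'' as your parenthetical suggests---the $X$-part alone need not be a stabilizer, but the full product still cancels the $X$-support on $\mathcal{L}$, which is all the statement requires.
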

\begin{proof}
Because it commutes with $\overline{Z}$, $P_X$ must overlap $\overline{Z}$ on an even number of qubits. Lemma~\ref{lem:supportlemma} implies the $X$ check matrix of the code restricted to $\supp{\overline{Z}}$ generates all even weight $X$ operators. Thus, we can find a stabilizer $S$ performing as claimed.
\end{proof}

\tocless\section{$\mathsf{SkipTree}$ for the full-rank check matrix of the repetition code}\label{app:skip_fullrank}

We present a $\mathsf{SkipTree}$ algorithm variant specifically for solving $TGP=H_R$ for sparse matrix $T$ and permutation matrix $P$. The need for this is illustrated by a simple example. If $G=H_R$ is the incidence matrix of the path graph, applying Algorithm~\ref{alg:skiptree} creates a $(2,2)$-sparse matrix $T$. Clearly, however, the sparsest possible $T$ matrix is $(1,1)$-sparse, i.e.~$T=P=I$.

Because Algorithm~\ref{alg:skiptree} is built to solve $TGP=H_C$ it always leaves some nodes unlabeled as it moves down the tree so that it can eventually find a way back to the root. Instead, when solving $TGP=H_R$ returning to the root is unnecessary, and only some branches of the spanning tree must be explored via the ``skipping" behavior of Algorithm~\ref{alg:skiptree}. This motivates the addition of a flag, $\mathsf{skip}\in\{\mathsf{True},\mathsf{False}\}$, that can toggle the skipping behavior on and off. Adding this, we obtain Algorithm~\ref{alg:skiptree_HR} that produces sparser solutions for $T$. For example, it solves the case of the path graph optimally.

\begin{algorithm}[H]
\caption{\label{alg:skiptree_HR}Given connected graph $G\in\mathbb{F}_2^{\;m\times n}$, find $T\in\mathbb{F}_2^{\;n-1\times m}$ and permutation $P\in\mathbb{F}_2^{\;n\times n}$ such that $TGP=H_R$. Both $T$ and $P$ have $O(n)$ nonzero entries and can be constructed and returned as sparse matrices.}
\begin{algorithmic}[1]
    \Procedure{$\mathsf{SkipTreeHR}$}{$G$}
        \State $S\leftarrow$ a spanning tree of $G$ \protect\Comment{has incidence matrix $S_I\in\mathbb{F}_2^{\;n-1\times n}$ that we do not need to store}
        \State Index $\leftarrow0$ 
        \State $\mathrm{Label}\leftarrow$ empty list of length $n$
        \Procedure{$\mathsf{LabelFirst}$}{$v$,$\;\mathsf{skip}$}
            \State $\mathrm{Label}[\mathrm{Index}]\leftarrow v$
            \State Index $\leftarrow$ Index + 1
            \For{each child of vertex $v$ in $S$} \protect\Comment Recall the youngest child is the last in the for-loop.
                \If{child is youngest and $\mathsf{skip}=\mathsf{False}$}
                    \State $\mathsf{LabelFirst}(\mathrm{child},\mathsf{skip}=\mathsf{False})$
                \Else
                    \State $\mathsf{LabelLast}(\mathrm{child})$
                \EndIf
            \EndFor
        \EndProcedure
        
        \Procedure{$\mathsf{LabelLast}$}{$v$}
            \For{each child of vertex $v$ in $S$} 
                \State $\mathsf{LabelFirst}(\mathrm{child},\mathsf{skip}=\mathsf{True})$
            \EndFor
            \State $\mathrm{Label}[\mathrm{Index}]\leftarrow v$
            \State Index $\leftarrow$ Index + 1
        \EndProcedure

        \State $\mathsf{LabelFirst}(0,\mathsf{skip}=\mathsf{False})$ \protect\Comment{Root is 0. After this line, $\mathrm{Label}[l]=v$ means vertex $v$ is labeled $l$.}
        \State $P \leftarrow n\times n$ matrix with $P_{vl}=1$ \text{iff} $\mathrm{Label}[l]=v$.
        \State $\tilde T\leftarrow$ matrix with $n-1$ rows and $n-1$ columns.
        \State $\tilde T_{le}=1$ \text{iff} edge $e$ is part of the shortest path in $S$ from $\mathrm{Label}[l]$ to $\mathrm{Label}[l+1]$. \protect\Comment{now $\tilde TS_I=H_RP^\top$}
        \State Add zero columns to $\tilde T$, obtaining $T$ so that $TG=\tilde TS_I$.
        \State Return $T$, $P$.

    \EndProcedure
\end{algorithmic}
\end{algorithm}

\newpage
\tocless\section{Proof of Theorem~\ref{thm:expansionless_joint_toric}}\label{app:proof_expansionless_joint_toric}

\begin{reptheorem}{thm:expansionless_joint_toric}
Let $\overline{Z}_r$ and $\overline{X}_r$ be arbitrary non-overlapping logical operators in a distance $d_r$ quantum LDPC code, referred to as the right code. Consider another distance $d_l\ge\max(|\overline{Z}_r|,|\overline{X}_r|)$ quantum LDPC code, the left code, encoding just two logical qubits and possessing two non-overlapping, weight $d_l$ logical operators $\overline{Z}_l$ and $\overline{X}_l$. Suppose the weight of $\overline{Z}_l\overline{X}_l$ cannot be reduced to less than $2d_l$ by multiplying by stabilizers and logical operators of the left code other than $\overline{Z}_l$, $\overline{X}_l$, and $\overline{Z}_l\overline{X}_l$. The toric code is an example of such a left code. Then, we can construct two auxiliary graphs, each of size $O(d_l\log^3d_l)$ to measure $\overline{Z}_l\overline{Z}_r$ and $\overline{X}_l\overline{X}_r$, and only those logical operators, simultaneously. Moreover, the deformed code is LDPC and has distance at least $d_r$.
\end{reptheorem}
\begin{proof}
We being by noting $\overline{Z}_l$, $\overline{X}_l$, $\overline{Z}_r$, and $\overline{X}_r$ may be any logical operators with the assumed properties -- in particular, $\overline{Z}_r$ and $\overline{X}_r$ may even be equal modulo stabilizers in code $r$ -- but they are suggestively labeled with $X$ and $Z$ because we can assume without loss of generality they are $Z$-type and $X$-type by simply applying single-qubit Cliffords. We create two graphs, $\mathcal{G}_Z=(\mathcal{V}_Z,\mathcal{E}_Z)$ with $Z$-type vertex checks to measure $\overline{Z}_l\overline{Z}_r$ and $\mathcal{G}_X=(\mathcal{V}_X,\mathcal{E}_X)$ with $X$-type vertex checks to measure $\overline{X}_l\overline{X}_r$. Similarly to the proof of Theorem~\ref{thm:expansionless_joint} (see (ii) in that proof), we assume for both graphs that the port set of vertices for the left code is a superset of the port set of vertices for the right code. This implies that for all vectors $v_z$ and $v_x$ indicating subsets of $\mathcal{V}_Z$ and $\mathcal{V}_X$, respectively, we have
\begin{equation}\label{eq:port_fucntions_toric}
|v_zF_l^Z|\ge|v_zF_r^Z|,\quad|v_xF_l^X|\ge|v_xF_r^X|.
\end{equation}
We can build the graphs to satisfy graph desiderata 0-3 from Theorem~\ref{thm:graph_desiderata} using the techniques of Section~\ref{sec-gauging-meas}. This is the same process described in a bit more detail in the proof of Theorem~\ref{thm:expansionless_joint}. Each graph has $d_l$ port vertices and a total of $O(d_l\log^3d_l)$ vertices and edges once the cycles are sparsified using thickening to satisfy desideratum 3.

\begin{figure*}[t]
    \centering
    \includegraphics[width=0.8\textwidth]{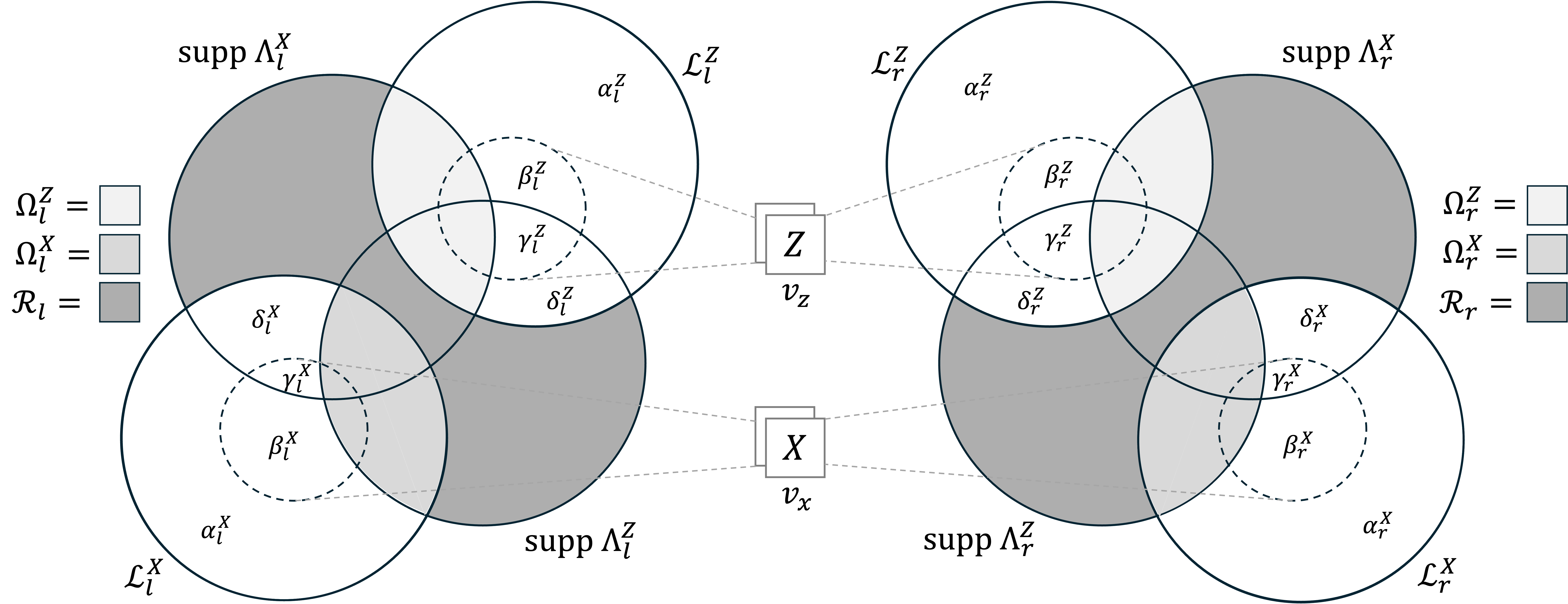}
    \caption{Notation for the proof of Theorem~\ref{thm:expansionless_joint_toric}. On the left (right), all sets shown are subsets of $l$ ($r$) code qubits. Vertex checks $v_z$ ($v_x$) are connected to sets of qubits $\supp v_zF^Z_l$ and $\supp v_zF^Z_r$ ($\supp v_xF^X_l$ and $\supp v_xF^X_r$) indicated by dashed circles.}
    \label{fig:expansionless_joint_toric_venn}
\end{figure*}

Now, we move on to prove a lower bound on the deformed code distance. Suppose we have an arbitrary nontrivial logical operator $\overline{\Lambda}=\overline{\Lambda}_l\overline{\Lambda}_r$ which is a product of logicals $\overline{\Lambda}_l=\overline{\Lambda}^X_l\overline{\Lambda}^Z_l$ and $\overline{\Lambda}_r=\overline{\Lambda}^X_r\overline{\Lambda}^Z_r$ from the original left and right codes. Also, we suppose $\overline{\Lambda}$ commutes with the operators being measured, $\overline{Z}_l\overline{Z}_r$ and $\overline{X}_l\overline{X}_r$. In the deformed code, $\overline{\Lambda}$ may pick up support on the edge qubits of both auxiliary graphs, which we ignore. We seek to bound the weight of $\overline{\Lambda}\mathcal{H}_Z(v_z\in\mathcal{V}_Z)\mathcal{H}_X(v_x\in\mathcal{V}_X)$ for all vectors $v_z,v_x$ indicating choices of vertices from the graphs $\mathcal{G}_Z,\mathcal{G}_X$. We establish notation succinctly in Fig.~\ref{fig:expansionless_joint_toric_venn} in a manner similar to Fig.~\ref{fig:expansionless_joint_venn}. Using that notation, the objective is to lower bound
\begin{align}\label{eq:bound_this_toric}
&|\overline{\Lambda}\mathcal{H}_Z(v_z\in\mathcal{V}_Z)\mathcal{H}_X(v_x\in\mathcal{V}_X)| \nonumber \\ & \quad \ge \sum_{s=l,r}\bigg(|\mathcal{R}_s|+\sum_{t=Z,X}|\beta^t_s|+|\delta^t_s|+|\Omega^t_s|\bigg).
\end{align}

Suppose, without loss of generality, the logical Pauli operators of the two logical qubits in left code are $\overline{Z}_l=\overline{Z}_l^{(1)}$, $\overline{X}_l^{(1)}$, $\overline{Z}_l^{(2)}$, and $\overline{X}_l=\overline{X}_l^{(2)}$. Then, we can assume $\overline{\Lambda}_l$ is in the group $\langle\overline{X}_l^{(1)},\overline{Z}_l^{(2)},S_l\rangle$, where $S_l$ is the stabilizer group of the left code. If this is not initially the case, multiply $\overline{\Lambda}$ by $\mathcal{H}_Z(\vec1\in\mathcal{V}_Z)=\overline{Z}_l\overline{Z}_r$ or $\mathcal{H}_X(\vec1\in\mathcal{V}_X)=\overline{X}_l\overline{X}_r$ or both to make it so, while preserving the problem of bounding Eq.~\eqref{eq:bound_this_toric} for all vectors $v_z,v_x$. This allows us to apply the theorem assumption and assert
\begin{equation}\label{eq:logical_operator_bound}
2d_l\le|\overline{\Lambda}_l\overline{Z}_l\overline{X}_l|=|\overline{Z}_l|-|\gamma_l^Z|-|\delta_l^Z|+|\overline{X}_l|-|\gamma_l^X|-|\delta_l^X|+|\mathcal{R}_l|.
\end{equation}
Also, because $\overline{\Lambda}_l$ is in $\langle\overline{X}_l^{(1)},\overline{Z}_l^{(2)},S_l\rangle$, it necessitates that $\overline{\Lambda}_r$ is a nontrivial logical operator, because $\overline{\Lambda}=\overline{\Lambda}_l\overline{\Lambda}_r$ is both nontrivial and commutes with $\overline{Z}_l\overline{Z}_r$ and $\overline{X}_l\overline{X}_r$. Therefore, we also have
\begin{equation}\label{eq:logical_operator_bound_2}
d_r\le|\overline{\Lambda}_r|=|\gamma_r^Z|+|\delta_r^Z|+|\Omega_r^Z|+|\gamma_r^X|+|\delta_r^X|+|\Omega_r^X|+|\mathcal{R}_r|.
\end{equation}

In addition, we created port functions to satisfy Eq.~\eqref{eq:port_fucntions_toric}. Translating that equation into the notation of Fig.~\ref{fig:expansionless_joint_toric_venn}, we obtain
\begin{align}\label{eq:port_containment}
|\beta_l^Z|+|\gamma_l^Z|+|\Omega_l^Z|&\ge|\beta_r^Z|+|\gamma_r^Z|,\\\nonumber
|\beta_l^X|+|\gamma_l^X|+|\Omega_l^X|&\ge|\beta_r^X|+|\gamma_r^X|.
\end{align}

Starting from Eq.~\eqref{eq:bound_this_toric}, we calculate
\begin{align*}
&|\overline{\Lambda}\mathcal{H}_Z(v_z\in\mathcal{V}_Z)\mathcal{H}_X(v_x\in\mathcal{V}_X)| \nonumber \\
&\ge|\mathcal{R}_l|+|\mathcal{R}_r|+\sum_{t=Z,X}|\delta_l^t|-|\gamma_l^t|+|\beta_r^t|+|\gamma_r^t| \nonumber \\
& \qquad +|\beta_r^t|+|\delta_r^t|+|\Omega_r^t|&&\text{[Eq.~\eqref{eq:port_containment}]}\\
&\ge d_r+|\mathcal{R}_l|+\sum_{t=Z,X}2|\beta_r^t|-|\gamma_l^t|+|\delta_l^t|&&\text{[Eq.~\eqref{eq:logical_operator_bound_2}]}\\
&\ge d_r+2d_l-(|\overline{Z}_l|+|\overline{X}_l|)+\sum_{t=Z,X}2|\beta_r^t|+2|\delta_l^t|&&\text{[Eq.~\eqref{eq:logical_operator_bound}]}\\
&\ge d_r+2d_l-(|\overline{Z}_l|+|\overline{X}_l|).
\end{align*}
Use $|\overline{Z}_l|=|\overline{X}_l|=d_l$ to conclude the lower bound on the deformed code distance is $d_r$.
\end{proof}

\tocless\section{Proofs on the toric code adapter}\label{app:toric_adapter_proofs}

\tocless\subsection{Proof that merged code has $k$ logical qubits}\label{app:toric_adapter_kqubits}

\begin{replemma}{lem:gmerge-k-logical-qubits}
If the original LDPC code encodes $k$ qubits, then the merged code encodes $k$ qubits.
\end{replemma}

\begin{proof}
Let the qubits in the rest of the original LDPC code (i.e. the complement of qubit sets $\lz$ and $\lx$, supporting logicals $\bar{Z}^{(c)}$ and $\bar{X}^{(t)}$, each corresponding to a different logical qubit) be labeled by $\mathcal{R}$. 
Note that irreducible logicals $\bar{Z}^{(c)}$ and $\bar{X}^{(t)}$ can be assumed to be disjoint without loss of generality, as any overlapping support between the two can be cleaned by stabilizer multiplication (see Corollary~\ref{cor:supportlemma} for proof).

We can partition the check matrix $H_X$ from the original LDPC code into three blocks: $H_X|_{\supp{(\bar{Z}^{(c)})}}$, the support of the original code $X$-checks in the overlap of the specific $\bar{Z}^{(c)}$ logical, given by some binary matrix $A_Z$, and similarly, $H_X|_{\supp{(\bar{X}^{(t)})}}$, the original code $X$-checks in the support of the $\bar{X}^{(t)}$ logical, given by binary matrix $C_Z$ and support on the remaining physical qubits $\mathcal{R}$ by $B_Z$. Overall the $X$-check matrix of the original code is given by $H_X = \begin{bmatrix} B_Z & A_Z & C_Z\end{bmatrix}$.

The ancilla system consists of three parts - qubits belonging to set $\mathcal{E}_Z$, initialized in single-qubit $\ket{+}$ states, qubits belonging to set $\mathcal{E}_X$, initialized in single-qubit $\ket{0}$ states and the toric code ancilla state. Initializing qubits in single-qubit $\ket{+}$ or $\ket{0}$ states is equivalent to adding single-qubit $X$ and $Z$ checks to the code.
Qubits in sets $\qz_{0}$ are initialized with single qubit $X$ stabilizers, and qubits in sets $\qx_{0}$ are initialized with single qubit $Z$ stabilizers. 
The toric code ancilla is initialized in a specific eigenstate of the toric code - namely the simultaneous $+1$ eigenspace of the logicals $\bar{Z}^{(2)}$ and $\bar{X}^{(1)}$. 

The number of physical qubits $n^{\prime}$ in the system, including the physical qubits from the original code and ancilla qubits 
\begin{align}
    n^{\prime} &= n + |\mathcal{E}_X| + |\mathcal{E}_Z| + d|\mathcal{Q}^X_{0}| + d|\mathcal{Q}^Z_{0}| \nonumber \\
    &= n + |\mathcal{E}_X| + |\mathcal{E}_Z| + d|\mathcal{V}_X| + d|\mathcal{V}_Z|
\end{align}
Here we use the fact that there is a one-to-one correspondence between physical qubits introduced and the new checks in $\mathcal{V}^X$ and $\mathcal{V}^Z$, up to a permutation.

It is trivially true that the single qubit stabilizers are linearly independent of one another, due to their distinct qubit support. Further since the count of these stabilizers is exactly equal to the count of ancilla qubits, they do not add any logical qubits to the code. 

Next, we deform the code by measuring stabilizers of the new code. In every layer, a $X$-check is introduced for each qubit in the support of $\bar{X}^{(t)}$, which means the number of $X$ checks in each layer, $|\cx_{i}| = |\lx| (= d)$ for $i = 0,...,d-1$. Since there is a one-to-one correspondence between qubits in $\lx$ and checks in port at $\mathcal{V}^{X}$ by injective map $f$ (set up in section \ref{sec-gauging-meas}), we have $|\cx_{i}| = |\mathcal{L}_{X}| = |\mathcal{V}^{X}|$.

The desired new code after code deformation has $X$-checks given by
\begin{widetext}
    \begin{align}
        H_{X}^{\prime} &= 
     \bordermatrix{ & \mathcal{R} & \mathcal{L}_Z & \mathcal{L}_X & \mathcal{E}_Z & \mathcal{Q}^Z_{0} & \mathcal{Q}^X_{0} & \mathcal{Q}^Z_{1} & \mathcal{Q}^X_{1} & ... & \mathcal{Q}^X_{d-2} & \mathcal{Q}^Z_{d-1}  & \mathcal{Q}^X_{d-1}  & \mathcal{E}_X \cr
       \mathcal{S}_X & B_Z & A_Z & C_Z & M_Z & & & & \cr
       \mathcal{U}_Z &  &  &  & N_Z & & & & \cr
       \mathcal{C}^{X}_{0} &  &  &  & T_Z & H_C & I & & & & & & I \cr
       \mathcal{C}^{X}_{1} & & & & &  &I & H_C & I && & & & \cr
       \vdots & & & & & & & &  & \ddots & & & \cr
       \mathcal{C}^{X}_{d-1} & & & & & & & & & & I & H_C & I \cr
       \mathcal{V}_{X} &  &  & F  & & & & & & & &  & P_X & G_X^{\intercal}\cr
       } \qquad
    \end{align}

Perform row operation $\mathcal{C}^{X}_{0} \rightarrow \mathcal{C}^{X}_{0} + \sum_{i=1}^{d-1} \mathcal{C}^{X}_{i} $.

\begin{align}
    \rightarrow & \qquad \bordermatrix{ & \mathcal{R} & \mathcal{L}_Z & \mathcal{L}_X & \mathcal{E}_Z & \mathcal{Q}^Z_{0} & \mathcal{Q}^X_{0} & \mathcal{Q}^Z_{1} & \mathcal{Q}^X_{1} & ... & \mathcal{Q}^X_{d-2} & \mathcal{Q}^Z_{d-1}  & \mathcal{Q}^X_{d-1}  & \mathcal{E}_X \cr
       \mathcal{S}_X & B_Z & A_Z & C_Z & M_Z & & & & \cr
       \mathcal{U}_Z &  &  &  & N_Z & & & & \cr
       \mathcal{C}^{X}_{0} &  &  &  & T_Z & H_C & 0 & H_C & 0 & & 0 & H_C & 0 \cr
       \mathcal{C}^{X}_{1} & & & & &  &I & H_C & I && & & & \cr
       \vdots & & & & & & & &  & \ddots & & & \cr
       \mathcal{C}^{X}_{d-1} & & & & & & & & & & I & H_C & I \cr
       \mathcal{V}_{X} &  &  & F  & & & & & & & &  & P_X & G_X^{\intercal}\cr
       } \qquad
\end{align}
\end{widetext}

In order to calculate the rank of $H_X^{\prime}$, we will consider the ranks of submatrices consisting of sets of rows, where it is clear that no row from one set can be expressed as a linear combination of rows from another set. 

First, consider the sets of checks labeled by $\mathcal{S}_{X}$. The rank is the same as the rank of the original check matrix $H_X$. Next, consider the checks from the gauge-fixing set $\mathcal{U}_Z$. We will use the fact that the dimension of any cycle basis of any graph with $E$ edges, $V$ vertices and $p$ connected components is given by its cyclomatic number, $|E| - |V| + p$. Recall the $X$ checks in $\mathcal{U}_Z$ are added to deal with redundancies due cycles present in the new $Z$ checks. Each $X$ check corresponds to a cycle in the graph given by edges (qubits) in $\mathcal{E}_Z$ and vertices in $\mathcal{V}_Z$. Since there is only one connected component ($p=1$), $\mathrm{rank}(N_Z) = |\mathcal{E}_Z| - |\mathcal{V}_Z| + 1$.

Next, we consider the check matrix rows corresponding to $\mathcal{C}^{X}_{0}$ checks. The check matrix rows of $\mathcal{C}^{X}_{0}$ now only consists of $T_Z$ and $H_C$ as submatrices. These will determine $\mathrm{rank}(\cx_{0})$, where $\mathrm{rank}(\cx_{0}) = \max(\mathrm{rank}(T_Z),\mathrm{rank}(H_C))$, as the rank of a matrix is equal to the number of rows of the largest square submatrix that has a nonzero determinant. We know $T_Z$ is a $|\mathcal{V}_Z| \times |\mathcal{E}_Z|$ matrix, and further, that one of the rows of $T_Z$ is redundant because the $\mathsf{SkipTree}$ algorithm returns to the original root vertex. We know the rank of $H_C$ is also $|\mathcal{V}_Z| - 1$, since it has $d_Z = |\mathcal{V}_Z|$ rows in total, with one of the checks redundant.

Next, consider the checks corresponding to sets $\cx_{i}$ for $1 \leq i \leq d$. These rows are all identical modulo a cyclic shift, as they each have support $I$ on $\mathcal{Q}^{X}_{i-1}$ from the previous layer, $I$ on $\qx_{i}$ and $H_C$ on $\qz_{i}$ within the same layer. Notice these rows are linearly independent of each other, for which it is sufficient to observe that their qubit support remains distinct even after row-column operations. The rank of the submatrix consisting of these rows is then the sum of the ranks of the individual sets of rows. Thus we obtain $\sum_{i=1}^{d-1} \mathrm{rank}(\mathcal{C}^{X}_{i}) = (d-1)|\mathcal{V}^X|$. 
\begin{align}
    \mathrm{rank}(H_X^{\prime}) &= \mathrm{rank}(\mathcal{B}^X) + \mathrm{rank}(N_Z) + \mathrm{rank}(H_C) \nonumber \\ & \qquad + \; \sum_{i=1}^{d-1} \mathrm{rank}(\mathcal{C}^{X}_{i}) + \mathrm{rank}(\mathcal{V}^{X}) \nonumber \\
    &= \mathrm{rank}(\mathcal{B}^X) + (|\mathcal{E}_Z| - |\mathcal{V}_Z| + 1) + (|\mathcal{V}_Z| - 1) \nonumber \\
    & \qquad + \; (d-1)|\mathcal{V}_X| + |\mathcal{V}_{X}| \nonumber \\
    &= \mathrm{rank}(H_X) + |\mathcal{E}_Z| + d|\mathcal{V}_X|
\end{align}

Similarly,
\begin{align}
    \mathrm{rank}(H_Z^{\prime}) = \mathrm{rank}(H_Z) + |\mathcal{E}_X| + d|\mathcal{V}_Z|
\end{align}

The number of logical qubits $k^{\prime}$ in the deformed code
\begin{align}
    k^{\prime} &= n^{\prime} - \mathrm{rank}(H_X^{\prime}) -\mathrm{rank}(H_Z^{\prime}) \nonumber \\
    &= (n + |\mathcal{E}_X| + |\mathcal{E}_Z| + d|\mathcal{V}_X| + d|\mathcal{V}_Z|)  \nonumber \\ &\qquad - \mathrm{rank}(H_X^{\prime}) - \; \mathrm{rank}(H_Z^{\prime}) \nonumber \\
    &= n - \mathrm{rank}(H_X) - \mathrm{rank}(H_Z) + |\mathcal{E}_X| + |\mathcal{E}_Z| + d|\mathcal{V}_X| \nonumber \\
    & \qquad + \; d|\mathcal{V}_Z| - |\mathcal{E}_X| - d|\mathcal{V}_Z| - |\mathcal{E}_X| - d|\mathcal{V}_Z| \nonumber \\
    &= k
\end{align}

\end{proof}

\tocless\subsubsection{Initial code deformation step}\label{subsec:code-deformation}

The standard approach to code deformation into a merged code that includes the original code augmented by ancillary qubits is to begin measuring stabilizers of the new code. In order to merge into the new code without reducing the code distance, the initial state on the ancillary qubits in $\mathcal{E}_Z$ and $\mathcal{E}_X$ (which will support the edges of auxiliary graphs in the merged code) and qubits in sets $\qz_i$ and $\qx_i$ for $i \in [d]$ (which will support the toric code adapter in merged code), needs to be chosen carefully. 

First consider the initial state on the edge qubits. The simplest starting state to prepare in practice is a product state, as it simply consists of physical qubits without any encoding. Not all arbitrary product state would be ideal candidate initial states. For instance, initializing $\mathcal{E}_{Z}$ in $\ket{0}^{\otimes |\mathcal{E}_{Z}|}$ with single-qubit $Z$ stabilizers $Z(e_j \in \mathcal{E}_Z)$ or initializing $\mathcal{E}_{X}$ in $\ket{+}^{\otimes |\mathcal{E}_{X}|}$ with single-qubit $X$ stabilizers $X(e_j \in \mathcal{E}_X)$ is not ideal because these single-qubit stabilizers do not commute with the original LDPC code stabilizers of the opposite type they will merge with ($\mathcal{S}_{X}$ and $\mathcal{S}_{Z}$ respectively). The outcome of these stabilizer measurements is no longer deterministic during the merging step, and hence information from original code stabilizer measurements prior to the merging step is lost. 
For a fault-tolerant merging process, instead we introduce single-qubit stabilizers which commute with the original code stabilizers they overlap with, so $X(e_j \in \mathcal{E}_Z)$ for $j \in [|\mathcal{E}_{Z}|]$, and $Z(e_j \in \mathcal{E}_X)$ for $j \in [|\mathcal{E}_{X}|]$, that is, qubits in sets $\mathcal{E}_Z$ are initialized as $\ket{+}$ states and qubits in sets $\mathcal{E}_X$ are initialized as $\ket{0}$ states.

The initial state of the toric code ancilla state involves some subtleties as well. In section \ref{subsubsec:static-gmerge-param} we have already seen that there are multiple logical representatives for $\bar{Z}^{(c)}$ and $\bar{X}^{(t)}$ across the $d$ layers. We do not want to introduce new stabilizers which could clean support from any of these representatives, as that would bring down the minimum distance of merged code. If one chooses to begin with a product state of single-qubit stabilizers, some possibilities can be precluded by inspection: initializing $\qz_i$ in $\ket{0}$ states with single-qubit $Z$ stabilizers or initializing $\qx_i$ in $\ket{+}$ with single-qubit $X$ stabilizers. This is because if $\qz_i$ is initialized with single-qubit $Z$ stabilizers, any of these stabilizers cleans support from a $\bar{Z}^{(c)}$ logical representative $Z(\vec{1} \in \qz_i)$ for any layer $i$ ($\because$ Eqns~\ref{eqn:zrep_on_qz0}, \ref{eqn:zrep_on_qzi}), and reduces the minimum distance of the merged code. In fact, the product of all single qubit $Z$ stabilizers from a set $\qz_i$ could clean $O(d)$ support or even the entire $\bar{Z}^{(c)}$ logical. The same issue arises if qubits in $\qx_i$ are initialized with single-qubit $X$ stabilizers, as any of these stabilizers cleans support of $\bar{X}^{(t)}$ representative $X(\vec{1} \in \qx_i)$ and reduces the $X$ distance. For this reason, we are constrained to initialize a state that does not contain operators $Z(s \in \qz_i)$ or $X(s \in \qx_i)$ for any $s \in \mathbb{F}_2^{d}$, in any layer $i$, within its stabilizer space. One way to ensure this is to instead initialize with single-qubit $X$ and $Z$ stabilizers of the opposite type on these qubits, i.e. $X(e_j \in \qz_i)$ and $Z(e_j \in \qx_i)$ for all $j \in [d]$, in each layer $i$, or in statevector terms, to initialize in the product state $\ket{+}^{\otimes d^{2}}\! \ket{0}^{\otimes d^{2}}$, where all qubit sets $\qz_{i}$ as $\ket{+}$ states and $\qx_{i}$ as $\ket{0}$ states. 

While the product state comprising $\ket{+}$ on qubits in $\qz_i$ and $\ket{0}$ on qubits in $\qx_i$ is straightforward to prepare, a snag is that a threshold for the merging process may not exist, due to the anti-commuting gauge operators present in the deformed subsystem code. When one starts measuring the stabilizers of the toric code, the initial single-qubit $X$ checks on $\qz_i$ anticommute with the new $Z$ toric code checks $\cz_i$ being measured. The anti-commuting operators imply that measurement outcomes are no longer deterministic. The only deterministic measurements are products of stabilizers $\prod_{i=0}^{d-1}Z(e_j \in \qz_i)$ and also $\prod_{i=0}^{d-1}X(e_j \in \qx_i)$, which are $O(d)$ size, thereby not providing enough information for error-correction.

A second approach is to directly add products of single-qubit $Z$ stabilizers or $X$ stabilizers, specifically $\prod_{i=0}^{d-1}Z(e_j \in \qz_i)$ and $\prod_{i=0}^{d-1}X(e_j \in \qx_i)$ to the stabilizer group for the initial state. These are also representatives of $\bar{Z}^{(2)}$ and $\bar{X}^{(1)}$ on the toric code, so this means initializing directly in the $\ket{\overline{+0}}$ logical state. The qubits are not in a product state but rather in a highly entangled state belonging to the toric code codespace. Note, even for choosing a suitable state within the toric codespace, one can rule out the $+1$ eigenspaces of both $\bar{Z}^{(1)}$ and $\bar{X}^{(2)}$ logicals, because once measured, the logical operators are added to the stabilizer group and the newly added stabilizers ($\bar{Z}^{(1)}$, $\bar{X}^{(2)}$ in this scenario) would clean the support of $\bar{Z}^{(c)}$ or $\bar{X}^{(t)}$ and reduce the distance of our merged code. This reduces the feasible subspace for initializing the toric code ancilla, from $4$ codestates to exactly one codestate, $\ket{\overline{+0}}$. We decide to blackbox the preparation of the toric code state. Methods to prepare such a state exist, for example by simply measuring the logical operators $\bar{X}^{(1)}$ and $\bar{Z}^{(2)}$ (using auxiliary graph surgery \cite{williamson2024gauging}, sec~\ref{sec-gauging-meas} or homomorphic measurements \cite{shuang2023homomorphic}) to prepare their simultaneous $+1$ eigenstate, applying Pauli corrections where necessary.


It is possible that the product state $\ket{+}^{\otimes d^{2}}\! \ket{0}^{\otimes d^{2}}$ on the toric code is effective for small-size demonstrations, despite the code deformation step lacking a threshold, since errors can still be effectively detected. For a scheme applicable in the asymptotic regime, we use $\ket{\overline{+0}}$ as the initial state in the toric code adapter.

With the edge qubits and preprepared toric codestate initialized as described, code deformation takes place by measuring stabilizers of the new merged code (Tanner graph in Fig.~\ref{fig:unitary_adapter_d_layer}). At the end of the protocol duing the split step $(iii)$, we measure the initial stabilizers again, that is, stabilizers of the original LDPC code, single-qubit $X$ stabilizing edge qubits $\mathcal{E}_Z$, single-qubit $Z$ stabilizing edge qubits $\mathcal{E}_X$ and measure the logical operators $\bar{X}^{(1)}$ and $\bar{Z}^{(2)}$ on the toric code, applying corrections to reset the toric code to $\ket{\overline{+0}}$.

\tocless\subsection{Unitary $\cnotgate$ using Dehn twists in the toric code}
\label{app:dehntwist_unitary}

\tocless\subsubsection{Notation on permutations}
\label{subsec:convention_permutations}

Before we proceed, it is relevant to establish convention to distinguish between permutations on (qu)bits and checks for any one type of stabilizer for a CSS quantum code, or any classical code. Consider a set $\ccal$ of checks, and set $\qcal$ of (qu)bits. An individual check indexed $j \in [|\ccal|]$ is written as $e_j \in \mathbb{F}_2^{|\ccal|}$, where $e_j$ is the binary row vector with a $1$ only in the $j^{\textrm{th}}$ position. A qubit indexed $j \in [|\qcal|]$ is written as $e_j^{\top} \in \mathbb{F}_2^{|\qcal|}$. Let the check matrix $A \in \mathbb{F}_2^{|\ccal| \times \qcal|}$ describe the collection of edges in the Tanner graph between individual checks from set $\ccal$ and qubits from $\qcal$, where this collection itself can be viewed as an edge in the abstract Tanner graph between \textit{sets} of qubits and checks. Use row vector $u \in \mathbb{F}_2^{|\ccal|}$ to denote selections of checks from $\ccal$, and use column vector $v^{\top} \in \mathbb{F}_2^{|\qcal|}$ to denote selections of (qu)bits from $\qcal$, where $1$ indicates if the (qu)bit or check is present in the choice. Then $uA$ is a column vector that gives the (qu)bits supporting checks in $u$, and $Av^{\top}$ gives the syndrome corresponding to bitstring $v^{\top}$.  Permuting checks in $\ccal$ under $\pi: u \rightarrow u\pi$ is equivalent to left-multiplying $A$ by $\pi$, since $(u\pi)A = u(\pi A)$. In order to keep the product $uA$ (bits supporting checks $u$) invariant, a permutation on checks by $\pi$ needs to be accompanied by left-multiplication of $A$ by $\pi^{\top}$ and vice-versa, as $uA = (u\pi)(\pi^{\top}A)$ $(\because \pi\pi^{\top} = I)$. 

Next, note that permuting qubits in $\qcal$ under permutation $\sigma: v \rightarrow v\sigma$ or equivalently $v^{\top} \rightarrow (v\sigma)^{\top} = \sigma^{\top} v^{\top}$ transforms the syndrome the same way as right-multiplying $A$ by $\sigma^{\top}$ instead, since $A(\sigma^{\top} v^{\top}) = (A\sigma^{\top}) v^{\top}$. Again, it follows that to keep the syndrome invariant, a right-multiplication of $A$ by $\sigma$ must be accompanied by  permutation $\sigma$ on qubits $v$ and vice-versa, since $Av^{\top} \!= (A\sigma)(\sigma^{\top} v^{\top}) \; (\because \sigma\sigma^{\top}\!\!=\!I) = (A\sigma)(v\sigma)^{\top}$.

Overall, the effective transformation on edge $A$ of the Tanner graph, for check structure and syndrome to remain invariant after permutation $\pi$ on checks and $\sigma$ on qubits, is $ A \longrightarrow \pi^{\top} A \, \sigma$, as shown in Figure~\ref{fig:perm_convention}.
Conversely, if the edge $A$ transforms as $A \longrightarrow \pi^{\top} A \, \sigma$, then to preserve check structure and syndrome (and in the case where the syndrome is $0$, to remain in the same codespace), permutations $\pi$ on checks and $\sigma$ on qubits need to be applied.

\begin{figure}[h]
\centering
\captionsetup{justification=centering}
 \begin{tikzpicture}[]
        \draw[draw=gray,line width=0.6pt] (0,0) to (3,0);
        
        \node[draw, line width=0.7pt, fill=white, minimum size=0.6cm] at (-0.1, 0.1) {};
        \node[draw, line width=0.7pt, fill=white,minimum size=0.6cm] at (0, 0) {};
        \draw[fill=white,line width=0.7pt] (2.92,0.08) circle (0.35);
        \draw[fill=white,line width=0.7pt] (3,0) circle (0.35) node {};

        \node[] at (1.5, 0.3) {$A$};

        \node[] at (-0.1, 0.7) {\large $\pi$};
        \node[] at (3, 0.7) {\large $\sigma$};
        \node[] at (0, -0.6) {\large $\mathcal{C}$};
        \node[] at (3, -0.6) {\large $\mathcal{Q}$};
        \node[] at (0, -1.15) {$u \in \mathbb{F}_{2}^{|\mathcal{C}|}$};
        \node[] at (3.2, -1.15) {$v \in \mathbb{F}_{2}^{|\mathcal{Q}|}$};

\end{tikzpicture}
 \caption{Permuting qubits and checks: $A \longrightarrow \pi^{\top} A \, \sigma$}
\label{fig:perm_convention}
  \end{figure}

\tocless\subsubsection{Tracking stabilizer evolution}\label{subsec:dehn_twist_permutations}

Let us examine the effect of each set of transversal $\cnotgate$s on the quantum state initialized in the codespace of merged code, by describing their action on the stabilizer tableau \cite{aaronson04}. First consider the action of the circuit on $X$-type stabilizers, as $\cnotgate$ gates do not mix $X$ and $Z$ type operators. The action on $Z$-type stabilizers proceeds similarly, with controls and targets exchanged.
Consider the $i\!=\!0$ layer of the toric code. $X$-stabilizers from the set $\cx_{0}$ are transformed as
\begin{align}\label{eqn-X-stab-layer-0}
    \begin{pmatrix} T_Z & H_C & I \end{pmatrix} \; \xrightarrow{ \cnotgate(\mathcal{Q}_{0}^{Z},\mathcal{Q}_{0}^{X})}
\; &\begin{pmatrix} T_Z & H_C & I+H_C \end{pmatrix}  \nonumber \\
= &\begin{pmatrix} T_Z & H_C & C \end{pmatrix}
\end{align}
In the above we used the simple observation that $I+H_C = C$, where $C$ refers to the cyclic shift matrix by 1 with entries $\ket{j}\bra{j+1}{\pmod d}$. Thus Eq.~\ref{eqn-X-stab-layer-0} shows that applying physical $\cnotgate$s implements a transformation $I \rightarrow C$ on the edge between $\cz_{0}$ and $\qx_{0}$ in the Tanner graph. Following the reasoning in section~\ref{subsec:convention_permutations}, in order to preserve the codespace a permutation $\sigma_0$ (now additionally indexed $0$ to indicate layer $0$) is applied to qubits $v$ in $\qx_{0}$ such that
the desired net transformation of edge $I \rightarrow C\sigma_{0} = I$, which implies $\sigma_{0} = C^{-1}$. Equivalently, apply $\sigma_0^{\top}$ to $v^{\top}$  where $\sigma_0^{\top}\!=(C^{-1})^{\top}\!= C \;(\because C$ is unitary$)$. Applying the permutation, we get $C e_j^{\top} = e_{j+1}^{\top}$ for all $j \in [d]$ and so each qubit indexed $j$ is relabeled with index $j+1$. Thus transversal physical $\cnotgate$s followed by $C^{-1}$ on qubits in $\qx_0$ preserve the stabilizer space of $X$ stabilizers in $\cx_0$.

Meanwhile, $Z$-stabilizers from the set $\mathcal{C}_{0}^{Z}$ transform under action of these transversal $\cnotgate$s to
\begin{align}\label{eqn:Z-stab_permuted}
    \begin{pmatrix} I & H_C^{\top} & I \end{pmatrix} \; \xrightarrow{ \cnotgate(\mathcal{Q}_{0}^{Z},\mathcal{Q}_{0}^{X})}
\; &\begin{pmatrix} I + H_C^{\top} & H_C^{\top} & I \end{pmatrix}  \nonumber \\
= &\begin{pmatrix} C^{-1} \quad \; & H_C^{\top} & I \end{pmatrix}
\end{align}

\noindent Here the transversal $\cnotgate$s apply a cyclic shift $C^{-1}$ to the Tanner graph edge $I$ between $Z$ check set $\cz_{0}$ and the qubit set $\qz_{0}$. The  transformation on check matrix $I \rightarrow \pi_0^{\top}C^{-1} = I$ must be accompanied either by a permutation $\pi_0 =C^{-1}$ on checks in $\cz_{0}$, or $C$ on qubits in $\qz_{0}$. Since the labels for qubits in $\qz_{0}$ are already fixed by $P_Z$, this time we incorporate a permutation $\pi_0$ into the set $\cz_{0}$ of $Z$ checks.

One can verify that the remaining edge in layer $i\!=\!0$, the edge between qubits from $\qx_{0}$ (permuted by $\sigma_0=C^{-1}$) and checks in $\cz_{0}$ (permuted by $\pi_0=C^{-1}$), remains invariant under these permutations. The overall transformation on edge $(\qx_{0},\cz_{0})$ is given by
\begin{align}
    H_C^{\top} \longrightarrow \; & (C^{-1})^{\top} H_C^{\top} \, C^{-1} \tag{$\because A \rightarrow \pi^{\top}A\,\sigma $} \\
    = \; &C \, (I+C^{-1}) \, C^{-1} \nonumber \\ = \; &I+C^{-1} = \; H_C^{\top} \label{eqn-permutation-ccl}
\end{align}

 \noindent In the above, we used the fact that $H_C^{\top}=I+C^{-1}$ and $C^{\top}\!\!=C^{-1}$ because $C$ is unitary. 

 To summarize, physical $\cnotgate(\qz_0,\qx_0)$ followed by permutations  $\sigma_0$  on qubits in $\qx_0$ and $\pi_0$ on $Z$-checks in $\cz_0$, preserve all original Tanner graph edges belonging to the $0^{\textrm{th}}$ layer of merged code, where
 \begin{align}\label{eqn-sigma-pi-layer0} 
     \sigma_0 = C^{-1} \; \text{ and } \; \pi_0 = C^{-1}.
 \end{align}

Qubits $\qx_{0}$ and checks $\cz_{0}$ also have outgoing edges to $\cx_{1}$ and $\qz_{1}$ in the next primal layer, $i=1$, described by check matrices equal to the identity matrix $I$ as per the definition~\ref{defn-toric-ancilla} (see Fig~\ref{fig:cyclic_permutations_codespace}). 
To ensure this check structure remains preserved even after permutations $\sigma_0$ and $\pi_0$ on $\qx_{0}$ and $\cz_{0}$, we need to apply adjustment permutations $\pi_1^{\prime}$ on check set $\cx_{1}$ and $\sigma_1^{\prime}$ on qubit set $\qz_{1}$. Here we have denoted permutations in the primal layers with a prime$^\prime$. Specifically we want $\pi_1^{\prime}$ on $\cx_1$ such that $I \rightarrow (\pi^{\prime}_{1})^{\top} I\,\sigma_0 = I$ and also $\sigma_1^{\prime}$ on $\qz_1$ such that $I \rightarrow \pi^{\top}_{0}I\,\sigma_1^{\prime} = I$. This is satisfied when $\pi_{1}^{\prime} = \sigma_0$ and $\sigma_1^{\prime} = \pi_{0}$. From Eq.~\ref{eqn-sigma-pi-layer0} we know $\sigma_0=\pi_0=C^{-1}$. Therefore, permutations $\pi_1^{\prime}$ on checks in $\cx_{1}$ and $\sigma_1^{\prime}$ on qubits in $\qz_{1}$, restore the edges connecting layers $i\!=\!0$ and $i\!=\!1$ of the toric code, where
\begin{align}\label{eqn-sigma-pi-prime-layer1} 
     \sigma_1^{\prime} = C^{-1} \; \text{ and } \; \pi_1^{\prime} = C^{-1}.
 \end{align}

\noindent By extension of the same reasoning, it is clear that in order to preserve the edges $I$ connecting dual layer $i$ and primal layer $i+1 {\pmod 2}$, any permutations $\sigma_i$ on $\qx_i$ qubits and $\pi_i$ on $\cz_i$ checks in layer $i$ necessitate permutations $\pi_{i+1}^{\prime}$ on $\cx_i$ checks and $\sigma_{i+1}^{\prime}$ on $\qz_i$ qubits in the next layer $i+1$, given by:
\begin{equation}\label{eqn-perm-diff-layers}
    \sigma_{i+1}^{\prime} = \pi_i \; \text{ and } \; \pi_{i+1}^{\prime} = \sigma_i
\end{equation}

Returning to the stabilizer tableau, we see that $X$-stabilizers $\in \cx_i$ in layers $i=1...d-1$ evolve under transversal $\cnotgate$s in layer $i$ in their local of frame of reference (i.e. with respect to qubits from $\qx_i, \qx_{i-1}, \qz_i$ in their support prior to any permutations) in a manner similar to $X$ stabilizers in $\cx_0$ (Eq.~\ref{eqn-X-stab-layer-0}),
\begin{align}
    \begin{pmatrix} I & H_C & I \end{pmatrix} \; \xrightarrow{ \cnotgate(\qz_{i},\qx_{i})}
\; &\begin{pmatrix} I & H_C & I+H_C \end{pmatrix}  \nonumber \\
= &\begin{pmatrix} I & H_C & C \end{pmatrix}
\end{align}
This implies that the net permutation $\sigma_i$ required on qubits in $\qx_{i}$ would need to undo the shift $C$ introduced by the transversal $\cnotgate$ gates in the $i$th layer, in addition to any permutation $\pi_{i}^{\prime}$ on checks in $\cx_{i}$. The desired net transformation of edge $I \rightarrow (\pi_{i}^{\prime})^{\top}C\sigma_{i} = I$ which implies $\sigma_{i} = C^{-1}\pi_{i}^{\prime} = C^{-1}\sigma_{i-1} \; (\because \text{ Eq.~}\ref{eqn-perm-diff-layers})$. This decouples the recurrence relation in Eq.~\ref{eqn-perm-diff-layers} to give a simple recurrence relation for $\sigma_{i}$ for which we have already seen the base case, $\sigma_{0} = C^{-1}$ for layer $0$. Solving, we get the closed form $\sigma_i = (C^{-1})^{i}C^{-1} = C^{-i-1}$. Thus the net permutation $\sigma_i$ on qubits $\qx_i$ in each layer is given by $C^{-i-1}$. The permutations $\pi_i$ on checks in $\cz_{i}$ follow by tracking evolution of $Z$ stabilizers $\in \cz_i$. This completes the set of permutations in the dual layers. Applying Eq.~\ref{eqn-perm-diff-layers} we also obtain closed forms for permutations $\sigma_i^{\prime}$ and $\pi_i^{\prime}$ in the primal layers. In summary,
\begin{align}\label{eqn-perm-layeri}
    \sigma_{i} = C^{\, -i-1} \; &\text{ and } \; \pi_{i} = C^{\, -i-1} \\
    \sigma_{i}^{\prime} = C^{\, -i} \quad \, &\text{ and } \; \pi_{i}^{\prime} = C^{\, -i}
\end{align}

The required permutations on all sets of qubits and checks in each layer to remain within the codespace are shown in Figure~\ref{fig:cyclic_permutations_codespace}. Note as a result of these permutations $\sigma_i$ on qubits in $\qx_i$, the targets for transversal $\cnotgate$s are shifted by one index in each successive layer, shifted in total by $i$ for layer $i$. Instead of tranversal $\cnotgate$s being implemented between qubit $j$ as control and qubit $j$ as target for all $j$, now the transversal gates act between qubit $j$ control and qubit $j\!+\!1$ as target in layer $1$,  qubit $j\!+\!2$ as target in layer $2$ and so on till qubit $j\!+\!i \pmod{d}$ as target in layer $i$.

\tocless\subsection{Logical action of the Dehn twist}

\tocless\subsubsection{$\overline{\cnotgate}$ via a Dehn twist on the toric code}
\label{subsec:dehntwist_toric}

We first state and verify the logical action for Dehn twists on the toric code in Lemma~\ref{thm:dehntwist_toric_action}, through the lens of our notation.

\begin{replemma}{thm:dehntwist_toric_action}
Consider the toric code, a $[[2d^2,2,d]]$ CSS code with Tanner graph described as in Definition \ref{defn-toric-ancilla}. A $\overline{\cnotgate}_{12}$ gate can be performed between its two logical qubits (w.l.o.g.$\!$ $1$ is chosen to be the control qubit) using the following circuit
\begin{align}
    \overline{\cnotgate}_{12} =  \; \prod_{i=0}^{d-1} \, 
    \mathsf{\Gamma}_i^{\prime}(\qz_{i}) \;
    \mathsf{\Gamma}_i(\qx_{i}) \; {\cnotgate}(\mathcal{Q}^{Z}_{i},\mathcal{Q}^{X}_{i})
\end{align}
where $\mathsf{\Gamma}_i(\mathcal{A})$ is a unitary corresponding to permutations by $C^{-i-1}$ on qubits $e_j \in \mathcal{A}$, and $\mathsf{\Gamma}_i^{\prime}(\mathcal{A})$ is a unitary corresponding to permutations by $C^{-i}$ on qubits $e_j \in \mathcal{A}$, as derived in sec~\ref{subsec:dehn_twist_permutations} and ${\cnotgate}(\mathcal{A},\mathcal{B})$ is shorthand for transversal $\cnotgate$ gates between qubits in ordered sets $\mathcal{A}$ and $\mathcal{B}$, $|\mathcal{A}|=|\mathcal{B}|$, defined as in Eq.~\ref{eqn:shorthand-cnot-notation}.
\end{replemma}

\begin{proof}
 The desired action of $\overline{\cnotgate}_{12}$ on the Pauli logicals of the toric code is given by
\begin{subequations}
\begin{align}
    \bar{X}^{(1)} &\rightarrow \bar{X}^{(1)} \bar{X}^{(2)} \label{eqn:map_x1_toric} \\
    \bar{Z}^{(1)} &\rightarrow \bar{Z}^{(1)} \label{eqn:map_z1_toric}
     \\
    \bar{X}^{(2)} &\rightarrow \bar{X}^{(2)} \label{eqn:map_x2_toric}\\
    \bar{Z}^{(2)} &\rightarrow \bar{Z}^{(1)} \bar{Z}^{(2)} \label{eqn:map_z2_toric}
\end{align}
\end{subequations}

Consider the initial logical operators of the attached toric code ancilla system, before the circuit described by Eq.~(\ref{eqn:toric_dehntwist_cnot}) is implemented. In the standard form, the logical representatives of the toric code are tensored $X$ and $Z$ operators supported on qubits along each topologically nontrivial loop in the original lattice or equivalently in the compact description Tanner graph as
\begin{subequations}
\begin{align}
    \bar{X}^{(1)} &= \prod_{i=0}^{d-1} X ({e}_0 \in \mathcal{Q}_i^{Z}) \label{eqn:x1_toric}\\
    \bar{X}^{(2)} &= X (\vec{1} \in \qx_0) \\
    \bar{Z}^{(1)} &= Z (\vec{1} \in \qz_0) \\
    \bar{Z}^{(2)} &= \prod_{i=0}^{d-1} Z ({e}_0 \in \qx_{i}) 
\end{align}
\end{subequations}

In order to see why Eq.~\ref{eqn:x1_toric} holds, note that $|\bar{X}^{(1)} \cap \bar{Z}^{(1)}| \equiv \phi_{1}$ is odd, and is at least 1. This implies $\phi_1$ qubits in each set $\qz_i$ fully supporting a $\bar{Z}^{(1)}$ representative also supports part of the $\bar{X}^{(1)}$ representative. We can multiply the weight-2 cyclic code $X$ checks in $\cx_i$ to clean even-sized support from $\phi_1$ to obtain a form of the logical where each set $\qz_i$ supports exactly \textit{one} qubit in $\bar{X}^{(1)}$. The cyclic code stabilizers in $\cx_i$ can also move this single-qubit support within $\qz_i$ such that $\bar{X}^{(1)}$ representative has support on the first qubit $e_0$ in every layer.

It is trivial to see that logical $\bar{X}^{(2)}$ is unchanged under Dehn twist circuit in Eq.~\ref{eqn:toric_dehntwist_cnot}. Recall $\bar{X}^{(2)}$, supported on all the qubits in set $\qx_0$, is entirely contained within the first ($i\!=\!0$) layer and so it is sufficient to consider $\cnotgate(\qz_{0},\qx_{0})$ implemented in layer $0$ alone. These gates are controlled on qubits  $e_j \in \qz_{0}$ for all $j \in [|\qz_{0}|]$, and are targeted on $e_j \in \qx_{0}$. Each physical $\cnotgate$ gate leaves an $X$-type operator on its target qubit unchanged, $X(e_j \in \qx_0) \rightarrow X(e_j \in \qx_0)$. The transversal $\cnotgate$s leave any $X$-type operator residing on qubits in set $\qx_0$ unchanged. 

Similarly, $\bar{Z}^{(1)}$ resides on all the qubits in set $\mathcal{Q}_0^{Z}$, and is unchanged under action of transversal physical $\cnotgate$s, as the $Z$ operators supported on control qubits of a physical $\cnotgate$ remain invariant.

Next we look at $\bar{X}^{(1)}$, specifically considering the representative in Eq.~\ref{eqn:x1_toric}. This representative resides on one qubit from each of the $d$ layers, and thus will be affected by the first physical $\cnotgate$ gate in each layer. Each physical $\cnotgate$ maps 
\begin{equation}
    X(e_0 \in \qz_i) \rightarrow X(e_0 \in \qz_i)X(e_0 \in \qx_i) \quad \textrm{ for all } i \in [d]
\end{equation}

Next, we apply permutation $C^{-i}$ on $\qx_i$ in each layer $i$. Under permutation by $C^{-i}$, each qubit $e_0^{\top}$ in $\qx_i$ is mapped to $e_{i}^{\top}$, leaving the resulting operator $X(e_0 \in \qz_i)X(e_i \in \qx_i)$. Recall, each $X$ check in any set $\cx_i$ can clean exactly one qubit from layer $i$ and add the exactly the same support to layer $i\!-\!1$. All the qubits $e^{\top}_{i}$ can be moved by stabilizer multiplication $\prod_{i=0}^{d} \prod_{i^{\prime}=0}^{i} (e_{i^{\prime}} \in \cx_i)$ to create $\sum_{i=0}^{d-1} e_i = \vec{1}$ support on $\qx_0$. The resulting operator is $X(e_0 \in \qz_i)X(\vec{1} \in \qx_0) = \bar{X}^{(1)} \bar{X}^{(2)}$. This verifies the logical map $\bar{X}^{(1)} \rightarrow \bar{X}^{(1)} \bar{X}^{(2)}$.

The last logical to verify the action of the circuit in Eq.~\ref{eqn:toric_dehntwist_cnot} is the $\bar{Z}^{(2)}$ logical. We pick a representative of $\bar{Z}^{(2)}$ which has exactly single qubit support in each of the $d$ layers, which we choose without loss of generality to be the first qubit in all qubit sets $\mathcal{Q}_i^{X}$. Since $Z$ operators flow from the target to the control of physical $\cnotgate$s, each physical $\cnotgate$ gate maps

\begin{equation}
    Z(e_0 \in \qx_i) \rightarrow Z(e_0 \in \qz_i)Z(e_0 \in \qx_i) \quad \textrm{ for all } i \in [d]
\end{equation}

After permutations $C^{-i-1}$ to qubits in sets $\qz_i$ in each layer, each qubit $e_0^{\top}$ in $\qz_i$ is mapped to $e_{i+1}^{\top}$, leaving the resulting operator $Z(e_{i+1} \in \qz_i)Z(e_0 \in \qx_i)$. Note that the cyclic shift $i+1$ is modulo $d$. For any pair of qubits in different layers, $i_1$ and $i_2$, for $i_1 \neq i_2$, $e_{i_1+1} \neq e_{i_2+1}$, and further, for any layer $i \in [d]$, the qubit indexed ${i+1}\pmod{d}$ in $\qz_i$ supports the resulting operator. Once again it is possible to multiply $Z$ stabilizers from $\cz_i$ to move the logical support qubit by qubit from all layers $i$ to layer $i\!=\!0$, to obtain resulting operator $Z(\vec{1} \in \qz_i)Z(e_0 \in \qx_i) = \bar{Z}^{(1)}\bar{Z}^{(2)}$. This verifies the logical map $\bar{Z}^{(2)} \rightarrow \bar{Z}^{(1)} \bar{Z}^{(2)}$.

The last thing to note for correctness of logical action is that the Dehn twist circuit described preserves the stabilizer space. Permutations $C^{-i}$ on qubits and checks in the primal layers and permutations $C^{-i-1}$ on qubits and checks in the dual layers restore the original Tanner graph edges of merged code after physical $\cnotgate$s, as shown in Appendix~\ref{subsec:dehn_twist_permutations}.
\end{proof}

\tocless\subsubsection{Addressable logical action of $\cnotgate$ using toric code adapter}
\label{app:toric_adapter_cnot_exsitu}

The following theorem shows how this can be useful for performing addressable gates on the LDPC code logicals.

\begin{theorem}\label{thm:ldpc-dehntwist-cnot}
Consider an arbitrary $[[n,k,d]]$ CSS LDPC code. Let $c$ and $t$ be any two logical qubits in this code. $\bar{Z}^{(c)}$ and $\bar{X}^{(t)}$ are assumed w.l.o.g.\ to be pairwise disjoint (see Lemma \ref{lem:supportlemma} in Appendix). Consider the merged code obtained after merging with the toric code adapter using auxiliary graphs $\mathcal{G}_{Z}$ and $\mathcal{G}_{X}$. The circuit defined below implements the desired targeted logical $\overline{\cnotgate}$ between the quantum LDPC logical qubits $c$ and $t$. 

\begin{align}\label{eqn:ldpc_dehntwist_cnot}
    \overline{\cnotgate}_{c\, t} = \, \; \; \prod_{i=0}^{d-1} \, 
    \mathsf{\Gamma}_i^{\prime}(\qz_{i}) \;
    \mathsf{\Gamma}_i(\qx_{i}) \; {\cnotgate}(\mathcal{Q}^{Z}_{i},\mathcal{Q}^{X}_{i})
\end{align}
where $\mathsf{\Gamma}_i(\mathcal{A})$ is a unitary corresponding to permutations by $C^{-i-1}$ on qubits $e_j \in \mathcal{A}$, and $\mathsf{\Gamma}_i^{\prime}(\mathcal{A})$ is a unitary corresponding to permutations by $C^{-i}$ on qubits $e_j \in \mathcal{A}$, and ${\cnotgate}(\mathcal{A},\mathcal{B})$ is shorthand for transversal gates between two ordered sets of qubits $\mathcal{A}$ and $\mathcal{B}$ defined in Eq.~\ref{eqn:shorthand-cnot-notation}.
\end{theorem}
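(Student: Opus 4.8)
\textbf{Proof proposal for Theorem~\ref{thm:ldpc-dehntwist-cnot}.}

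The plan is to reduce the claim to the already-established ingredients: Lemma~\ref{thm:dehntwist_toric_action} (the Dehn twist circuit implements $\overline{\cnotgate}_{12}$ on the toric code), Lemma~\ref{lem:gmerge-k-logical-qubits} (the merged code still encodes $k$ qubits), Theorem~\ref{thm:dist-preserve-static-merge} (the merge preserves distance $d$), and the logical-operator-moving identities Eqs.~\eqref{eqn:zrep_on_qz0}--\eqref{eqn:zrep_on_qzi} and \eqref{eqn:xrep_on_qxd1}. The key point is that, once we are inside the merged code, the logical qubit $c$ of the original LDPC code and the first logical qubit of the toric code are \emph{the same logical qubit}: the operator $\bar{Z}^{(c)} = Z(\mathcal{L}_Z)$ has an equivalent representative $Z(\vec1\in\qz_0)$, which is exactly the toric code representative of $\bar Z^{(1)}$, and similarly $\bar{X}^{(t)}=X(\mathcal{L}_X)$ is equivalent to $X(\vec1\in\qx_{d-1})$, the toric representative of $\bar X^{(2)}$ up to moving between dual layers. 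So ``apply $\overline{\cnotgate}_{12}$ on the toric code'' is literally ``apply $\overline{\cnotgate}_{ct}$ on the merged code.''

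Concretely, I would proceed as follows. First, recall that the protocol initializes the toric ancilla in $\ket{\overline{+0}}$ and, after the merge, the stabilizer group of the deformed code contains $\mathcal H_Z(\vec1\in\vz)$ and $\mathcal H_X(\vec1\in\vx)$, which identify $Z(\mathcal L_Z)\equiv Z(\vec1\in\qz_0)$ and $X(\mathcal L_X)\equiv X(\vec1\in\qx_{d-1})$; by Lemma~\ref{lem:gmerge-k-logical-qubits} no logical information is lost, so the logical qubit $c$ (resp.\ $t$) of the merged code is carried by the equivalence class of $\bar Z^{(c)}$ (resp.\ $\bar X^{(t)}$), and the remaining $k-1$ logical qubits of the original code pass through untouched since the deformation only acts on qubits/checks attached to $\mathcal L_Z,\mathcal L_X$ and the ancilla. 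Second, observe that the circuit in Eq.~\eqref{eqn:ldpc_dehntwist_cnot} acts \emph{only} on the toric-code qubit sets $\qz_i,\qx_i$ (transversal $\cnotgate$s within layers plus the cyclic permutations $\Gamma_i,\Gamma_i'$), so it commutes with every stabilizer and logical operator supported outside the toric ancilla; in particular it acts as the identity on the other $k-2$ logical qubits and on the ancilla-graph edge qubits $\mathcal E_Z,\mathcal E_X$ (the permutation/$\cnotgate$ pattern is exactly the one shown in Appendix~\ref{subsec:dehn_twist_permutations} to preserve all Tanner edges of the merged code, so we remain in the merged codespace). Third, apply Lemma~\ref{thm:dehntwist_toric_action}: on the toric-code logical qubits labelled $1,2$ the circuit enacts $\bar X^{(1)}\mapsto\bar X^{(1)}\bar X^{(2)}$, $\bar Z^{(2)}\mapsto\bar Z^{(1)}\bar Z^{(2)}$, with $\bar Z^{(1)},\bar X^{(2)}$ fixed. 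Translating through the identifications $\bar Z^{(1)}\equiv\bar Z^{(c)}$, $\bar X^{(2)}\equiv\bar X^{(t)}$ (and, for the conjugate operators, $\bar X^{(1)}\equiv\bar X^{(c)}$, $\bar Z^{(2)}\equiv\bar Z^{(t)}$, which are the corresponding logicals of the $c,t$ qubits inside the merged code), this is exactly the logical map $\bar X^{(c)}\mapsto\bar X^{(c)}\bar X^{(t)}$, $\bar Z^{(t)}\mapsto\bar Z^{(c)}\bar Z^{(t)}$, $\bar Z^{(c)}\mapsto\bar Z^{(c)}$, $\bar X^{(t)}\mapsto\bar X^{(t)}$, i.e.\ $\overline{\cnotgate}_{ct}$. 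Finally, invoke Theorem~\ref{thm:dist-preserve-static-merge} so that the whole excursion through the merged code is fault-tolerant with distance $d$, and note that the subsequent split step together with Pauli corrections returns the toric ancilla to $\ket{\overline{+0}}$, leaving the net action on the LDPC code equal to $\overline{\cnotgate}_{ct}$.

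The main obstacle I anticipate is making rigorous the claim that $\bar X^{(1)}$ (the toric logical conjugate to $\bar Z^{(1)}=\bar Z^{(c)}$) really is identified with $\bar X^{(c)}$ after the merge, rather than with some product of $\bar X^{(c)}$ and other logicals: one must check that the symplectic pairing is preserved by the merge, i.e.\ that $\bar X^{(1)}$ anticommutes with $\bar Z^{(c)}$ and commutes with all other surviving logicals and with $\bar X^{(t)},\bar Z^{(t)}$. This is where the disjointness of $\mathcal L_Z$ and $\mathcal L_X$ (via Lemma~\ref{lem:supportlemma}) and the structure of the toric logicals (Definition~\ref{defn-toric-ancilla}) do the work, together with the commutation of the Dehn-twist circuit with everything outside the ancilla; I would spell out the commutator bookkeeping carefully but expect no genuine surprises. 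A secondary point to state cleanly is that the permutations $\Gamma_i,\Gamma_i'$ are genuine relabellings that can be absorbed (or tracked classically) so that the physical circuit is depth-one $\cnotgate$s plus a permutation, matching Eq.~\eqref{eqn-dehn-twist-circuit}.
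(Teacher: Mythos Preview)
Your proposal is correct and follows essentially the same route as the paper: reduce to Lemma~\ref{thm:dehntwist_toric_action} by identifying the LDPC logical qubits $c,t$ with the toric-code logicals $1,2$ in the merged code, then note that the circuit acts only on the toric-code qubits and preserves the merged codespace. The one place where the paper is more explicit is exactly the point you flagged as the main obstacle: rather than arguing abstractly that ``$\bar X^{(1)}\equiv\bar X^{(c)}$'' via symplectic pairing, the paper constructs the merged-code representative $\bar X^{(c)\prime}=\bar X^{(c)}\cdot\bar X^{(1)}\cdot X(s\in\mathcal E_Z)$ by multiplying the original $\bar X^{(c)}$ by an appropriate product of $\cx_i$ checks (note that neither $\bar X^{(c)}$ nor $\bar X^{(1)}$ alone commutes with all $\mathcal V_Z$ checks, so neither is itself a valid merged-code operator); since the two non-toric factors lie outside $\bigcup_i(\qz_i\cup\qx_i)$, the Dehn twist acts only on the $\bar X^{(1)}$ factor and Lemma~\ref{thm:dehntwist_toric_action} finishes. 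Your symplectic-pairing check would arrive at the same conclusion, so the difference is stylistic. (Minor note: the appeals to Theorem~\ref{thm:dist-preserve-static-merge} and the split step concern the overall protocol, not the logical action asserted by this theorem, and the paper's proof does not invoke them.)
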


\begin{proof}
    The reasoning follows from a straightforward reduction to implementing the Dehn twist gate on the toric code ancilla, i.e. $\overline{\cnotgate}_{c\, t} = \, \overline{\cnotgate}_{12}$. By Lemma~\ref{thm:dehntwist_toric_action}, $\cnotgate$ within the toric code is correctly implemented by the same circuit.

    Full proof below.
The desired logical action for the $\overline{\cnotgate}_{ct}$ gate is to conjugate the Pauli logicals as follows:
\begin{subequations}
\begin{align}
    \bar{X}^{(c)} &\rightarrow \bar{X}^{(c)} \bar{X}^{(t)} \\
    \bar{Z}^{(c)} &\rightarrow \bar{Z}^{(c)}  \label{eqn:action_ldpc_z2}\\
    \bar{X}^{(t)} &\rightarrow \bar{X}^{(t)} \\
    \bar{Z}^{(t)} &\rightarrow \bar{Z}^{(c)} \bar{Z}^{(t)}
\end{align}
\end{subequations}

$\bar{X}^{(c)}$ anti-commutes with $\bar{Z}^{(c)}$, and $\bar{Z}^{(c)}$ has a representation entirely contained in every layer of the new code merged code ($\because$ Eq.~\ref{eqn:zrep_on_qzi}). So $\bar{X}^{(c)}$ definitely has to pick up support on the toric code qubits in merged code, of size $|\bar{X}^{(c)} \cap \bar{Z}^{(c)}| \equiv \phi_{c} = 1 {\pmod 2}$ in any layer. In order to prove the logical action under circuit given in Eq.~\ref{eqn:ldpc_dehntwist_cnot}, it is apt to find a representative for $\bar{X}^{(c)}$ on merged code.

First, consider the case where one multiplies all $X$ checks from all sets $\cx_0$ to $\cx_{d-1}$ to get a new representative $\bar{X}^{(c)} \prod_{i=0}^{d-1} \mathcal{H}_{X}(\vec{1} \in \cx_i)$. The product $\prod_{i=0}^{d-1} \mathcal{H}_{X}(\vec{1} \in \cx_i)$ has zero support on qubits $\qx_{i}$ for all $i\in[d]$, since exactly two $X$ checks overlap on each qubit, from $\cx_i$ and $\cx_{i+1} \pmod{d}$. The product also does not gain any support on qubits in $\qz_i$, since all $\cx_i$ have even weight on $\qz_i$, since the Tanner edges between all these pairs of check and qubit sets have the structure of the cyclic code $H_C$, and $\vec{1}H_C = 0$. Hence the net effect of aforementioned product of all $X$ checks is to leave $\bar{X}^{(c)}$ unchanged.

Next consider multiplying a subset $u$ of checks from set $\cx_{i}$ from each layer instead of the entire set of checks. This adds support equal to $uH_c$, which is guaranteed to be even. Any product of these checks does not change the odd parity of the overlap between $\bar{X}^{(c)}$ and $\bar{Z}^{(c)}$.
Say for some choice of checks $u$ from $\cx_0$,
\begin{equation}
    \phi_c + u H_c = e_0
\end{equation}
The product of choice $u$ of checks from $\cx_i$ across all layers, described by $\prod_{i=0}^{d-1} \mathcal{H}_{X}(u \in \cx_i)$, is supported on the first qubit in every layer. Further, $\mathcal{H}_{X}(u \in \cx_0)$ adds support $s$ to $\mathcal{E}_{Z}$ where $s = uT_z$.
\begin{align}
    \prod_{i=0}^{d-1} \mathcal{H}_{X}(u \in \cx_i) &= \prod_{i=0}^{d-1} X(e_0 \in \qx_i) X(s \in \mathcal{E}_{Z}) \\
    &= \bar{X}^{(1)} X(s \in \mathcal{E}_{Z})  \\& \quad (\because \prod_{i=0}^{d-1} X(e_0 \in \qx_i) = \bar{X}^{(1)}) \nonumber
\end{align}

\noindent If we multiply $\bar{X}^{(c)}$ with $\prod_{i=0}^{d-1} \mathcal{H}_{X}(u \in \cx_i)$ we see the new representative $\bar{X}^{(c) \prime}$ on merged code has some support $s$ on the interface and additional support exactly described by $\bar{X}^{(1)}$ (toric code logical) on the toric code qubits $\qz_i$.
\begin{equation}\label{eqn:x_c_merged_rep}
    \bar{X}^{(c)} \xrightarrow{\; \; \prod_{i=0}^{d-1} \mathcal{H}_{X}(u \in \cx_i)} \, \bar{X}^{(c)} \, \bar{X}^{(1)} \, X^{\otimes s} \, = \, \bar{X}^{(c)\prime}
\end{equation}

Also note that $\bar{X}^{(t)}$ can be moved from $\lx$ in the original LDPC code to $\qx_{d-1}$ (Fig~\ref{fig:g_int_z}). As per the toric code adapter set up, $\qx_{d-1}$ also support $\bar{X}^{(2)}$. Thus 
\begin{equation}\label{eqn:x_t_merged_x2}
    \bar{X}^{(t)} = X(\vec{1} \in \qx_i) = \bar{X}^{(2)}
\end{equation}

It follows that the map $\bar{X}^{(1)} \rightarrow \bar{X}^{(1)}\bar{X}^{(2)}$ implies the map $\bar{X}^{(c)} (\bar{X}^{(1)}) X^{\otimes s} \rightarrow \bar{X}^{(c)} (\bar{X}^{(1)} \bar{X}^{(2)}) X^{\otimes s}$, which due to eqns \ref{eqn:x_c_merged_rep} and \ref{eqn:x_t_merged_x2} is the same as $\bar{X}^{(c) \prime} \rightarrow \bar{X}^{(c) \prime}\bar{X}^{(t)}$, which we set out to prove. The former is simply a logical map for the toric code logical $\bar{X}^{(1)}$ under transversal $\cnotgate$s and permutations, which we have already verified in Lemma \ref{thm:dehntwist_toric_action}.
From Eq.~\ref{eqn:x_t_merged_x2}, we also know $\bar{X}^{(t)}$ will be mapped identically as $\bar{X}^{(2)}$ under action of transversal $\cnotgate$, which from Lemma~\ref{thm:dehntwist_toric_action} is trivial.

Reasoning along the same lines, if we multiply $\bar{Z}^{(t)}$ with a suitable choice $u^{\prime}$ of toric code $Z$ checks from $\cz_i$, we obtain a $\bar{Z}^{(t)}$ representative that has some support $s^{\prime}$ on the interface, and support described exactly by the $\bar{Z}^{(2)}$ (toric code logical) on qubits $\qx_i$.
\begin{equation}
    \bar{Z}^{(t)} \xrightarrow{} \bar{Z}^{(t)} \, \bar{Z}^{(2)} \, Z^{\otimes s'}
\end{equation}
From Fig~\ref{fig:g_int_Z} we know there is a representative of $\bar{Z}^{(c)}$ which is entirely supported on $\qx_i$ for $i \in [d]$, which is also support of the toric code logical $\bar{Z}^{(2)}$. Thus we have exactly equal operators representing both
$\bar{Z}^{(c)} = Z(\vec{1} \in \qz_i) = \bar{Z}^{(2)}$ for any $i$. Since $\bar{Z}^{(1)}$ remains unchanged as the control of transversal $\cnotgate$s, it is evident that $\bar{Z}^{(c)}$ also is unchanged. Original code logicals $\bar{Z}^{(t)} \rightarrow \bar{Z}^{(c)} \bar{Z}^{(t)}$ straightforwardly Lemma~\ref{thm:dehntwist_toric_action}.
\end{proof}

\tocless\subsubsection{Code Distance of the merged LDPC-toric code
}\label{app:toric_adapter_distance}

\begin{reptheorem}{thm:dist-preserve-static-merge}
    The merged code has distance $d$ if the original code had distance $d$, provided auxiliary graphs $\mathcal{G}_Z$ and $\mathcal{G}_X$ satisfy the graph desiderata 0-3 in Theorem~\ref{thm:graph_desiderata}.   
\end{reptheorem}

\begin{proof}
The logical operators of the toric code could potentially reduce in distance when moved to the original code using vertex checks $\vz$. To deal with this and the remaining logicals in the LDPC code we invoke the alternate sufficient condition in Theorem~\ref{thm:expansionless_joint_toric} (derived in Section~\ref{subsec:replace_desideratum_4}), which allows for distance-preserving merging without requiring expansion in auxiliary graphs. Recall the sufficient condition for a valid (therein referred to as `left') code, in this case the toric code, is that 
weight of $\overline{Z}_l\overline{X}_l$ cannot be reduced to less than $2d_l$ by multiplying by stabilizers and logical operators of the left code other than $\overline{Z}_l$, $\overline{X}_l$, and $\overline{Z}_l\overline{X}_l$. The non-overlapping operators considered here will be $\bar{Z}^{(1)} = Z(\qz_i)$ and $\bar{X}^{(2)} = X(\qx_i)$, which are both weight $d$. As is evident, the supports of the minimum weight representatives of this operator are disjoint in which case the product $\overline{Z}^{(1)}\overline{X}^{(2)}$ exactly equal to $|\qz_i|+|\qx_i|$, which is $2d$. When considering the product $|\overline{Z}^{(1)}\overline{X}^{(2)}\xi|$ (where $\xi$ represents the stabilizer group), notice that any representative of $\overline{Z}^{(1)}$ would overlap $\overline{X}^{(1)}$ on at least one qubit, since they anti-commute. This is true for any representative of $\bar{X}^{(1)}$. In the canonical form, $\bar{X}^{(1)} = \prod_{i=0}^{d-1} X(e_0 \in \qz_i)$. That is, $\bar{X}^{(1)}$ is supported on the same index $e_0$ in each of the $d$ primal layers. Equivalent representatives can be obtained by choosing any index $e_j$ for $j \in [d]$. That is, $\bar{X}^{(1)} = \prod_{i=0}^{d-1} X(e_j \in \qz_i)$. This means $\bar{Z}^{(1)}$ needs to have support on at least one qubit at each index $e_j$ for $j\in [d]$ which can be distributed over any of the $d$ primal layers of the toric code, in order to anticommute with every $\bar{X}^{(1)}$ logical representative. Similarly, any representative of $\bar{X}^{(2)}$ needs to have a single qubit support in each of the $d$ dual layers. In total, this guarantees $|\overline{Z}^{(1)}\overline{X}^{(2)}\xi| \geq 2d$. This completes the sufficient condition to invoke Thm~\ref{thm:expansionless_joint_toric}, thereby proving the toric code distance is preserved in the deformed code without requiring any expansion in the auxiliary graphs of the LDPC code logicals. Ideally a toric code of distance $\max(\textrm{wt}(\bar{Z}^{(c)}),\textrm{wt}(\bar{X}^{(t)}))$ is chosen for the protocol.  
\end{proof}




\tocless\subsubsection{Distance preservation during the logical gate}

The next point to consider is the fault-tolerance property of the unitary circuit. In order to prove fault-tolerance of the circuit, one needs to also consider the scenario where the transversal $\cnotgate$s can be faulty. The notion of \textit{faults} and \textit{fault-distance} are defined as follows. A location in a circuit refers to any physical unitary gate, state preparation or measurement operation. A \textit{fault location} is a location in the circuit which performs a random Pauli operation following the desired operation. Each fault location in the circuit can introduce a Pauli error on qubits in its support according to a probability distribution. Each physical $\cnotgate$ gate in our circuit is a fault location which could potentially introduce correlated $2$-qubit Pauli errors. Since $\cnotgate$ gates do not mix $X$ and $Z$-type Pauli operators, it is sufficient to deal with only $X$-type errors. The \textit{fault-distance} of a circuit is $d$ if up to $d-1$ faults are guaranteed to lead to detectable errors. It turns out that the fault distance of the circuit is also $d$, formally stated below.

\begin{lemma}
\label{lem:dist-preserve-cnot-unitary}
    The fault distance of the Dehn twist circuit for $l$ layer toric code adapter is equal to the code distance $d$ of the original LDPC code if the toric code has $l\!=\!d$ layers.
\end{lemma}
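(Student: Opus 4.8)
The plan is to reduce the fault-distance analysis of the Dehn twist circuit to the phenomenological fault distance of the merged (deformed) code that was already established by Theorem~\ref{thm:dist-preserve-static-merge}. First I would recall that the entire circuit for implementing $\overline{\cnotgate}$ consists of: (i) the merge step, which by the discussion in Section~\ref{sec-gauging-meas} can be performed with phenomenological fault distance equal to the original code distance $d$, (ii) the unitary Dehn twist circuit of Eq.~\eqref{eqn-dehn-twist-circuit}, which is a depth-one layer of transversal physical $\cnotgate$ gates followed by a relabeling (qubit/check permutations that cost no circuit depth), and (iii) the split step, again with fault distance $d$. Since $X$ and $Z$ errors do not mix under $\cnotgate$ and the toric code ancilla is a CSS code, it suffices to bound the number of $X$-type faults (the $Z$-type argument is symmetric with controls and targets exchanged). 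The key point is that because the transversal $\cnotgate$s act \emph{within} each of the $d$ layers of the toric code adapter, and the support of any deformed-code logical operator picks up at least one qubit in each of the $d$ layers (this is precisely the structure exploited in the proof of Theorem~\ref{thm:dist-preserve-static-merge} via Theorem~\ref{thm:expansionless_joint_toric}), a spreading $X$ error originating from a faulty $\cnotgate$ cannot create a weight-$w$ logical error with $w<d$ without being detected.

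Concretely, the key steps in order are: First, fix a set $\mathcal{F}$ of at most $d-1$ fault locations and track the $X$-type Pauli errors they inject. Each faulty $\cnotgate(q_j^Z,q_j^X)$ can produce $X$ on its control, its target, or both; an $X$ on the control spreads to an $X$ on the target as well. Second, observe that after the Dehn twist unitary the residual uncorrected error, combined with the measurement outcomes of the deformed code's checks over the $d$ rounds of the split step, is equivalent (up to stabilizers and up to errors already accounted for in the merge/split phenomenological analysis) to an $X$-type error $E$ on the merged code together with a set of measurement faults, with $|\mathcal{F}|$ bounding their total count. Third, invoke the phenomenological fault distance $d$ of the merged code (from the fault-distance argument underlying Theorems~\ref{thm:dist-preserve-static-merge} and \ref{thm:expansionless_joint_toric}, analogous to the proofs in \cite{williamson2024gauging,cross2024linear}): any combination of fewer than $d$ qubit-or-measurement faults either is detected or acts trivially on the logical space. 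Fourth, translate detection/triviality back through the unitary circuit: since the Dehn twist is a Clifford in the normalizer of the deformed code's stabilizer group (Lemma~\ref{thm:dehntwist_toric_action} and the permutation analysis of Appendix~\ref{subsec:dehn_twist_permutations}), conjugation by it maps detectable errors to detectable errors and trivial errors to trivial errors, so the fault distance is preserved at $d$. Finally, note the necessity of $l=d$: if $l<d$, a single qubit set $\qz_i$ or $\qx_i$ would need to be revisited or the toric code logicals $\bar Z^{(1)},\bar X^{(2)}$ would have weight $l<d$, and then the products $\prod_{i} X(e_0\in\qz_i)$ would be weight-$l$ logical representatives, allowing a weight-$(l-1)<d-1$ undetectable error; conversely if $l>d$ the toric code distance exceeds $d$ but the original LDPC code distance caps the merged distance, so $l=d$ is the matched choice.

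The main obstacle I expect is step three and four — carefully arguing that the \emph{correlated} two-qubit errors from faulty transversal $\cnotgate$s, after propagation through the remaining gates and the permutation relabeling, can genuinely be absorbed into the ``phenomenological'' error model (independent single-qubit errors plus measurement errors) for which the merged-code fault distance $d$ was proven. One has to check that a single $\cnotgate$ fault contributes at most a bounded amount (in fact, a constant) to the effective count of phenomenological faults, and that error spreading across layers is controlled: an $X$ error on a control qubit in layer $i$ spreads only to the corresponding target in layer $i$ (plus later propagation through subsequent single-layer $\cnotgate$s, which is depth-one so there is no cascading). Because all the transversal $\cnotgate$s commute (disjoint qubit sets) and act within a single layer, this spreading is manifestly bounded — each physical fault touches at most two qubits and the circuit has depth one — so the reduction goes through cleanly, and the heart of the matter is really just bookkeeping to confirm the constant factor is $1$ and hence the fault distance is exactly $d$, not merely $\Omega(d)$.
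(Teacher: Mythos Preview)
Your reduction strategy has a genuine gap. The phenomenological fault distance of the merged code is $d$, meaning at least $d$ single-qubit (or measurement) errors are needed for an undetectable logical error. But a single faulty $\cnotgate$ can inject a correlated two-qubit Pauli (e.g.\ $X\otimes X$ on control and target). When you absorb this into the phenomenological model, one circuit fault contributes \emph{two} phenomenological faults, so the reduction only yields circuit fault distance $\ge \lceil d/2\rceil$, not $d$. The assertion that ``the constant factor is $1$'' is not bookkeeping; it is simply not true at the level of generality of your argument, and closing the factor-of-two gap requires a structural fact about where the two errors land that your reduction does not exploit.

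The paper's proof is much more elementary and never invokes the merged-code fault distance. It observes that every physical $\cnotgate$ in the Dehn twist circuit acts between one qubit in a primal layer $\qz_i$ and one qubit in a dual layer $\qx_i$. Hence $\le d-1$ faulty gates deposit at most $d-1$ $X$ errors across the $d$ primal layers and, independently, at most $d-1$ across the $d$ dual layers. Since any representative of $\bar X^{(1)}$ must touch every one of the $d$ primal layers (to anticommute with each $Z(\vec 1\in\qz_i)$) and any representative of $\bar X^{(2)}$ must touch every dual layer, there is always at least one clean layer in each sector; the residual error therefore cannot equal a logical $X$ operator and is either a stabilizer or detectable. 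The matching upper bound is witnessed by $d$ faults---the first $\cnotgate$ in each layer, each injecting $X$ on its control---which together produce exactly $\prod_i X(e_0\in\qz_i)=\bar X^{(1)}$. The primal/dual separation is precisely what lets the two errors from a single $\cnotgate$ fault be charged to two \emph{different} distance-$d$ budgets, and this is the missing idea in your plan.
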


\begin{proof} 

By definition, the fault-distance is upper bounded by the code distance $d$. A simple and sufficient case to illustrate this in our context is where exactly the first $\cnotgate$ in each of the $d$ layers fails, i.e. the physical gate controlled on ${e}_0 \in \qz_i$ (and targeted on ${e}_0 \in \qx_i$) fails for all $i \in [d]$, and introduces a Pauli $X$ error on the control qubits. The total error is described by the Pauli operator $\prod_{i=0}^{d-1} X(e_0 \in \qx_i)$. We know $\prod_{i=0}^{d-1} X(e_0 \in \qz_i)$ is a representative for the logical $\bar{X}_{1}$. Thus the error has mapped our quantum state to another state within the codespace, and would show trivial syndrome during error-correction despite a nontrivial error, i.e.\ not identity or a product of stabilizers. In other words, the error is \textit{undetectable}. Since $d$ faults mapped to this undetectable error, the fault-distance of the Dehn twist circuit is upper bounded at $d$.

Let us examine cases where any $\leq d-1$ physical $\cnotgate$ gates fail in this circuit consisting of $d^2$ physical $\cnotgate$s, implemented in a specific way as batches of $d$ transversal gates. These $d-1$ faults could be distributed in any manner over the $d$ layers. Failure of $d-1$ $\cnotgate$s could lead to upto $2d-2$ errors, although some of these are correlated and can be identified. Not all of the errors are harmful, since some of them are detectable or are equivalent to stabilizers. We want to check that if at most $d-1$ faults occur, any nontrivial error created is detectable. Notice each physical $\cnotgate$ acts between qubits of opposite types, i.e. one from set $\qz_i$ in the primal layer and one from set $\qx_i$ in the dual layer (alternatively viewed as a physical $\cnotgate$ acting between a qubit on a horizontal edge and a qubit on a vertical edge of a lattice in the topological picture). Each $\cnotgate$ that fails causes at most one error in $\qz_i$, and at most one error in $\qx_i$. Since the total $d-1$ $\cnotgate$s fail, there can be at most $d-1$ errors in the primal layers and at most $d-1$ errors in the dual layers. Next we claim that the two $\bar{X}$ logicals in the code have support on either at least $d$ qubits in primal layers or $d$ qubits in dual layers. This is because $\bar{X}_{1}$ logical operators need to have non-zero support on each of the $d$ primal layers, in order to anti-commute with every representative of $\bar{Z}^{(1)}$ in different layers, and similarly $\bar{X}^{(2)}$ needs non-zero support on $d$ qubits in the dual layers in order to anti-commute with $\bar{Z}^{(2)}$.

First consider the primal layer. Since errors within the primal layer are bounded to be weight $\leq d-1$, which is less than the total number of such layers, $d$, there is at least one layer which does not host an error. $d-1$ errors within the primal layer due to $d-1$ $\cnotgate$s thus cannot be equivalent to the $\bar{X}^{(1)}$ weight-$d$ operator and are either stabilizers or errors which can be detected by $Z$ stabilizers connected to this layer that anti-commute with the error operator. A similar applies for the $X$ error weight in the dual layers. Since there were at most $d-1$ errors in total across all dual layers to begin with, the error cannot be equivalent to the weight-$d$ $\bar{X}^{(2)}$ logical, and is thus either a product of stabilizers or detectable. Consequently, for the specific choice of controls and targets for transversal gates in the Dehn twist circuit, any $d-1$ faults or less cannot lead to errors that are equivalent to logical operators, and are thus detectable.
\end{proof}


\newpage

\end{document}